\documentclass[12pt]{article}
%\usepackage{fontspec}
%\setmainfont{Times New Roman}
%\setsansfont{Times New Roman}
\usepackage[T1]{fontenc}
\usepackage[english]{babel}
\usepackage{graphicx}
\usepackage{adjustbox}
\usepackage{amsmath}
\usepackage{amsthm}
\usepackage{minitoc}
\usepackage{caption}
\usepackage{subcaption}
\usepackage{thmtools}
\usepackage{amsfonts}
\usepackage{bbm}
\usepackage{setspace}
\setstretch{2}
\AtBeginDocument{%
  \addtolength\abovedisplayskip{-0.3\baselineskip}%
  \addtolength\belowdisplayskip{-0.3\baselineskip}%
%  \addtolength\abovedisplayshortskip{-0.5\baselineskip}%
%  \addtolength\belowdisplayshortskip{-0.5\baselineskip}%
}
\usepackage{supertabular, longtable, caption, afterpage, tabularx, threeparttable}
\usepackage{dcolumn}
\newcolumntype{d}[1]{D..{#1}} % for alignment of numbers on decimal marker
\usepackage[resetlabels]{multibib} 
\usepackage[authoryear]{natbib}

\newcites{OA}{References (Online Appendix)}
\usepackage[colorlinks=true,urlcolor= blue,citecolor=blue]{hyperref}
\usepackage{booktabs}
\usepackage{enumitem}
\usepackage{comment}
\newtheorem{proposition}{Proposition}
\newtheorem{lemma}{Lemma}
\newtheorem{assumption}{Assumption}
\declaretheoremstyle[
  bodyfont=\normalfont\itshape,
  headformat=\NAME\NUMBER  
]{nospacetheorem}
\declaretheorem[style=nospacetheorem,name=CD]{condition}
\renewcommand{\conditionautorefname}{CD\kern-4pt}

\newtheorem{theorem}{Theorem}
\newtheorem{example}{Example}
\newtheorem{corollary}{Corollary}
\newtheorem{definition}{Definition}

\newtheorem{algorithm}{Algorithm}
\usepackage[top=1in, bottom=1in, left=1in, right=1in]{geometry}
\everymath{\displaystyle}

\newenvironment{customthm}[1]
  {\innercustomthm}
  {\endinnercustomthm}

\newcommand{\bmu}{\ensuremath{\boldsymbol{\mu}}}
\newcommand{\btheta}{\ensuremath{\boldsymbol{\theta}}}
\newcommand{\bbeta}{\ensuremath{\boldsymbol{\beta}}}

\newcommand{\blambda}{\ensuremath{\boldsymbol{\Lambda}}}
\newcommand{\bgamma}{\ensuremath{\boldsymbol{\gamma}}}

\newcommand{\by}{\ensuremath{\boldsymbol{y}}}
\newcommand{\bX}{\ensuremath{\boldsymbol{X}}}

\DeclareMathOperator{\Exp}{\mathbb{E}}

\DeclareMathOperator{\Var}{\mathbb{V}}
\DeclareMathOperator{\plim}{plim}

\newcommand{\bG}{\ensuremath{\boldsymbol{G}}}

\newcommand{\bA}{\ensuremath{\boldsymbol{A}}}
\newcommand\scaleddot{\scalebox{.95}{.}}

\renewcommand \partname{}

\makeatletter
\renewcommand{\dddot}[1]{%
  {\mathop{\kern\z@#1}\limits^{\makebox[0pt][c]{\vbox to-1.5\ex@{\kern-\tw@\ex@
   \hbox{\normalfont\scaleddot\kern-0.5pt\scaleddot\kern-0.5pt\scaleddot}\vss}}}}}
\renewcommand{\ddddot}[1]{%
  {\mathop{\kern\z@#1}\limits^{\makebox[0pt][c]{\vbox to-1.5\ex@{\kern-\tw@\ex@
   \hbox{\normalfont\scaleddot\kern-0.5pt\scaleddot\kern-0.5pt\scaleddot\kern-0.5pt\scaleddot}\vss}}}}}
\makeatother
\usepackage{color, xcolor}
\definecolor{colari}{rgb}{0.7, 0, 0.7}
\newcommand{\aristide}[1]{\textcolor{colari}{#1}}
\begin{document}
\doparttoc
\faketableofcontents % Run a fake tableofcontents command for the partocs
%\part{} % Start the document part
%\parttoc % Insert the document TOC

\thispagestyle{empty}
%\begin{singlespace}
\begin{center}
\Large Estimating Peer Effects Using Partial Network Data

\vspace{1cm}%

\large Vincent Boucher$^{\dagger}$ and Aristide Houndetoungan$^{\ddagger}$

\normalsize September 2025

\vspace{1cm}

Abstract
\end{center}

\noindent We study the estimation of peer effects through social networks when researchers do not observe the entire network structure. Special cases include sampled networks, censored networks, and misclassified links. We assume that researchers can obtain a consistent estimator of the distribution of the network. We show that this assumption is sufficient for estimating peer effects using a linear-in-means model. We provide an empirical application to the study of peer effects on students’ academic achievement using the widely used Add Health database, and show that network data errors have a large downward bias on estimated peer effects.

\vspace{1cm}

\noindent \textit{JEL Codes: C31, C36, C51}

\noindent Keywords: Social networks, Peer effects, Missing variables, Measurement errors

\noindent $^{\dagger}${\footnotesize Boucher: Department of Economics, Universit\'{e} Laval, CRREP, CREATE, CIRANO; email: \url{vincent.boucher@ecn.ulaval.ca}. \\
$^{\ddagger}$ Department of Economics, Universit\'{e} Laval; email: \url{ahoundetoungan@ecn.ulaval.ca}. \\
An R package, including all replication codes, is available at: \url{https://github.com/ahoundetoungan/PartialNetwork}.}\newpage 

\noindent\textbf{Acknowledgements:} We would like to thank Yann Bramoullé, and Bernard Fortin for their helpful comments and insights, as always. We would also like to thank Isaiah Andrews, Eric Auerbach, Arnaud Dufays, Stephen Gordon, Chih-Sheng Hsieh, Arthur Lewbel, Tyler McCormick, Angelo Mele, Francesca Molinari, Onur \"Ozg\"ur, Eleonora Patacchini, Xun Tang, and Yves Zenou for helpful comments and discussions. Thank you also to the participants of the many seminars at which we presented this research. This research uses data from Add Health, a program directed by Kathleen Mullan Harris and designed by J. Richard Udry, Peter S. Bearman, and Kathleen Mullan Harris at the University of North Carolina at Chapel Hill, and funded by Grant P01-HD31921 from the Eunice Kennedy Shriver National Institute of Child Health and Human Development, with cooperative funding from 23 other federal agencies and foundations. Special acknowledgment is given to Ronald R. Rindfuss and Barbara Entwisle for assistance in the original design. Information on how to obtain Add Health data files is available on the Add Health website (\url{http://www.cpc.unc.edu/addhealth}). No direct support was received from Grant P01-HD31921 for this research.
%\end{singlespace}
\newpage
\clearpage

\setcounter{page}{1}

\section{Introduction}
There is a large and growing literature on the impact of peer effects in social networks.\footnote{For recent reviews, see \cite{bramoulle2019peer}, and \cite{de2015econometrics}.} However, since eliciting network data is expensive \citep{breza2017using}, relatively few data sets contain comprehensive network information, and existing ones are prone to data errors. Despite some recent contributions, existing methodologies for the estimation of peer effects with incomplete or erroneous network data either focus on a specific kind of sampling or error, or they are highly computationally demanding.

In this paper, we propose a unifying framework that allows for the estimation of peer effects under the widely used linear-in-means model (e.g. \cite{manski1993identification,bramoulle2009identification}) when the researcher does not observe the entire network structure (but still observes \emph{some} information about the network). Our methodology is computationally attractive and sufficiently flexible to cover cases where, for example, network data are sampled \citep{chandrasekhar2011econometrics,liu2013estimation,lewbel2019social}, censored \citep{griffith2019namefriends}, or misclassified \citep{hardy2019estimating}. Our central assumption is that the researcher is able to estimate a network formation model using some partial information about the network structure. Leveraging recent contributions on the estimation of network formation models and focusing primarily on models with conditional link independence, we show that this assumption is sufficient to identify and estimate peer effects.

Assuming that the network is exogenous, we propose two estimators. First, we present a computationally attractive estimator based on a simulated generalized method of moments (SGMM). The moments are built using draws from the (estimated) network formation model. We study the finite sample properties of our SGMM estimator via Monte Carlo simulations. We show that the estimator performs very well, even when a large fraction of the links are missing or misclassified. Second, we present a flexible likelihood-based (Bayesian) estimator allowing us to exploit the entire structure of the data-generating process. The Bayesian approach is flexible as it allows to cover cases for which the asymptotic framework of our SGMM fails. Although the computational cost is higher than that of the SGMM, we exploit recent computational advances in the literature, e.g. \cite{mele2017,hsieh2019structural}, and show that the estimator can be successfully implemented on common-sized data sets. In particular, we apply our estimator to study peer effects on academic achievement using the widely used Add Health database. We find that data errors have a large downward bias on the estimated endogenous effect.

Our SGMM estimator is built as a bias-corrected version of the instrumental strategy proposed by \cite{bramoulle2009identification}. Using a network formation model, we obtain a consistent estimator of the distribution of the true network.\footnote{This generally requires some form of conditionally random sampling of the true network.} We then use this estimated distribution to obtain different draws from the distribution of the network. We show that our moment conditions are asymptotically valid and that the estimator is consistent and asymptotically normal, even with a finite number of draws from the estimated distribution of the network. This property significantly reduces the computational cost of the method compared to methods that rely on integrating the moment conditions \citep[e.g.,][]{chandrasekhar2011econometrics}.

Importantly, our SGMM strategy requires only the (partial) observation of a \emph{single} cross-section, unlike, for example, the approach of \cite{zhang2020spillovers}. The presence of this feature is because of two main properties of the model. First, we can consistently estimate the distribution of the mismeasured variable (i.e., the network) using a single (partial) observation of the variable. Second, in the absence of measurement error, valid instruments for the endogenous peer variable are available \citep{bramoulle2009identification}.

%We explore the finite sample properties of our instrumental variable estimator using Monte Carlo simulations. We consider the case in which a large fraction of the links are unobserved. When there are no unobserved group-level fixed effects, the loss in precision due to missing links is very small, even a very large fraction of links are missing (up to 75\%). With group-level unobservable fixed effects, the loss in precision is larger, but the estimator still performs very well when up to 25\%-50\% of the links are missing.

Our Bayesian estimator is based on the likelihood function and therefore uses more information about the structure of the model, leading to more precise estimates. In the context of this estimator, the estimated distribution for the network acts as a prior distribution, and the inferred network structure is updated through a Markov chain Monte Carlo (MCMC) algorithm. Our approach relies on data augmentation \citep{tanner1987calculation}, which treats the network as an additional set of parameters to be estimated. In particular, our MCMC builds on recent developments from the empirical literature on network formation \citep[e.g.,][]{mele2017,hsieh2019structural}. We show that the computational cost of our estimator is reasonable and that it can easily be applied to standard data sets.

We study the impact of errors in adolescents' friendship network data for the estimation of peer effects in education \citep{calvo2009peer}. We show that the widely used Add Health database features many missing links---around 45\% of the within-school friendship nominations are coded with error---and that these data errors strongly bias the estimated peer effects. Specifically, we estimate a model of peer effects on students' academic achievement. We probabilistically reconstruct the missing links, accounting for the potential censoring, and we obtain a consistent estimator of peer effects using both our estimators. The bias due to data errors is qualitatively important, even assuming that the network is exogenous. Our estimated endogenous peer effect coefficient is 1.5 times larger than that obtained by assuming the data contains no errors.

This paper contributes to the recent literature on the estimation of peer effects when the network is either not entirely observed or observed with noise. In particular, our framework is valid when network data are either sampled, censored, or misclassified.\footnote{For related literature that studies the estimation of peer effects when researchers have no network data, see \cite{manresa2016estimating,de2018recovering,lewbel2019social}.} We unify these strands in the literature and provide a flexible and computationally tractable framework for estimating peer effects with incomplete or erroneous network data.

\emph{Sampled networks and censoring:} \cite{chandrasekhar2011econometrics} show that models estimated using sampled networks are generally biased. They propose an analytical correction as well as a two-step General Method of Moments (GMM) estimator. \cite{liu2013estimation} shows that when the interaction matrix is not row-normalized, instrumental variable estimators based on an out-degree distribution are valid, even with sampled networks. Finally, \cite{zhang2020spillovers} studies program evaluation in a context in which networks are sampled locally and where some links might be unobserved. Assuming that the researcher has access to two measurements of the network for each sampled unit, she presents a nonparametric estimator of the treatment and spillover effects.\footnote{Relatedly, part of the literature focuses on models that depend on network statistics (as opposed to peer effect models). See in particular \cite{hsieh2018non}, \cite{chen2013impact},  \cite{thirkettleidentification} and \cite{reeves2024model}.}

Relatedly, \cite{griffith2019namefriends} explores the impact of imposing an upper bound on the number of links when eliciting network data. He presents a bias-correction method and explores the impact of censoring using two empirical applications. He finds that censoring underestimates peer effects. \cite{griffith2023} present a characterization of the analytic bias of censoring for the reduced-form parameters in the linear-in-means and linear-in-sums models under an Expectational Equivalence assumption.

We contribute to this literature by proposing two simple and flexible estimators for the estimation of peer effects based on a linear-in-means model. Our estimators do not require many observations of the sampled network. Similar to \cite{griffith2019namefriends} and \cite{griffith2023}, we find---using the Add Health database---that sampling leads to an underestimation of peer effects, although we find that censoring has a negligible impact, in the context of peer effects, on academic achievement.

Our SGMM estimator does not suffer from the computational cost resulting from integrating the moment conditions \citep[as in][]{chandrasekhar2011econometrics} and can produce precise estimates with as few as three network simulations. While our Bayesian estimator is more computationally demanding, we exploit recent developments from the empirical literature on network formation \cite[e.g.,][]{mele2017,hsieh2019structural} and show that it is computationally tractable. Moreover, the Bayesian estimator is valid in finite samples, which allows, in particular, to cover cases not covered by the asymptotic framework on which our SGMM relies.

\emph{Misclassification:} \cite{hardy2019estimating} look at the estimation of (discrete) treatment effects when the network is observed noisily. Specifically, they assume that observed links are affected by iid errors and present an expectation maximization (EM) algorithm that allows for a consistent estimator of the treatment effect. \cite{lewbel2023social} show that when the expected number of misclassified links grows at a rate strictly lower than the number of sampled individuals $n$, the 2SLS estimator in \cite{bramoulle2009identification} is consistent.\footnote{When the growth rate is strictly smaller than $\sqrt{n}$, the inference is also valid.} \cite{lewbel2021social} develop a two-stage least squares estimator for the linear-in-sum model when some links are potentially misclassified. They propose valid instruments under some restrictions on the observed and true interactions matrices, or when researchers observe at least two samples of the same true network.

Our model allows for the misclassification of all links with positive probability, and we do not impose restrictions on the rate of misclassification. As in \cite{hardy2019estimating}, we use a network formation model to estimate the probability of false positives and false negatives. However, our two-stage strategy---estimating the network formation model and then the peer effect model---allows for greater flexibility. %In particular, our network formation model is allowed to flexibly depend on covariates. This is empirically important, as networks typically feature homophily on observed characteristics \cite[e.g.,][]{currarini2010identifying,bramoulle2012homophily}.\footnote{Formally, \cite{hardy2019estimating} do not require the specification of distribution of the network, but only the distribution of the degree sequence, which is assumed not to depend on covariates.}
Our paper is closest to \cite{herstad2023}, who also studies a two-step estimation approach, but focuses on the observation of a single large (mismeasured) network, adapting the network formation model in \cite{graham2017econometric}. We do not impose a specific network formation process and focus on the case in which the data are partitioned into bounded groups, such as schools or small villages.

The remainder of the paper is organized as follows. In Section \ref{sec:linearinmean}, we present the econometric model as well as the main assumptions. In Section \ref{sec:iv}, we present our SGMM estimator and study its performance via Monte Carlo simulations. In Section \ref{sec:genestim}, we present our likelihood-based estimation strategy. In Section \ref{sec:addhealth}, we present our application to peer effects on academic achievement. Section \ref{sec:conclusion} concludes the paper.% with a discussion of the main results, limits, and challenges for future research.

\section{The Model}\label{sec:linearinmean}
We assume that the data are partitioned into $M>1$ groups, where group $m$ contains $N_m$ individuals. A sample consists of the following:
$$
\lbrace\mathbf{y}_m,\mathbf{X}_m,\boldsymbol{\varepsilon}_m;\mathcal{A}_m,\mathbf{A}_m \rbrace_{m=1}^M.
$$
For individuals in group $m$, $\mathbf{y}_m$ is a vector of an observed outcome of interest (e.g., academic achievement), $\mathbf{X}_m$ is an observed matrix of individual characteristics (e.g., age and gender), and $\boldsymbol{\varepsilon}_m$ is a vector of \emph{unobserved} individual heterogeneity.

The matrix $\mathbf{A}_m$ is the $N_m\times N_m$ \emph{adjacency matrix} of the network between individuals in group $m$. We assume a directed network:\footnote{All of our results hold for undirected networks.} $a_{ij,m}\in \{0,1\}$, where $a_{ij,m}=1$ if $i$ is linked to $j$. We normalize $a_{ii,m}=0$ for all $i$ and let $n_{i,m}=\sum_j a_{ij,m}$ denote the number of links of $i$ within group $m$.

We assume that $\mathbf{A}_m$ is \emph{not observed} but that researchers observe $\mathcal{A}_m$ instead. Informally, the idea is that $\mathcal{A}_m$ contains some information about the adjacency matrix $\mathbf{A}_m$. Our specific assumptions are presented in Section \ref{sec:partial}. The next assumptions formalize the above discussion.
\begin{assumption}\label{as:manymarkets} The population is partitioned into $M>1$ groups, where the size $N_m$ of each group $m=1,...,M$ is bounded. The sequence $\lbrace\mathbf{y}_m,\mathbf{X}_m,\boldsymbol{\varepsilon}_m;\mathcal{A}_m,\mathbf{A}_m \rbrace$ is independent across $m$. Moreover, $\mathbf{X}_m$ is uniformly bounded in $m$.\footnote{i.e., $\sup_{m \geq 1}\lVert \mathbf X_m \rVert_2 < \infty$, where $\lVert.\rVert_2$ is the Euclidean norm.} 
\end{assumption}
\begin{assumption}\label{as:nosampling} For each group $m$, the variables $\mathbf{y}_m$, $\mathbf{X}_m$ and $\mathcal{A}_m$ are observed. The variables $\boldsymbol{\varepsilon}_m$ and $\mathbf{A}_m$ are not.
\end{assumption}

Assumption \ref{as:manymarkets} implies a ``many markets'' asymptotic framework, meaning that the number of groups $M$ goes to infinity as the number of individuals $N$ goes to infinity. It is a standard assumption in the literature on the econometrics of games and the literature on peer effects.\footnote{See for example \cite{bramoulle2019peer}, \cite{breza2016}, and \cite{de2015econometrics}.} For example, the data could consist of a collection of small villages \citep{banerjee2013diffusion} or schools \citep{calvo2009peer}. Assumption \ref{as:nosampling} implies in particular that the data are composed of group-level censuses for $\mathbf{y}_m$ and $\mathbf{X}_m$.\footnote{Contrary to \cite{liu2017peer} or \cite{wang2013estimation}, for example.} A similar assumption is made by \cite{breza2017using}. 
%(e.g. ) \aristide{(Aguirregabiria and Mira ECMA 2007, Brock Durlauf 2007 JoE with logit model, Tamer papers with airline market? They all need many small markets. Any identification paper would generally need many units that are independent)} \aristide{(Duflo Restud, Changes in Social Network? The paper formally estimates a network formation model but data match what we are saying.) \aristide{Duflo AER 2011 Peer effects, teacher incentives ?}.{\color{red} we need more references} 

%In the next section, we present the \emph{linear-in-means} model, which characterize the relationship between $\mathbf{y}_m$, $\mathbf{X}_m$, $\boldsymbol{\varepsilon}_m$ and $\mathbf{A}_m$.

\subsection{The Linear-in-Means Model}
In this section, we present the linear-in-means model \citep{manski1993identification,bramoulle2009identification}, arguably the most widely used model for studying peer effects in networks \citep[see][for a recent review]{bramoulle2019peer}.

Let $\mathbf{G}_m=f(\mathbf{A}_m)$, the $N_m\times N_m$ \emph{interaction matrix} for some function $f$. Unless otherwise stated, we assume that $\mathbf{G}_m$ is a row-normalization of the adjacency matrix $\mathbf{A}_m$.\footnote{In such a case, $g_{ij,m}=a_{ij,m}/n_{i,m}$ whenever $n_{i,m}>0$, whereas $g_{ij,m}=0$ otherwise.} Most of our results hold for any function $f$.

We focus on the following model:
\begin{equation}\label{eq:linearinmean}
\mathbf{y}_m=c\mathbf{1}_m+\mathbf{X}_m\boldsymbol{\beta}+\alpha\mathbf{G}_m\mathbf{y}_m+{\mathbf{G}_m\mathbf{X}_m}\boldsymbol{\gamma}+\boldsymbol{\varepsilon}_m,
\end{equation}
where $\mathbf{1}_m$ is a $N_m-$dimensional vector of $1$'s.
%where $\mathbf{y}$ is a vector of an outcome of interest (e.g. academic achievement), $c$ is a constant, $\mathbf{1}$ is a vector of ones, $\mathbf{X}$ is a matrix of observable characteristics (e.g. age, gender...), and $\boldsymbol{\varepsilon}$ is a vector of errors.\footnote{Note that \cite{boucher2020binary} show that (\ref{eq:linearinmean}) is valid (coherent, complete, and microfounded), even when $y_i$ is binary.}
The parameter $\alpha$ therefore captures the impact of the average outcome of one's peers on their behavior (the endogenous peer effect). The parameter $\boldsymbol{\beta}$ captures the impact of one's characteristics on their behavior (the individual effects). The parameter $\boldsymbol{\gamma}$ captures the impact of the average characteristics of one's peers on their behavior (the contextual peer effects). For simplicity, we assume that the constant $c$ does not vary across $m$. However, our results can be adapted to include group-level fixed effects.

We impose the following assumptions.
\begin{assumption}\label{as:alpha} $|\alpha|<1/\Vert \mathbf{G}_m\Vert$ for some submultiplicative norm $\Vert \cdot\Vert$, and all $m=1,...,M$.
\end{assumption}
\begin{assumption}\label{as:exonet} Exogeneity: $\mathbb{E}[\boldsymbol{\varepsilon}_m|\mathbf{X}_m,\mathbf{A}_m,\mathcal{A}_m]=\mathbf{0}$ for all $m=1,...,M$.
\end{assumption}

Assumption \ref{as:alpha} ensures that the model is coherent and that there exists a unique vector $\mathbf{y}_m$ compatible with (\ref{eq:linearinmean}). When $\mathbf{G}_m$ is row-normalized, $|\alpha|<1$ is sufficient.
Finally, Assumption \ref{as:exonet} implies that individual characteristics and the network structure are exogenous. While the exogeneity of the network is a strong assumption, we consider it as a benchmark and focus on the case in which the network is not perfectly observed. We now describe the network sampling process in more detail.

\subsection{The Network Formation Model}\label{sec:partial}

In this paper, we relax the costly assumption that the adjacency matrix $\mathbf{A}_m$ is observed. We assume instead that sufficient information about the network (i.e., $\mathcal{A}_m$) is observed so that a network formation model can be estimated. The discussion below formalizes our assumptions about the relationship between $\mathcal{A}_m$ and $\mathbf{A}_m$. We start by describing the data-generating process for $\mathbf{A}_m$.

%More formally, we let $\mathcal{A}$ denote the observed information about the true network structure. That is, $\mathcal{A}$ is a function of the true network $\mathbf{A}$ and potentially of individuals' characteristics, i.e. there exist functions $\mathcal{F}_m$ such that $\mathcal{A}_m=\mathcal{F}_m(\mathbf{A}_m,\mathbf{X}_m)$. Thus, $\mathcal{A}_M$ in a sample of $M$ groups is drawn from $P(\mathcal{A}_M|\mathbf{A}_M,\mathbf{X}_M).$ We discuss important examples in Examples \ref{ex:sampled}--\ref{ex:ard} below.

We assume that for any group $m$, $P(\mathbf{A}_m|\mathbf{X}_m)=\Pi_{ij}P(a_{ij,m}|\mathbf{X}_m)$, where
\begin{equation}\label{eq:gennetfor}
P(a_{ij,m}|\mathbf{X}_m)=\frac{\exp\{a_{ij,m}Q(\boldsymbol{\rho}_0,\mathbf{w}_{ij,m})\}}{1+\exp\{Q(\boldsymbol{\rho}_0,\mathbf{w}_{ij,m})\}}, 
\end{equation}
and where $Q$ is some known function that is twice continuously differentiable in $\boldsymbol{\rho}$, and $\mathbf{w}_{ij,m}=\mathbf{w}_{ij,m}(\mathbf{X}_m)$ is a vector of observed characteristics for the pair $ij$ in group $m$.\footnote{Throughout, $P$ refers to the probability notation. Note that by construction, links are only defined between individuals of the same group so the probability that individuals from different groups are linked is zero.}

We focus on network formation models that are conditionally independent across links: $P(\mathbf{A}_m|\mathbf{X}_m)=\Pi_{ij}P(a_{ij,m}|\mathbf{X}_m)$. This notably includes models such as the ones in \cite{graham2017econometric} and \cite{breza2017using}, but excludes many models of strategic network formation such as the ones in \cite{mele2017} and \cite{de2018identifying}. 

This is mainly done for simplicity (and clarity) of the analysis in the main text. We however note that our methodology can be adapted to more general network formation models. In Online Appendix B, we further discuss how this can be done for a few specific network formation processes, such as the one in \cite{boucher2017my}, as well as exponential random graph models (ERGM) featuring only reciprocity.

\subsection{Using Partial Information}

In this section, we discuss how partial network information can be used in order to estimate the structural parameters $\boldsymbol{\rho}$ from the network formation process (\ref{eq:gennetfor}) and simulate networks from the implied distribution. We now present our main assumption.
\begin{assumption}[Partial Network Information]\label{as:partial} Given $\lbrace \mathcal{A}_m,\mathbf{X}_m\rbrace_{m=1}^M$ and the parametric model (\ref{eq:gennetfor}), there exists an estimator $\hat{\boldsymbol{\rho}}_M$, such that $\sqrt{M}(\hat{\boldsymbol{\rho}}_M-\boldsymbol{\rho}_0)\rightarrow_d N(\boldsymbol{0},\boldsymbol{V}_{\boldsymbol{\rho}})$ as $M\rightarrow\infty$, where $\boldsymbol{V}_{\boldsymbol{\rho}}$ is a positive semidefinite matrix.% Moreover, the derivatives of $P(a_{ij,m}|\boldsymbol{\rho})$ with respect to $\boldsymbol{\rho}$ are bounded uniformly in $i,j$ and $m$. \aristide{This introduces a new notation we didnt define. Maybe be, in \eqref{eq:gennetfor}, write $P(a_{ij,m}|\mathbf{X}_m) = P(a_{ij,m}|\mathbf{X}_m, \boldsymbol{\rho}_0) = ....$. In this case, it would be clear what $P(a_{ij,m}|\mathbf{X}_m, ~\boldsymbol{\rho})$ means here.}{\color{red} I suggest to add another regularity condition in the proof of theorem 1.}
\end{assumption}

%Assumption \ref{as:partial} implies that the dependence between $\mathcal{A}_m$ and $\mathbf{A}_m$ is strong enough so that, using (\ref{eq:gennetfor}), the researcher can estimate the data generating process for $\mathbf{A}_m$. We can then use this information in order to obtain an estimator of the conditional distribution of $\mathbf{A}_m$.

Assumption \ref{as:partial} is a high level assumption, encompassing many things, but its main substantive content is that $\mathcal{A}_m$ is sufficient to \emph{identify} $\boldsymbol{\rho}$:\footnote{Indeed, from our Assumption \ref{as:manymarkets}, groups are bounded and independent. As such, consistency and asymptotic normality generally follow from standard LLN and CLT for independent, non-identically distributed, random variables and under standard regularity conditions.} the dependence between $\mathcal{A}_m$ and $\mathbf{A}_m$ needs to be strong enough so that, using (\ref{eq:gennetfor}), the researcher can estimate the data generating process for $\mathbf{A}_m$. We present leading examples below.

Note however that our asymptotic framework (Assumption \ref{as:manymarkets}) limits the amount of unobserved individual heterogeneity included in (\ref{eq:gennetfor}). Indeed, since groups are bounded, individuals' number of links is also bounded. This implies that the model in \cite{breza2017using} cannot accommodate Assumption \ref{as:partial}.\footnote{\cite{breza2019consistently} show that consistency is only achieved as the size of the groups goes to infinity.} Researchers interested in using this model should therefore rely on our Bayesian estimator, presented in Section \ref{sec:genestim}.

In some contexts, however, information in $\mathcal{A}_m$ can be sufficient to identify individual unobserved heterogeneity, even under Assumption \ref{as:manymarkets}. In particular, in Online Appendix B, we show how the model in \cite{graham2017econometric} can be adapted to our setting.

When Assumption \ref{as:partial} holds, we can define an estimator of the distribution of the true network.

\begin{definition}\label{def:estimator}
     A consistent estimator of the distribution of the true network for some function $\kappa$ is a probability distribution $\hat{P}(\mathbf{A}_m|\hat{\boldsymbol{\rho}},\mathbf{X}_m,\kappa(\mathcal{A}_m))$ such that\\ $\displaystyle\sup_{m, \mathbf{A}_m}\Vert \hat{P}(\mathbf{A}_m|\hat{\boldsymbol{\rho}},\mathbf{X}_m,\kappa(\mathcal{A}_m))-{P}(\mathbf{A}_m|\mathbf{X}_m,\kappa(\mathcal{A}_m))\Vert\rightarrow_p 0$ as $M\rightarrow\infty$.
\end{definition}

%The partial information $\mathcal{A}_m$ is therefore used twice. First, in order to estimate $\boldsymbol{\rho}$ (Assumption \ref{as:partial}), and second, to construct a consistent estimator of the distribution of the true network (Definition \ref{def:estimator}). For this second use, we allow the researcher to consider only part of the information in $\mathcal{A}_m$.

The function $\kappa$ controls how much information in $\mathcal{A}_m$ is used in order to complement the information obtained by estimating the network formation process in Equation \eqref{eq:gennetfor}. Two important polar cases are the identity function $\kappa(\mathcal{A}_m)=\mathcal{A}_m$ implying that all the information in $\mathcal{A}$ is used, and the constant function $\kappa(\mathcal{A}_m)=\kappa_0$ for all $\mathcal{A}_m$ in which no information on $\mathcal{A}$ is used. Although our methodology is valid for any $\kappa$, the choice of $\kappa$ may strongly affect the identification and precision of our estimators.

When $\kappa$ is the identity function, the estimator is obtained from Bayes' rule (See Examples \ref{ex:sampled}--\ref{ex:misclass} below):
\begin{equation}\label{eq:Bayes}
\hat{P}(\mathbf{A}_m|\hat{\boldsymbol{\rho}},\mathbf{X}_m,\mathcal{A}_m)=\frac{P(\mathcal{A}_m|\mathbf{X}_m,\mathbf{A}_m)P(\mathbf{A}_m|\hat{\boldsymbol{\rho}},\mathbf{X}_m)}{P(\mathcal{A}_m|\mathbf{X}_m)}.
\end{equation}
However, in some contexts, such a quantity may be hard to compute, depending on the nature of the information in $\mathcal{A}_m$. A solution, therefore, could be to disregard the information in $\mathcal{A}$. However, in that case, the precision of the estimator strongly depends on the network formation process in (\ref{eq:gennetfor}). Thus, the loss in precision is context-dependent. In particular, it depends on the heterogeneity in the probability of link formation implied by (\ref{eq:gennetfor}), and on the specificity about $\mathbf{A}_m$ that is contained in $\mathcal{A}_m$.

%Importantly, it should be noted that even if the econometrician has access to a consistent estimator of the distribution of the true network, it \emph{does not} imply that they could simply proxy $\mathbf{G}_m$ in (\ref{eq:linearinmean}) using a draw $\tilde{\mathbf{G}}_m$ from the distribution $\hat{P}(\mathbf{G}_m)$. The reason is that for any vector $\mathbf{z}_m$, $\tilde{\mathbf{G}}_m\mathbf{z}_m$ generally does not converge to ${\mathbf{G}}_m\mathbf{z}_m$ as the number of individuals $N$ goes to infinity. In other words, knowledge of $\hat{P}(\mathbf{G}_m)$ and $\mathbf{z}_m$ is not sufficient to obtain a consistent estimator of $\mathbf{G}_m\mathbf{z}_m$. To see why, note that $(\mathbf{G}_m\mathbf{z}_m)_i=\sum_{j=1}^{N_m}g_{ij}z_{j,m}$, and, under Assumption \ref{as:manymarkets}, $N_m$ is bounded. %the set of $j$, such that $g_{ij}\neq 0$, is bounded; thus, a consistent estimator of $P(\mathbf{G})$ is not sufficient to obtain a consistent estimator of $\mathbf{GX}$.% This is also true under the alternative assumptions in \cite{lee2004asymptotic} and \cite{de2018identifying},\footnote{\cite{lee2004asymptotic}  requires that $\mathbf{G}$ be bounded in row- and column-sums (in absolute value), whereas \cite{de2018identifying} assume that $\mathbf{A}$ is bounded in row-sums.}, but it contrasts from models based on a large network asymptotic framework (e.g. \cite{auerbach2022identification}).

We specifically discuss three leading examples in which Assumption \ref{as:partial} holds and focus on how $\hat{P}(\mathbf{A}_m|\hat{\boldsymbol{\rho}},\mathbf{X}_m,\kappa(\mathcal{A}_m))$ is constructed: \emph{sampled networks} (Example \ref{ex:sampled}), \emph{censored networks }(Example \ref{ex:censored}), and \emph{misclassified network links} (Example \ref{ex:misclass}). 

%However, as discussed, the many-markets asymptotic framework (see Assumption \ref{as:manymarkets}) is restrictive and {does not hold} in some contexts such as when researchers have access to \emph{aggregated relational data} \citep{breza2017using}. In Section \ref{sec:genestim}, we present a Bayesian estimator which uses the same structure as the one presented here but is valid in finite samples.

\begin{example}[Sampled Networks]\label{ex:sampled}
Suppose that we observe the realizations of $a_{ij}$ for a random sample of pairs of individuals \cite[e.g.,][]{chandrasekhar2011econometrics}. Here $\mathcal{A}_m$ is simply a list of sampled pairs: $\mathcal{A}_{m}=\{a_{ij,m}\}_{ij\text{ is sampled}}$ \citep[see e.g.,][for concrete example]{conley2010learning}. Consider the following simple network formation model: $$P(a_{ij,m}=1|\mathbf{X}_m)=\frac{\exp\{\mathbf{w}_{ij,m}\boldsymbol{\rho}\}}{1+\exp\{\mathbf{w}_{ij,m}\boldsymbol{\rho}\}}.$$ In this case, a simple logistic regression on the subset of sampled pairs provides a consistent estimator of $\boldsymbol{\rho}$ since pairs of individuals for which $a_{ij,m}$ is observed is random.

In this simple framework, the linking status of sampled pairs of individuals is known. As such it is natural to define $\kappa$ as the identity map, which leads to the estimator\break $\hat{P}(a_{ij,m}|\hat{\boldsymbol{\rho}},\mathbf{X}_m,\mathcal{A}_m)=a_{ij,m}$ for all sampled pairs $ij$, and
$\hat{P}(a_{ij,m}|\hat{\boldsymbol{\rho}},\mathbf{X}_m, \mathcal{A}_m)={\exp\{\mathbf{w}_{ij,m}\hat{\boldsymbol{\rho}}\}}/\break({1+\exp\{\mathbf{w}_{ij,m}\hat{\boldsymbol{\rho}}\}})$ otherwise. In essence, sampled pairs are used to estimate the network formation model, which is then used in order to predict the probability of a link for pairs that are not sampled.
\end{example}

\begin{example}[Censored Network Data]\label{ex:censored}
As discussed in \cite{griffith2019namefriends}, network data is often censored. This typically arises when surveyed individuals are asked to name only $T>1$ links (among the $N_m$ possible links they may have). Here, $\mathcal{A}_m$ can be represented by an $N_m\times N_m$ binary matrix $\mathbf{A}_m^{obs}$ which takes value $a_{ij,m}=1$ if $i$ nominated $j$, and $0$ otherwise. 
Consider the same simple model as in Example \ref{ex:sampled}:
$$P(a_{ij,m}=1|\mathbf{X}_m)=\frac{\exp\{\mathbf{w}_{ij,m}\boldsymbol{\rho}\}}{1+\exp\{\mathbf{w}_{ij,m}\boldsymbol{\rho}\}}.$$ 
In Section \ref{sec:addhealth} and the Online Appendix G.2, we present how to estimate $\boldsymbol{\rho}$ in detail. Here, we discuss how to obtain the estimator $\hat{P}(\mathbf{A}_{m}|\hat{\boldsymbol{\rho}},\mathbf{X}_m,\kappa(\mathcal{A}_m))$ given $\hat{\boldsymbol{\rho}}$ and $\kappa(\mathcal{A}_m)=\mathcal{A}_m$. Note that $\hat{P}(a_{ij,m}=1|\hat{\boldsymbol{\rho}},\mathbf{X}_m,a_{ij,m}^{obs}=1)=1$ because observed links necessarily exist. Second, note also that for any individual $i$, such that $n_{i,m}<T$, we have $\hat{P}(a_{ij,m}|\hat{\boldsymbol{\rho}},\mathbf{X}_m,a_{ij,m}^{obs})=a_{ij}^{obs}$ for all $j$, as their network data are not censored.

Thus, the structural model is only used to obtain the probability of links that are not observed for individuals whose links are potentially censored, i.e., $\hat{P}(a_{ij,m}=1|\hat{\boldsymbol{\rho}},\mathbf{X}_m,a_{ij,m}^{obs}=0)={\exp\{\mathbf{w}_{ij,m}\hat{\boldsymbol{\rho}}\}}/({1+\exp\{\mathbf{w}_{ij,m}\hat{\boldsymbol{\rho}}\}})$ for all $ij$, such that $n_i= T$.

\end{example}

\begin{example}[Misclassification]\label{ex:misclass}
\cite{hardy2019estimating} study cases in which networks are observed but may include misclassified links (i.e., false positives and false negatives). Here, $\mathcal{A}_m$ can be represented by an $N_m\times N_m$ binary matrix $\mathbf{A}_m^{mis}$. Consider the same simple model as in Example \ref{ex:sampled} and \ref{ex:censored}:
$$P_{ij,m}^1\equiv P(a_{ij,m}=1|\mathbf{X}_m)=\frac{\exp\{\mathbf{w}_{ij,m}\boldsymbol{\rho}\}}{1+\exp\{\mathbf{w}_{ij,m}\boldsymbol{\rho}\}}.$$ 
The (consistent) estimation $\boldsymbol{\rho}$ in such a context follows directly from the existing literature on misclassification in binary outcome models, e.g., \cite{hausman1998misclassification}. In this context, the simplicity of the sampling scheme allows to consider the identity map $\kappa(\mathcal{A}_m)=\mathcal{A}_m$. The estimator for the distribution of the true network can be obtained using Bayes' rule:\begin{eqnarray*}
P(a_{ij,m} = 1|a_{ij,m}^{mis} = 0, \mathbf{X}_m) &=& \frac{\rho_2P_{ij,m}^1}{\rho_2P_{ij,m}^1 + (1 - \rho_1)(1 - P_{ij,m}^1)}\\
P(a_{ij,m} = 1|a_{ij,m}^{mis} = 1, \mathbf{X}_m) &=& \frac{(1 - \rho_2)P_{ij,m}^1}{(1 - \rho_2)P_{ij,m}^1 + \rho_1(1 - P_{ij,m}^1)},
\end{eqnarray*}
where $\rho_1$ and $\rho_2$ are the missclassification probabilities. We consider this case in our Monte Carlo simulations in Section \ref{sec:montecarlo}. 
\end{example}
\section{Simulated Generalized Method of Moment Estimators}\label{sec:iv}

In this section, we present an estimator based on a Simulated Generalized Method of Moments (SGMM). Our SGMM is constructed as a de-biased simulated version of the widely used linear GMM in \cite{bramoulle2009identification}. 

Before presenting the estimator, we start with an informal discussion of how the moment function is built. A formal treatment is presented in Appendix \ref{sec:appendix_proofs}. Recall first the linear-in-means model presented in the previous section:
\begin{eqnarray*}
\mathbf{y}_m
&=&\mathbf{V}_m\tilde{\boldsymbol{\theta}} + \alpha\mathbf{G}_m\mathbf{y}_m + \boldsymbol{\varepsilon}_m,
\end{eqnarray*}
where we defined $\mathbf{V}_m=[\mathbf{1}_m,\mathbf{X}_m,\mathbf{G}_m\mathbf{X}_m]$, and $\tilde{\boldsymbol{\theta}} =[c,\boldsymbol{\beta}',\boldsymbol{\gamma}']'$. A valid set of instruments is: $\mathbf{Z}_m=[\mathbf{1}_m,\mathbf{X}_m,\mathbf{G}_m\mathbf{X}_m,\mathbf{G}_m^2\mathbf{X}_m,\mathbf{G}_m^3\mathbf{X}_m,...]$ \citep{bramoulle2009identification}. This defines the following moment function: $\mathbf{m}_m(\boldsymbol{\theta})=\mathbf{Z}_m'\boldsymbol{\varepsilon}_m$, where $\boldsymbol{\theta}=[\alpha,c,\boldsymbol{\beta}',\boldsymbol{\gamma}']'$, and one can easily show that $\mathbb{E}[\mathbf{m}_m(\boldsymbol{\theta})|\mathbf{A}_m,\mathbf{X}_m]=\mathbf{0}$ for $\boldsymbol{\theta}=\boldsymbol{\theta}_0$ and that $\boldsymbol{\theta}_0$ is identified under the usual rank condition.\footnote{As standard, we use the subscript $0$ to denote the true value of the parameter. See e.g., \cite{bramoulle2009identification} and \cite{lee2010specification} for identification results when $\mathbf{G}_m$ is observed.}

Unfortunately, this approach is not feasible when $\mathbf{G}_m=f(\mathbf{A}_m)$ is not observed. As discussed, our strategy is to develop a simulated version of this simple linear GMM estimator. Indeed, equipped with a consistent estimator of the distribution of $\mathbf{A}_m$ (see Definition \ref{def:estimator}), we can draw network structures from that same distribution.

To simplify the notation, we denote $\dot{\mathbf{G}}_m=f(\dot{\mathbf{A}}_m)$, $\ddot{\mathbf{G}}_m=f(\ddot{\mathbf{A}}_m)$, and $\dddot{\mathbf{G}}_m=f(\dddot{\mathbf{A}}_m)$, where $\dot{\mathbf{A}}_m$, $\ddot{\mathbf{A}}_m$, and $\dddot{\mathbf{A}}_m$ are independent draws from the distribution $\hat{P}(\mathbf{A}_m|\hat{\boldsymbol{\rho}},\mathbf{X}_m,\kappa(\mathcal{A}_m))$. In particular, note that: $\dot{\mathbf{G}}_m=\dot{\mathbf{G}}_m(\hat{\boldsymbol{\rho}})=f(\{\dot{a}_{m,ij}\}_{ij})=f(\{\mathbbm{1}[\hat{P}(\dot{a}_{m,ij} = 1|\hat{\boldsymbol{\rho}};\mathbf{X}_m, \kappa(\mathcal{A}_m))\geq \dot{u}_{m,ij}]\}_{ij})$, where $\dot{u}_{m,ij}\sim_{iid}U[0,1]$ and independent of $\boldsymbol \varepsilon_m$ (and similarly for $\ddot{\mathbf{G}}_m$ and $\dddot{\mathbf{G}}_m$), and $\mathbbm{1}$ is the indicator function. We will also note $\dot{\mathbf{Z}}_m$ and $\dot{\mathbf{V}}_m$, the versions of ${\mathbf{Z}}_m$ and ${\mathbf{V}}_m$ in which $\mathbf{G}_m$ is replaced with $\dot{\mathbf{G}}_m$ (and similarly for $\ddot{\mathbf{G}}_m$ and $\dddot{\mathbf{G}}_m$).

Now, suppose that we replace the unobserved $\mathbf{G}_m$ with $\dot{\mathbf{G}}_m$ everywhere in the expression $\mathbf{Z}_m'\boldsymbol{\varepsilon}_m$. This would lead to a moment function with a conditional expectation given by:
\begin{eqnarray*}
\mathbb{E}(\dot{\mathbf{m}}_m(\boldsymbol{\theta})|\kappa(\mathcal{A}_m),\mathbf{X}_m)&=&\mathbb{E}(\dot{\mathbf{Z}}'_m[(\mathbf I_m - \alpha \dot{\mathbf G}_m )\mathbf{y}_m - \dot{\mathbf{V}}\tilde{\boldsymbol{\theta}}]| \kappa(\mathcal{A}_m),\mathbf{X}_m)\\
&=&\mathbb{E}(\dot{\mathbf{Z}}'_m\dot{\boldsymbol{\varepsilon}}_m|\kappa(\mathcal{A}_m),\mathbf{X}_m),
\end{eqnarray*}
where $\mathbf I_m$ is the identity matrix of dimension $N_m$ and $\dot{\boldsymbol{\varepsilon}}_m = (\mathbf I_m - \alpha \dot{\mathbf G}_m )\mathbf{y}_m - \dot{\mathbf{V}}\tilde{\boldsymbol{\theta}}$. The expectation of the moment function does not generally equal $\mathbf{0}$ when $\boldsymbol{\theta}=\boldsymbol{\theta}_0$, even asymptotically.\footnote{Recall from Definition \ref{def:estimator} that $\dot{\mathbf{G}}_m$ is drawn from the same distribution as $\mathbf{G}_m$ only as $M\rightarrow\infty$.}

There are two issues with the previous moment function. First, the instruments and the explanatory variables are generated using the same network draw $\dot{\mathbf{G}}_m$, which introduces a correlation between the $\dot{\mathbf{Z}}_m$ and $\dot{\boldsymbol{\varepsilon}}_m$ (which includes the approximation error), conditionally on $\kappa(\mathcal{A}_m)$, and $\mathbf X_m$, even when $\boldsymbol{\theta}=\boldsymbol{\theta}_0$. This can easily be resolved by simply using different draws to construct the instruments and the explanatory variables.\footnote{Using different draws for the instruments and the explanatory variables is not necessary for the validity of our estimator, but one can show that it decreases the variance of the moment condition and is thus more efficient.} This leads to: $\mathbb{E}\left(\ddot{\mathbf{m}}_m(\boldsymbol{\theta})|\kappa(\mathcal{A}_m),\mathbf{X}_m\right)=\mathbb{E}(\dot{\mathbf{Z}}'_m\ddot{\boldsymbol{\varepsilon}}_m|\kappa(\mathcal{A}_m),\mathbf{X}_m)$, where $\ddot{\boldsymbol{\varepsilon}}_m=(\mathbf I_m - \alpha \ddot{\mathbf G}_m )\mathbf{y}_m - \ddot{\mathbf{V}}_m\tilde{\boldsymbol{\theta}}$. 

However, in general, $\mathbb{E}(\ddot{\boldsymbol{\varepsilon}}_m|\kappa(\mathcal{A}_m),\mathbf{X}_m)\neq\mathbf{0}$ at $\boldsymbol{\theta}=\boldsymbol{\theta}_0$ since $\mathbf{y}_m$ is a function of the true network structure $\mathbf{G}_m$. Indeed, replacing $\mathbf{y}_m$, we can rewrite:
$$
\ddot{\boldsymbol{\varepsilon}}_m=(\mathbf I_m - \alpha \ddot{\mathbf G}_m )(\mathbf I_m - \alpha_0 \mathbf G_m )^{-1}[\mathbf{V}_m\tilde{\boldsymbol{\theta}}_0+\boldsymbol{\varepsilon}_m]-\ddot{\mathbf{V}}_m\tilde{\boldsymbol{\theta}}.
$$
While we can show that $\mathbb{E}[(\mathbf I_m - \alpha \ddot{\mathbf G}_m )(\mathbf I_m - \alpha_0 \mathbf G_m )^{-1}{\boldsymbol{\varepsilon}}_m|\kappa(\mathcal{A}_m),\mathbf{X}_m]=\mathbf{0}$ from the law of iterated expectations and Assumption \ref{as:exonet}, we have:
\begin{eqnarray*}
\mathbb{E}[(\mathbf I_m - \alpha \ddot{\mathbf G}_m )(\mathbf I_m - \alpha_0 \mathbf G_m )^{-1}\mathbf{V}_m\tilde{\boldsymbol{\theta}}_0 |\kappa(\mathcal{A}_m),\mathbf{X}_m]-\mathbb{E}[\ddot{\mathbf{V}}_m\tilde{\boldsymbol{\theta}}|\kappa(\mathcal{A}_m),\mathbf{X}_m]\neq 0,
\end{eqnarray*}
when $\boldsymbol{\theta}=\boldsymbol{\theta}_0$, even asymptotically. This is due to the approximation error in using $\ddot{\mathbf{G}}_m$ instead of $\mathbf{G}_m$. This approximation error does not vanish asymptotically. In particular, because groups are bounded, the product $(\mathbf I_m - \alpha_0 \ddot{\mathbf G}_m )(\mathbf I_m - \alpha_0 \mathbf G_m )^{-1}$ does not converge to the identity matrix. If it did, consistency would follow.\footnote{In Proposition \ref{prop:bias_nocontext} in Online Appendix D, we derive the asymptotic bias on $\boldsymbol{\theta}$ assuming that ${\mathbf{G}_m\mathbf{X}_m}$ is observed.}

Our SGMM presented below offers a bias-corrected version of this estimator. Specifically, consider the following (feasible) approximation of the bias of $\mathbb{E}(\ddot{\boldsymbol{\varepsilon}}_m|\kappa(\mathcal{A}_m),\mathbf{X}_m)$:
$$
\boldsymbol{\delta}_m=(\mathbf I_m - \alpha \ddot{\mathbf G}_m )(\mathbf I_m - \alpha \dddot{\mathbf G}_m )^{-1}\dddot{\mathbf{V}}_m\tilde{\boldsymbol{\theta}}-\ddot{\mathbf{V}}_m\tilde{\boldsymbol{\theta}}.
$$
We obtain $\ddot{\boldsymbol{\varepsilon}}_m-\boldsymbol{\delta}_m=(\mathbf I_m - \alpha \ddot{\mathbf G}_m )\mathbf{y}_m-(\mathbf I_m - \alpha \ddot{\mathbf G}_m )(\mathbf I_m - \alpha \dddot{\mathbf G}_m )^{-1}\dddot{\mathbf{V}}_m\tilde{\boldsymbol{\theta}}$, and we can show that $\mathbb{E}(\ddot{\boldsymbol{\varepsilon}}_m-\boldsymbol{\delta}_m|\kappa(\mathcal{A}_m),\mathbf{X}_m)=\mathbf{0}$ for $\boldsymbol{\theta}=\boldsymbol{\theta}_0$ as $M\rightarrow\infty$.\footnote{In the bias approximation expression, we use a third independent draw, $\dddot{\mathbf{G}}_m$, as an approximation of $\mathbf{G}_m$ to ensure that it remains independent of $\dot{\mathbf{G}}_m$ and $\ddot{\mathbf{G}}_m$, just as $\mathbf{G}_m$ is.}

The above discussion thus leads to the definition of our SGMM. Let $\lbrace \dot{\mathbf{G}}_m^{(r)}\rbrace_{r=1}^R$, $\lbrace \ddot{\mathbf{G}}^{(s)}_m\rbrace_{s=1}^S$, and $\lbrace \dddot{\mathbf{G}}^{(t)}_m\rbrace_{t=1}^T$ be sequences of independent draws from estimator of the network formation process (see Definition \ref{def:estimator}). Let also $\dot{\mathbf{Z}}^{(r)}_m=[\mathbf{1}_m,\mathbf{X}_m,\dot{\mathbf{G}}^{(r)}_m\mathbf{X}_m,(\dot{\mathbf{G}}^{(r)}_m)^2\mathbf{X}_m,(\dot{\mathbf{G}}_m^{(r)})^3\mathbf{X}_m,...]$ be matrices of simulated instruments and $\dddot{\mathbf{V}}_m^{(t)}=[\mathbf{1}_m,\mathbf{X}_m,\dddot{\mathbf{G}}_m^{(t)}\mathbf{X}_m]$ be matrices of simulated explanatory variables. Finally, define the (simulated) empirical moment function as follows:
\begin{equation}\label{eq:moment}
\bar{\mathbf{m}}_{M}(\boldsymbol{\theta})=\frac{1}{M}\sum_m\frac{1}{RST}\sum_{rst}\dot{\mathbf{Z}}^{(r)\prime}_m\left[(\mathbf{I}_m-\alpha\ddot{\mathbf{G}}^{(s)}_m)\left(\mathbf{y}_m-(\mathbf{I}_m-\alpha\dddot{\mathbf{G}}_m^{(t)})^{-1}\dddot{\mathbf{V}}_m^{(t)}\tilde{\boldsymbol{\theta}}\right)\right],
\end{equation}
which is the empirical version of the moment $\mathbb{E}(\dot{\mathbf{Z}}_m^\prime(\ddot{\boldsymbol{\varepsilon}}_m-\boldsymbol{\delta}_m)|\kappa(\mathcal{A}_m),\mathbf{X}_m)$ across multiple draws and groups. The next result follows.

\begin{theorem}[SGMM]\label{prop:non_observed} Suppose that Assumptions \ref{as:manymarkets}--\ref{as:partial} and regularity conditions \ref{as:regularity}-- \ref{as:reg_variance} hold (see Appendix \ref{sec:appendix_proofs}). Suppose also that for any $\boldsymbol{\theta}\ne \boldsymbol{\theta}_0$, $\lim_{M\rightarrow\infty} \mathbb{E}(\bar{\mathbf{m}}_M(\boldsymbol{\theta}, \boldsymbol{\rho}_0)) \ne \mathbf 0$. Then, for any positive integers $R$, $S$, and $T$, the (simulated) GMM estimator based on (\ref{eq:moment}) is consistent and asymptotically normally distributed.
\end{theorem}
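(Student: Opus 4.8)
The plan is to verify the conditions of a standard GMM consistency-and-asymptotic-normality theorem (e.g., Newey--McFadden) for the estimator defined by $\bar{\mathbf m}_M(\boldsymbol\theta)$ in \eqref{eq:moment}, treating the first-stage estimator $\hat{\boldsymbol\rho}_M$ as a nuisance parameter entering at rate $\sqrt M$ through Assumption \ref{as:partial}. The key structural input, established in the informal discussion preceding the theorem, is that $\mathbb{E}(\dot{\mathbf Z}_m'(\ddot{\boldsymbol\varepsilon}_m-\boldsymbol\delta_m)\mid\kappa(\mathcal A_m),\mathbf X_m)\to\mathbf 0$ at $\boldsymbol\theta=\boldsymbol\theta_0$ as $M\to\infty$: the point of the bias correction $\boldsymbol\delta_m$ is exactly to cancel the non-vanishing approximation-error term $\mathbb{E}[(\mathbf I_m-\alpha\ddot{\mathbf G}_m)(\mathbf I_m-\alpha_0\mathbf G_m)^{-1}\mathbf V_m\tilde{\boldsymbol\theta}_0\mid\kappa(\mathcal A_m),\mathbf X_m]$, because once $\hat P$ is the true conditional law, $\ddot{\mathbf G}_m$ and $\dddot{\mathbf G}_m$ are exchangeable with $\mathbf G_m$ given $(\kappa(\mathcal A_m),\mathbf X_m)$, so the two bias terms have the same conditional mean. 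The exogeneity Assumption \ref{as:exonet}, via the law of iterated expectations, kills the term involving $\boldsymbol\varepsilon_m$. This gives a population moment that is zero at $\boldsymbol\theta_0$ and, by the added identification hypothesis $\lim_M\mathbb{E}(\bar{\mathbf m}_M(\boldsymbol\theta,\boldsymbol\rho_0))\ne\mathbf 0$ for $\boldsymbol\theta\ne\boldsymbol\theta_0$, uniquely so.

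First I would set up the asymptotics. Because groups are bounded in size and independent (Assumption \ref{as:manymarkets}), and $\mathbf X_m$ is uniformly bounded, while Assumption \ref{as:alpha} gives uniform invertibility of $(\mathbf I_m-\alpha\mathbf G_m)$ over a neighbourhood of $\alpha_0$, every summand in \eqref{eq:moment} is a bounded function of $(\mathbf y_m,\mathbf X_m,\kappa(\mathcal A_m))$ and the simulation draws, with moments of all orders controlled uniformly in $m$ (this is where regularity conditions \ref{as:regularity}--\ref{as:reg_variance} are invoked). A uniform weak law of large numbers for independent, non-identically distributed, uniformly integrable arrays then gives $\sup_{\boldsymbol\theta\in\Theta}\lVert\bar{\mathbf m}_M(\boldsymbol\theta,\hat{\boldsymbol\rho}_M)-\bar m_M^\infty(\boldsymbol\theta)\rVert\to_p 0$, where $\bar m_M^\infty$ is the corresponding population average; continuity in $\boldsymbol\theta$ is immediate from the explicit matrix form. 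Combined with the identification hypothesis and compactness of $\Theta$, consistency of the GMM minimizer follows from the standard argument. Note the draw counts $R,S,T$ enter only as finite averages of i.i.d.\ replicates and do not affect consistency — this is the payoff of the bias correction being exact in expectation rather than only in the limit of infinitely many draws.

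For asymptotic normality I would expand the first-order condition around $\boldsymbol\theta_0$. The moment function is linear in $\tilde{\boldsymbol\theta}$ and smooth in $\alpha$, so the Jacobian $\mathbf D_M=\partial\bar{\mathbf m}_M/\partial\boldsymbol\theta'$ converges in probability to a nonsingular limit $\mathbf D$ (nonsingularity is the local version of the identification hypothesis and is assumed among the regularity conditions). The main subtlety is that $\hat{\boldsymbol\rho}_M$ is estimated: writing $\sqrt M\,\bar{\mathbf m}_M(\boldsymbol\theta_0,\hat{\boldsymbol\rho}_M)=\sqrt M\,\bar{\mathbf m}_M(\boldsymbol\theta_0,\boldsymbol\rho_0)+\boldsymbol\Gamma\,\sqrt M(\hat{\boldsymbol\rho}_M-\boldsymbol\rho_0)+o_p(1)$ via a mean-value expansion in $\boldsymbol\rho$, where $\boldsymbol\Gamma=\plim\,\partial\bar{\mathbf m}_M/\partial\boldsymbol\rho'$. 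Here $\boldsymbol\rho$ enters through $\hat P$ and hence through $\dot{\mathbf G}_m,\ddot{\mathbf G}_m,\dddot{\mathbf G}_m$; because the draws are generated by the inverse-CDF construction $\mathbbm 1[\hat P(\dot a_{m,ij}=1\mid\hat{\boldsymbol\rho};\cdot)\ge\dot u_{m,ij}]$ with the $\dot u_{m,ij}$ held fixed, one takes the derivative of the simulated moment in the sense of a smoothed/averaged object and uses that $Q$ is twice continuously differentiable in $\boldsymbol\rho$; the resulting $\boldsymbol\Gamma$ is finite. A CLT for independent arrays, applied jointly to $(\sqrt M\,\bar{\mathbf m}_M(\boldsymbol\theta_0,\boldsymbol\rho_0),\sqrt M(\hat{\boldsymbol\rho}_M-\boldsymbol\rho_0))$ — using independence across $m$ of both the data and the simulation draws, and Assumption \ref{as:partial} for the second block — yields joint asymptotic normality, and hence $\sqrt M(\hat{\boldsymbol\theta}_M-\boldsymbol\theta_0)\to_d N(\mathbf 0,(\mathbf D'\mathbf W\mathbf D)^{-1}\mathbf D'\mathbf W\,\boldsymbol\Omega\,\mathbf W\mathbf D(\mathbf D'\mathbf W\mathbf D)^{-1})$ for the weight matrix $\mathbf W$, with $\boldsymbol\Omega$ the asymptotic variance of the combined influence function $\sqrt M\,\bar{\mathbf m}_M(\boldsymbol\theta_0,\boldsymbol\rho_0)+\boldsymbol\Gamma\sqrt M(\hat{\boldsymbol\rho}_M-\boldsymbol\rho_0)$.

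The main obstacle is the rigorous handling of the first-stage correction term — in particular, justifying the differentiability of the simulated moment in $\boldsymbol\rho$ despite the indicator functions in the draw construction, and controlling the cross-covariance between $\bar{\mathbf m}_M(\boldsymbol\theta_0,\boldsymbol\rho_0)$ and $\hat{\boldsymbol\rho}_M-\boldsymbol\rho_0$ when both may depend on overlapping pieces of $\mathcal A_m$ (through $\kappa$). I expect this to be dispatched by noting that conditioning on $(\kappa(\mathcal A_m),\mathbf X_m)$ the simulated moment is a smooth function of the linking probabilities, so that differentiating through the expectation over the uniform draws removes the indicators, and by keeping $\boldsymbol\Gamma$ and the cross-covariance as generic finite objects entering $\boldsymbol\Omega$ rather than computing them — consistency of the sandwich variance estimator then being a routine plug-in argument under the stated regularity conditions. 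Everything else — the ULLN, continuity, the Jacobian limit — is standard given boundedness of groups and of $\mathbf X_m$.
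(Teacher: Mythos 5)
Your overall architecture matches the paper's: consistency via Newey--McFadden (moment validity at $(\boldsymbol{\theta}_0,\boldsymbol{\rho}_0)$ from the exchangeability of the simulated draws with $\mathbf{G}_m$ conditional on $(\mathbf{X}_m,\kappa(\mathcal{A}_m))$ at the true $\boldsymbol{\rho}_0$, plus a ULLN and the assumed unique zero), and normality via a first-order expansion that decomposes the influence function into $\sqrt{M}\,\bar{\mathbf{m}}_M(\boldsymbol{\theta}_0,\boldsymbol{\rho}_0)$ plus a first-stage correction $\boldsymbol{\Gamma}\sqrt{M}(\hat{\boldsymbol{\rho}}-\boldsymbol{\rho}_0)$. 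The finite-$R,S,T$ point and the role of the bias correction $\boldsymbol{\delta}_m$ are also correctly identified.

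There is, however, a genuine gap in the normality step. You write $\sqrt{M}\,\bar{\mathbf{m}}_M(\boldsymbol{\theta}_0,\hat{\boldsymbol{\rho}})=\sqrt{M}\,\bar{\mathbf{m}}_M(\boldsymbol{\theta}_0,\boldsymbol{\rho}_0)+\boldsymbol{\Gamma}\sqrt{M}(\hat{\boldsymbol{\rho}}-\boldsymbol{\rho}_0)+o_p(1)$ ``via a mean-value expansion in $\boldsymbol{\rho}$,'' but the \emph{empirical} moment is a step function of $\boldsymbol{\rho}$ (each simulated link is $\mathbbm{1}[\hat{P}(\cdot\,|\,\boldsymbol{\rho})\geq u]$), so no such expansion exists. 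You correctly diagnose the problem and propose to differentiate the \emph{expected} moment instead; that does deliver a finite $\boldsymbol{\Gamma}=\plim\partial\bar{\mathbf{m}}^{\ast}_M/\partial\boldsymbol{\rho}'$ and the drift term. But smoothness of the expectation is not enough: to transfer the expansion from $\bar{\mathbf{m}}^{\ast}_M$ back to $\bar{\mathbf{m}}_M$ with an $o_p(1)$ remainder at the $\sqrt{M}$ scale, you must additionally show that the centered process $\nu_M(\boldsymbol{\rho})=\sqrt{M}[\bar{\mathbf{m}}_M(\boldsymbol{\theta}_0,\boldsymbol{\rho})-\bar{\mathbf{m}}^{\ast}_M(\boldsymbol{\theta}_0,\boldsymbol{\rho})]$ satisfies $\nu_M(\hat{\boldsymbol{\rho}})-\nu_M(\boldsymbol{\rho}_0)=o_p(1)$. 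This is a stochastic equicontinuity statement about an empirical process indexed by $\boldsymbol{\rho}$ through discontinuous indicators; it is the main technical content of the paper's normality proof and is established there by showing the moment class is of Andrews' Type IV($p=2$) (a bracketing/Ossiander entropy argument, exploiting that the moment is linear in $\boldsymbol{\varepsilon}_m$ with uniformly bounded coefficients and that the squared increments have expectations that are Lipschitz in $\boldsymbol{\rho}$). Your phrase ``differentiating through the expectation over the uniform draws removes the indicators'' identifies the right smoothing device for computing $\boldsymbol{\Gamma}$, but it does not control the fluctuation term, and without that control the claimed $o_p(1)$ remainder --- and hence the asymptotic normality --- does not follow. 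The consistency half is essentially fine (there only convergence in probability is needed, and splitting $\bar{\mathbf{m}}_M(\boldsymbol{\theta},\hat{\boldsymbol{\rho}})-\mathbb{E}[\bar{\mathbf{m}}_M(\boldsymbol{\theta},\boldsymbol{\rho}_0)]$ into a variance term and a smooth bias term closes it), but the normality argument needs the equicontinuity lemma made explicit.
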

The identification condition is standard and ensures that the moment condition is not solved at $\boldsymbol{\theta}\neq\boldsymbol{\theta}_0$.\footnote{In Lemma \ref{lemma:momentvalid} in Appendix \ref{sec:appendix_proofs}, with show that $\boldsymbol{\theta}_0$ solves the moment condition.} We discuss it in more detail below.

Theorem \ref{prop:non_observed} presents conditions for the consistency and asymptotic normality of our two-step estimator. In particular, similar to a standard simulated GMM \citep{gourieroux1996simulation}, consistency holds for a finite number of simulations.

Here, a few remarks regarding the consistency and asymptotic normality are in order. Note that the simulated moment function is based on network draws that depend on an \emph{estimated} distribution. This has two implications.

First, it implies that our SGMM estimator is a two-stage estimator and therefore that the asymptotic variance-covariance matrix for $\boldsymbol{\theta}$ has to account for the first-stage sampling uncertainty. We show how to estimate the resulting asymptotic variance-covariance in the Online Appendix C.2.

Second, the fact that the simulated variables are binary implies that the objective function of the two-stage estimator is not everywhere continuous in ${\boldsymbol{\rho}}$. While this has a limited impact on consistency, it does complicate the proof of the asymptotic normality. Our proof builds on the argument in \cite{andrews1994empirical}, and we show that the stochastic equicontinuity condition holds using a \emph{bracketing} argument.

Consistency and asymptotic normality also obviously depend on an identification condition. Here, the fact that the approximation of the bias $\boldsymbol{\delta}_m$ is non-linear in $\alpha$ implies that our SGMM is non-linear and that the identification condition cannot be simplified to a simple rank condition.

In the Online Appendix C.1, we show that the objective function of our SGMM can be concentrated around $\alpha$ and that \emph{conditional on $\alpha$}, the identification condition for $\tilde{\boldsymbol{\theta}}$ reduces to an asymptotic rank condition. Specifically, a sufficient condition for the identification of $\tilde{\boldsymbol{\theta}}$, conditional on $\alpha$, is that the expected value of:
$$
\frac{1}{M}\sum_m\frac{1}{RST}\sum_{rst}\dot{\mathbf{Z}}_m^{(r)\prime}(\mathbf{I}_m-\alpha\ddot{\mathbf{G}}_m^{(s)})(\mathbf{I}_m-\alpha\dddot{\mathbf{G}}_m^{(t)})^{-1}\dddot{\mathbf{V}}_m^{(t)}
$$
converges in probability to a full rank matrix for all $\alpha$.\footnote{In particular, identification of $\tilde{\boldsymbol{\theta}}$ conditional on $\alpha$ requires that the number of instruments (the number of column in $\dot{\mathbf{Z}}^{r}_m$) is at least as large as the number of explanatory variables (the number of column in $\dddot{\mathbf{V}}^{t}_m$). Thus, identification of $\boldsymbol{\theta}$ requires at least one instrument more than the number of explanatory variables. This condition is, however, not sufficient. The identification of $\alpha$ requires that the concentrated objective function is uniquely minimized, which cannot be reduced to a simple rank condition. See the Online Appendix C.1.} The last expression makes it clear that the non-linearity in our SGMM is sourced in the approximation of the asymptotic bias $\boldsymbol{\delta}_m=(\mathbf I_m - \alpha \ddot{\mathbf G}_m )(\mathbf I_m - \alpha \dddot{\mathbf G}_m )^{-1}\dddot{\mathbf{V}}_m\tilde{\boldsymbol{\theta}}-\ddot{\mathbf{V}}_m\tilde{\boldsymbol{\theta}}$.

When the matrix $\mathbf{G}_m$ is observed, we have $\dot{\mathbf{G}}_m^{(r)}=\ddot{\mathbf{G}}_m^{(s)}=\dddot{\mathbf{G}}_m^{(t)}={\mathbf{G}}_m$ and the expression reduces to $\mathbf{Z}'_m\mathbf{V}_m$, \citep{bramoulle2009identification} which does not depend on $\alpha$. This shows that the quality of $\hat{P}(\mathbf{A}_m|\hat{\boldsymbol{\rho}},\mathbf{X}_m,\kappa(\mathcal{A}_m))$ has strong implications for identification since the dependence on $\alpha$ is weaker when the correlation between network draws is strong. We study the finite sample properties of our estimator using Monte Carlo simulation in Section \ref{sec:montecarlo}.\footnote{Theorem \ref{prop:non_observed} assumes that the partial observability of $\mathbf{A}_m$ implies that $\mathbf{G}_m\mathbf{X}_m$ and $\mathbf{G}_m\mathbf{y}_m$ are both unobserved. However, in some cases, researchers can separately observe these quantities from survey questions. For example, one could simply obtain ${\mathbf{G}_m\mathbf{y}_m}$ from a question of the type: ``What is the average value of your friends' $y$?'' In these cases, it is possible to improve on our SGMM estimator by using this additional information. The resulting estimators are presented in Corollary \ref{prop:GXobs} and Corollary \ref{prop:Gyobs} of the Online Appendix D.}

\subsection{Monte Carlo Simulations}\label{sec:montecarlo}

In this subsection, we study the performance of our SGMM estimator using Monte Carlo simulations. We consider cases where links are missing at random (see Example \ref{ex:sampled}) and misclassified at random (see Example \ref{ex:misclass}). The simulated individual characteristics (i.e., the matrix $\mathbf X$) include two characteristics similar to "age" and "female" in our empirical application.\footnote{See Section \ref{sec:addhealth}. We simulate those variables from their empirical distributions in our sample. Parameter values are set to the estimates from our application: $(\alpha, \boldsymbol{\beta}, \boldsymbol{\gamma}) = (0.538, 3.806, -0.072, 0.132, 0.086, -0.003)$. {We assume that $\varepsilon$ is iid normally distributed with standard deviation of $\sigma = 0.707$.}}  The network formation process follows a logistic regression model: $$P(a_{m,ij} = 1|\mathbf X) = \frac{\exp\{\rho_1 + \rho_2 |x_{m,i1} - x_{m,j1}| + \rho_3 \mathbbm{1}\{x_{m,i2} = x_{m,j2}\}\}}{1 + \exp\{\rho_1 + \rho_2 |x_{m,i1} - x_{m,j1}| + \rho_3 \mathbbm{1}\{x_{m,i2} = x_{m,j2}\}\}},$$ where $x_{m,i1}$ represents "age" and $x_{m,i2}$ represents "female".\footnote{The parameter vector $\boldsymbol{\rho}$ is also set to its empirically estimated values: $\boldsymbol{\rho} = (-2.349, -0.700, 0.404)$.}

% The estimator proposed by \cite{bramoulle2009identification} is inconsistent unless the group size $N_m$ is unbounded and the expected number of missing and misclassified links grows at a rate strictly less than $N_m$ \citep{lewbel2023social}. Nevertheless, the estimator in \cite{bramoulle2009identification} serves as a natural benchmark, as it is a special case of our SGMM estimator when all links are observed.\footnote{When all links are observed, $\dot{\mathbf{G}}^{(r)}_m=\ddot{\mathbf{G}}^{(s)}_m=\dddot{\mathbf{G}}^{(t)}_m=\mathbf{G}^{obs}_m$. See our discussion of the estimator above.} 

We analyze different proportions of randomly missing and misclassified entries in the network matrix. Figure \ref{fig:mc_noFE} presents estimates for the endogenous peer effect coefficient $\alpha$ using our SGMM estimator. Figure \ref{fig:misslinks} shows the peer effect estimates for the case of missing links, while Figure \ref{fig:misclass} displays the estimates for the case of misclassified links.\footnote{Tables \ref{tan:sim:noFE}--\ref{tab:simmclas:FE} in Online Appendix E provide the full set of estimated coefficients, including results that control for unobserved group heterogeneity through fixed effects.} Additionally, we report estimates obtained using the standard IV estimator of \cite{bramoulle2009identification}, treating the observed network with missing values or misclassified links as the true network.

\begin{figure}[h!]
    \centering
    \begin{subfigure}{0.49\textwidth}
        \centering
        \includegraphics[width=\textwidth]{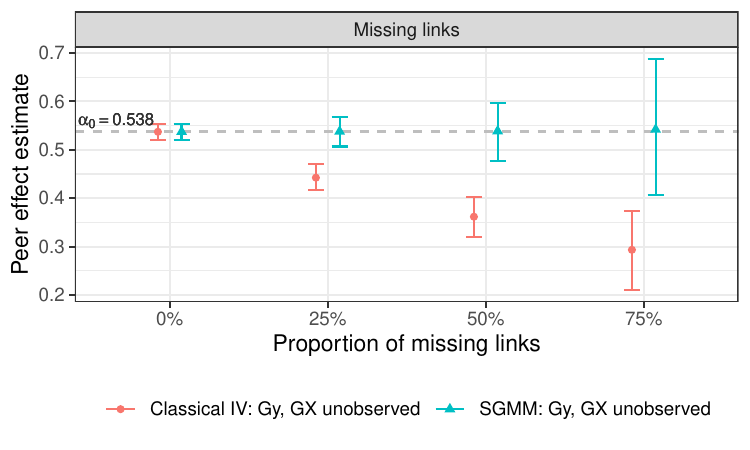}
        \subcaption{Missing Links}
        \label{fig:misslinks}
    \end{subfigure}
    \begin{subfigure}{0.49\textwidth}
        \centering
        \includegraphics[width=\textwidth]{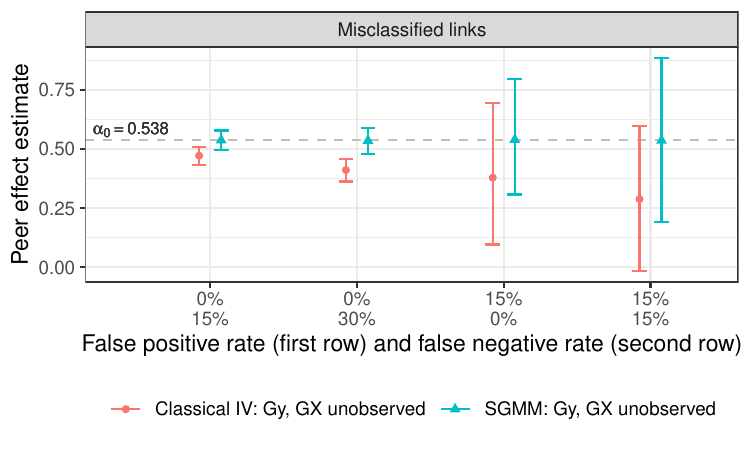}
        \subcaption{Misclassified Links}
        \label{fig:misclass}
    \end{subfigure}
    \caption{Estimated peer effects with mismeasured links}
    \label{fig:mc_noFE}

   \begin{minipage}{16cm}%
  \vspace{0.3cm}
\footnotesize{Note: Dots represent the average estimated values of $\alpha$, and bars indicate 95\% confidence intervals. Tables \ref{tan:sim:noFE}--\ref{tab:simmclas:FE} in Online Appendix E provide the full set of estimated coefficients. The "Classical IV" refers to the standard estimator of \cite{bramoulle2009identification}. We simulate data for 100 groups of 30 individuals each. We assume that $\varepsilon_i$ follows a normal distribution. We estimate $\boldsymbol{\rho}$ using a logit model based on the observed network entries (Figure \ref{fig:misslinks}) and a logit model with misclassification (Figure \ref{fig:misclass}). The resulting estimates allow us to construct the network distribution (see Definition \ref{def:estimator}) and subsequently compute our SGMM estimator. We set $R = 100$ and $S = T = 1$.}
  \end{minipage}%
\end{figure}

For the case of missing links, the estimates are centered around the true value. Although precision decreases as the fraction of missing links increases, our SGMM estimator maintains a reasonable level of accuracy, even when \emph{half} of the links are missing. In contrast, the standard IV estimator significantly underestimates the peer effect coefficient $\alpha$.  

For the case of misclassified links, the estimator performs well when there are false negatives only. Precision is affected when there are false positives, although the estimates remain centered around the true value. With false positives, the estimator for $\boldsymbol{\rho}$ loses precision since the network is simulated to match the one in our application: the density of the network is low.\footnote{This is typical of most network data: two randomly selected individuals are unlikely to be linked, even conditional on observables.} With few links, the finite sample cost of false positives is thus more important.

\section{Bayesian  Estimator}\label{sec:genestim}

In this section, we present a likelihood-based estimator. Accordingly, greater structure must be imposed on the errors $\boldsymbol{\varepsilon}_m$. %To clarify the exposition, we will focus on the network adjacency matrix $\mathbf{A}$ rather than the interaction matrix $\mathbf{G}$. Of course, this is without any loss of generality. 
Specifically, given parametric assumptions for $\boldsymbol{\varepsilon}_m$, one can write the log-likelihood of the outcome as:
\begin{equation}\label{eq:likelihood}
\ln \mathcal{P}(\mathbf{y}|\mathbf{A},\mathbf{X},\boldsymbol{\theta})=\sum_m \ln\mathcal{P}(\mathbf{y}_m|\mathbf{A}_m,\mathbf{X}_m;\boldsymbol{\theta}) ,    
\end{equation}
where notation without the index $m$ denotes vectors and matrices at the sample level. We abuse notation by letting $\boldsymbol{\theta}=[\alpha,\boldsymbol{\beta}',\boldsymbol{\gamma}',\boldsymbol{\sigma}']'$, which now includes $\boldsymbol{\sigma}$, additional unknown parameters of the distribution of $\boldsymbol{\varepsilon}_m$. Recall that from Equation (\ref{eq:linearinmean}), we have: $\mathbf{y}_m=(\mathbf{I}_m-\alpha\mathbf{G}_m)^{-1}(c\mathbf{1}_m+\mathbf{X}_m\boldsymbol{\beta}+{\mathbf{G}_m\mathbf{X}_m}\boldsymbol{\gamma}+\boldsymbol{\varepsilon}_m)$ since $(\mathbf{I}_m-\alpha\mathbf{G}_m)^{-1}$ exists under our Assumption \ref{as:alpha}.

If the adjacency matrix $\mathbf{A}_m$ is observed, then $\boldsymbol{\theta}$ could be estimated using a simple maximum likelihood estimator \citep[as in][]{lee2010specification} or using Bayesian inference \cite[as in][]{goldsmith2013social}. See in particular the identification conditions presented in \cite{lee2004asymptotic} and \cite{lee2010specification}. Since $\mathbf{A}_m$ is not observed, but $\mathcal{A}_m$ is observed, we focus on the following alternative likelihood:
$$
\ln \mathcal{P}(\mathbf{y}|\mathcal{A},\mathbf{X};\boldsymbol{\theta},\boldsymbol{\rho})=\sum_m \ln \sum_{\mathbf{A}_m}\mathcal{P}(\mathbf{y}_m|\mathbf{A}_m,\boldsymbol{\theta}){P}(\mathbf{A}_m|{\boldsymbol{\rho}},\mathbf{X}_m,\mathcal{A}_m).
$$
That is, we integrate the likelihood using the posterior distribution obtained from the network formation model in Equation (\ref{eq:gennetfor}) after observing $\mathcal{A}_m$.\footnote{See Equation (\ref{eq:Bayes}). Note also that, conceptually, we could condition on $\kappa(\mathcal{A})$ instead of $\mathcal{A}$ as in Section \ref{sec:iv}. However, this is much less attractive from a Bayesian perspective and thus limits ourselves to this (more efficient) case.}

One particular issue with estimating $\ln \mathcal{P}(\mathbf{y}|\mathcal{A},\mathbf{X};\boldsymbol{\theta},\boldsymbol{\rho})$ is that the summations over the set of all possible network structures $\mathbf{A}_m$, for each group $m$ is not tractable. Indeed, for a group of size $N_m$, the sum is over the set of possible adjacency matrices, which contain $2^{N_m(N_m-1)}$ elements. Then, simply simulating networks from ${P}(\mathbf{A}_m|{\boldsymbol{\rho}},\mathbf{X}_m,\mathcal{A}_m)$ and taking the average likely lead to poor approximations. A classical way to address this issue is to use an EM algorithm \citep{hardy2019estimating}. Although valid, we found that the Bayesian estimator proposed in this section is less restrictive and numerically outperforms its classical counterpart. The Bayesian treatment also has the advantage of being valid in finite samples, allowing for a richer set of network formation models and partially observed network information $\mathcal{A}$.\footnote{For example, models estimated using Aggregated Relational Data, see the Online Appendix H.}

For concreteness, we will assume that $\boldsymbol{\varepsilon}_m\sim\mathcal{N}(0,\sigma^2\mathbf{I}_m)$ for all $m$; however, it should be noted that our approach is valid for several alternative assumptions as long as it yields a computationally tractable likelihood. For each group $m$, and recalling that $\mathbf{G}_m=f(\mathbf{A}_m)$, we have:
\begin{eqnarray*}
\ln \mathcal{P}(\mathbf{y}_m|\mathbf{A}_m,\boldsymbol{\theta})&=&-N_m\ln(\sigma)+\ln|\mathbf{I}_m-\alpha\mathbf{G}_m|-\frac{N_m}{2}\ln(\pi)\\&&-\frac{1}{2\sigma^2}[(\mathbf{I}_m-\alpha\mathbf{G}_m)\mathbf{y}_m-c\mathbf{1}_m-\mathbf{X}_m\boldsymbol{\beta}-\mathbf{G}_m\mathbf{X}_m\boldsymbol{\gamma}]'\cdot\\ &&[(\mathbf{I}_m-\alpha\mathbf{G}_m)\mathbf{y}_m-c\mathbf{1}_m-\mathbf{X}_m\boldsymbol{\beta}-\mathbf{G}_m\mathbf{X}_m\boldsymbol{\gamma}].
\end{eqnarray*}
Because $\mathbf{A}_m$ is not observed, we follow \cite{tanner1987calculation}, and we use data augmentation to evaluate the posterior distribution of $\boldsymbol{\theta}$. That is, instead of focusing on the posterior distribution of $\boldsymbol{\theta}$ (i.e., $P(\boldsymbol{\theta}|\mathbf{y},\mathbf{A},\mathbf{X})$) in the case in which the network was observed, we focus instead on the posterior distribution $P(\boldsymbol{\theta},\mathbf{A}|\mathbf{y},\mathcal{A},\mathbf{X})$, treating $\mathbf{A}$ as another set of unknown parameters.

Since the number of parameters to be estimated is larger than the number of observations,\footnote{Each group contains $N_m$ observations while the dimension of $\mathbf{A}_m$ is $N_m(N_m-1)$.} the identification of the model rests on the a priori information on $\mathbf{A}$. A sensible prior for $\mathbf{A}$ is the consistent estimator of its distribution, i.e., $\Pi_m\hat{P}(\mathbf{A}_m|\hat{\boldsymbol{\rho}},\mathbf{X}_m,\mathcal{A}_m)$. Let $\pi(\boldsymbol{\rho}|\mathbf{X},\mathcal{A})$ be the prior density on $\boldsymbol{\rho}$. How to obtain $\pi(\boldsymbol{\rho}|\mathbf{X},\mathcal{A})$, depending on whether $\hat{\boldsymbol{\rho}}$ is obtained using a Bayesian or classical setting, is discussed in Examples 4 and 5 of the Online Appendix F.3. Given $\pi(\boldsymbol{\rho}|\mathbf{X},\mathcal{A})$, it is possible to obtain draws from the posterior distribution $P(\boldsymbol{\theta},\mathbf{A},\boldsymbol{\rho}|\mathbf{y},\mathcal{A})$ using the following Metropolis-Hastings MCMC:\footnote{As customary, for the rest of this section, we omit the dependence on $\mathbf{X}$ to lighten the notation. The notation with the index $t-1$ in this section refers to the $(t-1)$-th iteration of the MCMC, not the $(t-1)$-th group. Specifically, $\mathbf{A}_{t-1}$ denotes the adjacency matrix at the sample level in iteration $t-1$. Since the MCMC is a Metropolis-Hastings, the detailed balance and ergodicity conditions hold, so the MCMC converges to $P(\boldsymbol{\theta},\mathbf{A},\boldsymbol{\rho}|\mathbf{y},\mathcal{A})$. See \cite{cameron2005microeconometrics}, Section 13.5.4 for more details.}

% One may wish, however, to also use the information regarding the sampling uncertainty around this first stage. This is very similar to the inference for two-step estimators in a classical setting (e.g. for our SGMM estimator); estimation uncertainty in the first stage must be accounted for to provide valid inference in the second stage (see Section \ref{sec:iv}). To do this, we must also estimate $\boldsymbol{\rho}$ jointly with $\boldsymbol{\theta}$ (and $\mathbf{A}$).

\begin{algorithm}\label{algo:mcmc} The MCMC goes as follows for $t=1,...,T$, starting from any $\mathbf{A}_0,\boldsymbol{\theta}_0$, and $\boldsymbol{\rho}_0$.
\begin{enumerate}
    \item Draw $\boldsymbol{\rho}^*$ from the proposal distribution $q_\rho(\boldsymbol{\rho}^*|\boldsymbol{\rho}_{t-1})$ and accept $\boldsymbol{\rho}^*$ with probability
    $$
    \min\left\lbrace 1,\frac{P(\mathbf{A}_{t-1}|\boldsymbol{\rho}^*,\mathcal{A})q_\rho(\boldsymbol{\rho}_{t-1}|\boldsymbol{\rho}^*)\pi(\boldsymbol{\rho}^*|\mathcal{A})}
    {P(\mathbf{A}_{t-1}|\boldsymbol{\rho}_{t-1},\mathcal{A})q_\rho(\boldsymbol{\rho}^*|\boldsymbol{\rho}_{t-1})\pi(\boldsymbol{\rho}_{t-1}|\mathcal{A})} \right\rbrace.
    $$
    \item Propose $\mathbf{A}^*$ from the proposal distribution $q_A(\mathbf{A}^*|\mathbf{A}_{t-1})$ and accept $\mathbf{A}^*$ with probability
    $$
    \min\left\lbrace 1,\frac{\mathcal{P}(\mathbf{y}|\boldsymbol{\theta}_{t-1},\mathbf{A}^*)q_A(\mathbf{A}_{t-1}|\mathbf{A}^*)P(\mathbf{A}^*|{\boldsymbol{\rho}_{t-1}},\mathcal{A})}{\mathcal{P}(\mathbf{y}|\boldsymbol{\theta}_{t-1},\mathbf{A}_{t-1})q_A(\mathbf{A}^*|\mathbf{A}_{t-1})P(\mathbf{A}_{t-1}|{\boldsymbol{\rho}_{t-1}},\mathcal{A})} \right\rbrace.
    $$
    \item Draw $\alpha^*$ from the proposal $q_\alpha(\cdot|\alpha_{t-1})$ and accept $\alpha^*$ with probability
    $$
    \min\left\lbrace 1,\frac{\mathcal{P}(\mathbf{y}|\mathbf{A}_{t};\boldsymbol{\beta}_{t-1},\boldsymbol{\gamma}_{t-1},\alpha^*)q_\alpha(\alpha_{t - 1}|\alpha^*)\pi(\alpha^*)}{\mathcal{P}(\mathbf{y}|\mathbf{A}_{t};\boldsymbol{\theta}_{t-1})q_\alpha(\alpha^*|\alpha_{t-1})\pi(\alpha_{t-1})}\right\rbrace.
    $$
    \item Draw $[\boldsymbol{\beta,\gamma},\sigma]$ from their posterior conditional distributions (see Online Appendix F).
\end{enumerate}
\end{algorithm}

Step 1 allows to refine the estimation of $\boldsymbol{\rho}$. Indeed, in the first stage, $\boldsymbol{\rho}$ is inferred using the information provided by $\mathcal{A}$. In Step 1, however, $\boldsymbol{\rho}$ is updated conditional on $\mathcal{A}$ \emph{and} $\mathbf{A}_{t-1}$. This provides additional information not available in the first stage since $\mathbf{A}_{t-1}$ uses information provided by the likelihood function (\ref{eq:likelihood}).\footnote{In other words, $\boldsymbol{\rho}$ enters the likelihood of $\boldsymbol{y}$, conditional on $\mathcal{A}$.}

Steps 3 and 4 are standard, and detailed distributions can be found in the Online Appendix F. Step 2, however, requires some discussion. Indeed, the idea is the following: given the prior information $P(\textbf{A}|\boldsymbol{\rho}_{t-1},\mathcal{A})$, one must be able to draw samples from the posterior distribution of $\mathbf{A}$, given $\mathbf{y}$. This is not a trivial task.

In particular, there is no general rule for selecting the network proposal distribution $q_A(\cdot\vert\cdot)$. A natural candidate is a Gibbs sampling algorithm for each link, i.e., change only one link $ij$ at every step $t$ and propose $a_{ij}$ according to its marginal distribution, i.e., $a_{ij}\sim P(\cdot|\mathbf{A}_{-ij},\mathbf{y},\mathcal{A})$, where $\mathbf{A}_{-ij} = \{a_{kl}; k \ne i, l \ne j\}$. In this case, the proposal is always accepted.
%$$
%a_{ij}\sim P(\cdot|\mathbf{A}_{-ij},\mathbf{y},\mathcal{A})=\frac{\mathcal{P}(\mathbf{y}|a_{ij},\mathbf{A}_{-ij})P(a_{ij}|\mathbf{A}_{-ij};\tilde{\boldsymbol{\rho}},\mathcal{A})}{\mathcal{P}(\mathbf{y}|1,\mathbf{A}_{-ij})P(a_{ij}=1|\mathbf{A}_{-ij};\tilde{\boldsymbol{\rho}},\mathcal{A})+\mathcal{P}(\mathbf{y}|0,\mathbf{A}_{-ij})P(a_{ij}=0|\mathbf{A}_{-ij};\tilde{\boldsymbol{\rho}},\mathcal{A})},
%$$

However, it has been argued that Gibbs sampling could lead to slow convergence \citep[e.g.,][]{snijders2002markov,chatterjee2013estimating}, especially when the network is \emph{sparse} or exhibits a high level of \emph{clustering}. For example, \cite{mele2017} and \cite{bhamidi2008mixing} propose different blocking techniques meant to improve convergence.

Here, however, achieving Step 2 involves an additional computational issue because evaluating the likelihood ratio in Step 1 requires comparing the determinants $|\mathbf{I}-\alpha f(\mathbf{A}^*)|$ for each proposed $\mathbf{A}^*$, which is computationally intensive. %In particular, taking $\mathbf{G}^*=f(\mathbf{A}^*)$ to be a row-normalization of $\mathbf{A}^*$, altering a single element of $\mathbf{A}^*$ results in a change in the entire corresponding row of $\mathbf{G}^*$. Still, comparing the determinant of two matrices that differ only in a single row is relatively fast. Moreover, when $\mathbf{G}=\mathbf{A}$, \cite{hsieh2019structural} propose a blocking technique that facilitates the computation of the determinant.

%In any case, note that the computational complexity of Step 2 depends strongly on $P(\textbf{A}|\boldsymbol{\rho}_{t-1},\mathcal{A})$, which is a function of the assumed network formation model (\ref{eq:gennetfor}) and the observed information about the network structure $\mathcal{A}$. For censored network data, for example, most of the network structure is observed (see Example \ref{ex:censored}). This implies that $P(a_{ij}|\boldsymbol{\rho}_{t-1},\mathcal{A})\in\{0,1\}$ for most pairs $ij$. As such, few entries of $\mathbf{A}$ must be updated in Step 2. The opposite is true for ARD (see Example \ref{ex:ard}) for which all entries of $\mathbf{A}$ must be updated.

Then, the appropriate blocking technique depends strongly on $P(\textbf{A}|\boldsymbol{\rho}_{t-1},\mathcal{A})$ and the assumed distribution for $\boldsymbol{\varepsilon}$. For the simulations and estimations presented in this paper, we use the Gibbs sampling algorithm for each link, adapting the strategy proposed by \cite{hsieh2019structural} to our setting (see Proposition 3 in the Online Appendix F.2). This can be viewed as a \emph{worst-case} scenario. Nonetheless, the Gibbs sampler performs reasonably well in practice; however, we encourage researchers to try other updating schemes if Gibbs sampling performs poorly in their specific contexts. In particular, we present a blocking technique in the Online Appendix F that is also implemented in our R package \texttt{PartialNetwork}.\footnote{The complexity of Step 2 is not limited to our Bayesian approach. Classical estimators, such as GMM estimators, face a similar challenge in requiring the integration over the entire set of networks. The strategy used here is to rely on a Metropolis-Hastings algorithm, a strategy that has also been successfully implemented in the related literature on ERGMs \cite[e.g.,][]{snijders2002markov,mele2017,mele2020does,badev2018nash,hsieh2019structural}.}

Finally, note that for simple network formation models, it is possible to jointly estimate $\boldsymbol{\rho}$ and $\boldsymbol{\theta}$ within the same MCMC instead of using the two-step procedure described above. In this case, Step 1 can simply be replaced by:
\begin{enumerate}
    \item[1'.]  Draw $\boldsymbol{\rho}^*$ from the proposal distribution $q_\rho(\boldsymbol{\rho}^*|\boldsymbol{\rho}_{t-1})$ and accept $\boldsymbol{\rho}^*$ with probability
    $$
    \min\left\lbrace 1,\frac{P(\mathbf{A}_{t-1}|\boldsymbol{\rho}^*,\mathcal{A})P(\mathcal{A}|\boldsymbol{\rho}^*)q_\rho(\boldsymbol{\rho}_{t-1}|\boldsymbol{\rho}^*)\pi(\boldsymbol{\rho}^*)}
    {P(\mathbf{A}_{t-1}|\boldsymbol{\rho}_{t-1},\mathcal{A})P(\mathcal{A}|\boldsymbol{\rho}_{t-1})q_\rho(\boldsymbol{\rho}^*|\boldsymbol{\rho}_{t-1})\pi(\boldsymbol{\rho}_{t-1})} \right\rbrace.
    $$
\end{enumerate}
Here, $P(\mathcal{A}|\boldsymbol{\rho}^*)$ is the likelihood of the network information $\mathcal{A}$ assuming the network formation model in (\ref{eq:gennetfor}). Note that $\pi(\boldsymbol{\rho})$, the prior density on $\boldsymbol{\rho}$, no longer depends on $\mathcal{A}$ and can be chosen arbitrarily (e.g., uniform).

%This approach would work well for simple models, such as the ones discussed in Examples \ref{ex:sampled} and \ref{ex:misclass}. It is impractical, however, for more involved models, such as that proposed by \cite{breza2017using}.

\section{Application}\label{sec:addhealth}

In this section, we assume that the econometrician has access to network data but that the data may contain errors due to both \emph{sampling} (links coded with errors) and \emph{censoring}. To show how our method can be used to address these issues, we consider a simple example where we are interested in estimating peer effects on adolescents' academic achievements.

We use the widely used AddHealth database and show that network data errors have a large impact on the estimated peer effects. Specifically, we focus on a subset of schools from the Wave I ``In School'' sample that have less than 200 students (33 schools). Table G.1 in the Online Appendix G.3 presents the summary statistics.

Most papers estimating peer effects that use this particular database have taken the network structure as given. One notable exception is \cite{griffith2019namefriends}, looking at censoring: students can only report up to five male and five female friends. We also allow for censoring, but show that censoring is not the most important issue with the Add Health data. To understand why, we discuss the organization of the data.

Each adolescent is assigned a unique identifier. The data includes ten variables for the ten potential friendships (a maximum of five male and five female friends). These variables can contain missing values (no friendship was reported), an error code (the named friend could not be found in the database), or an identifier for the reported friends. These data are then used to generate the network's adjacency matrix $\mathbf{A}$.

Of course, error codes cannot be matched to any particular adolescent. Moreover, even in the case where the friendship variable refers to a valid identifier, the referred adolescent may still be absent from the database. A prime example is when the referred adolescent has been removed from the database by the researcher, perhaps because of other missing variables for these particular individuals. These missing links are quantitatively important as they account for roughly 45\% of the total number of links (7,830 missing for 10,163 observed links). Figure \ref{fig:misfr} displays the distribution of the number of ``unmatched named friends.''\footnote{We focus on within-school friendships; thus, nominations outside of school are not treated as ``unmatched friends.'' Note also that these data errors could be viewed as a special case of censoring \citep{griffith2019namefriends} in which researchers know exactly how many links are censored. The attenuation bias is thus expected.}
\begin{figure}[htbp]
    \centering
    \includegraphics[scale=0.6]{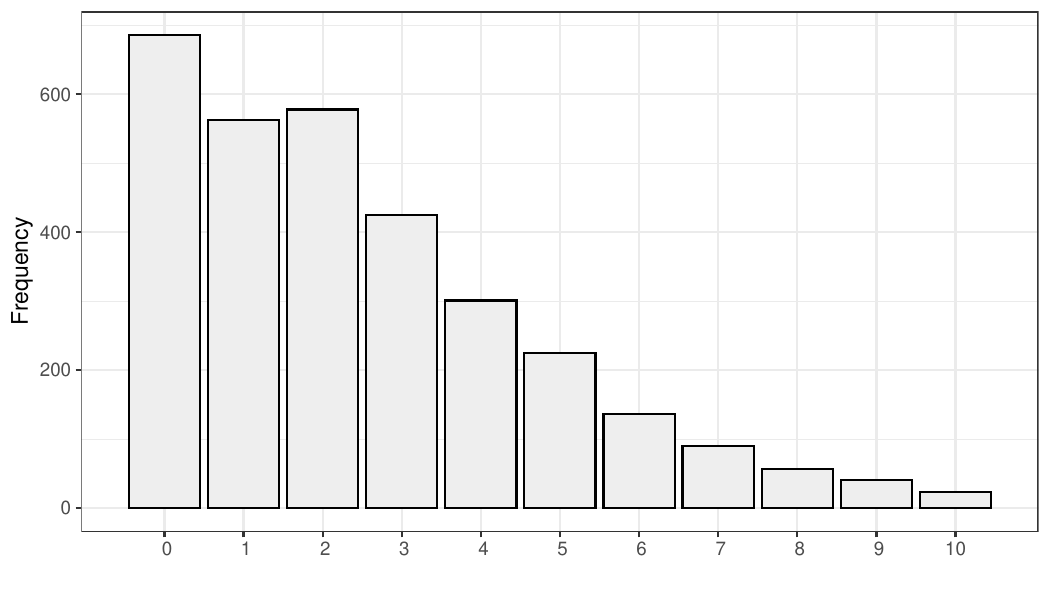}
    \caption{Frequencies of the number of missing links per adolescent}
    \label{fig:misfr}
\end{figure}

To use the methodology developed in sections \ref{sec:iv} and \ref{sec:genestim}, we first need to estimate a network formation model using the observed network data. In this section, we assume that links are generated using a simple logistic framework, i.e.,
$$
P(a_{ij,m}=1|\mathbf{X}_m)=\frac{\exp\{\mathbf{w}_{ij,m}\boldsymbol{\rho}\}}{1+\exp\{\mathbf{w}_{ij,m}\boldsymbol{\rho}\}},
$$
where $\mathbf{w}_{ij,m}$ is built to capture homophily on the observed characteristics of $i$ and $j$ (see Tables G.2 and G.3 in the Online Appendix G.3).

We estimate the network formation model on the set of individuals for which we observe no ``unmatched friends.'' For these students, we know for sure that their friendship data are complete. However, even under a missing at-random assumption, the estimation of $\boldsymbol{\rho}$ on this subsample is affected by a selection bias: individuals with more friends have a higher probability of being censored, or of having a friendship nomination coded with error.\footnote{Note that this is different from the random sampling discussed in our Example \ref{ex:sampled} and closer to the misclassification in Example \ref{ex:misclass}, with only false-negative type of errors.}

We control for this selection bias by weighting the log-likelihood of the network following \cite{manski1977estimation}. The details are presented in the Online Appendix G.1 and Online Appendix G.2. Intuitively, individuals in our restricted sample have fewer links. Therefore, the likelihood of \( a_{i,j} \) when \( i \) is selected in our restricted sample is weighted by the inverse selection probability. When accounting for missing data due to error codes only, we estimate the selection probability for an individual \( i \) who declares \( n_i \) friends as the proportion of individuals without missing network data who declare \( n_i \) friends. 

We use the same approach when controlling for missing data due to both error codes and censoring. However, in this case, the individual's censored number of friends has to be replaced with the (unobserved) true number of friends. We estimate individuals' true number of friends using a censored Poisson regression, where the observed number of friends in the network is used as the censored dependent variable: the variable is censored when individual \( i \) nominates five male friends or five female friends.

We present the estimation results for the SGMM and Bayesian estimator. Figure \ref{fig:ahestimations} summarizes the results for the endogenous peer effect coefficient $\alpha$, whereas the full set of results is presented in the Online Appendix G.3. The first two estimations (\emph{Obsv.Bayes} and \emph{Obsv.SGMM}) assume that the observed network is the true network for both estimators. The third and fourth estimations (\emph{Miss.Bayes} and \emph{Miss.SGMM}) account for missing data due to error codes but not for censoring. The last two estimations (\emph{TopMiss.Bayes} and \emph{TopMiss.SGMM}) account for missing data due to error codes and censoring.

\begin{figure}[h]
    \centering
    \includegraphics[scale=0.8]{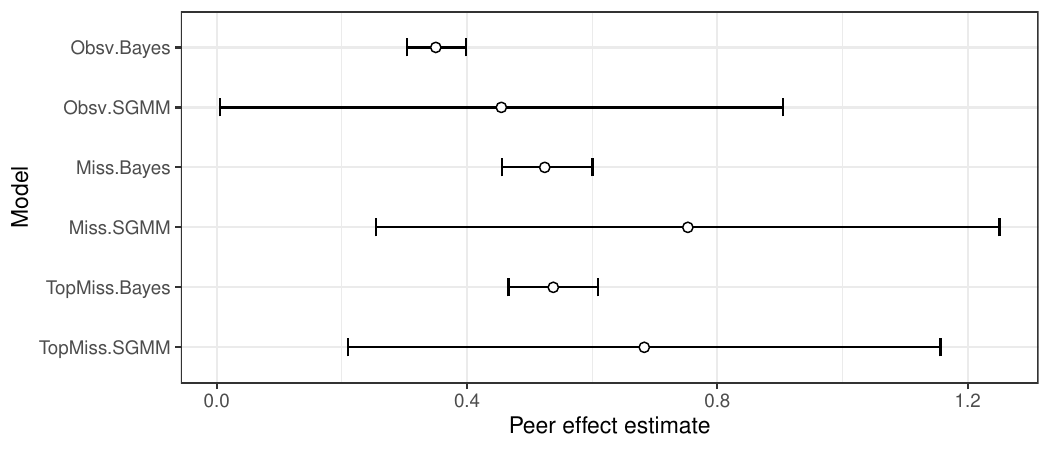}
    \caption{Peer effect estimate}
    \label{fig:ahestimations}

      \begin{minipage}{14cm}%
  \vspace{0.3cm}
    \footnotesize{Note: Dots represent estimated values (and posterior mean) of $\alpha$, and bars represent 95\% confidence intervals (and 95\% credibility intervals). Tables G.2 and G.3 in Online Appendix G.3 present the full set of estimated coefficients.}
  \end{minipage}%
\end{figure}

We first see that the SGMM estimator is less efficient than the Bayesian estimator. This should not be surprising since the Bayesian estimator uses more structure (in particular, homoscedastic, normally distributed errors). When we compare the estimations \emph{Obsv.SGMM} and \emph{Miss.SGMM}, the observed differences imply that the efficiency loss is because of the relative inefficiency of the GMM approach, and not of the missing links or specifically of our SGMM estimator.\footnote{Recall that when the network is observed, our SGMM uses the same moment conditions as, for example, those suggested by \cite{bramoulle2009identification}.}

Importantly, we see that the bias due to the assumption that the network is fully observed is quantitatively and qualitatively important. Using either estimator, the estimated endogenous peer effect using the reconstructed network is 1.5 times larger than that estimated assuming the observed network is the true network.\footnote{The difference is ``statistically significant'' for the Bayesian estimator.} Almost all of the bias is produced by the presence of error codes and not because of potential censoring.

This exercise shows that data errors are a main concern when using the Add Health database. Not only does the bias in the endogenous peer effect coefficient $\alpha$ have an impact on the social multiplier \citep{glaeser2003social}, but it can also affect the anticipated effect of targeted interventions, i.e., the identity of the key player \citep{ballester2006s}. We include a more detailed discussion in Appendix \ref{sec:keyplayer}.

However, we would like to stress that we do not argue that our estimated coefficients are causal, because the friendship network is likely endogenous \citep[e.g.,][]{goldsmith2013social,hsieh2018smoking,hsieh2019specification}. While previous literature has focused on the impact of network endogeneity, it has done so by assuming that the network is fully observed, despite the fact that roughly 45\% of the links are missing. Above, we showed that errors in the observed network have a large impact on the estimated peer effect, even when one assumes that the network is exogenous.

\section{Conclusion}\label{sec:conclusion}
In this paper, we propose two estimators for which peer effects can be estimated without observing the entire network structure. We find, perhaps surprisingly, that even very partial information on network structure is sufficient. By specifying a network formation model, researchers can probabilistically reconstruct the true network and base the estimation of peer effects on this reconstructed network. Importantly, we provide computationally tractable and flexible estimators to do so, all of which are available in our R package \texttt{PartialNetwork}. We apply our methodology to the widely used Add Health data and find that missing links due to noise in the data have large effects on the estimated peer effect coefficient.

\bibliographystyle{ecta}
\bibliography{biblio}
\clearpage
\appendix

\pagenumbering{arabic}
\renewcommand{\thepage}{A\arabic{page}}

\setcounter{table}{0}
\numberwithin{table}{section}

\setcounter{figure}{0}
\numberwithin{figure}{section}

\section{Appendix: Proof of Theorem \ref{prop:non_observed}}\label{sec:appendix_proofs}
For the sake of clarity, we often write objects that depend on simulated networks as functions of $\boldsymbol{\rho}$; e.g., we write $\dot{\mathbf{Z}}_{m}(\boldsymbol{\rho})$ and $\dot{\mathbf{G}}_m(\boldsymbol{\rho})$ instead of $\dot{\mathbf{Z}}_{m}$ and $\dot{\mathbf{G}}_m$, unless this precision is unnecessary for the exposition. We define: $$\textstyle\mathbf{m}_{m,rst}(\boldsymbol{\theta},\boldsymbol{\rho})=\dot{\mathbf{Z}}_{m}^{(r)\prime}(\boldsymbol{\rho})(\mathbf{I}-\alpha\ddot{\mathbf{G}}_m^{(s)}(\boldsymbol{\rho}))\left(\mathbf{y}_m-(\mathbf{I}_m-\alpha\dddot{\mathbf{G}}_m^{(t)}(\boldsymbol{\rho}))^{-1}\dddot{\mathbf{V}}_m^{(t)}(\boldsymbol{\rho})\tilde{\boldsymbol{\theta}}\right).$$
Let also $\mathbf{m}_m(\boldsymbol{\theta},\boldsymbol{\rho})=\frac{1}{RST}\sum_{rst}{\mathbf{m}}_{m,rst}(\boldsymbol{\theta},\boldsymbol{\rho})$ and $\bar{\mathbf{m}}_M(\boldsymbol{\theta}, \boldsymbol{\rho})=\frac{1}{M}\sum_m\mathbf{m}_m(\boldsymbol{\theta}, \boldsymbol{\rho})$. The objective function of the SGMM is given by:
$$\mathcal{Q}_M(\boldsymbol{\theta}) =  [\bar{\mathbf{m}}_M(\boldsymbol{\theta},\hat{\boldsymbol{\rho}})]^{\prime}\mathbf{W}_M[\bar{\mathbf{m}}_M(\boldsymbol{\theta},\hat{\boldsymbol{\rho}})],$$
where $\mathbf{W}_M$ is a weighing matrix. The SGMM estimator is $\hat{\boldsymbol{\theta}} = \textstyle\arg\max_{\boldsymbol{\theta}} \mathcal{Q}_M(\boldsymbol{\theta})$.

We impose the following regularity assumptions.
\begin{assumption}\label{as:regularity}
$\boldsymbol{\rho}_0$ and $\boldsymbol{\theta}_0$ are interior points of $\Theta$ and $\mathcal{R}$, respectively, where both $\Theta$ and $\mathcal{R}$ are compact subsets of the Euclidean space.
\end{assumption}

\begin{assumption}\label{as:nonsingular}
(i) For all $m = 1, ..., M$, $r=1,...,R$, $s=1,...,S$, and $t=1,...,T$, $(\mathbf{I}_m-\alpha \mathbf{G}_m)$ and $(\mathbf{I}_m-\alpha \dddot{\mathbf{G}}^{(t)}_m)$ are non-singular. (ii) The $(i,j)$-th entries of $\mathbf{G}_m$ (so $\dot{\mathbf{G}}^{(r)}_m$, $\ddot{\mathbf{G}}^{(s)}_m$, and $\dddot{\mathbf{G}}^{(t)}_m$), $(\mathbf{I}_m-\alpha\mathbf{G}_m)^{-1}$,  and $(\mathbf{I}_m-\alpha\dddot{\mathbf{G}}_m^{(t)})^{-1}$ are bounded uniformly in $i$, $j$, and $m$.
\end{assumption}
In particular, when $\mathbf{G}_m$ is row-normalized (so $\dot{\mathbf{G}}^{(r)}_m$, $\ddot{\mathbf{G}}^{(s)}_m$, and $\dddot{\mathbf{G}}^{(t)}$ are also row-normalized), Assumption \ref{as:alpha} implies Assumption \ref{as:nonsingular}.

\begin{assumption}\label{as:finitevariance}
$\sup_{m \geq 1}\mathbb E\{\lVert\boldsymbol{\varepsilon}_m\rVert_2^{\mu}|\mathbf{X}_m, \mathcal{A}_m\}$ exists and is bounded, for some $\mu > 2$, where $\Vert . \rVert_2$ is the Euclidean norm.
\end{assumption}

\begin{assumption}\label{as:derivatives_P}
    The derivative of $\hat{P}(a_{ij,m}|{\boldsymbol{\rho}},\mathbf{X}_m,\kappa(\mathcal{A}_m))$ with respect to $\boldsymbol{\rho}$ is bounded uniformly in $i$, $j$, and $m$.
\end{assumption}

\begin{assumption}\label{as:Wm}
    $\mathbf{W}_M$ is positive definite and $\plim \mathbf{W}_M = \mathbf{W}_0$, where $\plim$ denotes the probability limit as $M$ goes to infinity and $\mathbf{W}_0$ is a non-stochastic and positive definite matrix.
\end{assumption}

%\begin{assumption}[Identification]\label{as:identification}
% For any $\boldsymbol{\theta}\ne \boldsymbol{\theta}_0$, $\lim \mathbb{E}(\bar{\mathbf{m}}_M(\boldsymbol{\theta}, \boldsymbol{\rho}_0)) \ne \mathbf 0$, where $\lim$ denotes the standard limit as $M$ goes to infinity.
%\end{assumption}

%While assumptions \ref{as:regularity}--\ref{as:Wm} are quite weak and standard, Assumption \ref{as:identification} is more substantial in nature. We discuss identification in more detail in Section \ref{app:ident}.

%The proof of Theorem \ref{prop:non_observed} proceeds as follows. In Section \ref{sec:proofconsistencySGMM}, we show that the estimator is consistent. In Section \ref{sec:asymptnorm}, we show that the estimator is asymptotically normal.

\subsection{Proof of the consistency of the SGMM}\label{sec:proofconsistencySGMM}

We proceed to show that Theorem 2.1 in \cite{newey1994large} applies to our SGMM estimator. The proof relies on the following Lemmatta.
\begin{comment}
The proof is structured as follows:
\begin{enumerate}
    \item In Lemma \ref{lemma:momentvalid}, we show that the moment condition is verified at the true parameter; that is, $\mathbb E(\bar{\mathbf{m}}_m(\boldsymbol{\theta}_0, \boldsymbol{\rho}_0) )= \mathbf 0$ for all $m$.
    \item In Lemma \ref{lemma:continuity}, we show that the expected value of the moment function is continuously differentiable. This result is helpful in showing consistency and will also be used to show the asymptotic normality of the estimator.
    \item In Lemma \ref{lemma:uniformconvergence}, we give the probability limit of the moment function. Specifically, we show that, asymptotically, $\hat{\boldsymbol{\rho}}$ in the moment function can be replaced with its limit $\boldsymbol{\rho}_0$.%, despite the discontinuity in the moment function.
    \item From Lemmas \ref{lemma:momentvalid}--\ref{lemma:uniformconvergence}, we establish that the conditions for Theorem 2.1 in \cite{newey1994large} are satisfied. In particular, we show that the empirical objective function converges uniformly to its nonstochastic counterpart, which is minimized only at $\boldsymbol{\theta}_0$.
 %Note that an alternative approach would be to directly use uniform WLLN (e.g. \cite{andrews1987consistency}) assuming that some pointwise WLLN applies.
    %\item In Section \ref{app:ident}, we present a concentrated version of the estimator that is numerically convenient.
\end{enumerate}

%To understand why such steps are needed, note that the empirical moment function is not everywhere continuous in $\boldsymbol{\rho}$. This is due to the simulation procedure. Indeed, for any network draw, we can rewrite $\dot{\mathbf{G}}_m(\hat{\boldsymbol{\rho}})=f(\{\dot{a}_{m,ij}\}_{ij})=f(\{\mathbbm{1}[P(\dot{a}_{m,ij} = 1|\hat{\boldsymbol{\rho}};\mathbf{X}_m, \mathcal{A}_m)\geq \dot{u}_{m,ij}]\}_{ij})$, where $\dot{u}_{m,ij}\sim_{iid}U[0,1]$ and independent of $\boldsymbol \varepsilon_m$, and similarly for $\ddot{\mathbf{G}}_m(\hat{\boldsymbol{\rho}})$ and $\dddot{\mathbf{G}}_m(\hat{\boldsymbol{\rho}})$. Recall that $f$ is the function that allows to map the adjacency matrix $\mathbf{A}_m$ to the interaction matrix $\mathbf{G}_m$.

%This is only a variable change and does not introduce any additional assumptions; however, it shows that, using a finite number of network draws, the moment function is not everywhere continuous in $\boldsymbol{\rho}$ because of the indicator function.

\end{comment}

\begin{lemma}[Validity of the moment function]\label{lemma:momentvalid}
    The moment condition is verified for $(\boldsymbol{\theta}_0, \boldsymbol{\rho}_0)$; that is, $\mathbb E(\mathbf{m}_m(\boldsymbol{\theta}_0, \boldsymbol{\rho}_0)) = \mathbf 0$ for all $m$.
\end{lemma}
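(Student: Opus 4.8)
The plan is to substitute the reduced form $\mathbf{y}_m=(\mathbf{I}_m-\alpha_0\mathbf{G}_m)^{-1}(\mathbf{V}_m\tilde{\boldsymbol{\theta}}_0+\boldsymbol{\varepsilon}_m)$ into the definition of $\mathbf{m}_{m,rst}(\boldsymbol{\theta}_0,\boldsymbol{\rho}_0)$ and split it into a ``noise'' term and a ``signal'' term:
$$
\mathbf{m}_{m,rst}(\boldsymbol{\theta}_0,\boldsymbol{\rho}_0)=\dot{\mathbf{Z}}_m^{(r)\prime}(\mathbf{I}_m-\alpha_0\ddot{\mathbf{G}}_m^{(s)})(\mathbf{I}_m-\alpha_0\mathbf{G}_m)^{-1}\boldsymbol{\varepsilon}_m+\dot{\mathbf{Z}}_m^{(r)\prime}(\mathbf{I}_m-\alpha_0\ddot{\mathbf{G}}_m^{(s)})[(\mathbf{I}_m-\alpha_0\mathbf{G}_m)^{-1}\mathbf{V}_m-(\mathbf{I}_m-\alpha_0\dddot{\mathbf{G}}_m^{(t)})^{-1}\dddot{\mathbf{V}}_m^{(t)}]\tilde{\boldsymbol{\theta}}_0,
$$
where all inverses are well defined by Assumption \ref{as:nonsingular}, and I would show that each term has expectation $\mathbf{0}$. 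The preliminary fact I would use is that at $\boldsymbol{\rho}=\boldsymbol{\rho}_0$ the construction of Definition \ref{def:estimator} draws each simulated network $\dot{\mathbf{A}}_m^{(r)},\ddot{\mathbf{A}}_m^{(s)},\dddot{\mathbf{A}}_m^{(t)}$ from $P(\cdot\mid\mathbf{X}_m,\kappa(\mathcal{A}_m))$, which is exactly the conditional law of the true network $\mathbf{A}_m$ given $(\mathbf{X}_m,\kappa(\mathcal{A}_m))$; since the simulation uniforms are independent of $(\mathbf{y}_m,\mathbf{X}_m,\boldsymbol{\varepsilon}_m,\mathcal{A}_m,\mathbf{A}_m)$, the four networks are mutually independent conditional on $(\mathbf{X}_m,\kappa(\mathcal{A}_m))$, and---because $\dddot{\mathbf{G}}_m^{(t)}$ and $\dddot{\mathbf{V}}_m^{(t)}$ are built from $(\dddot{\mathbf{A}}_m^{(t)},\mathbf{X}_m)$ through the same maps that produce $\mathbf{G}_m$ and $\mathbf{V}_m$ from $(\mathbf{A}_m,\mathbf{X}_m)$---the matrices $(\mathbf{I}_m-\alpha_0\mathbf{G}_m)^{-1}\mathbf{V}_m$ and $(\mathbf{I}_m-\alpha_0\dddot{\mathbf{G}}_m^{(t)})^{-1}\dddot{\mathbf{V}}_m^{(t)}$ are identically distributed conditional on $(\mathbf{X}_m,\kappa(\mathcal{A}_m))$.

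For the noise term I would condition on $\mathcal{F}_m=\sigma(\mathbf{X}_m,\mathcal{A}_m,\mathbf{A}_m,\dot{\mathbf{A}}_m^{(r)},\ddot{\mathbf{A}}_m^{(s)})$. The matrix $\dot{\mathbf{Z}}_m^{(r)\prime}(\mathbf{I}_m-\alpha_0\ddot{\mathbf{G}}_m^{(s)})(\mathbf{I}_m-\alpha_0\mathbf{G}_m)^{-1}$ is $\mathcal{F}_m$-measurable, and because the simulation uniforms are independent of $\boldsymbol{\varepsilon}_m$ while Assumption \ref{as:exonet} gives $\mathbb{E}[\boldsymbol{\varepsilon}_m\mid\mathbf{X}_m,\mathcal{A}_m,\mathbf{A}_m]=\mathbf{0}$, the law of iterated expectations yields $\mathbb{E}[\boldsymbol{\varepsilon}_m\mid\mathcal{F}_m]=\mathbf{0}$; hence the conditional, and therefore the unconditional, expectation of the noise term is $\mathbf{0}$. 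All of these expectations are finite because group sizes are bounded and $\mathbf{X}_m$ is bounded by Assumption \ref{as:manymarkets}, the relevant matrix entries are uniformly bounded by Assumption \ref{as:nonsingular}, and $\boldsymbol{\varepsilon}_m$ has a finite moment of order strictly above two by Assumption \ref{as:finitevariance}.

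For the signal term I would condition on the coarser $\sigma$-algebra $\mathcal{G}_m=\sigma(\mathbf{X}_m,\kappa(\mathcal{A}_m),\dot{\mathbf{A}}_m^{(r)},\ddot{\mathbf{A}}_m^{(s)})$, with respect to which the prefactor $\dot{\mathbf{Z}}_m^{(r)\prime}(\mathbf{I}_m-\alpha_0\ddot{\mathbf{G}}_m^{(s)})$ is measurable. Conditional on $\mathcal{G}_m$, both $\mathbf{A}_m$ and $\dddot{\mathbf{A}}_m^{(t)}$ are independent of $(\dot{\mathbf{A}}_m^{(r)},\ddot{\mathbf{A}}_m^{(s)})$ and retain their common conditional law $P(\cdot\mid\mathbf{X}_m,\kappa(\mathcal{A}_m))$, so by the preliminary fact $\mathbb{E}[(\mathbf{I}_m-\alpha_0\mathbf{G}_m)^{-1}\mathbf{V}_m\mid\mathcal{G}_m]=\mathbb{E}[(\mathbf{I}_m-\alpha_0\dddot{\mathbf{G}}_m^{(t)})^{-1}\dddot{\mathbf{V}}_m^{(t)}\mid\mathcal{G}_m]$, the bracketed difference has conditional mean zero, and the signal term has expectation $\mathbf{0}$. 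Combining the two pieces gives $\mathbb{E}[\mathbf{m}_{m,rst}(\boldsymbol{\theta}_0,\boldsymbol{\rho}_0)]=\mathbf{0}$ for every $(r,s,t)$, and averaging over $r,s,t$ yields $\mathbb{E}[\mathbf{m}_m(\boldsymbol{\theta}_0,\boldsymbol{\rho}_0)]=\mathbf{0}$.

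The main obstacle is the conditioning bookkeeping rather than any computation: one has to argue that evaluating $\hat{P}$ at $\boldsymbol{\rho}_0$ returns the exact law $P(\cdot\mid\mathbf{X}_m,\kappa(\mathcal{A}_m))$, so that the true network and all three simulated networks become exchangeable given $(\mathbf{X}_m,\kappa(\mathcal{A}_m))$, and then to choose the right information set for each term---the finer $\mathcal{F}_m$, which contains $\mathbf{A}_m$, for the noise term, where exogeneity does the work, versus the coarser $\mathcal{G}_m$, which excludes $\mathbf{A}_m$ and $\dddot{\mathbf{A}}_m^{(t)}$, for the signal term, where the distributional match forces the cancellation. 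Everything else (measurability of the prefactors, integrability) is routine given the stated regularity conditions.
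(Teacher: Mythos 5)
Your proposal is correct and follows essentially the same route as the paper's proof: substitute the reduced form for $\mathbf{y}_m$, split the moment into the $\boldsymbol{\varepsilon}_m$ part (killed by Assumption \ref{as:exonet} and iterated expectations) and the $\mathbf{V}_m$ part (killed because, at $\boldsymbol{\rho}_0$, the independent draws share the true network's conditional law given $(\mathbf{X}_m,\kappa(\mathcal{A}_m))$, so the two bracketed terms have equal conditional means). Your explicit $\sigma$-algebra bookkeeping is just a more formal rendering of the paper's factorization via $\hat{\mathbb{E}}_m$ and $\mathbb{E}^{(0)}_m$; no substantive difference.
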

\begin{proof}
    Let us substitute $\mathbf{y}_m=(\mathbf{I}_m-\alpha_0\mathbf{G}_m)^{-1}(\mathbf{V}_m\tilde{\boldsymbol{\theta}}_0+\boldsymbol{\varepsilon}_m)$ in the moment function. We have
\begin{eqnarray}\label{eq:m:lineps}
\begin{split}
\textstyle\mathbf{m}_{m,rst}(\boldsymbol{\theta}_0,\boldsymbol{\rho}_0)&=& \dot{\mathbf{Z}}_{m}^{(r)\prime}(\boldsymbol{\rho}_0)(\mathbf{I}_m-\alpha_0\ddot{\mathbf{G}}_m^{(s)}(\boldsymbol{\rho}_0))\left[(\mathbf{I}_m-\alpha_0\mathbf{G}_m)^{-1}\mathbf{V}_m\right.\\
&&\left.-(\mathbf{I}_m-\alpha_0\dddot{\mathbf{G}}_m^{(t)}(\boldsymbol{\rho}_0))^{-1}\dddot{\mathbf{V}}_m^{(t)}(\boldsymbol{\rho}_0)\right]\tilde{\boldsymbol{\theta}}_0\\
&+&\textstyle\dot{\mathbf{Z}}_{m}^{(r)\prime}(\boldsymbol{\rho}_0)(\mathbf{I}_m-\alpha_0\ddot{\mathbf{G}}_m^{(s)}(\boldsymbol{\rho}_0))(\mathbf{I}_m-\alpha_0\mathbf{G}_m)^{-1}\boldsymbol{\varepsilon}.
\end{split}
\end{eqnarray}
Consider the last part first. We have, for any $r$ and $s$:
$$\textstyle
\mathbb{E}\left(\dot{\mathbf{Z}}_{m}^{(r)\prime}(\boldsymbol{\rho}_0)(\mathbf{I}_m-\alpha_0\ddot{\mathbf{G}}_m^{(s)}(\boldsymbol{\rho}_0))(\mathbf{I}_m-\alpha_0\mathbf{G}_m)^{-1}\boldsymbol{\varepsilon}_m|\mathbf{X}_m, \kappa(\mathcal{A}_m)\right)=\mathbf{0},
$$
from Assumption \ref{as:exonet}.

Consider now the first part. Since network draws are independent, we have:
\begin{align*}
    \hat{\mathbb{E}}_m[\dot{\mathbf{Z}}_{m}^{(r)\prime}(\boldsymbol{\rho}_0)]\hat{\mathbb{E}}_m[(\mathbf{I}_m-\alpha_0\ddot{\mathbf{G}}_m^{(s)}(\boldsymbol{\rho}_0))]\Big(\mathbb{E}^{(0)}_m[(\mathbf{I}_m-\alpha_0\mathbf{G}_m)^{-1}\mathbf{V}_m]- &\\
    &\hspace{-4cm} \hat{\mathbb{E}}_m[(\mathbf{I}_m-\alpha_0\dddot{\mathbf{G}}_m^{(t)}(\boldsymbol{\rho}_0))^{-1}\dddot{\mathbf{V}}_m^{(t)}(\boldsymbol{\rho}_0)]\Big)\tilde{\boldsymbol{\theta}}_0,
\end{align*}
where $\hat{\mathbb{E}}_m$ denotes the expectation with respect to the distribution of the simulated networks, conditional on $\mathbf{X}_m,\kappa(\mathcal{A}_m)$, and where $\mathbb{E}^{(0)}_m$ is the expectation with respect to the distribution of the true network $\mathbf{G}_m$, conditional on $\mathbf{X}_m,\kappa(\mathcal{A}_m)$. Since, at $\boldsymbol{\rho}_0$, these are the same distributions, the terms in the big parenthesis cancel out, and thus, $\mathbb{E}[\mathbf{m}_{m,rst}(\boldsymbol{\theta}_0,\boldsymbol{\rho}_0)|\mathbf{X}_m,\kappa(\mathcal{A}_m)]=\mathbf{0}$. As a result, $\mathbb{E}[\mathbf{m}_m(\boldsymbol{\theta}_0,\boldsymbol{\rho}_0)]=\frac{1}{RST}\sum_{rst}\mathbb{E}[{\mathbf{m}}_{m,rst}(\boldsymbol{\theta}_0,\boldsymbol{\rho}_0)]=\mathbf{0}$ by the law of iterated expectations.
\end{proof}

\begin{lemma}[Differentiability]\label{lemma:continuity}
    $\mathbb{E}[\mathbf{m}_{m}(\boldsymbol{\theta},\boldsymbol{\rho})]$ is continuously differentiable in $(\boldsymbol{\theta}, \boldsymbol{\rho})$.
\end{lemma}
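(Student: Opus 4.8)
The plan is to show continuous differentiability of $\mathbb{E}[\mathbf{m}_m(\boldsymbol{\theta},\boldsymbol{\rho})]$ by exhibiting an explicit formula for this expectation that is manifestly smooth, and then invoking dominated convergence to justify differentiating under the expectation sign. Recall that $\mathbf{m}_m(\boldsymbol{\theta},\boldsymbol{\rho})=\frac{1}{RST}\sum_{rst}\mathbf{m}_{m,rst}(\boldsymbol{\theta},\boldsymbol{\rho})$, so it suffices to treat a single term $\mathbf{m}_{m,rst}$. The key structural observation is that, conditional on $\mathbf{X}_m$ and $\kappa(\mathcal{A}_m)$, each simulated network $\dot{\mathbf{A}}_m^{(r)}$, $\ddot{\mathbf{A}}_m^{(s)}$, $\dddot{\mathbf{A}}_m^{(t)}$ takes finitely many values (there are only $2^{N_m(N_m-1)}$ adjacency matrices on a bounded group), and for each value the probability under $\hat{P}(\mathbf{A}_m\mid\boldsymbol{\rho},\mathbf{X}_m,\kappa(\mathcal{A}_m))$ is a product of conditional link probabilities $\hat{P}(a_{ij,m}\mid\boldsymbol{\rho},\mathbf{X}_m,\kappa(\mathcal{A}_m))$. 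Hence
\[
\mathbb{E}[\mathbf{m}_{m,rst}(\boldsymbol{\theta},\boldsymbol{\rho})\mid\mathbf{X}_m,\kappa(\mathcal{A}_m)]
=\sum_{\dot{\mathbf{A}},\ddot{\mathbf{A}},\dddot{\mathbf{A}}}
\mathbf{m}_{m,rst}(\boldsymbol{\theta},\boldsymbol{\rho};\dot{\mathbf{A}},\ddot{\mathbf{A}},\dddot{\mathbf{A}})\,
\hat{P}(\dot{\mathbf{A}}\mid\boldsymbol{\rho})\hat{P}(\ddot{\mathbf{A}}\mid\boldsymbol{\rho})\hat{P}(\dddot{\mathbf{A}}\mid\boldsymbol{\rho}),
\]
a \emph{finite} sum, where I suppress the dependence on $\mathbf{X}_m,\kappa(\mathcal{A}_m)$ for brevity, and the $\mathbf{y}_m$ appearing inside $\mathbf{m}_{m,rst}$ is still random through $\boldsymbol{\varepsilon}_m$.

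Next I would check that each summand is continuously differentiable in $(\boldsymbol{\theta},\boldsymbol{\rho})$. The weights $\hat{P}(\cdot\mid\boldsymbol{\rho})$ are finite products of the link probabilities, which are $C^1$ in $\boldsymbol{\rho}$ by Assumption \ref{as:derivatives_P} (indeed the derivative is bounded uniformly in $i,j,m$), so the weights are $C^1$ in $\boldsymbol{\rho}$ with derivatives bounded uniformly in $m$. For a \emph{fixed} realization of the three networks, $\mathbf{m}_{m,rst}(\boldsymbol{\theta},\boldsymbol{\rho};\dot{\mathbf{A}},\ddot{\mathbf{A}},\dddot{\mathbf{A}})$ no longer depends on $\boldsymbol{\rho}$ through the simulation indicators at all --- the discontinuity in $\boldsymbol{\rho}$ that complicates the normality proof is confined to the random draw mechanism, and once we have taken the expectation over the $U[0,1]$ draws it has been averaged into the smooth weights. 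What remains is a polynomial in $\tilde{\boldsymbol{\theta}}$ and in $\alpha$ (through the finite Neumann-type expansions $\dot{\mathbf{Z}}_m^{(r)}$, the factor $(\mathbf{I}_m-\alpha\ddot{\mathbf{G}}_m^{(s)})$, and the rational factor $(\mathbf{I}_m-\alpha\dddot{\mathbf{G}}_m^{(t)})^{-1}$); smoothness of the inverse follows from Assumption \ref{as:nonsingular} and Cramer's rule, since the determinant is bounded away from zero on the compact parameter set $\Theta$. Thus each summand is jointly $C^1$ in $(\boldsymbol{\theta},\boldsymbol{\rho})$ on $\Theta\times\mathcal{R}$.

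Finally I would take the outer expectation over $(\boldsymbol{\varepsilon}_m,\mathbf{X}_m,\mathcal{A}_m)$. To pass the derivative through this expectation I need an integrable dominating function for the partial derivatives, uniformly in $(\boldsymbol{\theta},\boldsymbol{\rho})$ over the compact set. The only unbounded ingredient is $\mathbf{y}_m=(\mathbf{I}_m-\alpha_0\mathbf{G}_m)^{-1}(\mathbf{V}_m\tilde{\boldsymbol{\theta}}_0+\boldsymbol{\varepsilon}_m)$, which enters linearly; combining the uniform boundedness of $\mathbf{X}_m$ (Assumption \ref{as:manymarkets}), the uniform entrywise bounds on $\mathbf{G}_m$ and the relevant inverses (Assumption \ref{as:nonsingular}), the compactness of $\Theta\times\mathcal{R}$, and the moment bound $\sup_m\mathbb{E}\{\lVert\boldsymbol{\varepsilon}_m\rVert_2^\mu\mid\mathbf{X}_m,\mathcal{A}_m\}<\infty$ for some $\mu>2$ (Assumption \ref{as:finitevariance}), I get a dominating random variable with finite expectation. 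Dominated convergence (or the standard Leibniz rule for differentiating under the integral) then yields that $\mathbb{E}[\mathbf{m}_m(\boldsymbol{\theta},\boldsymbol{\rho})]$ is $C^1$ on $\Theta\times\mathcal{R}$. The main obstacle is conceptual rather than computational: one must recognize that although $\mathbf{m}_{m,rst}$ as a \emph{random} function of $\boldsymbol{\rho}$ is discontinuous (because of the $\mathbbm{1}[\hat{P}(\cdot)\geq\dot{u}]$ indicators), its \emph{expectation} over the uniform draws is smooth precisely because integrating an indicator against a continuous density smooths it --- so the differentiability here is genuinely compatible with the later bracketing argument needed for stochastic equicontinuity, and the two are not in tension.
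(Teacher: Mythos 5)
Your proposal is correct and follows essentially the same route as the paper's proof: the conditional expectation over the simulated networks is a finite sum over network configurations weighted by products of link probabilities that are $C^1$ in $\boldsymbol{\rho}$ by Assumption \ref{as:derivatives_P}, while smoothness in $\boldsymbol{\theta}$ and the passage through the outer expectation are handled by the Leibniz rule with the same boundedness and moment assumptions. Your write-up is somewhat more explicit than the paper's (which condenses the finite-sum step into the single computation of $\hat{\mathbb{E}}_m(\mathbf{B}_m\dot{\mathbf{G}}_m(\boldsymbol{\rho}))$), but the key idea --- integrating the indicator draws against the smooth link probabilities converts the almost-everywhere differentiable moment function into an everywhere continuously differentiable expectation --- is identical.
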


\begin{proof}
See the Online Appendix C.
\end{proof}

\begin{lemma}[Uniform convergence]\label{lemma:uniformconvergence} We establish the following results.
\begin{enumerate}[label=(\alph*), nosep]
    \item $\mathbb{E}[\bar{\mathbf{m}}_M(\boldsymbol{\theta},\hat{\boldsymbol{\rho}})] - \mathbb{E}[\bar{\mathbf{m}}_M(\boldsymbol{\theta},{\boldsymbol{\rho}}_0)]$ converges uniformly to $\mathbf{0}$ in $\boldsymbol{\theta}$ as $M \to \infty$.\label{lemma:uniformconvergence:A}
    \item $\bar{\mathbf{m}}_M(\boldsymbol{\theta},\hat{\boldsymbol{\rho}}) - \mathbb{E}[\bar{\mathbf{m}}_M(\boldsymbol{\theta},\hat{\boldsymbol{\rho}})]$ converges uniformly in probability to $\mathbf{0}$ in $\boldsymbol{\theta}$ as $M \to \infty$.\label{lemma:uniformconvergence:B}
    \item $\bar{\mathbf{m}}_M(\boldsymbol{\theta},\hat{\boldsymbol{\rho}}) - \mathbb{E}[\bar{\mathbf{m}}_M(\boldsymbol{\theta},{\boldsymbol{\rho}}_0)]$ converges uniformly in probability to $\mathbf{0}$ in $\boldsymbol{\theta}$ as $M \to \infty$. \label{lemma:uniformconvergence:C}
\end{enumerate}
\end{lemma}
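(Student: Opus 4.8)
The plan is to prove Lemma \ref{lemma:uniformconvergence} by establishing the three parts in order, since part \ref{lemma:uniformconvergence:C} follows immediately from \ref{lemma:uniformconvergence:A} and \ref{lemma:uniformconvergence:B} by the triangle inequality. The overall strategy exploits the bounded-group structure (Assumption \ref{as:manymarkets}): each summand $\mathbf{m}_m(\boldsymbol{\theta},\boldsymbol{\rho})$ is a function of a bounded-dimensional object, and all the matrix entries appearing in it are uniformly bounded by Assumptions \ref{as:nonsingular} and \ref{as:finitevariance}. So I would first record a uniform bound: $\sup_{\boldsymbol{\theta}\in\Theta,\boldsymbol{\rho}\in\mathcal{R}}\lVert \mathbf{m}_m(\boldsymbol{\theta},\boldsymbol{\rho})\rVert$ has a finite $\mu$-th moment uniformly in $m$ (conditional on $\mathbf{X}_m,\mathcal{A}_m$), which comes from writing $\mathbf{y}_m=(\mathbf{I}_m-\alpha_0\mathbf{G}_m)^{-1}(\mathbf{V}_m\tilde{\boldsymbol{\theta}}_0+\boldsymbol{\varepsilon}_m)$ and bounding each factor.

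For part \ref{lemma:uniformconvergence:A}, the plan is to use Lemma \ref{lemma:continuity}: $\mathbb{E}[\mathbf{m}_m(\boldsymbol{\theta},\boldsymbol{\rho})]$ is continuously differentiable in $(\boldsymbol{\theta},\boldsymbol{\rho})$, and by the uniform boundedness (Assumption \ref{as:derivatives_P} controlling the derivative of the simulated link probabilities, plus Assumptions \ref{as:nonsingular}--\ref{as:finitevariance}) the derivative $\partial \mathbb{E}[\mathbf{m}_m(\boldsymbol{\theta},\boldsymbol{\rho})]/\partial\boldsymbol{\rho}$ is bounded uniformly in $m$ on $\Theta\times\mathcal{R}$. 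A mean-value expansion then gives $\lVert \mathbb{E}[\bar{\mathbf{m}}_M(\boldsymbol{\theta},\hat{\boldsymbol{\rho}})]-\mathbb{E}[\bar{\mathbf{m}}_M(\boldsymbol{\theta},\boldsymbol{\rho}_0)]\rVert \le C\lVert\hat{\boldsymbol{\rho}}-\boldsymbol{\rho}_0\rVert$ with $C$ not depending on $\boldsymbol{\theta}$, so the left side is $o_p(1)$ uniformly in $\boldsymbol{\theta}$ since $\hat{\boldsymbol{\rho}}\to_p\boldsymbol{\rho}_0$ by Assumption \ref{as:partial}. (One subtlety: $\hat{\boldsymbol{\rho}}$ is random, so I would condition on the event $\hat{\boldsymbol{\rho}}\in\mathcal{R}$, which has probability tending to one since $\boldsymbol{\rho}_0$ is interior by Assumption \ref{as:regularity}, and apply the bound there.)

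For part \ref{lemma:uniformconvergence:B}, the plan is a standard uniform law of large numbers for independent, non-identically distributed arrays — e.g., the conditions of Lemma 2.4 in \cite{newey1994large} or a bracketing/Glivenko–Cantelli argument. Pointwise in $(\boldsymbol{\theta},\boldsymbol{\rho})$, $\bar{\mathbf{m}}_M(\boldsymbol{\theta},\boldsymbol{\rho})-\mathbb{E}[\bar{\mathbf{m}}_M(\boldsymbol{\theta},\boldsymbol{\rho})]\to_p\mathbf{0}$ by a Markov/Chebyshev argument using the uniform $\mu$-th moment bound ($\mu>2$) from Assumption \ref{as:finitevariance}; upgrading to uniformity over the compact set $\Theta\times\mathcal{R}$ requires a stochastic equicontinuity condition. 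The complication — flagged in the text after Theorem \ref{prop:non_observed} — is that $\mathbf{m}_m(\boldsymbol{\theta},\boldsymbol{\rho})$ is \emph{not} continuous in $\boldsymbol{\rho}$ because the simulated networks $\dot{\mathbf{G}}_m(\boldsymbol{\rho}),\ddot{\mathbf{G}}_m(\boldsymbol{\rho}),\dddot{\mathbf{G}}_m(\boldsymbol{\rho})$ are step functions of $\boldsymbol{\rho}$ (through the indicators $\mathbbm{1}[\hat{P}(\cdot|\boldsymbol{\rho})\ge u]$). So I cannot invoke a Lipschitz-in-parameter argument directly. The plan is instead to argue that, for fixed draws $\{\dot u_{m,ij},\ddot u_{m,ij},\dddot u_{m,ij}\}$, the map $\boldsymbol{\rho}\mapsto\mathbf{m}_m(\boldsymbol{\theta},\boldsymbol{\rho})$ takes finitely many values (one per realized network configuration, of which there are at most $2^{N_m(N_m-1)}$, a bounded number) and is piecewise smooth in $\boldsymbol{\theta}$; one then builds finitely many brackets by combining an $\varepsilon$-net in $\boldsymbol{\theta}$ with the finite partition of $\mathcal{R}$ induced by the switching surfaces $\{\hat P(a_{ij,m}=1|\boldsymbol{\rho})=u_{m,ij}\}$. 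Since the class of functions is then a finite union of smooth, uniformly bounded families, a ULLN applies. This stochastic-equicontinuity/bracketing step for the discontinuous-in-$\boldsymbol{\rho}$ summands is the main obstacle; everything else is routine bounding using the bounded-group assumption. Finally, part \ref{lemma:uniformconvergence:C} follows by adding \ref{lemma:uniformconvergence:A} and \ref{lemma:uniformconvergence:B}: $\bar{\mathbf{m}}_M(\boldsymbol{\theta},\hat{\boldsymbol{\rho}})-\mathbb{E}[\bar{\mathbf{m}}_M(\boldsymbol{\theta},\boldsymbol{\rho}_0)]=\big(\bar{\mathbf{m}}_M(\boldsymbol{\theta},\hat{\boldsymbol{\rho}})-\mathbb{E}[\bar{\mathbf{m}}_M(\boldsymbol{\theta},\hat{\boldsymbol{\rho}})]\big)+\big(\mathbb{E}[\bar{\mathbf{m}}_M(\boldsymbol{\theta},\hat{\boldsymbol{\rho}})]-\mathbb{E}[\bar{\mathbf{m}}_M(\boldsymbol{\theta},\boldsymbol{\rho}_0)]\big)$, and both bracketed terms are $o_p(1)$ uniformly in $\boldsymbol{\theta}$.
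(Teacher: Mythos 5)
Your decomposition is exactly the paper's: prove \ref{lemma:uniformconvergence:A} and \ref{lemma:uniformconvergence:B} separately and obtain \ref{lemma:uniformconvergence:C} by the triangle inequality. For \ref{lemma:uniformconvergence:A} your argument is essentially the paper's as well: both rest on the fact that the \emph{expected} moment is smooth in $\boldsymbol{\rho}$ with uniformly bounded derivative (via Assumption \ref{as:derivatives_P} and the boundedness assumptions) together with consistency of $\hat{\boldsymbol{\rho}}$; you get uniformity in $\boldsymbol{\theta}$ from a $\boldsymbol{\theta}$-free Lipschitz constant, the paper gets it from pointwise convergence plus Lemma 2.9 of \cite{newey1994large} — a cosmetic difference. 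Where you genuinely diverge is \ref{lemma:uniformconvergence:B}. You propose a ULLN uniform over $(\boldsymbol{\theta},\boldsymbol{\rho})$ jointly, which forces you to confront the discontinuity of $\mathbf{m}_m(\boldsymbol{\theta},\boldsymbol{\rho})$ in $\boldsymbol{\rho}$ already at the consistency stage. The paper sidesteps this entirely: it conditions on $(\mathbf{X},\kappa(\mathcal{A}),\hat{\boldsymbol{\rho}})$ and uses the law of iterated variances to show $\mathbb{V}(\bar{\mathbf{m}}_M(\boldsymbol{\theta},\hat{\boldsymbol{\rho}}))\to 0$ (the conditional variance of an average of conditionally independent, bounded-variance terms is $O(1/M)$, and the conditional mean is itself an average of independent terms), giving pointwise convergence by Chebyshev and uniformity in $\boldsymbol{\theta}$ alone — where the moment function \emph{is} smooth — again via Lemma 2.9. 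The discontinuity in $\boldsymbol{\rho}$ only has to be dealt with at the $\sqrt{M}$ scale for asymptotic normality, which the paper does separately in its stochastic-equicontinuity lemma using Andrews's Type IV (bracketing) classes. Your route is heavier but buys a uniform-in-$\boldsymbol{\rho}$ statement the paper never needs for this lemma.

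One substantive caution on your bracketing sketch: brackets must be deterministic, but your proposed partition of $\mathcal{R}$ by the switching surfaces $\{\hat{P}(a_{ij,m}=1|\boldsymbol{\rho})=u_{m,ij}\}$ is indexed by the random uniforms, so as stated it does not define a valid bracket family, and the number of cells needed to make \emph{all} $M$ groups' moment functions locally constant grows with $M$. The fix is the one the paper uses in its Lemma 4: do not partition at all, but bound $\sup_m\mathbb{E}\bigl(\sup_{\Vert\boldsymbol{\rho}_1-\boldsymbol{\rho}_2\Vert<\delta}(\mathbf{m}_{m,[k]}(\boldsymbol{\rho}_1)-\mathbf{m}_{m,[k]}(\boldsymbol{\rho}_2))^2\bigr)^{1/2}\leq C\delta^{\psi}$ by exploiting linearity in $\boldsymbol{\varepsilon}_m$ (to dominate and interchange $\mathbb{E}$ and $\sup$) and the fact that the probability a link's configuration flips within a $\delta$-ball is $O(\delta)$ by Assumption \ref{as:derivatives_P}. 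With that repair your part \ref{lemma:uniformconvergence:B} goes through, but it is doing the work of the normality proof at a stage where the paper's variance-decomposition argument suffices.
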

\begin{proof}
See the Online Appendix C.
\end{proof}

The needed result from Lemma \ref{lemma:uniformconvergence} is Statement \ref{lemma:uniformconvergence:C}, which allows us to replace \( \hat{\boldsymbol{\rho}} \) with its limit \( \boldsymbol{\rho}_0 \) to show the consistency of $\hat{\boldsymbol{\theta}}$. However, this result is not trivial because the moment function is not continuous for all \( \boldsymbol{\rho} \). We thus first show Statements \ref{lemma:uniformconvergence:A} and \ref{lemma:uniformconvergence:B}, which together imply Statement \ref{lemma:uniformconvergence:C}.

\subsubsection*{Proof of Theorem \ref{prop:non_observed}}
We define:
$$\mathcal{Q}_0(\boldsymbol{\theta}) =  \Big[\lim\mathbb{E}[\bar{\mathbf{m}}_M(\boldsymbol{\theta},\boldsymbol{\rho}_0)]\Big]^{\prime}\mathbf{W}_0\Big[\lim\mathbb{E}[\bar{\mathbf{m}}_M(\boldsymbol{\theta},\boldsymbol{\rho}_0)]\Big].$$

As $\bar{\mathbf{m}}_M(\boldsymbol{\theta},\hat{\boldsymbol{\rho}})$ converges uniformly in probability to $\lim\mathbb{E}[\bar{\mathbf{m}}_M(\boldsymbol{\theta},{\boldsymbol{\rho}}_0)]$ in $\boldsymbol{\theta}$ (Lemma \ref{lemma:uniformconvergence}, Statement \ref{lemma:uniformconvergence:C}) and $\plim \mathbf W_M = \mathbf W_0$ (Assumption \ref{as:Wm}), by Cauchy-Schwartz \citep[see e.g., Theorem 2.6 in][]{newey1994large}, $\mathcal{Q}_M(\boldsymbol{\theta})$ converges uniformly in probability to $\mathcal{Q}_0(\boldsymbol{\theta})$.

From Theorem 2.1 in \cite{newey1994large}, consistency of $\hat{\boldsymbol{\theta}}$ requires: (i) $\mathcal{Q}_0(\boldsymbol{\theta})$ is uniquely minimized at $\boldsymbol{\theta}_0$ (which holds from Lemma \ref{lemma:momentvalid} and the Assumption that, for any $\boldsymbol{\theta}\ne \boldsymbol{\theta}_0$, $\lim_{M\rightarrow\infty} \mathbb{E}(\bar{\mathbf{m}}_M(\boldsymbol{\theta}, \boldsymbol{\rho}_0)) \ne \mathbf 0$), (ii) the parameter space for $\boldsymbol{\theta}$ is compact (which holds by Assumption \ref{as:regularity}), (iii) $\mathcal{Q}_0(\boldsymbol{\theta})$ is continuous (which holds from Lemma \ref{lemma:continuity}), (iv) $\mathcal{Q}_M(\boldsymbol{\theta})$ converges uniformly in probability to $\mathcal{Q}_0(\boldsymbol{\theta})$ (which holds by Lemma \ref{lemma:uniformconvergence} and Assumption \ref{as:Wm} as pointed out above). Therefore, $\hat{\boldsymbol{\theta}}$ is consistent.

\subsection{Proof of the Asymptotic Normality of the SGMM}\label{sec:asymptnorm}

\begin{comment}
\begin{enumerate}
    \item In Lemma \ref{sec:stochasticequi}, with show that a stochastic equicontinuity condition holds using a brakettng argument. Our proof relies of results presented in Section 5 in \cite{andrews1994empirical}.
    
    \item In Section \ref{sec:proof_asymptnorm}, we show that the estimator is asymptotically normal. The argument is standard and closely follows that of Section 3.2 in \cite{andrews1994empirical}.
    
    \item In Section \ref{sec:varestim}, we present a simple method for estimating the asymptotic variance of $\hat{\boldsymbol{\theta}}$ by adapting \cite{houndetoungan2024inference}.
\end{enumerate}
\end{comment}
We show that the SGMM estimator is asymptotically normally distributed. Recall that $$\bar{\mathbf{m}}_M(\boldsymbol{\theta}, {\boldsymbol{\rho}})=\frac{1}{M}\sum_m\bar{\mathbf{m}}_m(\boldsymbol{\theta}, {\boldsymbol{\rho}}),$$ and let $$\bar{\mathbf{m}}_M^*(\boldsymbol{\theta}, {\boldsymbol{\rho}})=\frac{1}{M}\sum_m\mathbb{E}\left(\bar{\mathbf{m}}_m(\boldsymbol{\theta}, {\boldsymbol{\rho}})\right).$$

Our proof relies on the following stochastic equicontinuity condition, which is formally shown in Lemma 4 in Online Appendix C:
\begin{condition}\label{cond:sequi}
    $\sqrt{M}[\bar{\mathbf{m}}_M({\boldsymbol{\theta}}_0,  \hat{\boldsymbol{\rho}}) -
\bar{\mathbf{m}}^{\ast}_M(\boldsymbol{\theta}_0, \hat{\boldsymbol{\rho}})]- \sqrt{M}[\bar{\mathbf{m}}_M(\boldsymbol{\theta}_0,  \boldsymbol{\rho}_0)-\bar{\mathbf{m}}^{\ast}_M({\boldsymbol{\theta}}_0, {\boldsymbol{\rho}}_0) ] =o_p(1)$
\end{condition}

The first order condition of the empirical objective function $\mathcal{Q}_M$ with respect to $\boldsymbol{\theta}$ is $\frac{\partial\bar{\mathbf{m}}_M'(\hat{\boldsymbol{\theta}},  \hat{\boldsymbol{\rho}})}{\partial {\boldsymbol{\theta}}}\mathbf W_M \bar{\mathbf{m}}_M(\hat{\boldsymbol{\theta}},  \hat{\boldsymbol{\rho}}) = \mathbf 0$. As $\bar{\mathbf{m}}_M({\boldsymbol{\theta}}_0,  {\boldsymbol{\rho}}_0)  - \bar{\mathbf{m}}_M({\boldsymbol{\theta}}_0,  {\boldsymbol{\rho}}_0) = \mathbf 0$, this implies: $$\frac{\partial\bar{\mathbf{m}}_M'(\hat{\boldsymbol{\theta}},  \hat{\boldsymbol{\rho}})}{\partial {\boldsymbol{\theta}}}\mathbf W_M \left[\bar{\mathbf{m}}_M(\hat{\boldsymbol{\theta}},  \hat{\boldsymbol{\rho}}) - \bar{\mathbf{m}}_M({\boldsymbol{\theta}}_0,  {\boldsymbol{\rho}}_0)  + \bar{\mathbf{m}}_M({\boldsymbol{\theta}}_0,  {\boldsymbol{\rho}}_0) \right]=\mathbf{0},$$

Given that $\bar{\mathbf{m}}_M(\boldsymbol{\theta},  \hat{\boldsymbol{\rho}})$ is differentiable in $\boldsymbol{\theta}$, we replace $\bar{\mathbf{m}}_M(\hat{\boldsymbol{\theta}},  \hat{\boldsymbol{\rho}})$ in the previous equation with its mean value expansion. After rearranging the terms, we obtain:
\begin{align}\label{eq:mvt1}
\begin{split}
    &\frac{\partial\bar{\mathbf{m}}_M'(\hat{\boldsymbol{\theta}},  \hat{\boldsymbol{\rho}})}{\partial {\boldsymbol{\theta}}}\mathbf W_M \frac{\partial \bar{\mathbf{m}}_M(\boldsymbol{\theta}^+, \hat{\boldsymbol{\rho}})}{\partial \boldsymbol{\theta}^{\prime}}(\hat{\boldsymbol{\theta}} - \boldsymbol{\theta}_0) = \\& 
    \quad
    -\frac{\partial\bar{\mathbf{m}}_M'(\hat{\boldsymbol{\theta}},  \hat{\boldsymbol{\rho}})}{\partial {\boldsymbol{\theta}}}\mathbf W_M  \left[\bar{\mathbf{m}}_M(\boldsymbol{\theta}_0,  \hat{\boldsymbol{\rho}}) - \bar{\mathbf{m}}_M({\boldsymbol{\theta}}_0,  {\boldsymbol{\rho}}_0)  + \bar{\mathbf{m}}_M({\boldsymbol{\theta}}_0,  {\boldsymbol{\rho}}_0) \right]=\mathbf{0},
\end{split}
\end{align}
for some $\boldsymbol{\theta}^+$ lying between $\boldsymbol{\theta}_0$ and $\hat{\boldsymbol{\theta}}$.

From Lemma \ref{lemma:continuity}, $\bar{\mathbf{m}}_M^*(\boldsymbol{\theta}_0, {\boldsymbol{\rho}})$ is continuously differentiable in ${\boldsymbol{\rho}}$. Thus, by the mean value theorem, we have:
$$\bar{\mathbf{m}}^{\ast}_M(\boldsymbol{\theta}_0, \hat{\boldsymbol{\rho}}) =\bar{\mathbf{m}}^{\ast}_M({\boldsymbol{\theta}}_0, {\boldsymbol{\rho}}_0)+  \frac{\partial \bar{\mathbf{m}}^{\ast}_M(\boldsymbol{\theta}_0, \boldsymbol{\rho}^+)}{\partial {\boldsymbol{\rho}^{\prime}}}(\hat{\boldsymbol{\rho}} - \boldsymbol{\rho}_0),$$ for some $\boldsymbol{\rho}^+$ lying between $\boldsymbol{\rho}_0$ and $\hat{\boldsymbol{\rho}}$.
By premultiplying the last equation by  $\frac{\partial\bar{\mathbf{m}}'_M(\hat{\boldsymbol{\theta}},  \hat{\boldsymbol{\rho}})}{\partial{\boldsymbol{\theta}}}\mathbf W_M$, we obtain:
$$
\frac{\partial\bar{\mathbf{m}}'_M(\hat{\boldsymbol{\theta}},  \hat{\boldsymbol{\rho}})}{\partial{\boldsymbol{\theta}}}\mathbf W_M\frac{\partial \bar{\mathbf{m}}^{\ast}_M(\boldsymbol{\theta}_0, \boldsymbol{\rho}^+)}{\partial {\boldsymbol{\rho}^{\prime}}}(\hat{\boldsymbol{\rho}} - \boldsymbol{\rho}_0) = \frac{\partial\bar{\mathbf{m}}'_M(\hat{\boldsymbol{\theta}},  \hat{\boldsymbol{\rho}})}{\partial{\boldsymbol{\theta}}}\mathbf W_M\left[\bar{\mathbf{m}}^{\ast}_M(\boldsymbol{\theta}_0, \hat{\boldsymbol{\rho}}) -\bar{\mathbf{m}}^{\ast}_M({\boldsymbol{\theta}}_0, {\boldsymbol{\rho}}_0) \right].
$$
By adding the previous equation to (\ref{eq:mvt1}) and rearranging the terms, we have:
\begin{eqnarray}\label{eq:mvt_p0}
&&\frac{\partial\bar{\mathbf{m}}'_M(\hat{\boldsymbol{\theta}},  \hat{\boldsymbol{\rho}})}{\partial{\boldsymbol{\theta}}}\mathbf W_M\frac{\partial \bar{\mathbf{m}}_M(\boldsymbol{\theta}^+, \hat{\boldsymbol{\rho}})}{\partial \boldsymbol{\theta}^{\prime}}(\hat{\boldsymbol{\theta}} - \boldsymbol{\theta}_0) + \frac{\partial\bar{\mathbf{m}}'_M(\hat{\boldsymbol{\theta}},  \hat{\boldsymbol{\rho}})}{\partial{\boldsymbol{\theta}}}\mathbf W_M\frac{\partial \bar{\mathbf{m}}^{\ast}_M(\boldsymbol{\theta}_0, \boldsymbol{\rho}^+)}{\partial {\boldsymbol{\rho}^{\prime}}}(\hat{\boldsymbol{\rho}} - \boldsymbol{\rho}_0)\nonumber\\
&=&-\frac{\partial\bar{\mathbf{m}}'_M(\hat{\boldsymbol{\theta}},  \hat{\boldsymbol{\rho}})}{\partial{\boldsymbol{\theta}}}\mathbf W_M\left[ 
\{ \bar{\mathbf{m}}_M({\boldsymbol{\theta}}_0,  \hat{\boldsymbol{\rho}}) -
\bar{\mathbf{m}}^{\ast}_M(\boldsymbol{\theta}_0, \hat{\boldsymbol{\rho}})\} - \{\bar{\mathbf{m}}_M(\boldsymbol{\theta}_0,  \boldsymbol{\rho}_0)-\bar{\mathbf{m}}^{\ast}_M({\boldsymbol{\theta}}_0, {\boldsymbol{\rho}}_0) \}\right]\nonumber\\ &&-
\frac{\partial\bar{\mathbf{m}}'_M(\hat{\boldsymbol{\theta}},  \hat{\boldsymbol{\rho}})}{\partial{\boldsymbol{\theta}}}\mathbf W_M\bar{\mathbf{m}}_M({\boldsymbol{\theta}}_0,  {\boldsymbol{\rho}}_0).
\end{eqnarray}

%Now, under {\color{red} More precise argument}, we have:
As for the empirical moment in Lemma \ref{lemma:uniformconvergence}, $\frac{\partial\bar{\mathbf{m}}_M(\boldsymbol{\theta},  \boldsymbol{\rho})}{\partial{\boldsymbol{\theta}^{\prime}}}$ converges uniformly in $\boldsymbol{\theta}$ and $\boldsymbol{\rho}$ because it can be written as an average of independent elements that are differentiable with bounded derivatives. Thus, since $\plim \hat{\boldsymbol{\theta}} = \plim \boldsymbol{\theta}^+ = \boldsymbol{\theta}_0$ and $\plim \hat{\boldsymbol{\rho}} = \boldsymbol{\rho}_0$, we have:
\begin{equation}\label{eq:plimdiffm}
    \plim \frac{\partial\bar{\mathbf{m}}_M(\hat{\boldsymbol{\theta}},  \hat{\boldsymbol{\rho}})}{\partial{\boldsymbol{\theta}^{\prime}}}=\plim \frac{\partial \bar{\mathbf{m}}_M(\boldsymbol{\theta}^+, \hat{\boldsymbol{\rho}})}{\partial \boldsymbol{\theta}^{\prime}} = \plim \frac{\partial \bar{\mathbf{m}}^{\ast}_M(\boldsymbol{\theta}_0, \boldsymbol{\rho}_0)}{\partial \boldsymbol{\theta}^{\prime}} \equiv \mathbf{H}_0.
\end{equation}
As usual, we also impose the following assumption so that, under Assumption \ref{as:Wm}, the matrix $\mathbf{H}_0'\mathbf{W}_0\mathbf{H}_0$ is not singular.
\begin{assumption}\label{as:reg_variance}
    The matrix $\mathbf{H}_0$ has full rank.
\end{assumption}
Equation (\ref{eq:mvt_p0}) implies that:
\begin{align*}
    &\sqrt{M}(\hat{\boldsymbol{\theta}} - \boldsymbol{\theta}_0)=-(\mathbf{H}_0'\mathbf{W}_0\mathbf{H}_0)^{-1}\mathbf{H}_0'\mathbf{W}_0\Big[ \sqrt{M}\bar{\mathbf{m}}_M({\boldsymbol{\theta}}_0,  {\boldsymbol{\rho}}_0)
+ \\
&\hspace{3cm}\frac{\partial \bar{\mathbf{m}}^{\ast}_M(\boldsymbol{\theta}_0, \boldsymbol{\rho}^+)}{\partial {\boldsymbol{\rho}^{\prime}}}\sqrt{M}(\hat{\boldsymbol{\rho}} - \boldsymbol{\rho}_0)\Big] + o_p(1)
\end{align*}
provided that the stochastic equicontinuity condition \autoref{cond:sequi} holds (see Lemma 4 in Online Appendix C).

Under Assumption \ref{as:partial}, $\sqrt{M}(\hat{\boldsymbol{\rho}} - \boldsymbol{\rho}_0)$ converges in distribution to a $N(\boldsymbol{0},\mathbf{V}_{\boldsymbol{\rho}})$, and
$$\plim \frac{\partial \bar{\mathbf{m}}^{\ast}_M(\boldsymbol{\theta}_0, \boldsymbol{\rho}^+)}{\partial {\boldsymbol{\rho}^{\prime}}}\equiv \boldsymbol{\Gamma}_0,
$$
exists by the uniform law or large numbers. Thus, $\frac{\partial \bar{\mathbf{m}}^{\ast}_M(\boldsymbol{\theta}_0, \boldsymbol{\rho}^+)}{\partial {\boldsymbol{\rho}^{\prime}}}\sqrt{M}(\hat{\boldsymbol{\rho}} - \boldsymbol{\rho}_0)$ converges in distribution to a $N(\mathbf{0},\boldsymbol{\Gamma}_0\mathbf{V}_{\boldsymbol{\rho}}\boldsymbol{\Gamma}_0^{\prime})$.

We now apply the Lyapunov CLT to $\sqrt{M}\bar{\mathbf{m}}_M({\boldsymbol{\theta}}_0,  {\boldsymbol{\rho}}_0)$ since it is a normalized sum of independent elements. However, as we need the joint asymptotic distribution of $\sqrt{M}\bar{\mathbf{m}}_M({\boldsymbol{\theta}}_0,  {\boldsymbol{\rho}}_0)$ and $\sqrt{M}(\hat{\boldsymbol{\rho}} - \boldsymbol{\rho}_0)$ to be normal, we apply the Lyapunov CLT conditional on $\sqrt{M}(\hat{\boldsymbol{\rho}} - \boldsymbol{\rho}_0)$. The Lyapunov condition (conditional $\sqrt{M}(\hat{\boldsymbol{\rho}} - \boldsymbol{\rho}_0)$) is verified by Assumption \ref{as:finitevariance}.\footnote{See for example, \cite{van2000asymptotic}, Section 23.4} Thus the asymptotic distribution of $\sqrt{M}\bar{\mathbf{m}}_M({\boldsymbol{\theta}}_0,  {\boldsymbol{\rho}}_0)$, conditional on $\sqrt{M}(\hat{\boldsymbol{\rho}} - \boldsymbol{\rho}_0)$ is normal, which implies that the joint asymptotic distribution of $\sqrt{M}\bar{\mathbf{m}}_M({\boldsymbol{\theta}}_0,  {\boldsymbol{\rho}}_0)$ and $\sqrt{M}(\hat{\boldsymbol{\rho}} - \boldsymbol{\rho}_0)$ is normal. Consequently $\sqrt{M}\bar{\mathbf{m}}_M({\boldsymbol{\theta}}_0,  {\boldsymbol{\rho}}_0)
+ \frac{\partial \bar{\mathbf{m}}^{\ast}_M(\boldsymbol{\theta}_0, \boldsymbol{\rho}^+)}{\partial {\boldsymbol{\rho}^{\prime}}}\sqrt{M}(\hat{\boldsymbol{\rho}} - \boldsymbol{\rho}_0)$ is asymptotically normally distributed. As a result, $\sqrt{M}(\hat{\boldsymbol{\theta}} - \boldsymbol{\theta}_0)$ is asymptotically normally distributed. 

Estimating the asymptotic variance of $\sqrt{M}(\hat{\boldsymbol{\theta}} - \boldsymbol{\theta}_0)$ requires an estimate of $\boldsymbol{\Gamma}_0$, which can be complex. In Online Appendix C.2, we present an approach to estimate this asymptotic variance without requiring an estimate of $\boldsymbol{\Gamma}_0$.

\cleardoublepage

\renewcommand{\partname}{}
\begin{center}
    {\LARGE Estimating Peer Effects Using Partial Network Data\\}
    
Vincent Boucher and Aristide Houndetoungan

September 2025
\end{center}
\part{Online Appendix}

\begin{singlespace}
\parttoc
\end{singlespace}

\section{Extensions to other network formation processes and implications for survey design.}\label{sec:OA_extensions}

In Section 2.2, we assume that the network formation process is conditionally independent across links, i.e. $P(\mathbf{A}_m|\mathbf{X}_m)=\Pi_{ij} P(a_{ij,m}|\mathbf{X}_m)$. Moreover, our asymptotic framework imposes that groups are bounded (see Assumption 1). Finally, our Assumption 5 requires that the observable information about the network structure $\mathcal{A}_m$ is sufficient to point-identify the structural parameters and allow simulating network draws (see Definition 1).

We provide important examples that are covered by these assumptions. However, in some contexts, researchers may be interested in more general or flexible network formation processes. In this section, we discuss some important examples and show how they can or could be used within our framework. We conclude by summarizing some implications for survey design.

\subsection{Alternative network formation processes}

\subsubsection{Unobserved degree heterogeneity}\label{sec:graham}

\citeOA{graham2017econometric} presents a network formation process for \emph{undirected} networks in which individuals are characterized by unobserved degree heterogeneity. As mentioned, all of our results hold for undirected networks with the appropriate notation changes. The network formation process in \citeOA{graham2017econometric} is as follows:
$$P(a_{ij,m}=1|\mathbf{X}_m, \boldsymbol{\nu}_m)=\frac{\exp\{\mathbf{w}_{ij,m}\boldsymbol{\rho}+\nu_{i,m}+\nu_{j,m}\}}{1+\exp\{\mathbf{w}_{ij,m}\boldsymbol{\rho}+\nu_{i,m}+\nu_{j,m}\}}$$
for all pairs $ij:i<j$, where $\boldsymbol{\nu}_m = (\nu_{1,m}, \dots, \nu_{N_m,m})^{\prime}$.

Note that the model is conditionally independent across links since $P(\mathbf{A}_m|\mathbf{X}_m, \boldsymbol{\nu}_m)=\Pi_{ij:i<j}P(a_{ij,m}=1|\mathbf{X}_m, \boldsymbol{\nu}_m)$. However, since groups are bounded (our Assumption 1), we cannot consistently estimate $\nu_{i,m}$ since for each individual, we only observe $n_m-1$ pairs. (Our Assumption 5 fails.) While \citeOA{graham2017econometric} provides a way to estimate $\nu_{i,m}$, it only works when researchers face a single large network, which is ruled out by our Assumption 1.

However, \citeOA{graham2017econometric} presents an estimator (the ``Tetrad Logit'') that allows for the consistent estimation of $\boldsymbol{\rho}$ conditional on the degree sequence (i.e., for each individual, the number of links they have). Specifically, he shows that:
\begin{equation}\label{eq:graham}
P(\mathbf{A}_m|\mathbf{X}_m,\bar{\mathbf{a}}_m)=\frac{\exp\lbrace\sum_{ij:i<j} a_{ij}\mathbf{w}_{ij}\boldsymbol{\rho} \rbrace}{\sum_{\mathbf{B}:\bar{\mathbf{b}}=\bar{\mathbf{a}}} \exp\lbrace\sum_{ij:i<j} b_{ij}\mathbf{w}_{ij}\boldsymbol{\rho}\rbrace},    
\end{equation}
where $\bar{\mathbf{a}}_m=\mathbf{A}_m\mathbf{1}_m$ and $\bar{\mathbf{b}}_m=\mathbf{B}_m\mathbf{1}_m$ are the degree sequences of the adjacency matrices $\mathbf{A}_m$ and $\mathbf{B}_m$. Thus, if one observes the degree sequence (i.e. $\bar{\mathbf{a}}_m\in\mathcal{A}_m$) and $a_{ij}$ in a saturated sample (see below), they can potentially estimate (\ref{eq:graham}).

Theorem 1 in \citeOA{graham2017econometric} shows that it is possible provided that one observes a ``saturated'' sample of the population. Specifically, he assumes that the network is fully observed for a \emph{subset} of individuals, i.e. $\hat{\mathbf{A}}_{\hat{N}_m}\equiv\{a_{ij,m}\}_{j<i,i,j\in \hat{N}_m}$, where $\hat{N}_m\subset N_m$. Note that from $\hat{\mathbf{A}}_{\hat{N}_m}$, we can recover $\hat{\bar{\mathbf{a}}}_{\hat{N}_m}$, the degree sequence of sampled agents within the observed saturated sample (typically different from the true degree sequence).\footnote{Theorem 1 in \cite{graham2017econometric} also requires weak conditions on the asymptotic degree sequence, see Assumption 4 in \citeOA{graham2017econometric}, as well as standard compacity and support conditions.}

So, if $\mathcal{A}_m\ni\hat{{\mathbf{A}}}_{\hat{N}_m}$ for all $m$, then $\boldsymbol{\rho}$ can be consistently estimated.\footnote{Here, consistency and asymptotic normality is simpler to achieve than in \cite{graham2017econometric} since our asymptotic framework assumes that $M$ grows to infinity (our Assumption 1). We therefore do not require CLT for $U$-statistic since a standard CLT for independent and non-identically distributed data, such as Lyapunov CLT, is sufficient} In other words, this means that if, for a \emph{subset} of individuals, the network structured among them is observed without error, then our Assumption 5 holds for the network formation process in (\ref{eq:graham}).

However, this does not imply that we can compute a consistent estimator of the distribution of the true network (our Definition 1). This is because we need to predict the links of the individuals that are \emph{not} in the saturated sample: $i\in N_m\setminus\hat{N}_m$ for all $m$. To do so, we also need the individuals' true degree sequence $\bar{\mathbf{a}}_m$. Thus, we define $\mathcal{A}_m=\{\hat{{\mathbf{A}}}_{\hat{N}_m},{\bar{\mathbf{a}}}_m\}$ for all $m$.

%Letting $\kappa(\mathcal{A}_m)=\mathcal{A}_m$, we can compute $\hat{P}(\mathbf{A}_m|\boldsymbol{\rho},\mathbf{X}_m,\kappa(\mathcal{A}_m))$ for any $\mathbf{A}_m$ using Equation (\ref{eq:graham}) and Bayes' rule.

%\aristide{We might need to use another notation for $\bar{\mathbf{a}}_m$ here. For the estimation, it is the number of friends among sampled agents. So the estimator is conditional on $\hat{\bar{\mathbf{a}}}_m$, the degree of sampled agents in the observed sample (It is not their true degree). What we need for the estimation is just $\hat{\mathbf{A}}_{\hat{N}_m}$, as it can be used to define $\hat{\bar{\mathbf{a}}}_m$. But, once we have $\hat{\bar{\mathbf{a}}}_m$, the degree of everyone in the large group, which is $\bar{\mathbf{a}}_m$ also is needed to compute \eqref{eq:graham} for the group to simulate networks. It is the same story as in your model with Ismael in the next section.}

\subsubsection{Boucher and Mourifi\'e (2017)}\label{sec:bouchermourifie}

\citeOA{boucher2017my} present a network formation model for \emph{undirected} networks. Their network formation model, which is a special case of an exponential random graph model (ERGM) which is \emph{not} conditionally independent across links.

Their model is as follows:
$$P(a_{ij,m}=1|\mathbf{X}_m,\mathbf{A}_{m,-ij})=\frac{\exp\{\mathbf{w}_{ij,m}\tilde{\boldsymbol{\rho}} + (n_{i,m}+n_{j,m})\rho_1+\psi(d(\tilde{\mathbf{x}}_{i,m},\tilde{\mathbf{x}}_{j,m}))\rho_2 \}}{1+\exp\{\mathbf{w}_{ij,m}\tilde{\boldsymbol{\rho}} + (n_{i,m}+n_{j,m})\rho_1+\psi(d(\tilde{\mathbf{x}}_{i,m},\tilde{\mathbf{x}}_{j,m}))\rho_2 \}}$$
for all pairs $ij:i<j$, where $n_{i,m}$ and $n_{j,m}$ represent the number of links that $i$ and $j$ have and is a function of $\mathbf{A}_{m,-ij}$. Importantly, $\tilde{\mathbf{x}}_{i,m}$ and $\tilde{\mathbf{x}}_{j,m}$ are non-stochastic ``positions'' of $i$ and $j$ on an underlying Euclidean space (e.g., geographical distance), and $d(\tilde{\mathbf{x}}_{i,m},\tilde{\mathbf{x}}_{j,m})$ is a distance (and $\psi$ is some increasing function, see below). Under the restriction that $d(\tilde{\mathbf{x}}_{i,m},\tilde{\mathbf{x}}_{j,m})\geq d_0>0$ for all $i$ and $j$ and that $\rho_2<\underline{\rho}<0$, \citeOA{boucher2017my} show that $\boldsymbol{\rho}=[\tilde{\boldsymbol{\rho}},\rho_1,\rho_2]$ is consistently estimated by a simple pseudo-logistic regression.

While they do not consider bounded groups, their setup is compatible with our framework. To do so, however, we need to adapt our framework and ensure that individuals (and groups) are also drawn on some non-stochastic Euclidean space (e.g., geographical location). Then, their estimator is valid if we assume that:
\begin{enumerate}
    \item Groups are drawn in a way that the distance (on the non-stochastic space) between each pair of individuals within the group is bounded below. For example, in the case of small villages and geographical location, this implies a minimal physical distance between any two individuals' homes.
    \item The distance between any two individuals in the same group is bounded above (e.g., villages are bounded by geography).
    \item $\psi(d)=d$ when $d\leq \bar{d}$, while $\psi(d)=\infty$ if $d>\bar{d}$. Essentially, this ensures that no link can be created between individuals of different groups. For example, villages are geographically far enough so that links between any two villages are not valuable. 
\end{enumerate}

Here, while $\boldsymbol{\rho}$ can be estimated by observing, for a random sample of pairs, their linking status and the number of links they have, this is not sufficient in order to predict the linking status of all pairs. To do so, we need to observe the number of links for all individuals (the degree sequence). Let $\hat{N}^2_m\subset \{ij\in N_m\times N_m:j<i\}$ be some non-empty subset of pairs of individuals, we need $\mathcal{A}_m=\{\{a_{ij,m}\}_{i,j\in \hat{N}^2_m},\bar{a}_m\}$, where $\bar{a}_m=\mathbf{A}_m\mathbf{1}_m$.

\subsubsection{Other ERGM}\label{sec:app_ergm}

Exponential Random Graph models (ERGM) are such that:\footnote{We focus on cases for which the term inside the exponential is linear in $\boldsymbol{\rho}$ for simplicity.}
$$
P(\mathbf{A}_m|\mathbf{X}_m)=\frac{\exp\{ \mathbf{q}(\mathbf{A}_m,\mathbf{X}_m)\boldsymbol{\rho}\}}{\sum_{\mathbf{B}_m}\exp\{ \mathbf{q}(\mathbf{B}_m,\mathbf{X}_m)\boldsymbol{\rho}\}},
$$
where $\mathbf{q}$ is a known function and the sum in the denominator is over all the possible network structures $\mathbf{B}_m$ for the group $m$.  The Microfoundations for ERGM can be found in \cite{mele2017} and \cite{hsieh2019specification}. Since ERGM are from the exponential family, $\mathbf{q}(\mathbf{A}_m,\mathbf{X}_m)$ are the sufficient statistics for $\boldsymbol{\rho}$. This means that consistent estimation of $\boldsymbol{\rho}$ requires consistent estimation of these sufficient statistics. This in turns implies that the sampling process that generates $\mathcal{A}_m$ must allow for this. We give two simple examples below.

\subsubsection{ERGM: Reciprocal links}

This simplest possible ERGM is such that:
$$
P(\mathbf{A}_m|\mathbf{X}_m)\propto \exp\{\sum_{ij} (a_{ij}w_{ij}\tilde{\boldsymbol{\rho}} +  \rho_1 a_{ij}a_{ji})\},
$$
where $\boldsymbol{\rho} = [\rho_1, ~\tilde{\boldsymbol{\rho}}]$.
When $\rho_1=0$, the model reduces to the baseline model in Equation (\ref{eq:gennetfor}). Here, $\rho_1>0$ implies that reciprocal links (when $i$ is linked to $j$ and $j$ is linked to $i$) are more likely than what would be expected from a model with conditionally independent links.

Note that we can easily compute the joint distribution of $(a_{ij,m},a_{ji,m})$ as follows:
\allowdisplaybreaks
\begin{eqnarray*}
    P((a_{ij,m}=1,a_{ji,m}=1)|\mathbf{X}_m)&\propto&\exp\{w_{ij}\tilde{\boldsymbol{\rho}} + w_{ji}\tilde{\boldsymbol{\rho}} +2\rho_1 \}\\
    P((a_{ij,m}=1,a_{ji,m}=0)|\mathbf{X}_m)&\propto&\exp\{w_{ij}\tilde{\boldsymbol{\rho}}\}\\
    P((a_{ij,m}=0,a_{ji,m}=1)|\mathbf{X}_m)&\propto&\exp\{w_{ji}\tilde{\boldsymbol{\rho}}\}\\
    P((a_{ij,m}=0,a_{ji,m}=0)|\mathbf{X}_m)&\propto&1,
\end{eqnarray*}

We can then rewrite: $P(\mathbf{A}_m|\mathbf{X}_m)=\Pi_{ij:i<j}P((a_{ij,m},a_{ji,m}|\mathbf{X}_m)$. Then, the estimation of such a model requires sampling \emph{pairs of individuals}, and observing their linking status without error: $\mathcal{A}_m=\{ (a_{ij,m},a_{ji,m}), i,j\in \hat{N}^2\subseteq N^2 \}$.
Note that this information is also sufficient to compute a consistent estimator of the distribution of the true network (our Definition 1) using Bayes' rule ($\kappa(\mathcal{A}_m)=\mathcal{A}_m$). That is, for unsampled pairs ($ij\notin \hat{N}^2_m$), the observed linking status for sampled pairs is uninformative and links are predicted using the pair-level joint distribution given $\hat{\boldsymbol{\rho}}$. For sampled pairs, the distribution is degenerated since their linking status is observed.

\subsubsection{ERGM: Transitive triads}

A typical feature of the data that is hard to replicate using the model in Equation (\ref{eq:gennetfor}) is the fraction of transitive triads. That is, if $i$ is linked to $j$ and $j$ is linked to $k$, then the probability that $i$ and $k$ are linked is higher than what would be predicted by a model with conditionally independent links. Consider the following ERGM:
$$
P(\mathbf{A}_m|\mathbf{X}_m)\propto \exp\{\sum_{ij} (a_{ij}w_{ij}\tilde{\boldsymbol{\rho}} +  \rho_1 a_{ij}a_{ji} + \rho_2\sum_k a_{ij}a_{jk}a_{ki})\},
$$
which now includes the number of directed triangles (cycles of length 3). If $\rho_2>0$, then network configurations in which $i$ is linked to $j$, $j$ is linked to $k$, and $k$ is linked to $i$ are more likely (everything else equal).

Here, the sufficient statistics are: the fraction of links (given a set of observed pair characteristics $\mathbf{w}_{ij}$), the fraction of reciprocal links, and the fraction of closed directed triangles. This \emph{requires} sampling triads of individuals, which substantially complicates the sampling design. We are not aware of any such application. We also note that even with the consistent estimation of sufficient statistics, the estimation of $\boldsymbol{\rho}$ is not straightforward and computationally intensive. These considerations are left for future research.

\subsubsection{Aggregated relational data}\label{sec:ard_netext}

Aggregated relational data (ARD) are obtained from survey questions such as, ``How many friends with trait `X' do you have?'' Here, $\mathcal{A}$ can be represented by an $N\times K$ matrix of integer values, where $K$ is the number of traits that individuals were asked about.

Building on \citeOA{mccormick2015latent}, \citeOA{breza2017using} proposed a novel approach for the estimation of network formation models using only ARD. They assume:
\begin{equation}\label{eq:exard}
P(a_{ij,m}=1)=\frac{\exp\{\nu_i+\nu_j+\zeta \mathbf{z}_i'\mathbf{z}_j\}}{
1+\exp\{\nu_i+\nu_j+\zeta \mathbf{z}_i'\mathbf{z}_j\}}.    
\end{equation}
Here, $\boldsymbol{\rho}=[\{\nu_i,\mathbf{z}_i\}_{i},\zeta]$ is not observed by the econometrician. The parameters $\nu_i$ and $\nu_j$ can be interpreted as $i$ and $j$'s propensities to create links, irrespective of the identity of the other individual involved. The other component, $\zeta \mathbf{z}_i'\mathbf{z}_j$, is meant to capture homophily (like attracts like) on an abstract latent space (e.g., \citeOA{hoff2002latent}). This model differs from the ones presented above and in the text in two fundamental ways.

First, ARD does not provide information on any specific links; \footnote{That is, unless ARD includes the degree distribution with some individuals reporting having no links at all.} therefore, one could disregard the ARD information (i.e. $\kappa(\mathcal{A}_m)=\kappa_0$ for all $\mathcal{A}_m$ and $m$) and define the predicted distribution estimator of the true network as: $$\hat{P}(a_{ij,m}=1|\hat{\boldsymbol{\rho}},\mathbf{X}_m)=\frac{\exp\{\hat{\nu}_i+\hat{\nu}_j+\hat \zeta \hat{\mathbf{z}}_i'\hat{\mathbf{z}}_j\}}{
1+\exp\{\hat{\nu}_i+\hat{\nu}_j+\hat \zeta \hat{\mathbf{z}}_i'\hat{\mathbf{z}}_j\}},$$
where $\hat \nu_i$, $\hat{\mathbf{z}}_i$, and $\hat \zeta$ are the estimators (e.g., posterior means) of $\nu_i$, ${\mathbf{z}}_i$, and $\zeta$, respectively.

Second, (and perhaps more importantly) consistent estimation of $\boldsymbol{\rho}$ is only possible as the group size $N_m$ goes to infinity \citepOA{breza2019consistently}, which contradicts our Assumption \ref{as:manymarkets}. Thus, Assumption \ref{as:partial} does not hold. In the online Appendix \ref{sec:ard}, we show that our SGMM estimator (see Section \ref{sec:iv}) still performs well in finite samples for groups of moderate sizes. 

The failure of Assumption \ref{as:partial} implies that, formally, our SGMM estimator cannot be applied to ARD data and one must therefore rely on our Bayesian estimator.

\subsection{Implications for survey design}

As seen throughout the paper, the researchers' ability to obtain a consistent estimator of the network formation process (see Definition \ref{def:estimator} and Assumption \ref{as:partial}) depends on the available information about the network (i.e. $\mathcal{A}$) as well as on the flexibility of the network formation process. Highly structured network formation processes are powerful and require less information. More flexible network formation processes (such as the ones discussed above) require more data. In particular, the type of network formation model chosen will in general affect the data collection. Here, we would like to mention some general messages:
\begin{enumerate}
    \item Asking individuals about the number of links they have allows estimating much more flexible models and should be considered. In particular, it is necessary to the estimation of models in \cite{graham2017econometric} and \cite{boucher2017my}, and strongly improves the estimation of the model in \cite{breza2017using}.
    \item Whenever possible, surveys should try obtain $\mathbf{GX}$ and $\mathbf{Gy}$ from the respondents. This could be done by asking questions such as ``Which fraction of your friends are X''. When $\mathbf{GX}$ and $\mathbf{Gy}$ are (at least partly) observed, the estimation of the linear in means is less reliant on the simulated draws.\footnote{See Proposition \ref{prop:all_observed} of the Online Appendix \ref{sec:appendix_otherresults}.} Moreover, coupled with information about the number of links individuals have, these variables act as ARD data.
    \item When researchers only require a random sample of pairs, \cite{conley2010learning} propose the following sampling strategy: For each respondent, generate a random sample of names from the within-group population and ask the respondent about their linking status with each of the individuals in the sample. 
\end{enumerate}
\section{Proof of Theorem 1: Additional Material}\label{sec:OA_thm1}
\begin{customthm}{\ref{lemma:continuity}}[Differentiability]
    $\mathbb{E}[\mathbf{m}_{m}(\boldsymbol{\theta},\boldsymbol{\rho})]$ is continuously differentiable in $(\boldsymbol{\theta}, \boldsymbol{\rho})$.
\end{customthm}

\begin{proof}
Since $\mathbf{m}_{m}(\boldsymbol{\theta},\boldsymbol{\rho})$ is continuously differentiable in $\boldsymbol{\theta}$ and absolutely integrable, along with its derivative with respect to $\boldsymbol{\theta}$, it follows that $\mathbb{E}[\mathbf{m}_{m}(\boldsymbol{\theta},\boldsymbol{\rho})]$ is continuously differentiable in $\boldsymbol{\theta}$ by the Leibniz integral rule. However, $\mathbf{m}_{m}(\boldsymbol{\theta},\boldsymbol{\rho})$ is continuously differentiable in $\boldsymbol{\rho}$ only for \emph{almost all} $\boldsymbol{\rho}$. We now show that $\mathbb{E}[\mathbf{m}_{m}(\boldsymbol{\theta},\boldsymbol{\rho})]$ is continuously differentiable for all $\boldsymbol{\rho}$.

Consider $\mathbf{B}_m\dot{\mathbf{G}}_m(\boldsymbol{\rho}) = \mathbf{B}_m f(\dot{\mathbf{A}}_m(\boldsymbol{\rho}))$ for some (conformable) matrix $\mathbf{B}_m$. We have:
\begin{equation}\label{eq:EB}
    \hat{\mathbb{E}}_m (\mathbf{B}_m\dot{\mathbf{G}}_m(\boldsymbol{\rho})) = \sum_{\hat{\mathbf{A}}_m} \mathbf{B}_m f(\hat{\mathbf{A}}_m) P (\mathbf{A}_m(\boldsymbol{\rho}) = \hat{\mathbf{A}}_m \mid \mathbf{X}_m, \kappa(\mathcal{A}_m)),
\end{equation}
where the sum is taken over all the possible network configurations $\hat{\mathbf{A}}_m$, and where \break$P(\mathbf{A}_m(\boldsymbol{\rho})=\hat{\mathbf{A}}_m|\mathbf{X}_m,\kappa(\mathcal{A}_m))=\Pi_{ij}P(a_{ij,m}(\boldsymbol{\rho})=\hat{a}_{ij,m}|\mathbf{X}_m,\kappa(\mathcal{A}_m))$, as defined in Equation (\ref{eq:gennetfor}). Thus $\hat{\mathbb{E}}_m (\mathbf{B}_m\dot{\mathbf{G}}_m(\boldsymbol{\rho}))$ is continuously differentiable in $\boldsymbol{\rho}$ by Assumption \ref{as:derivatives_P}.
By adapting this argument for the appropriate definition of the matrix $\mathbf{B}_m$, and for $\hat{\mathbb{E}}_m$, and $\mathbb{E}^{(0)}_m$, this shows that $\mathbb{E}[\mathbf{m}_{m,rst}(\boldsymbol{\theta},\boldsymbol{\rho})|\mathbf{X}_m,\kappa(\mathcal{A}_m)]$ is continuously differentiable in $(\boldsymbol{\theta}, \boldsymbol{\rho})$. As a result, $\mathbb{E}[\mathbf{m}_{m}(\boldsymbol{\theta},\boldsymbol{\rho})]$ is continuously differentiable in $(\boldsymbol{\theta}, \boldsymbol{\rho})$ by the Leibniz integral rule.\end{proof}

\begin{customthm}{\ref{lemma:uniformconvergence}}[Uniform convergence] We establish the following results.
\begin{enumerate}[label=(\alph*), nosep]
    \item $\mathbb{E}[\bar{\mathbf{m}}_M(\boldsymbol{\theta},\hat{\boldsymbol{\rho}})] - \mathbb{E}[\bar{\mathbf{m}}_M(\boldsymbol{\theta},{\boldsymbol{\rho}}_0)]$ converges uniformly to $\mathbf{0}$ in $\boldsymbol{\theta}$ as $M \to \infty$.
    \item $\bar{\mathbf{m}}_M(\boldsymbol{\theta},\hat{\boldsymbol{\rho}}) - \mathbb{E}[\bar{\mathbf{m}}_M(\boldsymbol{\theta},\hat{\boldsymbol{\rho}})]$ converges uniformly in probability to $\mathbf{0}$ in $\boldsymbol{\theta}$ as $M \to \infty$.
    \item $\bar{\mathbf{m}}_M(\boldsymbol{\theta},\hat{\boldsymbol{\rho}}) - \mathbb{E}[\bar{\mathbf{m}}_M(\boldsymbol{\theta},{\boldsymbol{\rho}}_0)]$ converges uniformly in probability to $\mathbf{0}$ in $\boldsymbol{\theta}$ as $M \to \infty$. 
\end{enumerate}
    
\end{customthm}

\begin{proof} 
By applying the mean value theorem to the difference $\hat{P}(\mathbf{A}_m \mid \hat{\boldsymbol{\rho}}, \mathbf{X}_m, \kappa(\mathcal{A}_m)) - P(\mathbf{A}_m \mid \boldsymbol{\rho}_0, \mathbf{X}_m, \kappa(\mathcal{A}_m))$, one can see that the consistency of $\hat{\boldsymbol{\rho}}$ (Assumption~\ref{as:partial}) and the differentiability of $\hat{P}(\mathbf{A}_m \mid \boldsymbol{\rho}, \mathbf{X}_m, \kappa(\mathcal{A}_m))$ in $\boldsymbol{\rho}$ (Assumption~\ref{as:derivatives_P}) imply Definition~\ref{def:estimator}.\footnote{The proof remains valid for any estimator $\hat{\boldsymbol{\rho}}$ that satisfies Definition~\ref{def:estimator}, even if Assumptions~\ref{as:partial} and~\ref{as:derivatives_P} do not hold.} Under Definition~\ref{def:estimator}, the difference $\mathbb{E}[\bar{\mathbf{m}}_M(\boldsymbol{\theta}, \hat{\boldsymbol{\rho}}) \mid \hat{\boldsymbol{\rho}}, \mathbf{X}_m, \kappa(\mathcal{A}_m)] - \mathbb{E}[\bar{\mathbf{m}}_M(\boldsymbol{\theta}, \boldsymbol{\rho}_0) \mid \mathbf{X}_m, \kappa(\mathcal{A}_m)]$ converges pointwise to $\mathbf{0}$ for each $\boldsymbol{\theta}$. Since this difference is bounded (by Assumptions~\ref{as:manymarkets} and~\ref{as:nonsingular}), the dominated convergence theorem implies that its expectation also converges to $\mathbf{0}$; that is, $\mathbb{E}[\bar{\mathbf{m}}_M(\boldsymbol{\theta},\hat{\boldsymbol{\rho}})] - \mathbb{E}[\bar{\mathbf{m}}_M(\boldsymbol{\theta},{\boldsymbol{\rho}}_0)]$ converges to $\mathbf{0}$ pointwise for each $\boldsymbol{\theta}$. To establish uniform convergence, we apply Lemma 2.9 in  \cite{newey1994large}. The required conditions are satisfied: (i) the space of $\boldsymbol\theta$ is compact (Assumption \ref{as:regularity}), (ii) $\mathbb{E}[\bar{\mathbf{m}}_M(\boldsymbol{\theta},\hat{\boldsymbol{\rho}})] - \mathbb{E}[\bar{\mathbf{m}}_M(\boldsymbol{\theta},{\boldsymbol{\rho}}_0)]$ converges to $\mathbf{0}$ for all $\boldsymbol\theta$, and (iii) the derivative of $\mathbb{E}[\bar{\mathbf{m}}_M(\boldsymbol{\theta},\hat{\boldsymbol{\rho}})] - \mathbb{E}[\bar{\mathbf{m}}_M(\boldsymbol{\theta},{\boldsymbol{\rho}}_0)]$ with respect to $\boldsymbol{\theta}$ is bounded in probability (Assumptions \ref{as:manymarkets} and \ref{as:nonsingular}). As a result, $\mathbb{E}[\bar{\mathbf{m}}_M(\boldsymbol{\theta},\hat{\boldsymbol{\rho}})] - \mathbb{E}[\bar{\mathbf{m}}_M(\boldsymbol{\theta},{\boldsymbol{\rho}}_0)]$ uniformly converges to zero in $\boldsymbol{\theta}$.
This completes the proof of Statement \ref{lemma:uniformconvergence:A}.

For Statement \ref{lemma:uniformconvergence:B}, we first establish pointwise convergence by showing that the variance of $\bar{\mathbf{m}}_M(\boldsymbol{\theta},\hat{\boldsymbol{\rho}})$ vanishes asymptotically. By the law of iterated variances, we have: 
\begingroup
\allowdisplaybreaks
\begin{align}
    &\mathbb V(\bar{\mathbf{m}}_M(\boldsymbol{\theta}, \hat{\boldsymbol{\rho}})) = \mathbb V\left\{\mathbb E\big(\bar{\mathbf{m}}_M(\boldsymbol{\theta}, \hat{\boldsymbol{\rho}})|\mathbf X, \kappa(\mathcal{A}), \hat{\boldsymbol{\rho}}\big)\right\} + \mathbb E\left\{\mathbb V\big(\bar{\mathbf{m}}_M(\boldsymbol{\theta}, \hat{\boldsymbol{\rho}})|\mathbf X, \kappa(\mathcal{A}), \hat{\boldsymbol{\rho}}\big)\right\},\nonumber\\
    &\lim \mathbb V(\bar{\mathbf{m}}_M(\boldsymbol{\theta}, \hat{\boldsymbol{\rho}})) = \mathbb V\left\{\plim \mathbb E\big(\bar{\mathbf{m}}_M(\boldsymbol{\theta}, \hat{\boldsymbol{\rho}})|\mathbf X, \kappa(\mathcal{A}), \hat{\boldsymbol{\rho}}\big)\right\}~+ \nonumber\\&\quad\quad\quad\quad \mathbb E\left\{\plim \mathbb V\big(\bar{\mathbf{m}}_M(\boldsymbol{\theta}, \hat{\boldsymbol{\rho}})|\mathbf X, \kappa(\mathcal{A}), \hat{\boldsymbol{\rho}}\big)\right\},\nonumber\\
    \begin{split}
    &\lim \mathbb V(\bar{\mathbf{m}}_M(\boldsymbol{\theta}, \hat{\boldsymbol{\rho}})) = \mathbb V\left\{\plim \mathbb E\big(\bar{\mathbf{m}}_M(\boldsymbol{\theta}, \hat{\boldsymbol{\rho}})|\mathbf X, \kappa(\mathcal{A}), \hat{\boldsymbol{\rho}}\big)\right\} ~+ \\&\quad\quad\quad\quad\mathbb E\left\{\plim \dfrac{1}{M^2}\sum_{m = 1}^M\mathbb V\big(\mathbf{m}_m(\boldsymbol{\theta}, \hat{\boldsymbol{\rho}})|\mathbf X_m, \mathcal{A}_m, \hat{\boldsymbol{\rho}}\big)\right\}.
    \end{split} \label{eq:Vmbar}
\end{align}
\endgroup
The second equality holds by the dominated convergence theorem.\footnote{Specifically, $\mathbb{E}(\bar{\mathbf{m}}_M(\boldsymbol{\theta}, \hat{\boldsymbol{\rho}}) \mid \mathbf{X}, \kappa(\mathcal{A}), \hat{\boldsymbol{\rho}})$ and $\mathbb{V}(\bar{\mathbf{m}}_M(\boldsymbol{\theta}, \hat{\boldsymbol{\rho}}) \mid \mathbf{X}, \kappa(\mathcal{A}), \hat{\boldsymbol{\rho}})$ are bounded by Assumptions \ref{as:manymarkets} and \ref{as:nonsingular}, as well as by the fact that the conditional variance of $\boldsymbol{\varepsilon}_m$ is uniformly bounded (Assumption \ref{as:finitevariance}). Consequently, we can interchange the expectation and probability limit operators.
} Equation \eqref{eq:Vmbar} holds by the fact that \( \mathbf{m}_m(\boldsymbol{\theta}, \hat{\boldsymbol{\rho}}) \) are independent across \( m \), conditional on \( \mathbf{X}_m, \mathcal{A}_m, \hat{\boldsymbol{\rho}} \). 

From Equation \eqref{eq:Vmbar}, it is thus sufficient to show that \( \plim \mathbb{E}(\bar{\mathbf{m}}_M(\boldsymbol{\theta}, \hat{\boldsymbol{\rho}}) | \mathbf{X}, \kappa(\mathcal{A}), \hat{\boldsymbol{\rho}}) \) is nonstochastic and that \( \plim \dfrac{1}{M^2}\sum_{m = 1}^M \mathbb{V}(\mathbf{m}_m(\boldsymbol{\theta}, \hat{\boldsymbol{\rho}}) | \mathbf{X}_m, \mathcal{A}_m, \hat{\boldsymbol{\rho}}) = \mathbf{0} \).
First, the variance of \( \mathbb{E}[\bar{\mathbf{m}}_M(\boldsymbol{\theta},{\boldsymbol{\rho}}) | \mathbf{X}_m, \kappa(\mathcal{A}_m)] \) vanishes asymptotically by being the variance of an average of independent elements. Therefore, it converges in $\mathcal{L}^2$ and, thus, in probability to its expectation, \( \mathbb{E}[\bar{\mathbf{m}}_M(\boldsymbol{\theta},{\boldsymbol{\rho}})] \), which is nonstochastic. Consequently, \( \plim \mathbb{E}(\bar{\mathbf{m}}_M(\boldsymbol{\theta}, \hat{\boldsymbol{\rho}}) | \mathbf{X}, \kappa(\mathcal{A}), \hat{\boldsymbol{\rho}}) \) is also nonstochastic and, thus,  $\mathbb V\{\plim \mathbb E(\bar{\mathbf{m}}_M(\boldsymbol{\theta}, \hat{\boldsymbol{\rho}})|\mathbf X, \kappa(\mathcal{A}), \hat{\boldsymbol{\rho}})\} = \mathbf 0$.

Second,  $\mathbb V(\mathbf{m}_m(\boldsymbol{\theta}, \hat{\boldsymbol{\rho}})|\mathbf X_m, \mathcal{A}_m, \hat{\boldsymbol{\rho}}) < \infty$ because $\mathbf{m}_m(\boldsymbol{\theta}, \hat{\boldsymbol{\rho}})$ is linear in $\boldsymbol\varepsilon_m$ which has a bounded variance (Assumption \ref{as:finitevariance}). Thus, $\plim \dfrac{1}{M^2}\sum_{m = 1}^M\mathbb V(\mathbf{m}_m(\boldsymbol{\theta}, \hat{\boldsymbol{\rho}})|\mathbf X_m, \mathcal{A}_m, \hat{\boldsymbol{\rho}}) = 0$ and $\mathbb E\{\plim \dfrac{1}{M^2}\sum_{m = 1}^M\mathbb V(\mathbf{m}_m(\boldsymbol{\theta}, \hat{\boldsymbol{\rho}})|\mathbf X_m, \mathcal{A}_m, \hat{\boldsymbol{\rho}})\} = \mathbf 0$. 

Consequently, $\bar{\mathbf{m}}_M(\boldsymbol{\theta},\hat{\boldsymbol{\rho}}) - \mathbb{E}[\bar{\mathbf{m}}_M(\boldsymbol{\theta},{\boldsymbol{\rho}}_0)]$ converges in probability to $\mathbf 0$. The convergence is uniform in $\boldsymbol{\theta}$ from Lemma 2.9 of \citeOA{newey1994large} as in Statements \ref{lemma:uniformconvergence:A}. This completes the proof of Statement \ref{lemma:uniformconvergence:B} and, consequently, the lemma, given that Statements \ref{lemma:uniformconvergence:A} and \ref{lemma:uniformconvergence:B} together imply Statement \ref{lemma:uniformconvergence:C}.
\end{proof}

\begin{lemma}\label{sec:stochasticequi}
    The stochastic equicontinuity condition \autoref{cond:sequi} is verified.
\end{lemma}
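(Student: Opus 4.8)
The plan is to read Condition~\ref{cond:sequi} as an \emph{asymptotic (stochastic) equicontinuity} statement for the empirical process
\[
\nu_M(\boldsymbol{\rho}) \;\equiv\; \sqrt{M}\bigl[\bar{\mathbf{m}}_M(\boldsymbol{\theta}_0,\boldsymbol{\rho}) - \bar{\mathbf{m}}_M^{\ast}(\boldsymbol{\theta}_0,\boldsymbol{\rho})\bigr] \;=\; \frac{1}{\sqrt{M}}\sum_{m=1}^{M}\Bigl( \mathbf{m}_m(\boldsymbol{\theta}_0,\boldsymbol{\rho}) - \mathbb{E}\bigl[\mathbf{m}_m(\boldsymbol{\theta}_0,\boldsymbol{\rho})\bigr]\Bigr),
\]
indexed by $\boldsymbol{\rho}$ ranging over the compact set $\mathcal{R}$. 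Since $\hat{\boldsymbol{\rho}}\rightarrow_p\boldsymbol{\rho}_0$ (Assumption~\ref{as:partial}), it suffices to show that $\{\nu_M\}$ is stochastically equicontinuous at $\boldsymbol{\rho}_0$, i.e.\ $\sup_{\lVert\boldsymbol{\rho}-\boldsymbol{\rho}_0\rVert\le\eta_M}\lVert\nu_M(\boldsymbol{\rho})-\nu_M(\boldsymbol{\rho}_0)\rVert = o_p(1)$ for every sequence $\eta_M\downarrow 0$; evaluating at $\boldsymbol{\rho}=\hat{\boldsymbol{\rho}}$ then yields Condition~\ref{cond:sequi}. I would obtain this from the bracketing version of the functional CLT for independent (non-identically distributed) triangular arrays, following \cite{andrews1994empirical}, Section~5, taking for each $m$ the function class $\mathcal{F}_m=\{\mathbf{m}_m(\boldsymbol{\theta}_0,\boldsymbol{\rho}):\boldsymbol{\rho}\in\mathcal{R}\}$.

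The essential difficulty — and the reason a Lipschitz, or uniform-entropy (Pollard/VC), argument is not directly available — is that $\boldsymbol{\rho}\mapsto\mathbf{m}_m(\boldsymbol{\theta}_0,\boldsymbol{\rho})$ is \emph{discontinuous}: it depends on $\boldsymbol{\rho}$ only through the simulated matrices $\dot{\mathbf{G}}_m(\boldsymbol{\rho})$, $\ddot{\mathbf{G}}_m(\boldsymbol{\rho})$, $\dddot{\mathbf{G}}_m(\boldsymbol{\rho})$ and the inverse $(\mathbf{I}_m-\alpha_0\dddot{\mathbf{G}}_m(\boldsymbol{\rho}))^{-1}$, which are \emph{piecewise constant} in $\boldsymbol{\rho}$ through the indicators $\mathbbm{1}\bigl[\hat{P}(\dot{a}_{m,ij}=1\mid\boldsymbol{\rho},\mathbf{X}_m,\kappa(\mathcal{A}_m))\ge\dot{u}_{m,ij}\bigr]$ (and analogously for $\ddot{u},\dddot{u}$). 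The observation that rescues the argument is that each threshold $\hat{P}(\cdot\mid\boldsymbol{\rho},\cdot)$ is Lipschitz in $\boldsymbol{\rho}$, uniformly in $i,j,m$, by Assumption~\ref{as:derivatives_P}; hence over a ball of radius $\delta$ every threshold moves by at most $C\delta$, and since groups are bounded (Assumption~\ref{as:manymarkets}) only a bounded number of pairs and draws are involved, so the event ``some relevant uniform draw lies within $C\delta$ of its threshold somewhere on the ball'' has conditional probability $O(\delta)$ given $(\mathbf{X}_m,\mathcal{A}_m)$ — and, crucially, that event is independent of $\boldsymbol{\varepsilon}_m$ given $(\mathbf{X}_m,\mathcal{A}_m)$, being a function of the external simulation noise. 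Off that event, $\mathbf{m}_m(\boldsymbol{\theta}_0,\cdot)$ is constant over the ball.

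With this in hand the brackets are built explicitly. Partition $\mathcal{R}$ into $O(\delta^{-\dim\boldsymbol{\rho}})$ cells of diameter $\delta$; for a cell $\mathcal{C}$ with center $\boldsymbol{\rho}_{\mathcal{C}}$, set $u_{m,\mathcal{C}}=\mathbf{m}_m(\boldsymbol{\theta}_0,\boldsymbol{\rho}_{\mathcal{C}})+g_{m,\mathcal{C}}$ and $\ell_{m,\mathcal{C}}=\mathbf{m}_m(\boldsymbol{\theta}_0,\boldsymbol{\rho}_{\mathcal{C}})-g_{m,\mathcal{C}}$, where $g_{m,\mathcal{C}}$ dominates $\sup_{\boldsymbol{\rho}\in\mathcal{C}}\lVert\mathbf{m}_m(\boldsymbol{\theta}_0,\boldsymbol{\rho})-\mathbf{m}_m(\boldsymbol{\theta}_0,\boldsymbol{\rho}_{\mathcal{C}})\rVert$. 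By boundedness of all matrix entries involved (Assumptions~\ref{as:manymarkets} and~\ref{as:nonsingular}) one may take $g_{m,\mathcal{C}}\le C\,(1+\lVert\boldsymbol{\varepsilon}_m\rVert)\,\mathbbm{1}[\,\text{some relevant draw within }C\delta\text{ of its threshold over }\mathcal{C}\,]$ plus an $O(\delta)$ Lipschitz term from the smooth centering $\mathbb{E}[\mathbf{m}_m(\boldsymbol{\theta}_0,\cdot)]$ (Lemma~\ref{lemma:continuity}); conditioning on $(\mathbf{X}_m,\mathcal{A}_m)$, using the independence of the indicator from $\boldsymbol{\varepsilon}_m$ and the uniformly bounded second moment of $\boldsymbol{\varepsilon}_m$ (implied by Assumption~\ref{as:finitevariance}, $\mu>2$), gives $\mathbb{E}\,g_{m,\mathcal{C}}^2 = O(\delta)$ uniformly in $m$. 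Hence, with $\epsilon\asymp\sqrt{\delta}$, the $L^2$-bracketing number satisfies $N_{[\,]}(\epsilon,\mathcal{F}_m,L^2)=O(\epsilon^{-2\dim\boldsymbol{\rho}})$, so $\log N_{[\,]}(\epsilon,\mathcal{F}_m,L^2)=O(\log(1/\epsilon))$ and $\int_0^1\sqrt{\log N_{[\,]}(\epsilon,\mathcal{F}_m,L^2)}\,d\epsilon<\infty$. The uniform square-integrable envelope $\bar F_m\le C(1+\lVert\boldsymbol{\varepsilon}_m\rVert)$, with a uniformly bounded $\mu$-th moment for $\mu>2$, supplies the Lindeberg condition for the array. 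The bracketing FCLT of \cite{andrews1994empirical} then yields asymptotic equicontinuity of $\nu_M$ at $\boldsymbol{\rho}_0$, and Condition~\ref{cond:sequi} follows by setting $\boldsymbol{\rho}=\hat{\boldsymbol{\rho}}$.

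I expect the main obstacle to be the bookkeeping that controls the \emph{supremum} of the oscillation over a cell — not merely a two-point difference — by an $L^2$-small envelope despite the discontinuities: this requires the threshold motion to be $O(\delta)$ \emph{uniformly} over the cell and \emph{simultaneously} over all switching indicators (which is where the uniform-in-$(i,j,m)$ derivative bound of Assumption~\ref{as:derivatives_P} and the boundedness of the $N_m$'s enter), together with handling the non-identical distribution across $m$, i.e.\ exhibiting a common bracketing structure, or uniform bracket-size bounds, valid across all group configurations — again feasible precisely because group sizes are bounded.
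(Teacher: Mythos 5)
Your proposal is correct and follows the same overall architecture as the paper: both recast \autoref{cond:sequi} as stochastic equicontinuity of the empirical process $\nu_M(\boldsymbol{\rho})=\sqrt{M}[\bar{\mathbf{m}}_M(\boldsymbol{\theta}_0,\boldsymbol{\rho})-\bar{\mathbf{m}}_M^{\ast}(\boldsymbol{\theta}_0,\boldsymbol{\rho})]$ at $\boldsymbol{\rho}_0$, both invoke the bracketing (Ossiander-type) results of \cite{andrews1994empirical} for independent, non-identically distributed arrays, and both use the uniformly bounded $\mu$-th moment of $\boldsymbol{\varepsilon}_m$ ($\mu>2$) for the envelope/Lindeberg condition. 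The difference lies in how the crucial bracketing bound is verified. The paper shows the class is Type $IV(p=2)$: it bounds $\sup_m\bigl(\mathbb{E}\sup_{\Vert\boldsymbol{\rho}_1-\boldsymbol{\rho}_2\Vert<\delta}(\tilde{\mathbf{m}}_{m,[k]}(\boldsymbol{\rho}_1)-\tilde{\mathbf{m}}_{m,[k]}(\boldsymbol{\rho}_2))^2\bigr)^{1/2}$ by exploiting linearity in $\boldsymbol{\varepsilon}_m$ to dominate the oscillation, interchanging the expectation and the inner supremum, and then applying the mean value theorem to the (smooth) map $\boldsymbol{\rho}_1\mapsto\mathbb{E}(\tilde{\mathbf{m}}_{m,[k]}(\boldsymbol{\rho}_1)-\tilde{\mathbf{m}}_{m,[k]}(\boldsymbol{\rho}_2))^2$, yielding $\psi=1/2$. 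You instead construct the brackets explicitly: the moment function is piecewise constant in $\boldsymbol{\rho}$ through the indicators $\mathbbm{1}[\hat{P}(\cdot\mid\boldsymbol{\rho})\ge u]$, the thresholds move by at most $O(\delta)$ over a $\delta$-cell (Assumption~\ref{as:derivatives_P}), only boundedly many pairs and draws are involved (Assumption~\ref{as:manymarkets}), and the resulting ``switching'' event has probability $O(\delta)$ and is independent of $\boldsymbol{\varepsilon}_m$, giving a cell-wise oscillation envelope with $L^2$-norm $O(\sqrt{\delta})$ and hence a finite bracketing entropy integral. Your route buys a direct control of the \emph{supremum} of the oscillation over each cell — precisely the quantity Ossiander's condition requires — which is the point where the paper's argument is most delicate (the passage from a bound on $\sup_{\boldsymbol{\rho}_1}\mathbb{E}[(\Delta)^2]$ back to $\mathbb{E}[\sup_{\boldsymbol{\rho}_1}(\Delta)^2]$ goes in the nontrivial direction); the paper's route, in exchange, avoids the combinatorial bookkeeping of counting switching indicators and is shorter. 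Both arguments ultimately rest on the same three ingredients you identify: bounded groups, the uniform derivative bound on the link probabilities, and linearity of the moment function in $\boldsymbol{\varepsilon}_m$.
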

\begin{proof}
As $\boldsymbol{\theta}_0$ in \autoref{cond:sequi} is fixed, we ignore it in our notations and define $\tilde{\mathbf{m}}_{m}({\boldsymbol{\rho}}) = \mathbf{m}_{m}({\boldsymbol{\theta}}_0,{\boldsymbol{\rho}})$. We follow \citeOA{andrews1994empirical} and define
$$\nu_M({\boldsymbol{\rho}})=\frac{1}{\sqrt{M}}\sum_m [\tilde{\mathbf{m}}_m({\boldsymbol{\rho}})-\mathbb{E}(\tilde{\mathbf{m}}_m({\boldsymbol{\rho}}))],$$
so that conditions \autoref{cond:sequi} is equivalent to $\nu_M(\hat{\boldsymbol{\rho}}) - \nu_M({\boldsymbol{\rho}}_0)=o_p(1)$.
Consider the following pseudo-metric, for any dimension $k$ of the moment function
$$
d_k({\boldsymbol{\rho}}_1,{\boldsymbol{\rho}}_2)=\sup_{m}(\mathbb{E}[\tilde{\mathbf{m}}_{m,[k]}({\boldsymbol{\rho}}_1)-\tilde{\mathbf{m}}_{m,[k]}({\boldsymbol{\rho}}_2)]^2)^{1/2}.
$$
We say that the process $\nu_M$ is \emph{stochastically equicontinuous} if, $\forall \epsilon>0$, $\exists \delta>0$ such that
$$
\plim\sup_{d_k({\boldsymbol{\rho}}_1,{\boldsymbol{\rho}}_2)<\delta} | \nu_{M,[k]}({\boldsymbol{\rho}}_1)-\nu_{M,[k]}({\boldsymbol{\rho}}_2) |<\epsilon,
$$
for each dimension $[k]$.
To see that stochastic equicontinuity implies \autoref{cond:sequi}, note that, for any $\epsilon > 0$:
\begingroup
\allowdisplaybreaks
\begin{eqnarray*}
&&\lim P(|\nu_{M,[k]}(\hat{\boldsymbol{\rho}})-\nu_{M,[k]}({\boldsymbol{\rho}}_0)|>\epsilon) \\&\leq& \lim P(|\nu_{M,[k]}(\hat{\boldsymbol{\rho}})-\nu_{M,[k]}({\boldsymbol{\rho}}_0)|>\epsilon, d_k(\hat{\boldsymbol{\rho}},{\boldsymbol{\rho}}_0)\leq\delta)  + \lim P( d_k(\hat{\boldsymbol{\rho}},{\boldsymbol{\rho}}_0)>\delta) \\
&\leq& \lim P\left( \sup_{d(\boldsymbol{\rho}_1,\boldsymbol{\rho}_2)<\delta} | \nu_{M,[k]}(\hat{\boldsymbol{\rho}})-\nu_{M,[k]}({\boldsymbol{\rho}}_0) | > \epsilon\right)
\end{eqnarray*}
\endgroup
The last inequality holds because $\lim P( d_k(\hat{\boldsymbol{\rho}},{\boldsymbol{\rho}}_0)>\delta) = 0$ by the consistency of $\hat{\boldsymbol{\rho}}$. Stochastic equicontinuity implies that $\delta$ can be chosen so that $\lim P\Big( \sup_{d(\boldsymbol{\rho}_1,\boldsymbol{\rho}_2)<\delta} | \nu_{M,[k]}(\hat{\boldsymbol{\rho}})-\nu_{M,[k]}({\boldsymbol{\rho}}_0) | > \epsilon\Big)$ is as small as desired. Thus,  
$\lim P(|\nu_{M,[k]}(\hat{\boldsymbol{\rho}})-\nu_{M,[k]}({\boldsymbol{\rho}}_0)|>\epsilon)$  
can also be made arbitrarily small, that is,  
$\nu_M(\hat{\boldsymbol{\rho}}) - \nu_M({\boldsymbol{\rho}}_0) = o_p(1)$,  
which corresponds to our condition \autoref{cond:sequi}. It is thus sufficient to show that $\nu_M$ is stochastically equicontinuous.

Following \citeOA{andrews1994empirical}, Section 5, we say that $\tilde{\mathbf{m}}_m$ is Type $IV(p=2)$ if the parameter space is bounded (Assumption \ref{as:regularity}) and
\begin{equation}\label{eq:IV_cond}
\sup_m\left( \mathbb{E}\Big(\sup_{\boldsymbol{\rho}_1:\Vert {\boldsymbol{\rho}}_1-{\boldsymbol{\rho}}_2\Vert<\delta} (\tilde{\mathbf{m}}_{m,[k]}({\boldsymbol{\rho}}_1)-\tilde{\mathbf{m}}_{m,[k]}({\boldsymbol{\rho}}_2))^2\Big)\right)^{1/2}\leq C\delta^\psi,
\end{equation}
for all ${\boldsymbol{\rho}}_2$ and all $\delta>0$ in a neighborhood of $0$, for some finite positive constants $C$ and $\psi$, and for all dimensions $[k]$.

We can express $\tilde{\mathbf{m}}_{m,[k]}(\boldsymbol{\rho})$ as a linear function of $\boldsymbol{\varepsilon}_m$ (e.g., see Equation \eqref{eq:m:lineps} in Appendix \ref{sec:proofconsistencySGMM}). Thus, $\tilde{\mathbf{m}}_{m,[k]}({\boldsymbol{\rho}}_1)-\tilde{\mathbf{m}}_{m,[k]}({\boldsymbol{\rho}}_2) = u_{1,m,[k]}(\boldsymbol{\rho}_1, \boldsymbol{\rho}_2) + \mathbf{u}_{2,m,[k]}(\boldsymbol{\rho}_1, \boldsymbol{\rho}_2)\boldsymbol{\varepsilon}_m$ for some scalar $u_{1,m,[k]}(\boldsymbol{\rho}_1, \boldsymbol{\rho}_2)$ and row vector $\mathbf{u}_{2,m,[k]}(\boldsymbol{\rho}_1, \boldsymbol{\rho}_2)$. Additionally, by Assumptions \ref{as:manymarkets} and \ref{as:nonsingular}, $|u_{1,m,[k]}(\boldsymbol{\rho}_1, \boldsymbol{\rho}_2)|$ and $\lVert \mathbf{u}_{2,m,[k]}(\boldsymbol{\rho}_1, \boldsymbol{\rho}_2) \rVert$ are uniformly bounded by some $\bar{u}_{1,[k]}$ and $\bar{u}_{2,[k]}$, respectively, where $\bar{u}_{1,[k]}$ and $\bar{u}_{2,[k]}$ do not depend on $\boldsymbol{\rho}_1$ and $\boldsymbol{\rho}_2$.  
Therefore, $(\tilde{\mathbf{m}}_{m,[k]}({\boldsymbol{\rho}}_1)-\tilde{\mathbf{m}}_{m,[k]}({\boldsymbol{\rho}}_2))^2$ is dominated by $(\bar{u}_{1,[k]} + \bar{u}_{2,[k]} \lVert \boldsymbol{\varepsilon}_m \rVert)^2$, for any sub-multiplicative norm $\lVert \cdot \rVert$. 

Since $(\bar{u}_{1} + \bar{u}_{2} \lVert \boldsymbol{\varepsilon}_m \rVert)^2$ is integrable (see Assumption \ref{as:finitevariance}),  
we can apply the dominated convergence theorem and interchange the expectation and the second supremum symbol in \eqref{eq:IV_cond}. A sufficient condition for \eqref{eq:IV_cond} is thus:  
\begin{equation}\label{eq:suffIV}
\sup_m\Big(\sup_{\boldsymbol{\rho}_1:\Vert {\boldsymbol{\rho}}_1-{\boldsymbol{\rho}}_2\Vert<\delta} \mathbb{E} ((\tilde{\mathbf{m}}_{m,[k]}({\boldsymbol{\rho}}_1)-\tilde{\mathbf{m}}_{m,[k]}({\boldsymbol{\rho}}_2))^2)\Big)^{1/2}\leq C\delta^\psi.
\end{equation}

Now, note that $\mathbb{E} (\tilde{\mathbf{m}}_{m,[k]}({\boldsymbol{\rho}}_1)-\tilde{\mathbf{m}}_{m,[k]}({\boldsymbol{\rho}}_2))^2$ is continuously differentiable in ${\boldsymbol{\rho}}_1$ with bounded derivatives following the argument in Lemma \ref{lemma:continuity}. See in particular Equation (\ref{eq:EB}).\footnote{The derivative is bounded because the linking probabilities have bounded derivatives (Assumption \ref{as:derivatives_P}), and $\mathbf{X}_m$, $\mathbb{E}(\lVert\boldsymbol{\varepsilon}_m\rVert^2)$, and the entries of the network matrices are bounded (Assumptions \ref{as:manymarkets}, \ref{as:nonsingular}, and \ref{as:finitevariance}).}

Then, by the Mean Value Theorem, we have  
$\mathbb{E} (\tilde{\mathbf{m}}_{m,[k]}({\boldsymbol{\rho}}_1)-\tilde{\mathbf{m}}_{m,[k]}({\boldsymbol{\rho}}_2))^2 = \mathcal{D}(\boldsymbol{\rho}^+, \boldsymbol{\rho}_2) (\boldsymbol{\rho}_1 - {\boldsymbol{\rho}}_2)$, where $\mathcal{D}(\boldsymbol{\rho}^+, \boldsymbol{\rho}_2)$ is the derivative of $\mathbb{E} (\tilde{\mathbf{m}}_{m,[k]}({\boldsymbol{\rho}}_1)-\tilde{\mathbf{m}}_{m,[k]}({\boldsymbol{\rho}}_2))^2$ with respect to ${\boldsymbol{\rho}}_1$ at some $\boldsymbol{\rho}^+$ lying between $\boldsymbol{\rho}_1$ and ${\boldsymbol{\rho}}_2$. Thus, $\tilde{\mathbf{m}}_m$ is of Type IV with $p=2$, $\psi = 1/2$, and $C$ as the upper bound of $\mathcal{D}(\boldsymbol{\rho}^+, \boldsymbol{\rho}_2)$. As a result, Condition (\ref{eq:suffIV}), and thus Condition (\ref{eq:IV_cond}), hold.

By Theorem 4 in \citeOA{andrews1994empirical}, stochastic equicontinuity \autoref{cond:sequi} holds if Ossianders' condition (his condition D) holds, $\lim\frac{1}{M}\sum_m\mathbb{E}\sup_{\boldsymbol{\rho}}|\tilde{\mathbf{m}}_m|^{2+\eta}<\infty$ for some $\eta > 0$ (his condition B), and if groups $m$ are independent (Assumption \ref{as:manymarkets}) implied by his condition C). By Theorem 5 in \citeOA{andrews1994empirical}, Ossianders' condition holds if $\tilde{\mathbf{m}}_m$ is Type $IV(p=2)$, which we just shown. His condition B is  verified because $\mathbb{E}(\lVert\boldsymbol{\varepsilon}_m\rVert^{2 + \eta})$ is bounded for some $\eta > 0$ (Assumption \ref{as:finitevariance}). Thus, as above, we can employ the dominated convergence theorem and interchange the expectation and the second supremum. Since $\mathbb E|\tilde{\mathbf{m}}_m|^{2+\eta}$ is bounded, then condition B follows.
\end{proof}

\subsection{Identification\label{app:ident}}
In this section, we show that the identification can be expressed as an identification condition on a concentrated objective function.

As \( \mathbf{W}_0 \) is positive definite (Assumption \ref{as:Wm}), identification is equivalent to stating that \( Q_0(\boldsymbol{\theta}) \) has a unique minimizer. Since \( Q_0(\boldsymbol{\theta}) \) depends on the true value \( \boldsymbol{\rho}_0 \) and not its estimator, all simulated networks in this section are drawn from the true network distribution. We therefore omit \( \boldsymbol{\rho}_0 \) from the notation for simplicity.

We define: 
\begin{align*}
    \mathbf{B}_m(\alpha)&=\frac{1}{RST}\sum_{rst}\dot{\mathbf{Z}}_m^{(r)\prime}(\mathbf{I}_m-\alpha\ddot{\mathbf{G}}_m^{(s)})(\mathbf{I}_m-\alpha\dddot{\mathbf{G}}_m^{(t)})^{-1}\dddot{\mathbf{V}}_m^{(t)},\\
    \mathbf{D}_m(\alpha)&=\break\frac{1}{RS}\sum_{rs}\dot{\mathbf{Z}}_m^{(r)\prime}(\mathbf{I}_m-\alpha\ddot{\mathbf{G}}_m^{(s)}).
\end{align*}
We have $\textstyle\bar{\mathbf{m}}_M(\boldsymbol{\theta},\boldsymbol{\rho}_0)=\frac{1}{M}\sum_m[\mathbf{D}_m(\alpha)\mathbf{y}_m-\mathbf{B}_m(\alpha)\tilde{\boldsymbol{\theta}}]$. The first-order condition of the minimization of $Q_0 (\boldsymbol{\theta})$ with respect to $\tilde{\boldsymbol{\theta}}$ is: $$
\left[\lim \frac{1}{M}\sum_m\mathbb E\left[\mathbf{B}_m(\alpha)\right]\tilde{\boldsymbol{\theta}}\right]'\mathbf{W}_0\left[\lim \frac{1}{M}\sum_m\big(\mathbb E\left[\mathbf{D}_m(\alpha)\mathbf{y}_m\right]-\mathbb E\left[\mathbf{B}_m(\alpha)\right]\tilde{\boldsymbol{\theta}}\big)\right] = 0.
$$
A sufficient condition for the last equation to have a unique solution in $\tilde{\boldsymbol{\theta}}$ is that $\textstyle\bar{\mathbf{B
}}_0(\alpha):=\lim \frac{1}{M}\sum_m\mathbb E\left[\mathbf{B}_m(\alpha)\right]$ is a full rank matrix for all $\alpha$. Under this condition, the solution $\tilde{\boldsymbol{\theta}}$ can be expressed as:
$$\hat{\tilde{\boldsymbol{\theta}}}(\alpha)=[\bar{\mathbf{B
}}_0'(\alpha)\mathbf{W}_0\bar{\mathbf{B
}}_0]^{-1}\bar{\mathbf{B
}}_0'(\alpha)\mathbf{W}_0\bar{\mathbf F}_0(\alpha),$$
where $\textstyle\bar{\mathbf F}_0(\alpha) = \plim \frac{1}{M}\sum_m\mathbb E\left[\mathbf{D}_m(\alpha)\mathbf{y}_m\right]$. By replacing $\mathbf y_m = (\mathbf I_m - \alpha_0 \mathbf G_m)^{-1}(\mathbf V_m \tilde{\boldsymbol\theta}_0 + \boldsymbol{\varepsilon}_m)$ in the expression of $\textstyle\bar{\mathbf F}_0(\alpha)$, we obtain:
\begin{align*}
    \bar{\mathbf F}_0(\alpha) &= \lim \frac{1}{M}\sum_m\frac{1}{RS}\sum_{rs}\mathbb E \left(\dot{\mathbf{Z}}_m^{(r)\prime}(\mathbf{I}_m-\alpha\ddot{\mathbf{G}}_m^{(s)})(\mathbf{I}_m-\alpha_0\mathbf{G}_m)^{-1}\mathbf{V}_m\tilde{\boldsymbol\theta}_0\right).
\end{align*}
Since $\bar{\mathbf F}_0(\alpha_0) = \bar{\mathbf{B
}}_0(\alpha_0)\tilde{\boldsymbol\theta}_0$, it follows that $\hat{\tilde{\boldsymbol{\theta}}}(\alpha_0) = \tilde{\boldsymbol\theta}_0$, which means that $\tilde{\boldsymbol\theta}_0$ is identified if $\alpha_0$ is identified; the underlying condition being that $\bar{\mathbf{B
}}_0(\alpha)$ is full rank.

By replacing the solution $\hat{\tilde{\boldsymbol{\theta}}}(\alpha)$ in the objective function, we can concentrate $Q_0(\boldsymbol{\theta})$ around $\alpha$ as
$Q_0^c(\alpha) = \bar{\mathbf Q}^c(\alpha)'\mathbf{W}_0\bar{\mathbf Q}^c(\alpha)$,
Where $$\bar{\mathbf Q}^c(\alpha) = \bar{\mathbf F}_0(\alpha) - \bar{\mathbf{B
}}_0(\alpha)[\bar{\mathbf{B
}}_0'(\alpha)\mathbf{W}_0\bar{\mathbf{B
}}_0(\alpha)]^{-1}\bar{\mathbf{B
}}_0'(\alpha)\mathbf{W}_0\bar{\mathbf F}_0(\alpha).$$

Let $\mathbf{W}_0^{1/2}$ be the positive definite square root of  $\mathbf{W}_0$. We have $\mathbf{W}_0^{1/2}\bar{\mathbf Q}^c(\alpha) = [\mathbf I_{w} - \mathbf P_{\mathbf B}(\alpha)]\mathbf{W}_0^{1/2}\bar{\mathbf F}_0(\alpha)$, where $\mathbf P_{\mathbf B}(\alpha) := \mathbf{W}_0^{1/2}\bar{\mathbf{B
}}_0(\alpha)[\bar{\mathbf{B
}}_0'(\alpha)\mathbf{W}_0\bar{\mathbf{B
}}_0(\alpha)]^{-1}\bar{\mathbf{B
}}_0'(\alpha)\mathbf{W}_0^{1/2}$ is a projection matrix onto the space of the column of $\mathbf{W}_0^{1/2}\bar{\mathbf{B
}}_0(\alpha)$ and $\mathbf I_{w}$ if the identity matrix of the same dimension as $\mathbf{W}_0$. The concentrated objective function can be written as:  
$$Q_0^c(\alpha) = [\mathbf{W}_0^{1/2}\bar{\mathbf F}_0(\alpha)]^{\prime}[\mathbf I_{w} - \mathbf P_{\mathbf B}(\alpha)]\mathbf{W}_0^{1/2}\bar{\mathbf F}_0(\alpha).$$

For identification to hold, the equation \( Q_0^c(\alpha) = 0 \) must not have multiple solutions. Unfortunately, simplifying this condition is challenging due to the nonlinearity of $Q_0^c(\alpha)$. A similar issue arises with the maximum likelihood estimator even when the network is fully observed. In this case, the identification condition also leads to a nonlinear equation in the peer effect parameter \citepOA[see][Assumption 9]{lee2004asymptotic}.  

Nevertheless, it is straightforward to sketch the empirical counterpart of \( Q_0^c(\alpha) \) since it is a function of a single variable. In numerous simulation exercises, we observe that \( Q_0^c(\alpha) \) is strictly convex, even when all entries of \( \mathbf{A}_m \) are simulated from an estimated distribution (without setting some entries to observed data). This evidence is encouraging and suggests that the solution to \( Q_0^c(\alpha) = 0 \) is likely unique.

\subsection{Asymptotic variance estimation}\label{sec:varestim}
Estimating the asymptotic variance of $\sqrt{M}(\hat{\boldsymbol{\theta}} - \boldsymbol{\theta}_0)$ can be challenging, as it requires computing the derivative of the expected moment function to estimate $\boldsymbol{\Gamma}_0 = \plim \frac{\partial \bar{\mathbf{m}}^{\ast}_M(\boldsymbol{\theta}_0, \boldsymbol{\rho}^+)}{\partial {\boldsymbol{\rho}^{\prime}}}$ (see Appendix \ref{sec:asymptnorm}). We now present a simple method for estimating this asymptotic variance by adapting \citeOA{houndetoungan2024inference}.

Taking the first derivative of the objective function at the second stage (for finite $R$,$S$,$T$ and conditional on $\hat{\mathbf{\rho}}$) with respect to $\boldsymbol{\theta}$, we have
$$
\left(\frac{\partial\bar{\mathbf{m}}_M(\hat{\boldsymbol{\theta}}, {\boldsymbol{\rho}})}{\partial\boldsymbol{\theta}^{\prime}}\right)^{\prime}\mathbf{W}_M[\bar{\mathbf{m}}_M(\hat{\boldsymbol{\theta}}, {\boldsymbol{\rho}})]=\mathbf{0}.
$$
By applying the mean value theorem to $\bar{\mathbf{m}}_M(\boldsymbol{\theta}, {\boldsymbol{\rho}})$, we have
$$\sqrt{M}(\hat{\boldsymbol{\theta}}-\boldsymbol{\theta}_0)=-\left[\mathbf H_M(\hat{\boldsymbol \theta})^{\prime} \mathbf{W}_M \mathbf H_M({\boldsymbol \theta}^{\ast})\right]^{-1}\mathbf H_M(\hat{\boldsymbol \theta})^{\prime}  \mathbf{W}_M\dfrac{1}{\sqrt{M}}\sum_{m = 1}^M\mathbf{m}_m(\boldsymbol{\theta}_0, \hat{\boldsymbol \rho}),   $$
where $\mathbf H_M(\boldsymbol \theta) = \dfrac{\partial\bar{\mathbf{m}}_M({\boldsymbol{\theta}}, \hat{\boldsymbol{\rho}})}{\partial\boldsymbol{\theta}^{\prime}}$ and $\boldsymbol \theta^{\ast}$ is some point between $\hat{\boldsymbol{\theta}}$ and $\boldsymbol{\theta}_0$.  

Let $\textstyle\boldsymbol{\Omega}_M = \Var(\frac{1}{\sqrt{M}}\sum_{m = 1}^M\mathbf{m}_m(\boldsymbol{\theta}_0, \hat{\boldsymbol \rho}))$. We assume the following:
\begin{assumption}
$\plim \boldsymbol{\Omega}_M = \boldsymbol{\Omega}_0$ and $\plim \mathbf{H}_M({\boldsymbol \theta}_0) = \mathbf{H}_0$ exist and are finite matrices.
\end{assumption}

Under this assumption, the asymptotic variance of $\sqrt{M}(\hat{\boldsymbol{\theta}}-\boldsymbol{\theta}_0)$ is:
$$\mathbb{V}_0(\sqrt{M}(\hat{\boldsymbol{\theta}}-\boldsymbol{\theta}_0)) = (\mathbf H_0^{\prime} \mathbf{W}_0 \mathbf H_0)^{-1}\mathbf H_0^{\prime}  \mathbf{W}_0 \boldsymbol{\Omega}_0 \mathbf{W}_0 \mathbf H_0 (\mathbf H_0^{\prime} \mathbf{W}_0 \mathbf H_0)^{-1}.$$

The expression for the asymptotic variance is similar to the variance of the standard GMM estimator. The key difference is that $\boldsymbol{\Omega}_0$, which is the asymptotic variance of $\frac{1}{\sqrt{M}}\sum_{m = 1}^M\mathbf{m}_m(\boldsymbol{\theta}_0, \hat{\boldsymbol \rho})$, accounts for the uncertainty in $\boldsymbol\varepsilon_m$, the first-stage estimator $\hat{\boldsymbol{\rho}}$, and the finite number of simulated networks from the estimated network distribution.

To estimate $\mathbb{V}_0(\sqrt{M}(\hat{\boldsymbol{\theta}}-\boldsymbol{\theta}_0))$, one can replace $\mathbf{H}_0$ and $\mathbf{W}_0$ with their usual estimators. Specifically, $\mathbf{H}_0$ can be estimated by $\mathbf{H}_M(\hat{\boldsymbol{\theta}})$ and $\mathbf{W}_0$ can be estimated by $\mathbf{W}_M$. Let $\mathcal S$ be the set of simulated networks from the network distribution and the true network. To estimate $\boldsymbol{\Omega}_0$, we rely on the following decomposition.
\begingroup
\allowdisplaybreaks
\begin{align}
    \boldsymbol{\Omega}_M &= \mathbb V \left(\frac{1}{\sqrt{M}}\sum_{m = 1}^M\mathbf{m}_m(\boldsymbol{\theta}_0, \hat{\boldsymbol \rho})\right),\nonumber\\
    \boldsymbol{\Omega}_M &= \mathbb E\left\{\mathbb V \left(\frac{1}{\sqrt{M}}\sum_{m = 1}^M\mathbf{m}_m(\boldsymbol{\theta}_0, \hat{\boldsymbol \rho})| \mathbf X_m, \mathcal S\right)\right\} + \mathbb V\left\{\mathbb E \left(\frac{1}{\sqrt{M}}\sum_{m = 1}^M\mathbf{m}_m(\boldsymbol{\theta}_0, \hat{\boldsymbol \rho})| \mathbf X_m, \mathcal S\right)\right\}, \nonumber\\
    \boldsymbol{\Omega}_M &= \mathbb E\left(\frac{1}{M}\sum_{m = 1}^M\mathbf V_m\right) + \mathbb V\left(\frac{1}{\sqrt{M}}\sum_{m = 1}^M\mathcal E_m\right). \nonumber
\end{align}
\endgroup
where $\mathbf V_m = \mathbb V \left(\mathbf{m}_m(\boldsymbol{\theta}_0, \hat{\boldsymbol \rho})| \mathbf X_m, \hat{\boldsymbol{\rho}}, \kappa(\mathcal{A}_m)\right)$ and $\mathcal E_m = \mathbb E \left(\mathbf{m}_m(\boldsymbol{\theta}_0, \hat{\boldsymbol \rho})| \mathbf X_m, \hat{\boldsymbol{\rho}}, \kappa(\mathcal{A}_m)\right)$.

Note that both $\mathbf{V}_m$ and $\mathcal{E}_m$ can be easily computed and estimated. They represent the conditional variance and the conditional expectation of the moment function, given $\mathcal{S}$. 

Let $\mathbf V_M = \frac{1}{M} \sum_{m = 1}^{M} \mathbf{V}_m$ and $\mathcal{E}_M = \frac{1}{\sqrt{M}} \sum_{m = 1}^{M} \mathcal{E}_m$. It follows that:
$$\boldsymbol{\Omega}_0 = \plim  \mathbf V_M + \lim \mathbb{V} \left(\mathcal E_M\right).$$

The first term is due to the error term of the model $\boldsymbol{\varepsilon}_m$, whereas the second term reflects the uncertainty associated with the estimation of $\hat{\boldsymbol{\rho}}$ and the simulated networks.
In practice, $\plim  \mathbf V_M$ can be estimated by the average of the conditional variance of the moment function without accounting for the uncertainty in the simulated network. To estimate $\lim \mathbb{V} \left(\mathcal E_M\right)$, we repeatedly generate many $\mathcal{S}$, each associated with a new $\boldsymbol{\rho}$ simulated from the estimator of the distribution of $\hat{\boldsymbol{\rho}}$. For each $\mathcal{S}$, we compute the associated $\mathcal E_M$. Finally, the sample variance of the generated values of $\mathcal E_M$ serves as an estimator of $\lim \mathbb{V} \left(\mathcal E_M\right)$.\footnote{Our R package offers tools to compute this variance.}

\section{Additional Technical Results }\label{sec:appendix_otherresults}

\subsection{Simple estimators}
When the network is fully observed, the moment function of the standard instrumental variables approach is linear in parameters \citep{bramoulle2009identification}. Consequently, the estimator can be computed without requiring numerical optimization and identification conditions can be easily tested. Our SGMM estimator does not exhibit such simplicity when $\mathbf{G}_m\mathbf{y}_m$ is not observed. In this section, we discuss other straightforward estimators that result from a linear moment function. We first discuss the case where $\mathbf{G}_m\mathbf{X}_m$ and $\mathbf{G}_m\mathbf{y}_m$ are observed.
\begin{proposition}\label{prop:all_observed}
[Conditions] Suppose that $\mathbf{G}_m\mathbf{X}_m$ and $\mathbf{G}_m\mathbf{y}_m$ are observed. Let $\boldsymbol{H}_m$ be a matrix such that (1) at least one column of $\mathbf{H}^k_m\mathbf{X}_m$ is (strongly) correlated with $\mathbf{G}_m\mathbf{y}_m$, conditional on $[\mathbf{1}_m,\mathbf{X}_m,\mathbf{G}_m\mathbf{X}_m]$ for $k\geq 2$, and (2) $\mathbb{E}[\boldsymbol{\varepsilon}_m|\mathbf{X}_m,\mathbf{A}_m,\mathbf{H}_m]=\mathbf{0}$ for all $m$. Finally,  define the matrix $\mathbf{Z}_m=[\mathbf{1}_m,\mathbf{X}_m,\mathbf{G}_m\mathbf{X}_m,\mathbf{H}^2_m\mathbf{X}_m,\mathbf{H}^3_m\mathbf{X}_m...]$.

[Results] Then, under classical assumptions (e.g., \citeOA{cameron2005microeconometrics}, Proposition 6.1), the (linear) GMM estimator based on the moment function $\frac{1}{M}\sum_m\mathbf{Z}_m'\boldsymbol{\varepsilon}_m$ is consistent and asymptotically normally distributed with the usual asymptotic variance-covariance matrix.
\end{proposition}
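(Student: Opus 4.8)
The plan is to observe that, once $\mathbf{G}_m\mathbf{X}_m$ and $\mathbf{G}_m\mathbf{y}_m$ are observed, Equation (\ref{eq:linearinmean}) is an ordinary linear model with a single endogenous regressor, so the statement reduces to verifying the hypotheses of a textbook linear-GMM theorem. I would write $\mathbf{y}_m=\mathbf{U}_m\boldsymbol{\theta}+\boldsymbol{\varepsilon}_m$ with $\mathbf{U}_m=[\mathbf{1}_m,\mathbf{X}_m,\mathbf{G}_m\mathbf{X}_m,\mathbf{G}_m\mathbf{y}_m]$ (every column observed) and $\boldsymbol{\theta}=[c,\boldsymbol{\beta}',\boldsymbol{\gamma}',\alpha]'$, the only endogenous column being $\mathbf{G}_m\mathbf{y}_m$. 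Then $\bar{\mathbf{m}}_M(\boldsymbol{\theta})=\frac{1}{M}\sum_m\mathbf{Z}_m'(\mathbf{y}_m-\mathbf{U}_m\boldsymbol{\theta})$ is affine in $\boldsymbol{\theta}$ and --- unlike the moment in Theorem \ref{prop:non_observed} --- involves no simulated networks and no first-stage estimate, so the GMM minimizer is the closed form $\hat{\boldsymbol{\theta}}=(\hat{\mathbf{A}}_M'\mathbf{W}_M\hat{\mathbf{A}}_M)^{-1}\hat{\mathbf{A}}_M'\mathbf{W}_M\hat{\mathbf{b}}_M$, with $\hat{\mathbf{A}}_M=\frac{1}{M}\sum_m\mathbf{Z}_m'\mathbf{U}_m$ and $\hat{\mathbf{b}}_M=\frac{1}{M}\sum_m\mathbf{Z}_m'\mathbf{y}_m$. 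The conclusion then follows from \citeOA{cameron2005microeconometrics}, Proposition 6.1 (equivalently \citeOA{newey1994large}), which only asks for moment validity, an identifying rank condition, and standard LLN/CLT regularity.

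First I would check validity of the moment condition at $\boldsymbol{\theta}_0$. Since $\mathbf{G}_m=f(\mathbf{A}_m)$, the instrument matrix $\mathbf{Z}_m=[\mathbf{1}_m,\mathbf{X}_m,\mathbf{G}_m\mathbf{X}_m,\mathbf{H}_m^2\mathbf{X}_m,\mathbf{H}_m^3\mathbf{X}_m,\ldots]$ is a function of $(\mathbf{X}_m,\mathbf{A}_m,\mathbf{H}_m)$, so condition (2) and the law of iterated expectations give $\mathbb{E}[\mathbf{Z}_m'\boldsymbol{\varepsilon}_m]=\mathbb{E}\bigl[\mathbf{Z}_m'\,\mathbb{E}(\boldsymbol{\varepsilon}_m\mid\mathbf{X}_m,\mathbf{A}_m,\mathbf{H}_m)\bigr]=\mathbf{0}$ for every $m$. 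Next I would establish identification. Using $\mathbf{y}_m-\mathbf{U}_m\boldsymbol{\theta}=\boldsymbol{\varepsilon}_m-\mathbf{U}_m(\boldsymbol{\theta}-\boldsymbol{\theta}_0)$, the limiting population moment $\lim_{M\rightarrow\infty}\mathbb{E}[\bar{\mathbf{m}}_M(\boldsymbol{\theta})]$ equals $-\mathbf{A}_0(\boldsymbol{\theta}-\boldsymbol{\theta}_0)$ with $\mathbf{A}_0:=\lim_{M\rightarrow\infty}\frac{1}{M}\sum_m\mathbb{E}[\mathbf{Z}_m'\mathbf{U}_m]$, so $\boldsymbol{\theta}_0$ is the unique zero exactly when $\mathbf{A}_0$ has full column rank. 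Because the exogenous columns $[\mathbf{1}_m,\mathbf{X}_m,\mathbf{G}_m\mathbf{X}_m]$ of $\mathbf{U}_m$ also appear in $\mathbf{Z}_m$, a Frisch--Waugh argument reduces full rank to the statement that, after partialling out $[\mathbf{1}_m,\mathbf{X}_m,\mathbf{G}_m\mathbf{X}_m]$, the residualized $\mathbf{G}_m\mathbf{y}_m$ is correlated on average with the residualized excluded instruments $\mathbf{H}_m^k\mathbf{X}_m$, $k\geq 2$ --- precisely condition (1). (If $\mathbf{Z}_m$ has strictly more columns than $\mathbf{U}_m$, the model is overidentified and handled by GMM in the usual way.)

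Then I would supply the limit theorems. Treating $\mathbf{H}_m$ as part of the group-$m$ data, Assumption \ref{as:manymarkets} gives independence across $m$, uniformly bounded group sizes, and $\sup_m\lVert\mathbf{X}_m\rVert_2<\infty$; the entries of $\mathbf{G}_m$ and $(\mathbf{I}_m-\alpha_0\mathbf{G}_m)^{-1}$ are bounded (Assumption \ref{as:nonsingular}, implied by Assumption \ref{as:alpha} under row normalization), and I would also take the entries of $\mathbf{H}_m$ uniformly bounded; hence $\mathbf{Z}_m$ has uniformly bounded entries, while $\mathbf{y}_m=(\mathbf{I}_m-\alpha_0\mathbf{G}_m)^{-1}(c_0\mathbf{1}_m+\mathbf{X}_m\boldsymbol{\beta}_0+\mathbf{G}_m\mathbf{X}_m\boldsymbol{\gamma}_0+\boldsymbol{\varepsilon}_m)$, and therefore $\mathbf{G}_m\mathbf{y}_m$, $\mathbf{Z}_m'\mathbf{U}_m$, and $\mathbf{Z}_m'\boldsymbol{\varepsilon}_m$, inherit the uniform $\mu$-th moment bound of $\boldsymbol{\varepsilon}_m$ ($\mu>2$, Assumption \ref{as:finitevariance}). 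A law of large numbers for independent, non-identically distributed arrays then yields $\hat{\mathbf{A}}_M\rightarrow_p\mathbf{A}_0$ and $\hat{\mathbf{b}}_M\rightarrow_p\mathbf{A}_0\boldsymbol{\theta}_0$, and the Lyapunov CLT (the $\mu>2$ moment supplies the Lyapunov ratio) yields $\frac{1}{\sqrt{M}}\sum_m\mathbf{Z}_m'\boldsymbol{\varepsilon}_m\rightarrow_d N(\mathbf{0},\boldsymbol{\Omega}_0)$ with $\boldsymbol{\Omega}_0:=\lim_{M\rightarrow\infty}\frac{1}{M}\sum_m\Var(\mathbf{Z}_m'\boldsymbol{\varepsilon}_m)$. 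Slutsky's theorem, together with $\mathbf{W}_M\rightarrow_p\mathbf{W}_0$ positive definite (Assumption \ref{as:Wm}) and $\mathbf{A}_0$ full rank, then gives consistency of $\hat{\boldsymbol{\theta}}$ and $\sqrt{M}(\hat{\boldsymbol{\theta}}-\boldsymbol{\theta}_0)\rightarrow_d N\bigl(\mathbf{0},(\mathbf{A}_0'\mathbf{W}_0\mathbf{A}_0)^{-1}\mathbf{A}_0'\mathbf{W}_0\boldsymbol{\Omega}_0\mathbf{W}_0\mathbf{A}_0(\mathbf{A}_0'\mathbf{W}_0\mathbf{A}_0)^{-1}\bigr)$ --- the usual GMM sandwich, collapsing to $(\mathbf{A}_0'\boldsymbol{\Omega}_0^{-1}\mathbf{A}_0)^{-1}$ for $\mathbf{W}_0=\boldsymbol{\Omega}_0^{-1}$, with \emph{no} first-stage correction term because $\hat{\boldsymbol{\rho}}$ never enters the moment function.

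The only genuinely substantive step is the rank condition --- showing that condition (1)'s "strong conditional correlation" is exactly the instrument-relevance requirement that makes $\mathbf{A}_0$ full column rank. The remaining pieces are routine here: moment validity is a one-line iterated-expectations argument, and the LLN/CLT regularity is immediate because the groups are bounded and independent and $\boldsymbol{\varepsilon}_m$ has a uniform $(2+\delta)$-th moment. In particular, since the moment function is affine in $\boldsymbol{\theta}$, no uniform law of large numbers, stochastic equicontinuity, or compactness of the parameter space is needed --- consistency and asymptotic normality can be read directly off the closed-form estimator via the continuous-mapping theorem and Slutsky.
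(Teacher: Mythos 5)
Your proposal is correct and follows essentially the same route as the paper, which itself offers no separate proof of this proposition but simply invokes the classical linear GMM theorem (Cameron and Trivedi, Proposition 6.1): once $\mathbf{G}_m\mathbf{X}_m$ and $\mathbf{G}_m\mathbf{y}_m$ are observed the model is a standard linear IV problem, condition (2) plus iterated expectations gives moment validity, condition (1) is the relevance/rank condition, and bounded independent groups with $\mu>2$ error moments supply the LLN and Lyapunov CLT. Your write-up merely makes explicit the verification the paper leaves implicit, including the correct observation that no first-stage correction is needed since $\hat{\boldsymbol{\rho}}$ does not enter the moment function.
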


Condition (1) is the relevancy condition, whereas condition (2) is the exogeneity condition.\footnote{Although (for simplicity) in Proposition \ref{prop:all_observed}, we use the entire matrix $\mathbf{X}_m$ to generate the instruments $\mathbf{H}_m\mathbf{X}_m$, in practice, one should avoid including instruments (i.e., columns of $\mathbf{H}_m\mathbf{X}_m$) that are weakly correlated with $\mathbf{G}_m\mathbf{y}_m$.} Although Proposition \ref{prop:all_observed} holds for any matrix $\mathbf{H}_m$ such that conditions (1) and (2) hold, the most sensible example in our context is when $\mathbf{H}_m$ is constructed using a draw from $\hat{P}(\mathbf{A}_m|\hat{\boldsymbol{\rho}},\mathbf{X}_m,\kappa(\mathcal{A}_m))$.

Importantly, the moment conditions remain valid even when the researcher uses a \emph{mispecified} estimator of the distribution ${P}(\mathbf{A}_m|\mathbf{X}_m,\kappa(\mathcal{A}_m))$, as long as the specification error on ${P}(\mathbf{A}_m|\mathbf{X}_m,\kappa(\mathcal{A}_m))$ does not induce a correlation with $\boldsymbol{\varepsilon}_m$.\footnote{We would like to thank Chih-Sheng Hsieh and Arthur Lewbel for discussions on this important point.} This could be of great practical importance, especially if the estimation of $\hat{P}(\mathbf{A}_m|\hat{\boldsymbol{\rho}},\mathbf{X}_m,\kappa(\mathcal{A}_m))$ suffers from a small sample bias.% We provide an example in Section \ref{sec:ard} below.

Second, we present a simple, but asymptotically biased, linear GMM estimator when $\mathbf{G}_m\mathbf{X}_m$ is observed and $\mathbf{G}_m\mathbf{y}_m$ is not. The presentation of such an estimator is useful because simulations show that the asymptotic bias turns out to be negligible in many cases, especially for moderate group sizes. Moreover, the estimator is computationally attractive because the estimator can be written in a closed form.

\begin{proposition}\label{prop:bias_nocontext}
[Conditions] Assume that $\mathbf{G}_m\mathbf{X}_m$ is observed. Let $\ddot{\mathbf{S}}_m=[\mathbf{1}_m,\mathbf{X}_m,\mathbf{G}_m\mathbf{X}_m,\break\ddot{\mathbf{G}}_m\mathbf{X}_m,\ddot{\mathbf{G}}_m\mathbf{y}_m]$ and $\dot{\mathbf{Z}}_m=[\mathbf{1}_m,\mathbf{X}_m,\mathbf{G}_m\mathbf{X}_m,\ddot{\mathbf{G}}_m\mathbf{X}_m,\dot{\mathbf{G}}^2_m\mathbf{X}_m,\dot{\mathbf{G}}^3_m\mathbf{X}_m,...]$. We consider the moment function $\frac{1}{M} \sum_{m = 1}^M \dot{\mathbf{Z}}'_m (\mathbf y_m - \ddot{\mathbf{S}}_m\ddot{\boldsymbol{\theta}})$ and $\check{\boldsymbol{\theta}}$ be the associated GMM estimator of  $\ddot{\boldsymbol{\theta}}$. We define the sensitivity matrix ${\mathbf{R}}=\left(\frac{\sum_m\ddot{\mathbf{S}}_m'\dot{\mathbf{Z}}_m}{M}\mathbf{W}_M\frac{\sum_m \dot{\mathbf{Z}}_m'\ddot{\mathbf{S}}_m}{M}\right)^{-1}\frac{\sum_m\ddot{\mathbf{S}}_m'\dot{\mathbf{Z}}_m}{M}\mathbf{W}_M$.

[Result] Under classical assumptions (see proof), the asymptotic bias of $\hat{\boldsymbol{\theta}}$ is given by $\alpha_0  \plim\frac{\mathbf{R}\sum_m \dot{\mathbf{Z}}_m'({\mathbf{G}}_m-\ddot{\mathbf{G}}_m)\mathbf{y}_m}{M}$. Moreover, letting $\mathbf{W}_M$ be an identity matrix minimizes the asymptotic bias in the sense of minimizing the Frobenius norm of $\mathbf{R}$.
\end{proposition}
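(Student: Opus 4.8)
The plan is to exploit that, unlike the SGMM of Theorem~\ref{prop:non_observed}, the moment function $\frac{1}{M}\sum_m\dot{\mathbf{Z}}_m'(\mathbf{y}_m-\ddot{\mathbf{S}}_m\ddot{\boldsymbol{\theta}})$ is \emph{linear} in $\ddot{\boldsymbol{\theta}}$, because $\ddot{\mathbf{S}}_m$ does not depend on $\ddot{\boldsymbol{\theta}}$ (it uses the simulated regressor $\ddot{\mathbf{G}}_m\mathbf{y}_m$ directly rather than a $\ddot{\boldsymbol{\theta}}$-dependent inverse term). Hence $\check{\boldsymbol{\theta}}$ has a closed form: writing $\mathbf{J}_M=\frac{1}{M}\sum_m\dot{\mathbf{Z}}_m'\ddot{\mathbf{S}}_m$, the first-order condition gives $\check{\boldsymbol{\theta}}=\mathbf{R}\,\frac{1}{M}\sum_m\dot{\mathbf{Z}}_m'\mathbf{y}_m$ with $\mathbf{R}=(\mathbf{J}_M'\mathbf{W}_M\mathbf{J}_M)^{-1}\mathbf{J}_M'\mathbf{W}_M$, which is well defined under Assumption~\ref{as:Wm} and a standard rank condition ($\plim\mathbf{J}_M$ full column rank, i.e., relevance of the simulated instruments $\dot{\mathbf{G}}_m^k\mathbf{X}_m$, $k\ge 2$). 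The key algebraic observation is that $\mathbf{R}\mathbf{J}_M=\mathbf{I}$, so $\mathbf{R}$ is a left inverse of $\mathbf{J}_M$, and $\mathbf{R}$ equals the Moore--Penrose pseudoinverse $\mathbf{J}_M^{+}=(\mathbf{J}_M'\mathbf{J}_M)^{-1}\mathbf{J}_M'$ precisely when $\mathbf{W}_M=\mathbf{I}$.

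For the bias formula I would substitute the structural equation. Splitting $\alpha_0\mathbf{G}_m\mathbf{y}_m=\alpha_0\ddot{\mathbf{G}}_m\mathbf{y}_m+\alpha_0(\mathbf{G}_m-\ddot{\mathbf{G}}_m)\mathbf{y}_m$ in \eqref{eq:linearinmean} gives the identity
$$\mathbf{y}_m=\ddot{\mathbf{S}}_m\ddot{\boldsymbol{\theta}}_0+\alpha_0(\mathbf{G}_m-\ddot{\mathbf{G}}_m)\mathbf{y}_m+\boldsymbol{\varepsilon}_m,$$
where $\ddot{\boldsymbol{\theta}}_0=[c_0,\boldsymbol{\beta}_0',\boldsymbol{\gamma}_0',\mathbf{0}',\alpha_0]'$ stacks the true structural parameters, the zero block corresponding to the simulated contextual term $\ddot{\mathbf{G}}_m\mathbf{X}_m$ (whose role in the true model is played by the observed $\mathbf{G}_m\mathbf{X}_m$). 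Plugging this into the closed form and using $\mathbf{R}\,\frac{1}{M}\sum_m\dot{\mathbf{Z}}_m'\ddot{\mathbf{S}}_m=\mathbf{R}\mathbf{J}_M=\mathbf{I}$ yields
$$\check{\boldsymbol{\theta}}=\ddot{\boldsymbol{\theta}}_0+\alpha_0\,\mathbf{R}\,\frac{1}{M}\sum_m\dot{\mathbf{Z}}_m'(\mathbf{G}_m-\ddot{\mathbf{G}}_m)\mathbf{y}_m+\mathbf{R}\,\frac{1}{M}\sum_m\dot{\mathbf{Z}}_m'\boldsymbol{\varepsilon}_m.$$
Since $\dot{\mathbf{Z}}_m$ is a function of $\mathbf{X}_m$, $\mathbf{G}_m\mathbf{X}_m$, $\kappa(\mathcal{A}_m)$ and of simulation draws independent of $\boldsymbol{\varepsilon}_m$ (so $\dot{\mathbf{Z}}_m$ never involves $\mathbf{y}_m$), conditioning on $(\mathbf{X}_m,\mathcal{A}_m)$, using Assumption~\ref{as:exonet}, and applying the law of iterated expectations gives $\mathbb{E}[\dot{\mathbf{Z}}_m'\boldsymbol{\varepsilon}_m]=\mathbf{0}$; a law of large numbers for independent summands with uniformly bounded moments (Assumptions~\ref{as:manymarkets}, \ref{as:nonsingular}, \ref{as:finitevariance}) then gives $\frac{1}{M}\sum_m\dot{\mathbf{Z}}_m'\boldsymbol{\varepsilon}_m\to_p\mathbf{0}$, while $\mathbf{R}$ converges in probability by Assumption~\ref{as:Wm} and the continuous mapping theorem. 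Taking probability limits delivers the asymptotic bias $\alpha_0\plim\big[\mathbf{R}\,\frac{1}{M}\sum_m\dot{\mathbf{Z}}_m'(\mathbf{G}_m-\ddot{\mathbf{G}}_m)\mathbf{y}_m\big]$.

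For the optimality of $\mathbf{W}_M=\mathbf{I}$, since the bias is $\alpha_0$ times $\mathbf{R}$ applied to a fixed vector, the natural target is to minimize $\lVert\mathbf{R}\rVert_F$, and I would argue this as a finite-sample fact. Every left inverse of $\mathbf{J}_M$ can be written $\mathbf{R}=\mathbf{J}_M^{+}+\mathbf{C}$ with $\mathbf{C}\mathbf{J}_M=\mathbf{0}$; the cross term then vanishes, $\operatorname{tr}\big((\mathbf{J}_M^{+})'\mathbf{C}\big)=\operatorname{tr}\big((\mathbf{J}_M'\mathbf{J}_M)^{-1}\mathbf{C}\mathbf{J}_M\big)=0$ by the cyclic property of the trace, so $\lVert\mathbf{R}\rVert_F^2=\lVert\mathbf{J}_M^{+}\rVert_F^2+\lVert\mathbf{C}\rVert_F^2\ge\lVert\mathbf{J}_M^{+}\rVert_F^2$, with equality iff $\mathbf{C}=\mathbf{0}$. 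Since $\mathbf{R}$ ranges over (a subset of) the left inverses of $\mathbf{J}_M$ as $\mathbf{W}_M$ ranges over positive definite matrices, and $\mathbf{R}=\mathbf{J}_M^{+}$ is attained at $\mathbf{W}_M=\mathbf{I}$, the identity weight matrix minimizes $\lVert\mathbf{R}\rVert_F$ (the minimizing matrix $\mathbf{J}_M^{+}$ being unique).

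The only step requiring real care is $\frac{1}{M}\sum_m\dot{\mathbf{Z}}_m'\boldsymbol{\varepsilon}_m\to_p\mathbf{0}$: one must first peel off the simulation randomness by conditioning on $(\mathbf{X}_m,\mathcal{A}_m)$ before invoking exogeneity, and then verify the moment bounds for the LLN given that the approximation-error term $(\mathbf{G}_m-\ddot{\mathbf{G}}_m)\mathbf{y}_m$ carries $\boldsymbol{\varepsilon}_m$ through the reduced form $\mathbf{y}_m=(\mathbf{I}_m-\alpha_0\mathbf{G}_m)^{-1}(\mathbf{V}_m\tilde{\boldsymbol{\theta}}_0+\boldsymbol{\varepsilon}_m)$, so that $\dot{\mathbf{Z}}_m'(\mathbf{G}_m-\ddot{\mathbf{G}}_m)\mathbf{y}_m$ has a bounded moment of order $\mu>2$ (Assumption~\ref{as:finitevariance}); the left-inverse identity and the pseudoinverse argument are routine.
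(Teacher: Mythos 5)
Your proposal is correct and follows essentially the same route as the paper: the identity $\mathbf{y}_m=\ddot{\mathbf{S}}_m\ddot{\boldsymbol{\theta}}_0+\alpha_0(\mathbf{G}_m-\ddot{\mathbf{G}}_m)\mathbf{y}_m+\boldsymbol{\varepsilon}_m$ (the paper's $\boldsymbol{\eta}$ term), the closed form $\check{\boldsymbol{\theta}}=\ddot{\boldsymbol{\theta}}_0+\mathbf{R}\,\frac{1}{M}\sum_m\dot{\mathbf{Z}}_m'(\boldsymbol{\eta}_m+\boldsymbol{\varepsilon}_m)$ via $\mathbf{R}\mathbf{J}_M=\mathbf{I}$, and a Pythagorean orthogonality argument for the weight matrix are all exactly the paper's steps, with your Part 2 merely rephrasing the paper's Cholesky decomposition $\mathbf{R}\mathbf{R}'=\mathbf{J}'\mathbf{J}+(\ddot{\mathbf{S}}'\dot{\mathbf{Z}}\dot{\mathbf{Z}}'\ddot{\mathbf{S}})^{-1}$ as the standard minimum-Frobenius-norm property of the Moore--Penrose left inverse. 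Your explicit treatment of $\frac{1}{M}\sum_m\dot{\mathbf{Z}}_m'\boldsymbol{\varepsilon}_m\to_p\mathbf{0}$ is slightly more careful than the paper, which subsumes it under ``classical assumptions.''
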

Although there are no obvious ways to obtain a consistent estimate of the asymptotic bias (because $\mathbf{y}_m$ is a function of $\mathbf{G}_m$ and $\alpha_0$), simulations show that the bias is very small in practice.

The intuition behind Proposition \ref{prop:bias_nocontext} comes from the literature on error-in-variable models with repeated observations (e.g., \citeOA{bound2001measurement}). The instrumental variable uses two independent draws from the (estimated) distribution of the true network. One draw is used to proxy the unobserved variable (i.e., $\mathbf{G}_m\mathbf{y}_m$), whereas the other is used to proxy the instrument (i.e., $\mathbf{G}_m\mathbf{X}_m$). This approach greatly reduces the bias compared with a situation in which only one draw would be used.

The argument in Proposition \ref{prop:bias_nocontext} is very similar to the one in \citeOA{andrews2017measuring}, although here perturbation with respect to the true model is not \textit{local}.\footnote{See page 1562 in \citeOA{andrews2017measuring}.} We also show that we expect the identity matrix weight to minimize the asymptotic bias. Our result, therefore, provides a theoretical justification for the simulations in \citeOA{onishi2021sample} who show that using the identity matrix to weight the moments greatly reduces the bias in the context studied by \citeOA{andrews2017measuring}.

\subsubsection{Proof of Proposition \ref{prop:bias_nocontext}}
\noindent\textbf{Part 1: Asymptotic bias}\\
Let $\ddot{\boldsymbol{\theta}}_0$ be the true value the parameter when regressors are defined as $\ddot{\mathbf{S}}_m=[\mathbf{1}_m,\mathbf{X}_m,\mathbf{G}_m\mathbf{X}_m, \break\ddot{\mathbf{G}}_m\mathbf{X}_m,\ddot{\mathbf{G}}_m\mathbf{y}_m]$; that is, the true coefficient vector associated with $[\mathbf{1}_m,\mathbf{X}_m\mathbf{G}_m\mathbf{X}_m,\ddot{\mathbf{G}}_m\mathbf{y}_m]$ is $\boldsymbol{\theta}_0$ while the true coefficient vector associated with $\ddot{\mathbf{G}}_m\mathbf{X}_m$ is zero.

We now use matrix notation as the sample level to avoid summations over $m$ and the index $m$. For example, $\dot{\mathbf{Z}}'\ddot{\mathbf{S}}= \sum_m \dot{\mathbf{Z}}_m'\ddot{\mathbf{S}}_m$. The Linear GMM estimator can be written as
\begin{align*}
    &\check{\boldsymbol{\theta}}=\left(\frac{\ddot{\mathbf{S}}'\dot{\mathbf{Z}}}{M}\mathbf W_M\frac{\dot{\mathbf{Z}}'\ddot{\mathbf{S}}}{M}\right)^{-1}\frac{\ddot{\mathbf{S}}'\dot{\mathbf{Z}}}{M}\mathbf W_M \dot{\mathbf{Z}}'\left(\frac{\ddot{\mathbf{S}}\ddot{\boldsymbol{\theta}}_0 + \boldsymbol{\eta}+\boldsymbol{\varepsilon}}{M}\right)
\end{align*}
where $\boldsymbol{\eta}=\alpha_0(\mathbf{G}-\ddot{\mathbf{G}})\mathbf{y}$ is due to the approximation of $\mathbf{G}\mathbf{y}$ by $\ddot{\mathbf{G}}\mathbf{y}$ in $\ddot{\mathbf{S}}$. Therefore 
$\check{\boldsymbol{\theta}}= \ddot{\boldsymbol{\theta}}_0 + {\mathbf{R}} \left(\frac{ \dot{\mathbf{Z}}'\boldsymbol{\eta}+\dot{\mathbf{Z}}'\boldsymbol{\varepsilon}}{M}\right)$ and the asymptotic bias of $\check{\boldsymbol{\theta}}$ is $\plim(\check{\boldsymbol{\theta}} - \ddot{\boldsymbol{\theta}}_0) = \alpha_0  \plim\frac{{\mathbf{R}}\dot{\mathbf{Z}}'({\mathbf{G}}-\ddot{\mathbf{G}})\mathbf{y}}{M}.$
\bigskip
\noindent\textbf{Part 2: Choice of $\mathbf{W}$} (we omit the index $M$ for simplicity)\\
Let $\boldsymbol{\Delta}=\mathbf{G}-\ddot{\mathbf{G}}$, $\mathbf{K}=\dot{\mathbf{Z}}'\boldsymbol{\Delta}\mathbf{G}^2/M$, if $\boldsymbol{\gamma}=\mathbf{0}$, and $\mathbf{K}=\dot{\mathbf{Z}}'\boldsymbol{\Delta}/M$ otherwise. Consider $\Vert \mathbf{RK}\Vert_F=\sqrt{trace(\mathbf{K}'\mathbf{R}'\mathbf{R}\mathbf{K})}=\sqrt{trace(\mathbf{K}\mathbf{K}'\mathbf{R}'\mathbf{R})}$. We have
$$
(1/M^2)\mathbf{R}\mathbf{R}'=[\ddot{\mathbf{S}}'\dot{\mathbf{Z}}\mathbf{W}\dot{\mathbf{Z}}'\ddot{\mathbf{S}}]^{-1}\ddot{\mathbf{S}}'\dot{\mathbf{Z}}\mathbf{W}\mathbf{W}\dot{\mathbf{Z}}'\ddot{\mathbf{S}}[\ddot{\mathbf{S}}'\dot{\mathbf{Z}}\mathbf{W}\dot{\mathbf{Z}}'\ddot{\mathbf{S}}]^{-1}.
$$
Let $\mathbf{W}=\mathbf{C}'\mathbf{C}$ and let $\mathbf{B}=\ddot{\mathbf{S}}'\dot{\mathbf{Z}}\mathbf{C}'$. We have
$$
(1/M^2)\mathbf{R}\mathbf{R}'=(\mathbf{B}'\mathbf{B})^{-1}\mathbf{B}'\mathbf{C}\mathbf{C}'
\mathbf{B}(\mathbf{B}'\mathbf{B})^{-1}.
$$
Now, define $\mathbf{J}'=(\mathbf{B}'\mathbf{B})^{-1}\mathbf{B}'\mathbf{C}-(\mathbf{B}'(\mathbf{C}')^{-1}\mathbf{C}^{-1}\mathbf{B})^{-1}\mathbf{B}'(\mathbf{C}')^{-1}$. We have
$$
(1/M^2)\mathbf{R}\mathbf{R}'=\mathbf{J}'\mathbf{J}+(\mathbf{B}'(\mathbf{C}')^{-1}\mathbf{C}^{-1}\mathbf{B})^{-1}=\mathbf{J}'\mathbf{J}+ (\ddot{\mathbf{S}}'\dot{\mathbf{Z}}\dot{\mathbf{Z}}'\ddot{\mathbf{S}})^{-1}.
$$
Therefore, we have
$$
(1/M^2)\Vert \mathbf{R}\Vert_F=\sqrt{trace(\mathbf{J}'\mathbf{J}+ (\ddot{\mathbf{S}}'\dot{\mathbf{Z}}\dot{\mathbf{Z}}'\ddot{\mathbf{S}})^{-1})}=\sqrt{trace(\mathbf{J}'\mathbf{J})+ trace((\ddot{\mathbf{S}}'\dot{\mathbf{Z}}\dot{\mathbf{Z}}'\ddot{\mathbf{S}})^{-1})}.
$$
When $\mathbf{W}=\mathbf{I}$, we have that $\mathbf{J}=\mathbf{0}$ and the Frobenius norm of $\mathbf{R}$ is given by \break$M^2\sqrt{trace((\ddot{\mathbf{S}}'\dot{\mathbf{Z}}\dot{\mathbf{Z}}'\ddot{\mathbf{S}})^{-1})}$.

\begin{comment}
\subsubsection{Bias correction for $\mathbf{G}_m\mathbf{X}_m$ observed}
\aristide{Currently, I don't know if we should keep this in the paper. It is not clear how we can correct the bias. The best way to do it is out Thm 1. I am fine if we remove it.}\\
Let us replace $\ddot{\mathbf{V}}$ with $\mathbf{V}$ in (\ref{eq:bias_corr}). We have
$$
\dot{\mathbf{Z}}^{(r)\prime}_i\left[(\mathbf{I}-\alpha\dddot{\mathbf{G}}^{(t)})_i\left((\mathbf{I}-\alpha{\mathbf{G}})^{-1}-(\mathbf{I}-\alpha\ddot{\mathbf{G}}^{(s)})^{-1}\right)\mathbf{V}\tilde{\boldsymbol{\theta}}\right]
$$
which is equal to
$$
\dot{\mathbf{Z}}^{(r)\prime}_i\left[(\mathbf{I}-\alpha\dddot{\mathbf{G}}^{(t)})_i\left((\mathbf{I}-\alpha{\mathbf{G}})^{-1}-(\mathbf{I}-\alpha\ddot{\mathbf{G}}^{(s)})^{-1}\right)\left((\mathbf{I}-\alpha\mathbf{G})\mathbf{y} + \boldsymbol{\varepsilon}\right)\right],
%\mathbb{E}\dot{\mathbf{Z}}(\mathbf{I}-\alpha\dddot{\mathbf{G}})(\mathbf{I}-\alpha\ddot{\mathbf{G}})^{-1}(\mathbf{I}-\alpha\ddot{\mathbf{G}})[(\mathbf{I}-\alpha{\mathbf{G}})^{-1}-(\mathbf{I}-\alpha\ddot{\mathbf{G}})^{-1}](\mathbf{I}-\alpha{\mathbf{G}})(\mathbf{I}-\alpha{\mathbf{G}})^{-1}\mathbf{V}\tilde{\boldsymbol{\theta}},
$$
which simplifies to
$$
\dot{\mathbf{Z}}^{(r)\prime}_i(\mathbf{I}-\alpha\dddot{\mathbf{G}}^{(t)})_i(\mathbf{I}-\alpha\ddot{\mathbf{G}}^{(s)})^{-1}[\alpha({\mathbf{G}}-\ddot{\mathbf{G}}^{(s)})]\mathbf{y} + \dot{\mathbf{Z}}^{(r)\prime}_i\left[(\mathbf{I}-\alpha\dddot{\mathbf{G}}^{(t)})_i\left((\mathbf{I}-\alpha{\mathbf{G}})^{-1}-(\mathbf{I}-\alpha\ddot{\mathbf{G}}^{(s)})^{-1}\right)\boldsymbol{\varepsilon}\right].
$$
\noindent Note that the expectation of the term $\dot{\mathbf{Z}}^{(r)\prime}_i\left[(\mathbf{I}-\alpha\dddot{\mathbf{G}}^{(t)})_i\left((\mathbf{I}-\alpha{\mathbf{G}})^{-1}-(\mathbf{I}-\alpha\ddot{\mathbf{G}}^{(s)})^{-1}\right)\boldsymbol{\varepsilon}\right]$  is null.

\end{comment}
\subsection{Corollaries to Theorem \ref{prop:non_observed}}

\begin{corollary}\label{prop:GXobs}
Assume that $\mathbf{G}_m\mathbf{X}_m$ is observed but that $\mathbf{G}_m\mathbf{y}_m$ is not observed. Let $\dot{\mathbf{Z}}_m^{(r)}=[\mathbf{1}_m,\mathbf{X}_m,{\mathbf{G}}_m\mathbf{X}_m,\dot{\mathbf{G}}^{(r)}_m{\mathbf{G}}\mathbf{X},(\dot{\mathbf{G}}_m^{(r)})^2{\mathbf{G}}_m\mathbf{X}_m,...]$,

\noindent $\ddot{\mathbf{Z}}_m^{(r,s)}=[\mathbf{1}_m,\mathbf{X}_m,\ddot{\mathbf{G}}^{(s)}_m\mathbf{X}_m,\dot{\mathbf{G}}^{(r)}_m\ddot{\mathbf{G}}_m^{(s)}\mathbf{X}_m,(\dot{\mathbf{G}}_m^{(r)})^2\ddot{\mathbf{G}}^{(s)}_m\mathbf{X}_m,...]$, $\mathbf{V}_m=[\mathbf{1}_m,\mathbf{X}_m,\mathbf{G}_m\mathbf{X}_m]$, and $\ddot{\mathbf{V}}_m^{(s)}=[\mathbf{1}_m,\mathbf{X}_m,\ddot{\mathbf{G}}^{(s)}_m\mathbf{X}_m]$. Then, the results from Theorem \ref{prop:non_observed} hold for the following (simulated) moment function:
\begin{eqnarray}\label{eq:Gyobs}
\bar{\mathbf{m}}_{M}(\boldsymbol{\theta})=\frac{1}{M}\sum_m\frac{1}{R}\sum_{r}\dot{\mathbf{Z}}_m^{(r)\prime}(\mathbf{I}_m-\alpha\dot{\mathbf{G}}^{(r)}_m)\mathbf{y}_m -
\frac{1}{RS}\sum_{r,s}(\dot{\mathbf{Z}}_m^{(r)\prime}\mathbf{V}_m-\ddot{\mathbf{Z}}_m^{(r,s)\prime}\ddot{\mathbf{V}}_m^{(s)})\tilde{\boldsymbol{\theta}}\nonumber\\ -\frac{1}{RS}\sum_{r,s}  \ddot{\mathbf{Z}}_m^{(r,s)\prime}(\mathbf{I}_m-\alpha\dot{\mathbf{G}}^{(r)}_m)(\mathbf{I}_m-\alpha\ddot{\mathbf{G}}_m^{(s)})^{-1}\ddot{\mathbf{V}}^{(s)}_m\tilde{\boldsymbol{\theta}}
\end{eqnarray}
under the same conditions.
\end{corollary}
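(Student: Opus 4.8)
The plan is to show that the proof of Theorem~\ref{prop:non_observed} in Appendix~\ref{sec:appendix_proofs} applies essentially verbatim to the moment function~\eqref{eq:Gyobs}, so that the only genuinely new computation is the analogue of Lemma~\ref{lemma:momentvalid} (validity of the moment at the truth). The reason the remainder of the argument transfers is that~\eqref{eq:Gyobs}, like~\eqref{eq:moment}, is (i) an average over $m$ of terms that are independent across $m$ (Assumption~\ref{as:manymarkets}) and linear in $\boldsymbol{\varepsilon}_m$ with coefficients bounded uniformly in $m$ --- the only matrix inverse appearing being $(\mathbf{I}_m-\alpha\ddot{\mathbf{G}}_m^{(s)})^{-1}$, which is bounded since $\ddot{\mathbf{G}}_m^{(s)}$ is a row-normalized simulated matrix and Assumption~\ref{as:alpha} holds (cf.\ Assumption~\ref{as:nonsingular}); (ii) continuously differentiable in $\boldsymbol{\theta}$; and (iii) a function of $\boldsymbol{\rho}$ only through indicator functions of the linking probabilities $\hat{P}(a_{ij,m}=1\mid\boldsymbol{\rho},\mathbf{X}_m,\kappa(\mathcal{A}_m))$, which are smooth in $\boldsymbol{\rho}$ with uniformly bounded derivatives (Assumption~\ref{as:derivatives_P}). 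Properties (i)--(iii) are exactly what the proofs of Lemmas~\ref{lemma:continuity}, \ref{lemma:uniformconvergence} and~\ref{sec:stochasticequi} rely on, so those lemmas hold for~\eqref{eq:Gyobs} after replacing the moment function, and the consistency argument via Theorem~2.1 in \citeOA{newey1994large} and the mean-value-expansion argument for asymptotic normality go through unchanged, under the maintained identification condition ($\lim_{M\to\infty}\mathbb{E}(\bar{\mathbf{m}}_M(\boldsymbol{\theta},\boldsymbol{\rho}_0))\ne\mathbf{0}$ for $\boldsymbol{\theta}\ne\boldsymbol{\theta}_0$) and Assumptions~\ref{as:regularity}--\ref{as:reg_variance}.

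For moment validity I would substitute $\mathbf{y}_m=(\mathbf{I}_m-\alpha_0\mathbf{G}_m)^{-1}(\mathbf{V}_m\tilde{\boldsymbol{\theta}}_0+\boldsymbol{\varepsilon}_m)$ into~\eqref{eq:Gyobs} evaluated at $(\boldsymbol{\theta}_0,\boldsymbol{\rho}_0)$ and split the result into a part linear in $\boldsymbol{\varepsilon}_m$ and a deterministic part. The $\boldsymbol{\varepsilon}_m$-part is $\frac{1}{R}\sum_r\dot{\mathbf{Z}}_m^{(r)\prime}(\mathbf{I}_m-\alpha_0\dot{\mathbf{G}}_m^{(r)})(\mathbf{I}_m-\alpha_0\mathbf{G}_m)^{-1}\boldsymbol{\varepsilon}_m$ and has zero conditional mean by Assumption~\ref{as:exonet}, the simulation draws being independent of $\boldsymbol{\varepsilon}_m$ given $(\mathbf{X}_m,\mathbf{A}_m,\mathcal{A}_m)$ --- exactly as in Lemma~\ref{lemma:momentvalid}. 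For the deterministic part, using the resolvent identity $(\mathbf{I}_m-\alpha_0\mathbf{H}_1)(\mathbf{I}_m-\alpha_0\mathbf{H}_2)^{-1}-\mathbf{I}_m=\alpha_0(\mathbf{H}_2-\mathbf{H}_1)(\mathbf{I}_m-\alpha_0\mathbf{H}_2)^{-1}$, the $\dot{\mathbf{Z}}_m^{(r)\prime}\mathbf{V}_m\tilde{\boldsymbol{\theta}}_0$ contribution from the first term of~\eqref{eq:Gyobs} cancels against the corresponding term, and one is left with
\[
\frac{\alpha_0}{RS}\sum_{r,s}\big[\,g_m(\mathbf{G}_m,\dot{\mathbf{G}}_m^{(r)})-g_m(\ddot{\mathbf{G}}_m^{(s)},\dot{\mathbf{G}}_m^{(r)})\,\big]\tilde{\boldsymbol{\theta}}_0 ,
\]
where, for conformable matrices $\mathbf{H}_1,\mathbf{H}_2$,
\[
g_m(\mathbf{H}_1,\mathbf{H}_2)=[\mathbf{1}_m,\mathbf{X}_m,\mathbf{H}_1\mathbf{X}_m,\mathbf{H}_2\mathbf{H}_1\mathbf{X}_m,\mathbf{H}_2^2\mathbf{H}_1\mathbf{X}_m,\dots]^{\prime}(\mathbf{H}_1-\mathbf{H}_2)(\mathbf{I}_m-\alpha_0\mathbf{H}_1)^{-1}[\mathbf{1}_m,\mathbf{X}_m,\mathbf{H}_1\mathbf{X}_m].
\]
Here one must check carefully that the instrument blocks $\dot{\mathbf{Z}}_m^{(r)}$ and $\ddot{\mathbf{Z}}_m^{(r,s)}$, the regressor blocks $\mathbf{V}_m$ and $\ddot{\mathbf{V}}_m^{(s)}$, and the inverse factors line up precisely, so that $\ddot{\mathbf{G}}_m^{(s)}$ in the second piece occupies exactly the slot that $\mathbf{G}_m$ occupies in the first; with the definitions in the statement this indeed holds.

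The key step --- and the one I expect to require the most care --- is then the observation that at $\boldsymbol{\rho}_0$ the true network $\mathbf{G}_m$ and the simulated draws $\dot{\mathbf{G}}_m^{(r)},\ddot{\mathbf{G}}_m^{(s)}$ are mutually independent with a common conditional law given $(\mathbf{X}_m,\kappa(\mathcal{A}_m))$ (Definition~\ref{def:estimator} evaluated at the truth). Hence the pairs $(\mathbf{G}_m,\dot{\mathbf{G}}_m^{(r)})$ and $(\ddot{\mathbf{G}}_m^{(s)},\dot{\mathbf{G}}_m^{(r)})$ have the same conditional joint distribution, and since $g_m$ is the same (bounded, measurable) function in both pieces, $\mathbb{E}[g_m(\mathbf{G}_m,\dot{\mathbf{G}}_m^{(r)})\mid\mathbf{X}_m,\kappa(\mathcal{A}_m)]=\mathbb{E}[g_m(\ddot{\mathbf{G}}_m^{(s)},\dot{\mathbf{G}}_m^{(r)})\mid\mathbf{X}_m,\kappa(\mathcal{A}_m)]$. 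Thus the deterministic part has zero conditional mean; combining with the $\boldsymbol{\varepsilon}_m$-part and the law of iterated expectations gives $\mathbb{E}[\mathbf{m}_m(\boldsymbol{\theta}_0,\boldsymbol{\rho}_0)]=\mathbf{0}$, the analogue of Lemma~\ref{lemma:momentvalid}. I expect the bookkeeping in this cancellation --- rather than any conceptual difficulty --- to be the main obstacle, since the argument must track which random objects each factor depends on in order to recognize the exchangeability.

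With moment validity established, I would conclude exactly as in Appendix~\ref{sec:appendix_proofs}: the analogues of Lemmas~\ref{lemma:continuity} and~\ref{lemma:uniformconvergence} give uniform convergence of the objective function to a limit uniquely minimized at $\boldsymbol{\theta}_0$ (using the validity just shown and the maintained identification condition), so Theorem~2.1 in \citeOA{newey1994large} yields consistency; and the mean-value expansion of the first-order condition, together with the stochastic equicontinuity condition \autoref{cond:sequi} --- which holds by the Type~$IV(p=2)$/bracketing argument of Lemma~\ref{sec:stochasticequi}, since~\eqref{eq:Gyobs} is again linear in $\boldsymbol{\varepsilon}_m$ with uniformly bounded coefficients and smooth in $\boldsymbol{\rho}$ off a null set --- delivers $\sqrt{M}(\hat{\boldsymbol{\theta}}-\boldsymbol{\theta}_0)=-(\mathbf{H}_0'\mathbf{W}_0\mathbf{H}_0)^{-1}\mathbf{H}_0'\mathbf{W}_0[\sqrt{M}\bar{\mathbf{m}}_M(\boldsymbol{\theta}_0,\boldsymbol{\rho}_0)+\boldsymbol{\Gamma}_0\sqrt{M}(\hat{\boldsymbol{\rho}}-\boldsymbol{\rho}_0)]+o_p(1)$, whence asymptotic normality follows from the Lyapunov CLT (Assumption~\ref{as:finitevariance}) applied conditionally on $\sqrt{M}(\hat{\boldsymbol{\rho}}-\boldsymbol{\rho}_0)$, which is itself asymptotically normal by Assumption~\ref{as:partial}.
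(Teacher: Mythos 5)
Your proposal is correct and follows exactly the route the paper intends: the paper states Corollary \ref{prop:GXobs} without a separate proof, leaving it as the same argument as Theorem \ref{prop:non_observed} with only the analogue of Lemma \ref{lemma:momentvalid} requiring new computation, and your resolvent decomposition into $g_m(\mathbf{G}_m,\dot{\mathbf{G}}_m^{(r)})-g_m(\ddot{\mathbf{G}}_m^{(s)},\dot{\mathbf{G}}_m^{(r)})$ together with the equality of the conditional joint laws of $(\mathbf{G}_m,\dot{\mathbf{G}}_m^{(r)})$ and $(\ddot{\mathbf{G}}_m^{(s)},\dot{\mathbf{G}}_m^{(r)})$ at $\boldsymbol{\rho}_0$ carries out that computation correctly (and is the right generalization of the paper's factorization argument, since here the instrument and the prefactor share the draw $\dot{\mathbf{G}}_m^{(r)}$). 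The verification that the remaining lemmas transfer because the moment is again linear in $\boldsymbol{\varepsilon}_m$ with uniformly bounded coefficients and smooth in $\boldsymbol{\rho}$ off a null set is also accurate.
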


\begin{corollary}\label{prop:Gyobs}
Assume that $\mathbf{G}_m\mathbf{y}_m$ is observed but that $\mathbf{G}_m\mathbf{X}_m$ is not observed. Let $\dot{\mathbf{Z}}_m^{(r)}=[\mathbf{1}_m,\mathbf{X}_m,\dot{\mathbf{G}}^{(r)}_m\mathbf{X}_m,(\dot{\mathbf{G}}_m^{(r)})^2\mathbf{X}_m,...]$, and $\ddot{\mathbf{V}}_m^{(s)}=[\mathbf{1}_m,\mathbf{X}_m,\ddot{\mathbf{G}}^{(s)}_m\mathbf{X}_m]$.  Then, the results from Theorem \ref{prop:non_observed} hold for the following (simulated) moment function:
\begin{eqnarray}\label{eq:GXobs}
\frac{1}{R}\sum_{r}\dot{\mathbf{Z}}_m^{(r)\prime}(\mathbf{I}_m-\alpha\mathbf{G}_m)\mathbf{y}_m -
\frac{1}{RS}\sum_{r,s} \dot{\mathbf{Z}}^{(r)\prime}_m\ddot{\mathbf{V}}_m^{(s)}\tilde{\boldsymbol{\theta}}
\end{eqnarray}
under the same conditions.
\end{corollary}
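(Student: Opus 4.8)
The plan is to mirror, almost verbatim, the proof of Theorem~\ref{prop:non_observed} in Appendix~\ref{sec:appendix_proofs}, exploiting the simplification that when $\mathbf{G}_m\mathbf{y}_m$ is observed the quantity $(\mathbf{I}_m-\alpha\mathbf{G}_m)\mathbf{y}_m=\mathbf{y}_m-\alpha(\mathbf{G}_m\mathbf{y}_m)$ is directly computable and involves no simulated network. Consequently the only unobserved object entering the moment function is the contextual regressor $\mathbf{G}_m\mathbf{X}_m$, which is replaced by the simulated proxy $\ddot{\mathbf{G}}_m^{(s)}\mathbf{X}_m$, and the moment function (\ref{eq:GXobs}) is \emph{linear} in $\boldsymbol{\theta}=[\alpha,c,\boldsymbol{\beta}',\boldsymbol{\gamma}']'$, requiring no bias-correction term $\boldsymbol{\delta}_m$; it is a simulated version of the linear GMM of \cite{bramoulle2009identification}, reducing to it when $\mathbf{G}_m$ is observed (so that $\dot{\mathbf{G}}_m^{(r)}=\ddot{\mathbf{G}}_m^{(s)}=\mathbf{G}_m$). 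Writing the group-level moment as $\mathbf{m}_m(\boldsymbol{\theta},\boldsymbol{\rho})=\frac{1}{RS}\sum_{r,s}\dot{\mathbf{Z}}_m^{(r)\prime}[(\mathbf{I}_m-\alpha\mathbf{G}_m)\mathbf{y}_m-\ddot{\mathbf{V}}_m^{(s)}\tilde{\boldsymbol{\theta}}]$ keeps the bookkeeping compact.

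First I would establish the analogue of Lemma~\ref{lemma:momentvalid}. Substituting $(\mathbf{I}_m-\alpha_0\mathbf{G}_m)\mathbf{y}_m=\mathbf{V}_m\tilde{\boldsymbol{\theta}}_0+\boldsymbol{\varepsilon}_m$ at $(\boldsymbol{\theta}_0,\boldsymbol{\rho}_0)$ gives $\mathbf{m}_m(\boldsymbol{\theta}_0,\boldsymbol{\rho}_0)=\frac{1}{RS}\sum_{r,s}\dot{\mathbf{Z}}_m^{(r)\prime}\boldsymbol{\varepsilon}_m+\frac{1}{RS}\sum_{r,s}\dot{\mathbf{Z}}_m^{(r)\prime}(\mathbf{V}_m-\ddot{\mathbf{V}}_m^{(s)})\tilde{\boldsymbol{\theta}}_0$. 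The first term has zero conditional expectation given $\mathbf{X}_m,\kappa(\mathcal{A}_m)$ by Assumption~\ref{as:exonet}, the independence of the simulation draws from $\boldsymbol{\varepsilon}_m$, and the law of iterated expectations, exactly as in the proof of Lemma~\ref{lemma:momentvalid}. For the second term, conditional on $\mathbf{X}_m,\kappa(\mathcal{A}_m)$ the draws entering $\dot{\mathbf{Z}}_m^{(r)}$ and $\ddot{\mathbf{V}}_m^{(s)}$ are independent of each other and of $\mathbf{G}_m$, so its conditional expectation factors as $\hat{\mathbb{E}}_m[\dot{\mathbf{Z}}_m^{(r)\prime}]\,(\mathbb{E}^{(0)}_m[\mathbf{V}_m]-\hat{\mathbb{E}}_m[\ddot{\mathbf{V}}_m^{(s)}])\tilde{\boldsymbol{\theta}}_0$, which vanishes at $\boldsymbol{\rho}_0$ because the true and simulated network distributions coincide there (Definition~\ref{def:estimator}). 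Hence $\mathbb{E}[\mathbf{m}_m(\boldsymbol{\theta}_0,\boldsymbol{\rho}_0)]=\mathbf{0}$.

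The remaining steps are copies of those in Appendix~\ref{sec:appendix_proofs}. Differentiability of $\mathbb{E}[\mathbf{m}_m(\boldsymbol{\theta},\boldsymbol{\rho})]$ is immediate in $\boldsymbol{\theta}$ by linearity and follows in $\boldsymbol{\rho}$ from the $\hat{\mathbb{E}}_m[\mathbf{B}_m\dot{\mathbf{G}}_m(\boldsymbol{\rho})]$ computation in the proof of Lemma~\ref{lemma:continuity} under Assumption~\ref{as:derivatives_P}. Uniform convergence (the analogue of Lemma~\ref{lemma:uniformconvergence}) follows from the law of iterated variances and Lemma~2.9 of \cite{newey1994large}, since the empirical moment is an average of conditionally independent terms, linear in $\boldsymbol{\varepsilon}_m$ and bounded by Assumptions~\ref{as:manymarkets}, \ref{as:nonsingular}, and~\ref{as:finitevariance}; replacing $\hat{\boldsymbol{\rho}}$ by $\boldsymbol{\rho}_0$ uses Definition~\ref{def:estimator} and Assumption~\ref{as:derivatives_P} exactly as before. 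Consistency then follows from Theorem~2.1 of \cite{newey1994large} together with the maintained identification hypothesis of Theorem~\ref{prop:non_observed}, which here --- the moment being linear --- is equivalent to the full-rank condition that $\plim\frac{1}{M}\sum_m\mathbb{E}\big[\dot{\mathbf{Z}}_m^{(r)\prime}(\mathbf{G}_m\mathbf{y}_m,\mathbf{1}_m,\mathbf{X}_m,\ddot{\mathbf{G}}_m^{(s)}\mathbf{X}_m)\big]$ has full column rank. Asymptotic normality reuses the mean-value expansion of Section~\ref{sec:asymptnorm} together with the stochastic equicontinuity Condition~\ref{cond:sequi}; the latter holds by Lemma~\ref{sec:stochasticequi}, whose Type $IV(p=2)$ bracketing argument applies unchanged because the present moment function is again linear in $\boldsymbol{\varepsilon}_m$ with uniformly bounded coefficients. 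One then applies the Lyapunov CLT conditional on $\sqrt{M}(\hat{\boldsymbol{\rho}}-\boldsymbol{\rho}_0)$ and combines with Assumption~\ref{as:partial}.

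The main obstacle is essentially bookkeeping rather than a new analytical difficulty: observing $\mathbf{G}_m\mathbf{y}_m$ removes precisely the non-vanishing approximation error $(\mathbf{I}_m-\alpha\ddot{\mathbf{G}}_m)(\mathbf{I}_m-\alpha_0\mathbf{G}_m)^{-1}$ that forced the bias correction in Theorem~\ref{prop:non_observed}. The one step requiring care is the validity argument, where one must track which simulation draws are mutually independent (and independent of $\mathbf{G}_m$) conditional on $\mathbf{X}_m,\kappa(\mathcal{A}_m)$ so that the factorization killing the $(\mathbf{V}_m-\ddot{\mathbf{V}}_m^{(s)})$ term is legitimate; it is also worth recording explicitly that $\dot{\mathbf{G}}_m^{(r)}\mathbf{X}_m$ proxies the unobserved regressor $\mathbf{G}_m\mathbf{X}_m$ while the higher powers $(\dot{\mathbf{G}}_m^{(r)})^k\mathbf{X}_m$, $k\ge 2$, serve as instruments for the observed regressor $\mathbf{G}_m\mathbf{y}_m$, so that the rank condition above is the natural one.
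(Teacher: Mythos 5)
Your proposal is correct and follows essentially the same route the paper intends: the paper states the corollary without a separate proof precisely because, once $(\mathbf{I}_m-\alpha\mathbf{G}_m)\mathbf{y}_m$ is directly computable, the bias-correction term $\boldsymbol{\delta}_m$ (and the third draw $\dddot{\mathbf{G}}_m$) becomes unnecessary, the moment is linear in $\boldsymbol{\theta}$, and Lemmas \ref{lemma:momentvalid}--\ref{sec:stochasticequi} apply verbatim to a moment that is still linear in $\boldsymbol{\varepsilon}_m$ with bounded coefficients; your validity argument via the factorization of $\hat{\mathbb{E}}_m[\dot{\mathbf{Z}}_m^{(r)\prime}]\bigl(\mathbb{E}^{(0)}_m[\mathbf{V}_m]-\hat{\mathbb{E}}_m[\ddot{\mathbf{V}}_m^{(s)}]\bigr)\tilde{\boldsymbol{\theta}}_0$ is exactly the mechanism in Lemma \ref{lemma:momentvalid}. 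The only quibble is the closing remark: $\dot{\mathbf{G}}_m^{(r)}\mathbf{X}_m$ sits in the instrument matrix, so the proxy for the unobserved regressor is $\ddot{\mathbf{G}}_m^{(s)}\mathbf{X}_m$ while $\dot{\mathbf{G}}_m^{(r)}\mathbf{X}_m$ is its (independently drawn) instrument --- a loose phrasing that does not affect the argument.
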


%\newpage
\section{Full Simulation Results for the SGMM Estimator}\label{append:fullsimu}

Tables \ref{tan:sim:noFE}--\ref{tab:simmclas:FE} report simulation results for the SGMM estimator. We also report simulations for cases where $\mathbf{G}_m\mathbf{y}_m$ and/or $\mathbf{G}_m\mathbf{X}_m$ is observed, including cases that account for unobserved group heterogeneity. The estimator still performs well in these settings. Precision improves significantly when $\mathbf{G}_m\mathbf{y}_m$ is observed, more so than when $\mathbf{G}_m\mathbf{X}_m$ is observed. This occurs because $\mathbf{G}_m\mathbf{y}_m$ is a nonlinear function of the true network $\mathbf{G}_m$ and peer effect coefficient $\alpha_0$. Therefore, its approximation is more challenging than that of $\mathbf{G}_m\mathbf{X}_m$, which is exogenous.  For all simulations, we set $R = 100$ and $S = T = 1$.

\renewcommand{\arraystretch}{.9}
\begin{table}[!htbp]
\footnotesize
\centering
\caption{Full simulation results under missing links without group fixed effects}

\label{tan:sim:noFE}
\begin{tabular}{p{2.1cm}d{3}d{3}d{3}d{3}d{3}d{3}d{3}d{3}}
\toprule
Proportion of missing links & \multicolumn{2}{c}{0\%} & \multicolumn{2}{c}{25\%} & \multicolumn{2}{c}{50\%} & \multicolumn{2}{c}{75\%} \\ \\
Statistic & \multicolumn{1}{c}{Mean} & \multicolumn{1}{c}{Std} & \multicolumn{1}{c}{Mean} & \multicolumn{1}{c}{Std} & \multicolumn{1}{c}{Mean} & \multicolumn{1}{c}{Std} & \multicolumn{1}{c}{Mean} & \multicolumn{1}{c}{Std} \\\midrule
                            & \multicolumn{8}{c}{Classical IV: $\mathbf{Gy}$ observed and $\mathbf{GX}$ unobserved; Instruments: $\mathbf{GX}^2$} \\[1ex]
$\alpha = 0.538$            & 0.537            &0.008     & 0.531            &0.014      & 0.525            &0.023     & 0.518           &0.055     \\
$c = 3.806$                 & 3.807            &0.132     & 4.378            &0.168      & 4.885            &0.267     & 5.336           &0.638     \\
$\beta_1 = -0.072$          & -0.072           &0.009     & -0.089           &0.011      & -0.106           &0.015     & -0.124          &0.030     \\
$\beta_2 = 0.133$           & 0.133            &0.027     & 0.136            &0.030      & 0.141            &0.030     & 0.143           &0.033     \\
$\gamma_1 = 0.086$          & 0.086            &0.005     & 0.063            &0.005      & 0.046            &0.006     & 0.033           &0.010     \\
$\gamma_2 = -0.003$         & -0.003           &0.037     & -0.009           &0.036      & -0.013           &0.040     & -0.013          &0.052     \\\midrule
                            & \multicolumn{8}{c}{Classical IV: $\mathbf{Gy}$ and $\mathbf{GX}$ unobserved; Instruments: $\mathbf{GX}^2$}          \\[1ex]
$\alpha = 0.538$            & 0.537            &0.008     & 0.442            &0.014      & 0.362            &0.021     & 0.293           &0.043     \\
$c = 3.806$                 & 3.807            &0.132     & 6.598            &0.325      & 8.931            &0.412     & 10.789          &0.474     \\
$\beta_1 = -0.072$          & -0.072           &0.009     & -0.176           &0.021      & -0.273           &0.028     & -0.358          &0.033     \\
$\beta_2 = 0.133$           & 0.133            &0.027     & 0.151            &0.058      & 0.168            &0.072     & 0.186           &0.084     \\
$\gamma_1 = 0.086$          & 0.086            &0.005     & 0.030            &0.008      & -0.006           &0.011     & -0.027          &0.022     \\
$\gamma_2 = -0.003$         & -0.003           &0.037     & -0.028           &0.045      & -0.046           &0.054     & -0.050          &0.078     \\\midrule
                            & \multicolumn{8}{c}{SGMM: $\mathbf{Gy}$ and $\mathbf{GX}$ observed; $T = 100$}                                                                         \\[1ex]
$\alpha = 0.538$            & 0.537            &0.008     & 0.537            &0.012      & 0.538            &0.015     & 0.539           &0.021     \\
$c = 3.806$                 & 3.807            &0.132     & 3.811            &0.133      & 3.801            &0.136     & 3.805           &0.150     \\
$\beta_1 = -0.072$          & -0.072           &0.009     & -0.073           &0.009      & -0.072           &0.009     & -0.072          &0.010     \\
$\beta_2 = 0.133$           & 0.133            &0.027     & 0.132            &0.028      & 0.134            &0.026     & 0.132           &0.026     \\
$\gamma_1 = 0.086$          & 0.086            &0.005     & 0.086            &0.006      & 0.086            &0.007     & 0.086           &0.010     \\
$\gamma_2 = -0.003$         & -0.003           &0.037     & -0.003           &0.036      & -0.003           &0.037     & -0.003          &0.038     \\\midrule
                            & \multicolumn{8}{c}{SGMM: $\mathbf{Gy}$ observed and $\mathbf{GX}$ unobserved; $S = T = 100$}                                                          \\[1ex]
$\alpha = 0.538$            & 0.537            &0.008     & 0.538            &0.012      & 0.538            &0.019     & 0.540           &0.033     \\
$c = 3.806$                 & 3.807            &0.132     & 3.812            &0.150      & 3.806            &0.178     & 3.802           &0.225     \\
$\beta_1 = -0.072$          & -0.072           &0.009     & -0.073           &0.010      & -0.072           &0.011     & -0.072          &0.014     \\
$\beta_2 = 0.133$           & 0.133            &0.027     & 0.132            &0.030      & 0.134            &0.030     & 0.132           &0.032     \\
$\gamma_1 = 0.086$          & 0.086            &0.005     & 0.086            &0.006      & 0.086            &0.010     & 0.085           &0.016     \\
$\gamma_2 = -0.003$         & -0.003           &0.037     & -0.003           &0.042      & -0.003           &0.056     & -0.004          &0.081     \\\midrule
                            & \multicolumn{8}{c}{SGMM: $\mathbf{Gy}$ unobserved and $\mathbf{GX}$ observed; $S = T = 100$}                                                          \\[1ex]
$\alpha = 0.538$            & 0.537            &0.008     & 0.538            &0.015      & 0.538            &0.027     & 0.540           &0.064     \\
$c = 3.806$                 & 3.807            &0.132     & 3.817            &0.263      & 3.819            &0.350     & 3.803           &0.492     \\
$\beta_1 = -0.072$          & -0.072           &0.009     & -0.073           &0.016      & -0.073           &0.021     & -0.072          &0.030     \\
$\beta_2 = 0.133$           & 0.133            &0.027     & 0.132            &0.047      & 0.133            &0.057     & 0.135           &0.066     \\
$\gamma_1 = 0.086$          & 0.086            &0.005     & 0.086            &0.009      & 0.086            &0.014     & 0.085           &0.031     \\
$\gamma_2 = -0.003$         & -0.003           &0.037     & -0.002           &0.050      & -0.005           &0.076     & -0.004          &0.136     \\\midrule
                            & \multicolumn{8}{c}{SGMM: $\mathbf{Gy}$ and $\mathbf{GX}$ unobserved; $R = S = T = 100$}                                                               \\[1ex]
$\alpha = 0.538$            & 0.537            &0.008     & 0.538            &0.016      & 0.539            &0.029     & 0.542           &0.073     \\
$c = 3.806$                 & 3.807            &0.132     & 3.816            &0.314      & 3.821            &0.423     & 3.794           &0.580     \\
$\beta_1 = -0.072$          & -0.072           &0.009     & -0.073           &0.019      & -0.073           &0.026     & -0.071          &0.036     \\
$\beta_2 = 0.133$           & 0.133            &0.027     & 0.132            &0.055      & 0.133            &0.069     & 0.136           &0.079     \\
$\gamma_1 = 0.086$          & 0.086            &0.005     & 0.086            &0.009      & 0.086            &0.016     & 0.084           &0.035     \\
$\gamma_2 = -0.003$         & -0.003           &0.037     & -0.002           &0.052      & -0.005           &0.083     & -0.005          &0.153     \\ \bottomrule
\multicolumn{9}{l}{%
  \begin{minipage}{12cm}%
  \vspace{0.05cm}
    \footnotesize{Note: We perform 1,000 simulations. 'Std' denotes the standard deviation.}
  \end{minipage}%
}
\end{tabular}
\end{table}

\begin{table}[!htbp]
\footnotesize
\centering
\caption{Full simulation results under missing links with group fixed effects}
\label{tan:sim:FE}
\begin{tabular}{p{2.1cm}d{3}d{3}d{3}d{3}d{3}d{3}d{3}d{3}}
\toprule
Proportion of missing links & \multicolumn{2}{c}{0\%} & \multicolumn{2}{c}{25\%} & \multicolumn{2}{c}{50\%} & \multicolumn{2}{c}{75\%} \\ \\
Statistic & \multicolumn{1}{c}{Mean} & \multicolumn{1}{c}{Std} & \multicolumn{1}{c}{Mean} & \multicolumn{1}{c}{Std} & \multicolumn{1}{c}{Mean} & \multicolumn{1}{c}{Std} & \multicolumn{1}{c}{Mean} & \multicolumn{1}{c}{Std} \\\midrule
                            & \multicolumn{8}{c}{Classical IV: $\mathbf{Gy}$ observed and $\mathbf{GX}$ unobserved; Instruments: $\mathbf{GX}^2$} \\[1ex]
$\alpha = 0.538$            & 0.538            &0.009     & 0.531            &0.015      & 0.523            &0.027     & 0.522           &0.067     \\
$\beta_1 = -0.072$          & -0.072           &0.009     & -0.090           &0.011      & -0.107           &0.016     & -0.123          &0.035     \\
$\beta_2 = 0.133$           & 0.133            &0.028     & 0.136            &0.030      & 0.139            &0.032     & 0.143           &0.033     \\
$\gamma_1 = 0.086$          & 0.086            &0.005     & 0.063            &0.005      & 0.046            &0.006     & 0.032           &0.011     \\
$\gamma_2 = -0.003$         & -0.001           &0.038     & -0.009           &0.039      & -0.012           &0.042     & -0.012          &0.052     \\\midrule
                            & \multicolumn{8}{c}{Classical IV: $\mathbf{Gy}$ and $\mathbf{GX}$ unobserved; Instruments: $\mathbf{GX}^2$}          \\[1ex]
$\alpha = 0.538$            & 0.538            &0.009     & 0.432            &0.016      & 0.343            &0.025     & 0.272           &0.049     \\
$\beta_1 = -0.072$          & -0.072           &0.009     & -0.178           &0.022      & -0.273           &0.029     & -0.353          &0.033     \\
$\beta_2 = 0.133$           & 0.133            &0.028     & 0.149            &0.058      & 0.165            &0.074     & 0.182           &0.081     \\
$\gamma_1 = 0.086$          & 0.086            &0.005     & 0.033            &0.009      & 0.000            &0.013     & -0.019          &0.024     \\
$\gamma_2 = -0.003$         & -0.001           &0.038     & -0.028           &0.051      & -0.044           &0.061     & -0.045          &0.081     \\\midrule
                            & \multicolumn{8}{c}{SGMM: $\mathbf{Gy}$ and $\mathbf{GX}$ observed; $T = 100$}                                                                         \\[1ex]
$\alpha = 0.538$            & 0.538            &0.009     & 0.537            &0.013      & 0.538            &0.017     & 0.536           &0.027     \\
$\beta_1 = -0.072$          & -0.072           &0.009     & -0.073           &0.009      & -0.072           &0.010     & -0.073          &0.011     \\
$\beta_2 = 0.133$           & 0.133            &0.028     & 0.133            &0.027      & 0.133            &0.027     & 0.133           &0.027     \\
$\gamma_1 = 0.086$          & 0.086            &0.005     & 0.086            &0.006      & 0.086            &0.008     & 0.087           &0.012     \\
$\gamma_2 = -0.003$         & -0.001           &0.038     & -0.002           &0.039      & -0.004           &0.041     & -0.002          &0.040     \\\midrule
                            & \multicolumn{8}{c}{SGMM: $\mathbf{Gy}$ observed and $\mathbf{GX}$ unobserved; $S = T = 100$}                                                          \\[1ex]
$\alpha = 0.538$            & 0.538            &0.009     & 0.537            &0.013      & 0.538            &0.022     & 0.539           &0.042     \\
$\beta_1 = -0.072$          & -0.072           &0.009     & -0.073           &0.010      & -0.072           &0.012     & -0.072          &0.016     \\
$\beta_2 = 0.133$           & 0.133            &0.028     & 0.133            &0.030      & 0.133            &0.032     & 0.133           &0.032     \\
$\gamma_1 = 0.086$          & 0.086            &0.005     & 0.086            &0.007      & 0.086            &0.011     & 0.085           &0.019     \\
$\gamma_2 = -0.003$         & -0.001           &0.038     & -0.003           &0.046      & -0.004           &0.059     & 0.002           &0.085     \\\midrule
                            & \multicolumn{8}{c}{SGMM: $\mathbf{Gy}$ unobserved and $\mathbf{GX}$ observed; $S = T = 100$}                                                          \\[1ex]
$\alpha = 0.538$            & 0.538            &0.009     & 0.537            &0.018      & 0.538            &0.031     & 0.542           &0.076     \\
$\beta_1 = -0.072$          & -0.072           &0.009     & -0.074           &0.017      & -0.075           &0.022     & -0.073          &0.031     \\
$\beta_2 = 0.133$           & 0.133            &0.028     & 0.133            &0.047      & 0.133            &0.059     & 0.136           &0.066     \\
$\gamma_1 = 0.086$          & 0.086            &0.005     & 0.085            &0.010      & 0.084            &0.016     & 0.081           &0.036     \\
$\gamma_2 = -0.003$         & -0.001           &0.038     & -0.003           &0.057      & -0.007           &0.082     & 0.001           &0.140     \\\midrule
                            & \multicolumn{8}{c}{SGMM: $\mathbf{Gy}$ and $\mathbf{GX}$ unobserved; $R = S = T = 100$}                                                               \\[1ex]
$\alpha = 0.538$            & 0.538            &0.009     & 0.537            &0.018      & 0.538            &0.033     & 0.539           &0.084     \\
$\beta_1 = -0.072$          & -0.072           &0.009     & -0.074           &0.020      & -0.075           &0.027     & -0.074          &0.037     \\
$\beta_2 = 0.133$           & 0.133            &0.028     & 0.133            &0.055      & 0.133            &0.071     & 0.137           &0.078     \\
$\gamma_1 = 0.086$          & 0.086            &0.005     & 0.085            &0.011      & 0.084            &0.017     & 0.082           &0.039     \\
$\gamma_2 = -0.003$         & -0.001           &0.038     & -0.003           &0.060      & -0.007           &0.087     & 0.003           &0.153     \\\bottomrule
\multicolumn{9}{l}{%
  \begin{minipage}{12cm}%
  \vspace{0.1cm}
    \footnotesize{Note: We perform 1,000 simulations. 'Std' denotes the standard deviation.}
  \end{minipage}%
}
\end{tabular}
\end{table}

\begin{table}[!htbp]
\footnotesize
\centering
\caption{Full simulation results under misclassified links without group fixed effects}
\label{tab:simmclas:noFE}
\begin{tabular}{p{2.3cm}d{3}d{3}d{3}d{3}d{3}d{3}d{3}d{3}}
\toprule
False pos. rate         & \multicolumn{2}{c}{0\%}  & \multicolumn{2}{c}{0\%}  & \multicolumn{2}{c}{15\%} & \multicolumn{2}{c}{15\%}\\
False neg. rate         & \multicolumn{2}{c}{15\%} & \multicolumn{2}{c}{30\%} & \multicolumn{2}{c}{0\%}  & \multicolumn{2}{c}{15\%}\\
Statistic & \multicolumn{1}{c}{Mean} & \multicolumn{1}{c}{Std} & \multicolumn{1}{c}{Mean} & \multicolumn{1}{c}{Std} & \multicolumn{1}{c}{Mean} & \multicolumn{1}{c}{Std} & \multicolumn{1}{c}{Mean} & \multicolumn{1}{c}{Std}\\\midrule
&\multicolumn{8}{c}{Classical IV: $\mathbf{Gy}$ observed and $\mathbf{GX}$ unobserved; Instruments: $\mathbf{GX}^2$}                                           \\[1ex]
$\alpha = 0.538$    & 0.534           & 0.011       & 0.529           & 0.015       & 0.611           & 0.112       & 0.612            & 0.143        \\
$c = 3.806$         & 4.154           & 0.151       & 4.489           & 0.190       & 3.904           & 1.534       & 3.981            & 2.186        \\
$\beta_1 = -0.072$  & -0.082          & 0.010       & -0.093          & 0.012       & -0.087          & 0.068       & -0.083           & 0.097        \\
$\beta_2 = 0.133$   & 0.134           & 0.028       & 0.138           & 0.029       & 0.136           & 0.033       & 0.136            & 0.036        \\
$\gamma_1 = 0.086$  & 0.072           & 0.005       & 0.060           & 0.005       & 0.044           & 0.020       & 0.032            & 0.023        \\
$\gamma_2 = -0.003$ & -0.008          & 0.038       & -0.011          & 0.038       & -0.010          & 0.075       & -0.009           & 0.080        \\\midrule
&\multicolumn{8}{c}{Classical IV: $\mathbf{Gy}$ and $\mathbf{GX}$ unobserved; Instruments: $\mathbf{GX}^2$}                                                    \\[1ex]
$\alpha = 0.538$    & 0.479           & 0.012       & 0.424           & 0.015       & 0.366           & 0.168       & 0.275            & 0.171        \\
$c = 3.806$         & 5.538           & 0.276       & 7.100           & 0.359       & 8.870           & 1.563       & 9.876            & 1.502        \\
$\beta_1 = -0.072$  & -0.135          & 0.018       & -0.196          & 0.024       & -0.421          & 0.034       & -0.425           & 0.036        \\
$\beta_2 = 0.133$   & 0.143           & 0.048       & 0.158           & 0.060       & 0.191           & 0.087       & 0.191            & 0.088        \\
$\gamma_1 = 0.086$  & 0.049           & 0.007       & 0.022           & 0.009       & 0.064           & 0.049       & 0.036            & 0.050        \\
$\gamma_2 = -0.003$ & -0.022          & 0.043       & -0.035          & 0.048       & 0.022           & 0.194       & 0.026            & 0.197        \\\midrule
&\multicolumn{8}{c}{SGMM: $\mathbf{Gy}$ and $\mathbf{GX}$ observed; $T = 100$}                                                                                 \\[1ex]
$\alpha = 0.538$    & 0.537           & 0.011       & 0.538           & 0.013       & 0.538           & 0.020       & 0.537            & 0.023        \\
$c = 3.806$         & 3.809           & 0.133       & 3.808           & 0.139       & 3.802           & 0.151       & 3.814            & 0.156        \\
$\beta_1 = -0.072$  & -0.073          & 0.009       & -0.072          & 0.010       & -0.072          & 0.010       & -0.073           & 0.011        \\
$\beta_2 = 0.133$   & 0.132           & 0.027       & 0.133           & 0.026       & 0.132           & 0.027       & 0.133            & 0.025        \\
$\gamma_1 = 0.086$  & 0.086           & 0.006       & 0.086           & 0.006       & 0.086           & 0.009       & 0.086            & 0.010        \\
$\gamma_2 = -0.003$ & -0.003          & 0.037       & -0.004          & 0.037       & -0.003          & 0.038       & -0.004           & 0.038        \\\midrule
&\multicolumn{8}{c}{SGMM: $\mathbf{Gy}$ observed and $\mathbf{GX}$ unobserved; $S = T = 100$}                                                                  \\[1ex]
$\alpha = 0.538$    & 0.538           & 0.011       & 0.537           & 0.014       & 0.539           & 0.037       & 0.540            & 0.047        \\
$c = 3.806$         & 3.806           & 0.145       & 3.806           & 0.165       & 3.805           & 0.243       & 3.805            & 0.273        \\
$\beta_1 = -0.072$  & -0.072          & 0.010       & -0.072          & 0.011       & -0.072          & 0.015       & -0.072           & 0.017        \\
$\beta_2 = 0.133$   & 0.132           & 0.029       & 0.134           & 0.029       & 0.133           & 0.034       & 0.133            & 0.034        \\
$\gamma_1 = 0.086$  & 0.086           & 0.006       & 0.087           & 0.007       & 0.085           & 0.018       & 0.086            & 0.023        \\
$\gamma_2 = -0.003$ & -0.004          & 0.042       & -0.004          & 0.046       & -0.006          & 0.086       & -0.007           & 0.105        \\\midrule
&\multicolumn{8}{c}{SGMM: $\mathbf{Gy}$ unobserved and $\mathbf{GX}$ observed; $S = T = 100$}                                                                  \\[1ex]
$\alpha = 0.538$    & 0.538           & 0.012       & 0.537           & 0.018       & 0.539           & 0.077       & 0.536            & 0.103        \\
$c = 3.806$         & 3.803           & 0.232       & 3.799           & 0.292       & 3.812           & 0.501       & 3.812            & 0.622        \\
$\beta_1 = -0.072$  & -0.072          & 0.015       & -0.072          & 0.018       & -0.072          & 0.032       & -0.073           & 0.038        \\
$\beta_2 = 0.133$   & 0.132           & 0.041       & 0.135           & 0.048       & 0.135           & 0.067       & 0.134            & 0.070        \\
$\gamma_1 = 0.086$  & 0.086           & 0.007       & 0.087           & 0.010       & 0.085           & 0.039       & 0.088            & 0.050        \\
$\gamma_2 = -0.003$ & -0.005          & 0.047       & -0.004          & 0.055       & -0.005          & 0.170       & -0.010           & 0.209        \\\midrule
&\multicolumn{8}{c}{SGMM: $\mathbf{Gy}$ and $\mathbf{GX}$ unobserved; $R = S = T = 100$}                                                                       \\[1ex]
$\alpha = 0.538$    & 0.538           & 0.013       & 0.537           & 0.018       & 0.543           & 0.090       & 0.539            & 0.124        \\
$c = 3.806$         & 3.800           & 0.273       & 3.794           & 0.350       & 3.802           & 0.596       & 3.792            & 0.753        \\
$\beta_1 = -0.072$  & -0.072          & 0.017       & -0.072          & 0.022       & -0.071          & 0.038       & -0.072           & 0.047        \\
$\beta_2 = 0.133$   & 0.132           & 0.048       & 0.135           & 0.058       & 0.135           & 0.081       & 0.134            & 0.085        \\
$\gamma_1 = 0.086$  & 0.086           & 0.008       & 0.087           & 0.010       & 0.083           & 0.045       & 0.087            & 0.061        \\
$\gamma_2 = -0.003$ & -0.005          & 0.047       & -0.004          & 0.057       & -0.008          & 0.205       & -0.014           & 0.251        \\\bottomrule
\multicolumn{9}{l}{%
  \begin{minipage}{13cm}%
  \vspace{0.1cm}
    \small{Note: We perform 1,000 simulations. 'Std' denotes the standard deviation. `False pos. rate` refers to the proportion of false positives among actual negatives, which include true negatives and false positives. `False neg. rate` refers to the proportion of false negatives among actual positives, which include true positives and false negatives. A positive indicates a friendship, while a negative indicates a non-friendship.}
  \end{minipage}%
}
\end{tabular}
\end{table}

\begin{table}[!htbp]
\centering
\footnotesize
\caption{Full simulation results under misclassified links with group fixed effects}
\label{tab:simmclas:FE}
\begin{tabular}{p{2.3cm}d{3}d{3}d{3}d{3}d{3}d{3}d{3}d{3}}
\toprule
False pos. rate         & \multicolumn{2}{c}{0\%}  & \multicolumn{2}{c}{0\%}  & \multicolumn{2}{c}{15\%} & \multicolumn{2}{c}{15\%}\\
False neg. rate         & \multicolumn{2}{c}{15\%} & \multicolumn{2}{c}{30\%} & \multicolumn{2}{c}{0\%}  & \multicolumn{2}{c}{15\%}\\
Statistic & \multicolumn{1}{c}{Mean} & \multicolumn{1}{c}{Std} & \multicolumn{1}{c}{Mean} & \multicolumn{1}{c}{Std} & \multicolumn{1}{c}{Mean} & \multicolumn{1}{c}{Std} & \multicolumn{1}{c}{Mean} & \multicolumn{1}{c}{Std}\\\midrule
&\multicolumn{8}{c}{Classical IV: $\mathbf{Gy}$ observed and $\mathbf{GX}$ unobserved; Instruments: $\mathbf{GX}^2$}                                           \\[1ex]
$\alpha = 0.538$    & 0.532           & 0.020       & 0.525           & 0.026       & 0.655           & 0.172       & 0.645            & 0.217        \\
$\beta_1 = -0.072$  & -0.082          & 0.010       & -0.092          & 0.011       & -0.093          & 0.054       & -0.097           & 0.131        \\
$\beta_2 = 0.133$   & 0.135           & 0.028       & 0.138           & 0.029       & 0.138           & 0.033       & 0.135            & 0.036        \\
$\gamma_1 = 0.086$  & 0.072           & 0.004       & 0.059           & 0.005       & 0.042           & 0.018       & 0.033            & 0.023        \\
$\gamma_2 = -0.003$ & -0.007          & 0.039       & -0.010          & 0.039       & -0.009          & 0.083       & -0.006           & 0.095        \\\midrule
&\multicolumn{8}{c}{Classical IV: $\mathbf{Gy}$ and $\mathbf{GX}$ unobserved; Instruments: $\mathbf{GX}^2$}                                                    \\[1ex]
$\alpha = 0.538$    & 0.472           & 0.019       & 0.412           & 0.024       & 0.380           & 0.156       & 0.289            & 0.165        \\
$\beta_1 = -0.072$  & -0.111          & 0.013       & -0.147          & 0.016       & -0.280          & 0.023       & -0.282           & 0.025        \\
$\beta_2 = 0.133$   & 0.140           & 0.037       & 0.148           & 0.043       & 0.173           & 0.058       & 0.170            & 0.061        \\
$\gamma_1 = 0.086$  & 0.061           & 0.006       & 0.041           & 0.006       & 0.064           & 0.030       & 0.045            & 0.033        \\
$\gamma_2 = -0.003$ & -0.015          & 0.042       & -0.021          & 0.045       & 0.011           & 0.145       & 0.009            & 0.146        \\\midrule
&\multicolumn{8}{c}{SGMM: $\mathbf{Gy}$ and $\mathbf{GX}$ observed; $T = 100$}                                                                                 \\[1ex]
$\alpha = 0.538$    & 0.538           & 0.019       & 0.537           & 0.023       & 0.535           & 0.042       & 0.539            & 0.051        \\
$\beta_1 = -0.072$  & -0.072          & 0.010       & -0.072          & 0.010       & -0.072          & 0.011       & -0.072           & 0.011        \\
$\beta_2 = 0.133$   & 0.133           & 0.026       & 0.133           & 0.027       & 0.133           & 0.026       & 0.131            & 0.027        \\
$\gamma_1 = 0.086$  & 0.086           & 0.005       & 0.086           & 0.006       & 0.087           & 0.010       & 0.086            & 0.011        \\
$\gamma_2 = -0.003$ & -0.004          & 0.039       & -0.003          & 0.039       & -0.001          & 0.039       & -0.005           & 0.040        \\\midrule
&\multicolumn{8}{c}{SGMM: $\mathbf{Gy}$ observed and $\mathbf{GX}$ unobserved; $S = T = 100$}                                                                  \\[1ex]
$\alpha = 0.538$    & 0.538           & 0.019       & 0.537           & 0.024       & 0.533           & 0.075       & 0.532            & 0.109        \\
$\beta_1 = -0.072$  & -0.072          & 0.010       & -0.072          & 0.011       & -0.073          & 0.016       & -0.073           & 0.019        \\
$\beta_2 = 0.133$   & 0.133           & 0.028       & 0.134           & 0.029       & 0.134           & 0.033       & 0.131            & 0.035        \\
$\gamma_1 = 0.086$  & 0.086           & 0.005       & 0.086           & 0.006       & 0.087           & 0.018       & 0.088            & 0.026        \\
$\gamma_2 = -0.003$ & -0.004          & 0.044       & -0.004          & 0.048       & -0.002          & 0.100       & 0.001            & 0.115        \\\midrule
&\multicolumn{8}{c}{SGMM: $\mathbf{Gy}$ unobserved and $\mathbf{GX}$ observed; $S = T = 100$}                                                                  \\[1ex]
$\alpha = 0.538$    & 0.538           & 0.020       & 0.536           & 0.027       & 0.544           & 0.105       & 0.552            & 0.142        \\
$\beta_1 = -0.072$  & -0.072          & 0.011       & -0.072          & 0.013       & -0.072          & 0.022       & -0.071           & 0.026        \\
$\beta_2 = 0.133$   & 0.133           & 0.031       & 0.134           & 0.035       & 0.134           & 0.043       & 0.131            & 0.046        \\
$\gamma_1 = 0.086$  & 0.086           & 0.006       & 0.086           & 0.007       & 0.084           & 0.025       & 0.083            & 0.034        \\
$\gamma_2 = -0.003$ & -0.005          & 0.046       & -0.003          & 0.052       & -0.002          & 0.132       & -0.003           & 0.157        \\\midrule
&\multicolumn{8}{c}{SGMM: $\mathbf{Gy}$ and $\mathbf{GX}$ unobserved; $R = S = T = 100$}                                                                       \\[1ex]
$\alpha = 0.538$    & 0.538           & 0.021       & 0.536           & 0.028       & 0.540           & 0.126       & 0.537            & 0.178        \\
$\beta_1 = -0.072$  & -0.072          & 0.013       & -0.072          & 0.015       & -0.072          & 0.027       & -0.073           & 0.033        \\
$\beta_2 = 0.133$   & 0.133           & 0.036       & 0.134           & 0.042       & 0.135           & 0.055       & 0.131            & 0.059        \\
$\gamma_1 = 0.086$  & 0.086           & 0.006       & 0.086           & 0.008       & 0.085           & 0.030       & 0.087            & 0.042        \\
$\gamma_2 = -0.003$ & -0.004          & 0.047       & -0.003          & 0.054       & -0.003          & 0.164       & -0.002           & 0.198        \\\bottomrule
\multicolumn{9}{l}{%
  \begin{minipage}{13cm}%
  \vspace{0.1cm}
    \small{Note: We perform 1,000 simulations. 'Std' denotes the standard deviation. `False pos. rate` refers to the proportion of false positives among actual negatives, which include true negatives and false positives. `False neg. rate` refers to the proportion of false negatives among actual positives, which include true positives and false negatives. A positive indicates a friendship, while a negative indicates a non-friendship.}
  \end{minipage}%
}
\end{tabular}
\end{table}

\renewcommand{\arraystretch}{1}
\newpage
\section{Bayesian estimator}\label{sec:OA_bayesian}

%\section{Bayesian Inference}\label{sec:appendix_bayesian}

\subsection{Posterior Distributions for Algorithm \ref{algo:mcmc}.}\label{sec:mcmcdetails}
To compute the posterior distributions, we set prior distributions on $\Tilde{\alpha}$, $\blambda$, and $\sigma^2$, where $\Tilde{\alpha} = \log(\frac{\alpha}{1-\alpha})$ and $\blambda = [\bbeta, \bgamma]$. In Algorithm \ref{algo:mcmc}, we therefore sample $\Tilde{\alpha}$ and compute $\alpha$, such that $\alpha = \dfrac{\exp(\Tilde{\alpha})}{1 + \exp(\Tilde{\alpha})}$. Using this functional form for computing $\alpha$ ensures that $\alpha \in (0,1)$. The prior distributions are set as follows: 
\begin{align*}
   \Tilde{\alpha} &\sim \mathcal{N}(\mu_{\Tilde{\alpha}}, \sigma^2_{\Tilde{\alpha}}),\\
    \blambda | \sigma^2&\sim \mathcal{N}(\boldsymbol{\mu_{\blambda}}, \sigma^2\boldsymbol{\Sigma_{\blambda}}),\\
    \sigma^2 &\sim IG(\dfrac{a}{2}, \dfrac{b}{2}).
\end{align*}
For the simulations and estimations in this paper, we set $\mu_{\Tilde{\alpha}} = -1$, $\sigma^{-2}_{\Tilde{\alpha}} = 2$,  $\boldsymbol{\mu_{\blambda}} = \boldsymbol{0}$, $\boldsymbol{\Sigma_{\blambda}}^{-1} = \dfrac{1}{100}\boldsymbol{I}_K$, $a = 4$, and $b = 4$, where $\boldsymbol{I}_K$ is the identity matrix of dimension K and $K = \dim(\blambda)$.\\
Following Algorithm \ref{algo:mcmc}, $\alpha$ is updated at each iteration $t$ of the MCMC by drawing $
\Tilde{\alpha}^*$ from the proposal $\mathcal{N}(\Tilde{\alpha}_{t-1}, \xi_t)$, where the jumping scale $\xi_t$ is also updated at each $t$ following \citeOA{atchade2005adaptive} for an acceptance rate of $a^*$ targeted at 0.44. As the proposal is symmetrical, $\alpha^* = \dfrac{\exp(\Tilde{\alpha}^*)}{1 + \exp(\Tilde{\alpha}^*)}$ is accepted with the probability
$$
    \min\left\lbrace 1,\frac{\mathcal{P}(\mathbf{y}|\mathbf{A}_{t}, \blambda_{t-1},\alpha^*)P(\Tilde{\alpha}^*)}{\mathcal{P}(\mathbf{y}|\mathbf{A}_t, \boldsymbol{\theta}_{t-1})P(\Tilde{\alpha_t})}\right\rbrace.
$$
The parameters $\blambda_t = [\bbeta_t$, $\bgamma_t]$ and $\sigma^2_t$ are drawn from their posterior conditional distributions, given as follows:
\begin{eqnarray*}
    \blambda_t |\by, \bA_t, \alpha_t, \sigma^2_{t-1}  &\sim\mathcal{N}(\hat{\bmu}_{{\blambda}_t}, \sigma^2_{t-1}\hat{\boldsymbol{\Sigma}}_{{\blambda}_t}),\\
    \sigma^2_t|  \by, \bA_t, \btheta_t &\sim IG\left(\dfrac{\hat{a}_t}{2}, \dfrac{\hat{b}_t}{2}\right),
\end{eqnarray*}
where,
\begin{eqnarray*}
    \hat{\boldsymbol{\Sigma}}_{{\blambda}_t}^{-1} &= \mathbf{V}_t^{\prime}\mathbf{V}_t + \boldsymbol{\Sigma}_{\blambda}^{-1},\\
    \hat{\bmu}_{{\blambda}_t} &= \hat{\boldsymbol{\Sigma}}_{{\blambda}_t}\left(\mathbf{V}_t^{\prime}(\by - \alpha_t\boldsymbol{G}_t\by) + \boldsymbol{\Sigma}_{\blambda}^{-1}\bmu_{\blambda}\right),\\
    \hat{a}_t &= a + N,\\
    \hat{b}_t &= b + (\blambda_t - \bmu_{\blambda})^{\prime}\boldsymbol{\Sigma}^{-1}_{\blambda}(\blambda_t - \bmu_{\blambda})  + (\by - \alpha_t\bG_t\by - \mathbf{V}_t\blambda_t)^{\prime}(\by - \alpha_t\bG_t\by - \mathbf{V}_t\blambda_t),\\
    \mathbf{V}_t &= [\boldsymbol{1}, ~ \bX, ~ \bG_t\bX].
\end{eqnarray*}

\subsection{Network Sampling}\label{sec:networksampling}
This section explains how we sample the network in Algorithm \ref{algo:mcmc} using Gibbs sampling. As discussed above, a natural solution is to update only one entry of the adjacency matrix at every step $t$ of the MCMC. The entry $(i, j)$ is updated according to its conditional posterior distribution. For each entry, however, we need to compute $\mathcal{P}(\mathbf{y}|0,\mathbf{A}_{-ij})$ and $\mathcal{P}(\mathbf{y}|1,\mathbf{A}_{-ij})$, which are the respective likelihoods of replacing $a_{ij}$ by 0 or by 1. The likelihood computation requires the determinant of $(\mathbf{I}-\alpha\mathbf{G})$, which has a complexity $O(N^3)$ where N is the dimension of $\mathbf{G}$. This implies that we must compute $2N(N-1)$ times $\text{det}(\mathbf{I}-\alpha\mathbf{G})$ to update the adjacency matrix at each step of the MCMC. As $\bG$ is row-normalized, alternating any off-diagonal entry $(i,j)$ in $\bA$ between $0$ and $1$ perturbs all off-diagonal entries of the row $i$ in $(\mathbf{I}-\alpha\mathbf{G})$. We show that $\bA_{ij}$ and $\text{det}(\mathbf{I}-\alpha\mathbf{G})$ can be updated by computing a determinant of an auxiliary matrix that requires only updating two entries.

Assume that we want to update the entry $(i,j)$. Let $h$ be the function defined in $\mathbb{N}$ such that $\forall ~ x \in \mathbb{N}^*$, $h(x) = x$, and $h(0) = 1$. Let $\mathbf{L}$ be an N $\times$ N diagonal matrix, where $\mathbf{L}_{ii} = h(n_i)$, and $n_i$ stands for the degree of $i$, while $\mathbf{L}_{kk}=1$ for all $k\neq i$, and $\mathbf{W}$ is the matrix $\bG$ where the row $i$ of $\mathbf{W}$ is replaced by the row $i$ of $\bA$. Then, as the determinant is linear in each row, we can obtain $\mathbf{I}-\alpha\mathbf{G}$ by dividing the row $i$ of $\mathbf{L} - \alpha\mathbf{W}$ by $h(n_i)$. We get:
$$
\text{det}(\mathbf{I}-\alpha\mathbf{G}) = \dfrac{1}{h(n_i)}\text{det}(\mathbf{L}-\alpha\mathbf{W}).
$$
\noindent When $a_{ij}$ changes (from $0$ to $1$, or $1$ to $0$), note that only the entries $(i,i)$ and $(i,j)$ change in $\mathbf{L}-\alpha\mathbf{W}$. Two cases can be distinguished.
\begin{itemize}
    \item If $a_{ij} = 0$ before the update, then the new degree of $i$ will be $n_i + 1$. Thus, the entry $(i,i)$ in $\mathbf{L}-\alpha\mathbf{W}$ will change from $h(n_i)$ to $h(n_i + 1)$ (as the diagonal of $\mathbf{W}$ equals $0$), and the entry $(i,j)$ will change from $0$ to $-\alpha$. The new determinant is therefore given by
    $$\text{det}(\mathbf{I}-\alpha\mathbf{G}^*) = \dfrac{1}{h(n_i + 1)}\text{det}(\mathbf{L}^*-\alpha\mathbf{W}^*),$$ where $\mathbf{G}^*$, $\mathbf{L}^*$, and $\alpha\mathbf{W}^*$ are the new matrices once $a_{ij}$ has been updated.
    \item If $a_{ij} = 1$ before the update, then the new degree of $k$ will be $n_i - 1$. Thus, the entry $(i,i)$ in $\mathbf{L}-\alpha\mathbf{W}$ will change from $h(n_i)$ to $h(n_i - 1)$, and the entry $(i,j)$ will change from $-\alpha$ to $0$. The new determinant is therefore given by
    $$\text{det}(\mathbf{I}-\alpha\mathbf{G}^*) = \dfrac{1}{h(n_i - 1)}\text{det}(\mathbf{L}^*-\alpha\mathbf{W}^*).$$
\end{itemize}
\noindent Then, to update $\text{det}(\mathbf{L}-\alpha\mathbf{W})$ when only the entries $(i,i)$ and $(i,j)$ change, we adapt the Lemma 1 in \citeOA{hsieh2019structural} as follows:

\begin{proposition}\label{prop:det}
\noindent Let $\mathbf{e}_i$ be the $i$'th unit basis vector in $\mathbb{R}^N$. Let $\mathbf{M}$ denote an N $\times$ N matrix and $\mathbf{B}_{ij}(\mathbf{Q}, \epsilon)$ an N $\times$ N matrix as a function of an N $\times$ N matrix $\mathbf{Q}$ and a real value $\epsilon$, such that
\begin{equation}
    \mathbf{B}_{ij}(\mathbf{Q}, \epsilon) = \dfrac{\mathbf{Q} \mathbf{e}_i \mathbf{e}_j^{\prime} \mathbf{Q}}{1 + \epsilon \mathbf{e}_j^{\prime} \mathbf{Q} \mathbf{e}_i}.
\end{equation}
\noindent Adding a perturbation $\epsilon_1$ in the (i, i)th position and a perturbation $\epsilon_2$ in the (i, j)th position to the matrix $\mathbf{M}$ can be written as $\tilde{\mathbf{M}} = \mathbf{M} + \epsilon_1 \mathbf{e}_i \mathbf{e}_i^{\prime} +  \epsilon_2 \mathbf{e}_i \mathbf{e}_j^{\prime}$. 
\begin{enumerate}
    \item  The inverse of the perturbed matrix can be written as
    \begin{equation*}
    \tilde{\mathbf{M}}^{-1} = \mathbf{M}^{-1} - \epsilon_1\mathbf{B}_{ii}(\mathbf{M}^{-1}, \epsilon_1) -  \epsilon_2\mathbf{B}_{ij}\left(\mathbf{M}^{-1} - \epsilon_1\mathbf{B}_{ii}(\mathbf{M}^{-1}, \epsilon_1), \epsilon_2\right).
    \end{equation*}
    \item The determinant of the perturbed matrix can be written as
    \begin{equation*}
    \text{det}\left(\tilde{\mathbf{M}}\right) = \left(1 + \epsilon_2 \mathbf{e}_j^{\prime}\left(\mathbf{M}^{-1} - \epsilon_1\mathbf{B}_{ii}(\mathbf{M}^{-1}, \epsilon_1) \mathbf{e}_i\right)\right)(1 + \epsilon_1\mathbf{e}_i^{\prime}\mathbf{M}^{-1}\mathbf{e}_i)\text{det}\left(\mathbf{M}\right).
    \end{equation*}
\end{enumerate}
\end{proposition}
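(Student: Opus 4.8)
The plan is to derive both identities by splitting the two-entry perturbation $\epsilon_1\mathbf{e}_i\mathbf{e}_i^{\prime}+\epsilon_2\mathbf{e}_i\mathbf{e}_j^{\prime}$ into two successive \emph{rank-one} updates and applying, at each step, the Sherman--Morrison formula (for the inverse) together with the matrix determinant lemma (for the determinant). Set $\mathbf{M}_1 := \mathbf{M} + \epsilon_1\mathbf{e}_i\mathbf{e}_i^{\prime}$, so that $\tilde{\mathbf{M}} = \mathbf{M}_1 + \epsilon_2\mathbf{e}_i\mathbf{e}_j^{\prime}$; the first update is the rank-one modification with $\mathbf{u}=\epsilon_1\mathbf{e}_i$, $\mathbf{v}=\mathbf{e}_i$, and the second with $\mathbf{u}=\epsilon_2\mathbf{e}_i$, $\mathbf{v}=\mathbf{e}_j$.

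\textbf{Step 1 (first update).} Sherman--Morrison gives $\mathbf{M}_1^{-1} = \mathbf{M}^{-1} - \dfrac{\epsilon_1\mathbf{M}^{-1}\mathbf{e}_i\mathbf{e}_i^{\prime}\mathbf{M}^{-1}}{1+\epsilon_1\mathbf{e}_i^{\prime}\mathbf{M}^{-1}\mathbf{e}_i}$, which by the definition of $\mathbf{B}_{ij}(\cdot,\cdot)$ in the statement is exactly $\mathbf{M}^{-1} - \epsilon_1\mathbf{B}_{ii}(\mathbf{M}^{-1},\epsilon_1)$; the matrix determinant lemma gives $\det(\mathbf{M}_1)=(1+\epsilon_1\mathbf{e}_i^{\prime}\mathbf{M}^{-1}\mathbf{e}_i)\det(\mathbf{M})$. \textbf{Step 2 (second update).} Applying the same two identities to $\tilde{\mathbf{M}}=\mathbf{M}_1+\epsilon_2\mathbf{e}_i\mathbf{e}_j^{\prime}$, with $\mathbf{M}_1^{-1}$ in the role of $\mathbf{M}^{-1}$, yields $\tilde{\mathbf{M}}^{-1}=\mathbf{M}_1^{-1}-\epsilon_2\mathbf{B}_{ij}(\mathbf{M}_1^{-1},\epsilon_2)$ and $\det(\tilde{\mathbf{M}})=(1+\epsilon_2\mathbf{e}_j^{\prime}\mathbf{M}_1^{-1}\mathbf{e}_i)\det(\mathbf{M}_1)$. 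Substituting $\mathbf{M}_1^{-1}=\mathbf{M}^{-1}-\epsilon_1\mathbf{B}_{ii}(\mathbf{M}^{-1},\epsilon_1)$ and the expression for $\det(\mathbf{M}_1)$ from Step 1 reproduces, term for term, both parts of Proposition~\ref{prop:det} (part 2 after writing $\mathbf{e}_j^{\prime}\mathbf{M}_1^{-1}\mathbf{e}_i = \mathbf{e}_j^{\prime}(\mathbf{M}^{-1}-\epsilon_1\mathbf{B}_{ii}(\mathbf{M}^{-1},\epsilon_1))\mathbf{e}_i$).

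The only substantive checks are the well-definedness conditions: the scalar denominators $1+\epsilon_1\mathbf{e}_i^{\prime}\mathbf{M}^{-1}\mathbf{e}_i$ and $1+\epsilon_2\mathbf{e}_j^{\prime}\mathbf{M}_1^{-1}\mathbf{e}_i$ must be nonzero. Since by the determinant lemma the first equals $\det(\mathbf{M}_1)/\det(\mathbf{M})$ and the second $\det(\tilde{\mathbf{M}})/\det(\mathbf{M}_1)$, this is equivalent to invertibility of $\mathbf{M}_1$ and of $\tilde{\mathbf{M}}$. In the intended application $\mathbf{M}=\mathbf{L}-\alpha\mathbf{W}$ (with $\tilde{\mathbf{M}}$ the analogue after flipping $a_{ij}$, which indeed perturbs only positions $(i,i)$ and $(i,j)$), one has $\det(\mathbf{I}-\alpha\mathbf{G})=h(n_i)^{-1}\det(\mathbf{L}-\alpha\mathbf{W})$ and $\mathbf{I}-\alpha\mathbf{G}$ is non-singular under Assumption~\ref{as:alpha} for any admissible adjacency matrix, so all of $\mathbf{M}$, $\mathbf{M}_1$, $\tilde{\mathbf{M}}$ are invertible. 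Consequently the main obstacle is bookkeeping rather than mathematical depth; as an alternative one could bypass Sherman--Morrison and verify part 1 directly by computing $\tilde{\mathbf{M}}\tilde{\mathbf{M}}^{-1}=\mathbf{I}$ and part 2 via row-multilinearity of the determinant, but the two-step rank-one argument is the cleanest.
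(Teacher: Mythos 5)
Your proof is correct and follows essentially the same route as the paper's: both split $\tilde{\mathbf{M}}$ into two successive rank-one updates and apply the Sherman--Morrison formula and the matrix determinant lemma at each step. Your additional remark on the non-vanishing of the denominators (equivalently, invertibility of $\mathbf{M}_1$ and $\tilde{\mathbf{M}}$, guaranteed in the application by Assumption \ref{as:alpha}) is a point the paper leaves implicit, but it does not change the argument.
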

\begin{proof}
\begin{enumerate}
\item By the Sherman--Morrison formula \citepOA[][]{mele2017}, we have
\begin{equation*}
\left(\mathbf{M} + \epsilon \mathbf{e}_i \mathbf{e}_j^{\prime}\right)^{-1} = \mathbf{M}^{-1} - \epsilon \dfrac{\mathbf{M}^{-1} \mathbf{e}_i \mathbf{e}_j^{\prime} \mathbf{M}^{-1}}{1 + \epsilon \mathbf{e}_j^{\prime} \mathbf{M}^{-1} \mathbf{e}_i} = \mathbf{M}^{-1} - \epsilon\mathbf{B}_{ij}(\mathbf{M}, \epsilon).
\end{equation*}
Thus, 
\begin{eqnarray*}
 \tilde{\mathbf{M}}^{-1} &= \left((\mathbf{M} + \epsilon_1 \mathbf{e}_i \mathbf{e}_i^{\prime}) +  \epsilon_2 \mathbf{e}_i \mathbf{e}_j^{\prime}\right)^{-1},\\
 \tilde{\mathbf{M}}^{-1}   &= (\mathbf{M} + \epsilon_1 \mathbf{e}_i \mathbf{e}_i^{\prime})^{-1} - \epsilon_2\mathbf{B}_{ij}((\mathbf{M} + \epsilon_1 \mathbf{e}_i \mathbf{e}_i^{\prime})^{-1}, \epsilon_2),\\
 \tilde{\mathbf{M}}^{-1}   &=  \mathbf{M}^{-1} - \epsilon_1\mathbf{B}_{ii}(\mathbf{M}^{-1}, \epsilon_1) -  \epsilon_2\mathbf{B}_{ij}\left(\mathbf{M}^{-1} - \epsilon_1\mathbf{B}_{ii}(\mathbf{M}^{-1}, \epsilon_1), \epsilon_2\right).
\end{eqnarray*}
\item By the matrix determinant lemma \citepOA[][]{johnson1985matrix}, we have
\begin{equation*}
\text{det}\left(\mathbf{M} + \epsilon \mathbf{e}_i \mathbf{e}_j^{\prime}\right) = (1 + \epsilon \mathbf{e}_j^{\prime}\mathbf{M}^{-1}\mathbf{e}_i)\text{det}\left(\mathbf{M}\right).
\end{equation*}
\noindent It follows that
\begin{eqnarray*}
 \text{det}\left(\tilde{\mathbf{M}}\right) &= \text{det}\left((\mathbf{M} + \epsilon_1 \mathbf{e}_i \mathbf{e}_i^{\prime}) +  \epsilon_2 \mathbf{e}_i \mathbf{e}_j^{\prime}\right),\\
 \text{det}\left(\tilde{\mathbf{M}}\right)   &= (1 + \epsilon_2\mathbf{e}_j^{\prime} (\mathbf{M} + \epsilon_1 \mathbf{e}_i \mathbf{e}_i^{\prime})^{-1} \mathbf{e}_i) \text{det}\left(\mathbf{M} + \epsilon_1 \mathbf{e}_i \mathbf{e}_i^{\prime}\right),\\
 \text{det}\left(\tilde{\mathbf{M}}\right)  &=  \left(1 + \epsilon_2 \mathbf{e}_j^{\prime}\left(\mathbf{M}^{-1} - \epsilon_1\mathbf{B}_{ii}(\mathbf{M}^{-1}, \epsilon_1) \mathbf{e}_i\right)\right)(1 + \epsilon_1\mathbf{e}_i^{\prime}\mathbf{M}^{-1}\mathbf{e}_i)\text{det}\left(\mathbf{M}\right).
\end{eqnarray*}
\end{enumerate}
\end{proof}

\noindent The method proposed above becomes computationally intensive when many entries must be updated simultaneously.  We also propose an alternative method that allows updating the block for entries in $\bA$. Let $\mathbf{D}=(\mathbf{I}-\alpha\mathbf{G})$; we can write
\begin{equation}
\displaystyle \text{det}(\mathbf{D}) = \sum_{j = 1}^N(-1)^{i+j}\mathbf{D}_{\text{ij}}\delta_{\text{ij}},
\label{det}
\end{equation}
where $i$ denotes any row of $\mathbf{D}$ and $\delta_{ij}$ is the minor\footnote{The determinant of the submatrix of $\mathbf{M}$ by removing row $i$ and column $j$.} associated with the entry $(i, j)$. The minors of row $i$ do not depend on the values of entries in row $i$. To update any block in row $i$, we therefore compute the $N$ minors associated with $i$ and use this minor within the row. We can then update many entries simultaneously without increasing the number of times that we compute $\text{det}(\mathbf{D})$.

\noindent One possibility is to update multiple links simultaneously by randomly choosing the number of entries to consider and their position in the row. As suggested by \citeOA{chib2010tailored}, this method would help the Gibbs sampling to converge more quickly. We can summarize how we update the row $i$ as follows:
\begin{enumerate}
    \item Compute the $N$ minors $\delta_{i1}, \dots, \delta_{in}$ \label{update::minor}.
	\item Let $\Omega_{\bG}$ be the entries to update in the row $i$, and $n_{\bG} = |\Omega_{\bG}|$ the number of entries in $\Omega_{\bG}$.
	%Generally $\Omega_{\bG} = \{(i,1), (i,2), \dots, (i, N)\} \setminus \{(i, i)\}$ and $n_{\bG} = N - 1$, but sometimes we do not need to update some links (for example, the observed links).
	\begin{enumerate}
	    \item Choose $r$, the size of the block to update, as a random integer number such that $1 \leq r \leq  n_{\bG}$. In practice, we choose $r \leq \min(5, n_{\bG})$ because the number of possibilities of links to consider grows exponentially with $r$. \label{update::block::size}
    	\item Choose the $r$ random entries from $\Omega_{\bG}$. These entries define the block to update. \label{update::entries}
    	\item Compute the posterior probabilities of all possibilities of links inside the block and update the block (there are $2^r$ possibilities). Use the minors calculated at \ref{update::minor} and the formula (\ref{det}) to quickly compute $det(\mathbf{D})$.  \label{update::post::prob}
    	\item  Remove the $r$ drawn positions from $\Omega_{\bG}$ and let  $n_{\bG} = n_{\bG} - r$.  Replicate \ref{update::block::size},  \ref{update::entries}, and \ref{update::post::prob} until $n_{\bG} = 0$.
	\end{enumerate}
\end{enumerate}

\subsection{How to build prior distributions}\label{sec:appendix_prior}
The two following examples discuss how to construct prior distributions depending on whether the first stage is estimated by a classical or Bayesian estimator.

\begin{example}[Priors from the Asymptotic Distribution of $\boldsymbol{\rho}$]\label{ex:prior_distr}
In a classical setting, and under the usual assumptions, the estimation of (\ref{eq:gennetfor}) produces an estimator $\hat{\boldsymbol{\rho}}$ of $\boldsymbol{\rho}_0$ and an estimator of the asymptotic variance of $\hat{\boldsymbol{\rho}}$, i.e., $\hat{\mathbf{V}}(\hat{\boldsymbol{\rho}})$. In this case, we define the prior density $\pi(\boldsymbol{\rho})$ as the density of a multivariate normal distribution with mean $\hat{\boldsymbol{\rho}}$ and variance--covariance matrix $\hat{\mathbf{V}}(\hat{\boldsymbol{\rho}})$.
\end{example}

\begin{example}[Priors from the Posterior Distribution of $\boldsymbol{\rho}$]\label{ex:prior_post}
In a Bayesian setting, the estimation of $\boldsymbol{\rho}$ from the network formation model (\ref{eq:gennetfor}) results in draws from the posterior distribution of $\boldsymbol{\rho}$. It is therefore natural to use such a posterior distribution as the prior distribution of $\mathbf{A}$ for the estimation based on (\ref{eq:likelihood}). Performing such a sequential Bayesian updating approach comes with a well-known numerical issue.\footnote{See \citeOA{thijssen2020approximating} for a recent discussion.}

Indeed, the evaluation of the acceptance ratio in Step 1 of Algorithm \ref{algo:mcmc} below requires the evaluation of the density of $\boldsymbol{\rho}$ at different values. Ideally, one would use the draws from the posterior distribution of $\boldsymbol{\rho}$ from the first step (network formation model) and perform a nonparametric kernel density estimation of the posterior distribution. However, when the dimension of $\boldsymbol{\rho}$ is large, the kernel density estimation may be infeasible in practice.

This is especially true for very flexible network formation models, such as that proposed by \citeOA{breza2017using} for which the number of parameters to estimate is $O(N_m)$. In such a case, it might be more reasonable to use a more parametric approach or to impose additional restrictions on the dependence structure of $\boldsymbol{\rho}$ across dimensions.\footnote{For example, if we assume that the posterior distribution of $\boldsymbol{\rho}$ is jointly normal, the estimation of the mean and variance-covariance matrix is straightforward, even in a high-dimensional setting. Simulations suggest that this approach performs well in practice. See the Vignette accompanying our R package.}
\end{example}

\subsection{Simulation Study}
In this section, we assess the performance of the Bayesian estimator through a simulation study. We consider the case of missing links described in Section~\ref{sec:montecarlo} for the SGMM estimator, with the same distribution for the variables in $\mathbf{X}_m$ and the network, and the same true parameter value:  $(\alpha, \boldsymbol{\beta}^{\prime}, \boldsymbol{\gamma}^{\prime}) = (0.538,~3.806,~-0.072,~0.132,~0.086,~-0.003)$, $\sigma = 0.707$, and $\boldsymbol{\rho}^{\prime} = (-2.349,~-0.700,~0.404)$.  We also set $M = 50$ and $N_m \in \{30,~50\}$. We consider scenarios where $50\%$ or $75\%$ of the entries $a_{ij,m}$ are missing at random.

For each of the resulting scenarios, we perform 19 simulations, varying the starting value for $\alpha$ in the MCMC process from $-0.9$ to $0.9$ in increments of $0.1$. The starting values for the other model parameters are set to zero, except for $\sigma$, which is set to one. The starting values for the unobserved network entries are randomly generated from a Bernoulli distribution with parameter $0.2$.

The simulation results displayed in Figure~\ref{fig:Bayesian} indicate that the MCMC converges to a stationary distribution regardless of the starting value for $\alpha$. The burn-in period increases with the number of nodes in the network, and especially with the number of missing entries in the network. However, given the proportion of missing entries in the network, convergence occurs quickly—before 2{,}000 iterations.
\begin{figure}[!h]
    \centering
    \includegraphics[scale = 0.8]{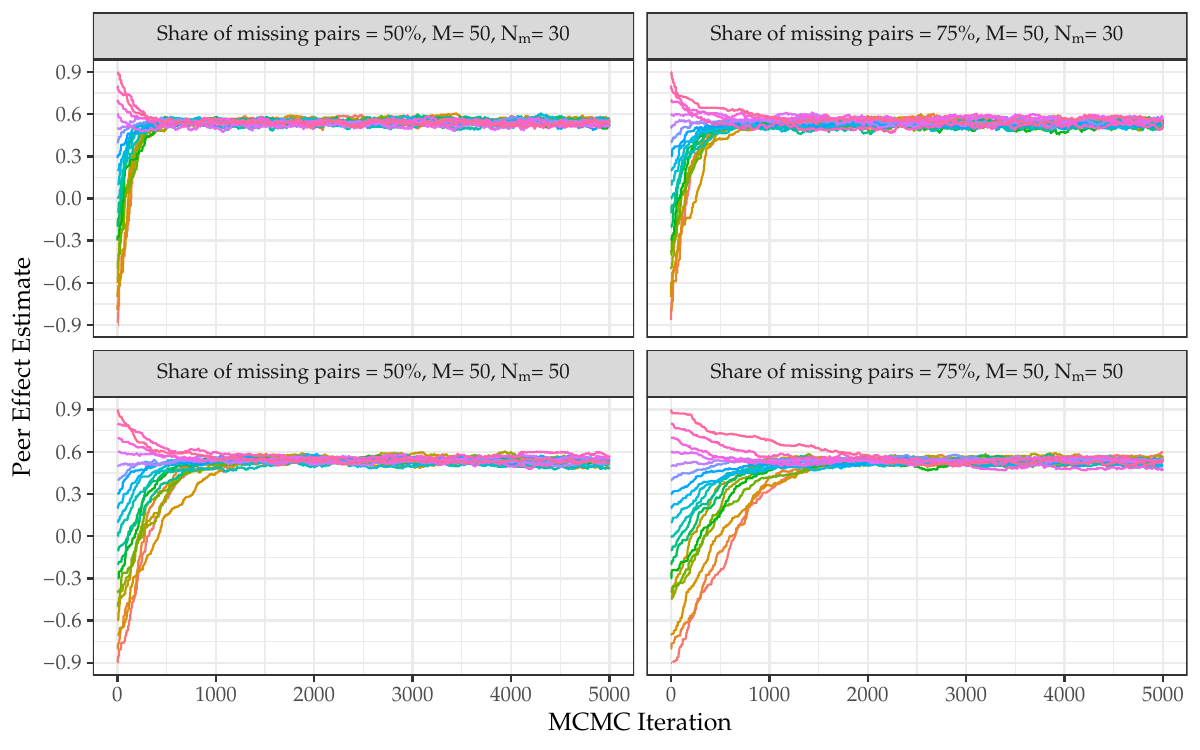}
    \caption{Bayesian estimation of peer effects with mismeasured links}
    \label{fig:Bayesian}
\end{figure}

\newpage
\section{Application}\label{sec:OAapplication}

%\section{Application}\label{sec:addhealth_appendix}

\subsection{Error codes only}\label{sec:manski1}
Each student nominates their best friends up to 5 males and 5 females. Because we know the sex of nominated friends, even when the identifier is coded with errors, we associate each missing link to a male or female student. We then have two sets of network data for each student $i$: the set of data from $i$ to their male schoolmates and the set of data from $i$ to their female schoolmates. A set is considered fully observed if it has no missing values. We estimate the network formation only using the fully observed sets. The sets with partial or no observed data are inferred (even the data we do not doubt in those sets are inferred).

This approach raises a selection problem that we address by weighting each selected set, following \citeOA{manski1977estimation}. The intuition of the weights lies in the fact that the sets with many links have lower probabilities to be selected (because error codes are more likely). The weight is the inverse of the selection probability. For a selected set $\mathcal{S}_{is}$ (of network data from $i$ to schoolmates of sex $s$),  the selection probability can be estimated as the proportion of 
 sets without missing data among the sets of network data to schoolmates of sex $s$ having the same number of links than $\mathcal{S}_{is}$.
 
For the Bayesian estimator, we jointly estimate the peer effect model and the network formation model (i.e., using Step 1' on Page 28). Thus, in the MCMC, $\boldsymbol{\rho}$ and the sets with partial or no network data are inferred using information from the weighted sets and the peer effect model.

\subsection{Error codes and top coding}\label{sec:manski2}
 We consider the same selected sets as in the case of missing data only. However, we doubt the exactitude of a link $a_{ij} \in \mathcal{S}_{is}$ if $a_{ij} = 0$ and the number of links in $\mathcal{S}_{is}$ is five. Therefore, if the number of links in $\mathcal{S}_{is}$ is five, we adjust the weight associated with each $a_{ij}$. For $a_{ij} = 0$, we multiply the weight obtained in the case of missing data only by $(\lvert \mathcal{S}_{is} \rvert - \ell(\mathcal{S}_{is}))/(\lvert \mathcal{S}_{is} \rvert - \hat{\ell}(\mathcal{S}_{is}))$, and for $a_{ij} = 1$, we multiply the weight by $\ell(\mathcal{S}_{is})/\hat{\ell}(\mathcal{S}_{is})$, where $\ell(\mathcal{S}_{is})$ is the estimate of the true number of links from $i$ to their schoolmates of sex $s$, $\hat{\ell}(\mathcal{S}_{is})$ if the number of links declared in $\mathcal{S}_{is}$, and $\lvert \mathcal{S}_{is} \rvert$ is the number of data in $\mathcal{S}_{is}$ (number of students having the sex $s$ in the school minus one).

We denote $s = m$ for males and $s = f$ for females. Fours scenarios are possible: $\{\hat{\ell}(\mathcal{S}_{im}) < 5, ~\hat{\ell}(\mathcal{S}_{if}) < 5\}$, $\{\hat{\ell}(\mathcal{S}_{im}) = 5, ~\hat{\ell}(\mathcal{S}_{if}) < 5\}$, $\{\hat{\ell}(\mathcal{S}_{im}) < 5, ~\hat{\ell}(\mathcal{S}_{if}) = 5\}$, and $\{\hat{\ell}(\mathcal{S}_{im}) = 5, ~\hat{\ell}(\mathcal{S}_{if}) = 5\}$. In the last three cases, $\ell(\mathcal{S}_{im}) + \ell(\mathcal{S}_{if})$ is left-censored and we know the lower bound. Assuming that the number of links $i$ follows a Poisson distribution of mean $n_i^e$, we estimate $n_i^e$ using a censored Poisson regression on the declared number of links. We assume that $n_i^e$ is an exponential linear function of $i$'s characteristics (age, sex, ...), and we also include school-fixed effects to control for school size. 

The estimate of $n_i^e$ is $\ell(\mathcal{S}_{im}) + \ell(\mathcal{S}_{if})$, and it allows us to compute $\ell(\mathcal{S}_{im})$ and $\ell(\mathcal{S}_{if})$. For the  case $\{\hat{\ell}(\mathcal{S}_{im}) = 5, ~\hat{\ell}(\mathcal{S}_{if}) = 5\}$, we assume that $\ell(\mathcal{S}_{im}) = \ell(\mathcal{S}_{if}) = 0.5(\ell(\mathcal{S}_{im}) + \ell(\mathcal{S}_{if}))$. In the other cases, as either $\ell(\mathcal{S}_{im})$ or $\ell(\mathcal{S}_{if})$ is known, the second member of $\ell(\mathcal{S}_{im}) + \ell(\mathcal{S}_{if})$ can be computed. 

For the Bayesian estimator, and contrary to the case with error codes only, it is more challenging to jointly estimate the peer effect model and the network formation model in a single step. We therefore first estimate the network formation model and then the Bayesian estimator (i.e., using Step 1 in Algorithm \ref{algo:mcmc}).
Thus, for the MCMC, the estimated distribution of $\boldsymbol{\rho}$ from the network formation model is used as a prior distribution. We then infer  $\boldsymbol{\rho}$ and the network data $a_{ij} = 0$ that we are doubtful about, using information from the peer effect model and the prior distribution of $\boldsymbol{\rho}$.

%\newpage
\subsection{Tables}\label{sec:app_tables}

\begin{table}[!htbp] 
   \centering 
  \caption{Summary statistics} 
  \label{tab:sumstats} 
\begin{tabular}{@{\extracolsep{5pt}} lld{3}ccc} 
\\[-1.8ex]\hline 
\hline \\[-1.8ex] 
\multicolumn{2}{l}{Statistic} & \multicolumn{1}{c}{Mean} & \multicolumn{1}{c}{Std. Dev.} & \multicolumn{1}{c}{Pctl(25)} & \multicolumn{1}{c}{Pctl(75)} \\ 
\hline \\[-1.8ex] 
\multicolumn{2}{l}{Female} & 0.540 & 0.498 & 0 & 1 \\ 
\multicolumn{2}{l}{Hispanic}  & 0.157 & 0.364 & 0 & 0 \\ 
\multicolumn{2}{l}{Race} &  &  &  &  \\ 
 & White & 0.612 & 0.487 & 0 & 1 \\ 
 & Black & 0.246 & 0.431 & 0 & 0 \\ 
 & Asian & 0.022 & 0.147 & 0 & 0 \\ 
 & Other & 0.088 & 0.283 & 0 & 0 \\ 
\multicolumn{2}{l}{Mother's education}  &  &  &  &  \\ 
 & High & 0.310 & 0.462 & 0 & 1 \\ 
 & <High & 0.193 & 0.395 & 0 & 0 \\ 
 & >High & 0.358 & 0.480 & 0 & 1 \\ 
 & Missing & 0.139 & 0.346 & 0 & 0 \\ 
\multicolumn{2}{l}{Mother's job}  &  &  &  &  \\ 
 & Stay-at-home & 0.225 & 0.417 & 0 & 0 \\ 
 & Professional & 0.175 & 0.380 & 0 & 0 \\ 
 & Other & 0.401 & 0.490 & 0 & 1 \\ 
 & Missing & 0.199 & 0.399 & 0 & 0 \\ 
\multicolumn{2}{l}{Age}  & 13.620 & 1.526 & 13 & 14 \\ 
\multicolumn{2}{l}{GPA}  & 2.912    &0.794    &2.333      &3.5\\ 
\hline \\[-1.8ex] 
\multicolumn{6}{l}{%
  \begin{minipage}{11.6cm}%
    \footnotesize{Note: We only keep the 33 schools having less than 200 students from the In-School sample. The variable GPA is computed by taking the average grade for English, Mathematics, History, and Science, letting $A=4$, $B=3$, $C=2$, and $D=1$. Thus, higher scores indicate better academic achievement.}
  \end{minipage}%
  }
\end{tabular} 

\end{table}

\begin{table}[htbp]
\small
\centering
\caption{Empirical results (Bayesian method)} 
\label{tab:appendapplicationB} 
    \begin{adjustbox}{max width=0.8\textwidth}
\begin{tabular}{lp{6cm}d{4}d{4}d{4}d{4}d{4}d{4}}
\\[-1.8ex]
\toprule
                    &                             & \multicolumn{2}{c}{Model 1} & \multicolumn{2}{c}{Model 2} & \multicolumn{2}{c}{Model 3} \\
\multicolumn{2}{l}{Statistic}                     & \multicolumn{1}{c}{Mean}        & \multicolumn{1}{c}{Std}       & \multicolumn{1}{c}{Mean}        & \multicolumn{1}{c}{Std}      & \multicolumn{1}{c}{Mean}        & \multicolumn{1}{c}{Std}       \\ \midrule
                    &                              \multicolumn{7}{c}{\textbf{Peer effect model}}                         \\
\multicolumn{2}{l}{Peer effects}                & 0.350       &  0.024        & 0.524       &  0.036        & 0.538       &  0.037        \\
\multicolumn{2}{l}{\textbf{Own effects}}        &             &               &             &               &             &               \\
\multicolumn{2}{l}{Female}                      & 0.144       &  0.029        & 0.135       &  0.030        & 0.133       &  0.031        \\
\multicolumn{2}{l}{Hispanic}                    & -0.083      &  0.042        & -0.148      &  0.048        & -0.151      &  0.047        \\
\multicolumn{2}{l}{Race (White)}                &             &               &             &               &             &               \\
                   & Black                      & -0.230      &  0.045        & -0.190      &  0.055        & -0.189      &  0.055        \\
                   & Asian                      & -0.091      &  0.089        & -0.113      &  0.091        & -0.110      &  0.091        \\
                   & Other                      & 0.055       &  0.051        & 0.039       &  0.052        & 0.039       &  0.052        \\
\multicolumn{2}{l}{Mother's education (High)}   &             &               &             &               &             &               \\
                   & \textless High             & -0.122      &  0.039        & -0.138      &  0.040        & -0.139      &  0.040        \\
                   & \textgreater High          & 0.140       &  0.034        & 0.123       &  0.034        & 0.121       &  0.034        \\
                   & Missing                    & -0.060      &  0.050        & -0.069      &  0.051        & -0.070      &  0.051        \\
\multicolumn{2}{l}{Mother's job (Stay-at-home)} &             &               &             &               &             &               \\
                   & Professional               & 0.080       &  0.045        & 0.075       &  0.044        & 0.079       &  0.044        \\
                   & Other                      & 0.003       &  0.035        & -0.014      &  0.035        & -0.012      &  0.035        \\
                   & Missing                    & -0.066      &  0.047        & -0.074      &  0.048        & -0.073      &  0.048        \\
\multicolumn{2}{l}{Age}                         & -0.073      &  0.010        & -0.071      &  0.010        & -0.072      &  0.010        \\
\multicolumn{2}{l}{\textbf{Contextual effects}} &             &               &             &               &             &               \\
\multicolumn{2}{l}{Female}                      & 0.011       &  0.049        & -0.003      &  0.060        & -0.003      &  0.060        \\
\multicolumn{2}{l}{Hispanic}                    & 0.060       &  0.069        & 0.272       &  0.102        & 0.276       &  0.105        \\
\multicolumn{2}{l}{Race (White)}                &             &               &             &               &             &               \\
                   & Black                      & 0.050       &  0.058        & 0.025       &  0.073        & 0.033       &  0.074        \\
                   & Asian                      & 0.209       &  0.186        & 0.110       &  0.365        & 0.209       &  0.385        \\
                   & Other                      & -0.137      &  0.089        & -0.044      &  0.163        & -0.051      &  0.167        \\
\multicolumn{2}{l}{Mother's education (High)}   &             &               &             &               &             &               \\
                   & \textless High             & -0.269      &  0.070        & -0.228      &  0.141        & -0.221      &  0.149        \\
                   & \textgreater High          & 0.072       &  0.059        & 0.063       &  0.097        & 0.057       &  0.102        \\
                   & Missing                    & -0.077      &  0.093        & 0.107       &  0.167        & 0.124       &  0.174        \\
\multicolumn{2}{l}{Mother's job (Stay-at-home)} &             &               &             &               &             &               \\
                   & Professional               & -0.110      &  0.08)        & 0.102       &  0.124        & 0.090       &  0.134        \\
                   & Other                      & -0.101      &  0.060        & -0.003      &  0.100        & -0.017      &  0.103        \\
                   & Missing                    & -0.093      &  0.085        & -0.075      &  0.157        & -0.109      &  0.165        \\
\multicolumn{2}{l}{Age}                         & 0.066       &  0.006        & 0.083       &  0.008        & 0.086       &  0.009        \\
SE$^2$             &                            & 0.523       &               & 0.496       &               & 0.499       &               \\\midrule
                    & \multicolumn{7}{c}{\textbf{Network formation model}}                                                  \\
\multicolumn{2}{l}{Same sex}                      &             &               & 0.310       &  0.011        & 0.370       &  0.014        \\
\multicolumn{2}{l}{Both Hispanic}                 &             &               & 0.416       &  0.020        & 0.436       &  0.026        \\
\multicolumn{2}{l}{Both White}                    &             &               & 0.312       &  0.018        & 0.304       &  0.023        \\
\multicolumn{2}{l}{Both Black}                    &             &               & 1.076       &  0.030        & 1.173       &  0.038        \\
\multicolumn{2}{l}{Both Asian}                    &             &               & 0.164       &  0.034        & 0.144       &  0.043        \\
\multicolumn{2}{l}{Mother's education $<$ High} &             &               & 0.226       &  0.013        & 0.218       &  0.017        \\
\multicolumn{2}{l}{Mother's education $>$ High} &             &               & 0.007       &  0.012        & 0.005       &  0.014        \\
\multicolumn{2}{l}{Mother's job Professional}      &             &               & -0.116      &  0.012        & -0.128      &  0.016        \\
\multicolumn{2}{l}{Age absolute difference}       &             &               & -0.700      &  0.007        & -0.714      &  0.009        \\
\multicolumn{2}{l}{}                              &             &               &             &               &             &               \\
\multicolumn{2}{l}{Average number of friends}     & 3.251       &               & 4.665       &               & 5.618       & \\
\bottomrule

\multicolumn{8}{l}{%
  \begin{minipage}{17cm}%
  \vspace{0.1cm}
    \footnotesize{Note: Model 1 considers the observed network as given. Model 2 infers the missing links due to friendship nominations coded with error, and Model 3 infers the missing links due to friendship nominations coded with error and controls for top coding.
    For each model, Column "Mean" indicates the posterior mean, and Column "Std" indicates the posterior standard deviations in parentheses.\\
    \textit{N} = 3,126. Observed links = 17,993. 
    Proportion of inferred network data: error code = 60.0\%, error code and top coding = 65.0\%. The explained variable is computed by taking the average grade for English, Mathematics, History, and Science, letting $A=4$, $B=3$, $C=2$, and $D=1$. Thus, higher scores indicate better academic achievement.}
  \end{minipage}%
}\\
\end{tabular} 
\end{adjustbox}

\end{table}

\begin{table}[htbp]
\small
\centering
\caption{Empirical results (SGMM Method)} 
\label{tab:appendapplicationG} 
    \begin{adjustbox}{max width=0.8\textwidth}
\begin{tabular}{lp{6cm}d{4}d{4}d{4}d{4}d{4}d{4}}
\\[-1.8ex]
\toprule
                    &                             & \multicolumn{2}{c}{Model 1} & \multicolumn{2}{c}{Model 2} & \multicolumn{2}{c}{Model 3} \\
\multicolumn{2}{l}{Statistic}                     & \multicolumn{1}{c}{Mean}        & \multicolumn{1}{c}{Std}       & \multicolumn{1}{c}{Mean}        & \multicolumn{1}{c}{Std}      & \multicolumn{1}{c}{Mean}        & \multicolumn{1}{c}{Std}       \\ \midrule
                    &                              \multicolumn{7}{c}{\textbf{Peer effect model}}                         \\
\multicolumn{2}{l}{Peer effects}                & 0.455       &  0.230        & 0.753       &  0.254        & 0.683       &  0.242        \\
\multicolumn{2}{l}{\textbf{Own effects}}        &             &               &             &               &             &               \\
\multicolumn{2}{l}{Female}                      & 0.179       &  0.039        & 0.122       &  0.036        & 0.122       &  0.035        \\
\multicolumn{2}{l}{Hispanic}                    & -0.129      &  0.045        & -0.160      &  0.051        & -0.160      &  0.051        \\
\multicolumn{2}{l}{Race (White)}                &             &               &             &               &             &               \\
                & Black                         & -0.276      &  0.058        & -0.172      &  0.058        & -0.166      &  0.059        \\
                & Asian                         & -0.178      &  0.101        & -0.131      &  0.085        & -0.124      &  0.086        \\
                & Other                         & 0.087       &  0.062        & 0.023       &  0.061        & 0.023       &  0.062        \\
\multicolumn{2}{l}{Mother's education (High)}   &             &               &             &               &             &               \\
                & \textless High                & -0.134      &  0.044        & -0.121      &  0.046        & -0.124      &  0.047        \\
                & \textgreater High             & 0.109       &  0.036        & 0.121       &  0.030        & 0.122       &  0.03)        \\
                & Missing                       & -0.066      &  0.053        & -0.060      &  0.051        & -0.062      &  0.051        \\
\multicolumn{2}{l}{Mother's job (Stay-at-home)} &             &               &             &               &             &               \\
                & Professional                  & 0.145       &  0.055        & 0.065       &  0.043        & 0.071       &  0.043        \\
                & Other                         & 0.043       &  0.035        & -0.019      &  0.031        & -0.018      &  0.030        \\
                & Missing                       & -0.018      &  0.045        & -0.072      &  0.043        & -0.068      &  0.043        \\
\multicolumn{2}{l}{Age}                         & -0.042      &  0.032        & -0.072      &  0.015        & -0.068      &  0.016        \\
\multicolumn{2}{l}{\textbf{Contextual effects}} &             &               &             &               &             &               \\
\multicolumn{2}{l}{Female}                      & -0.056      &  0.074        & -0.014      &  0.068        & -0.001      &  0.068        \\
\multicolumn{2}{l}{Hispanic}                    & 0.265       &  0.121        & 0.331       &  0.169        & 0.368       &  0.175        \\
\multicolumn{2}{l}{Race (White)}                &             &               &             &               &             &               \\
                & Black                         & 0.129       &  0.125        & 0.035       &  0.113        & 0.013       &  0.108        \\
                & Asian                         & 2.409       &  1.220        & 3.236       &  2.359        & 3.466       &  2.575        \\
                & Other                         & -0.363      &  0.180        & -0.111      &  0.170        & -0.195      &  0.198        \\
\multicolumn{2}{l}{Mother's education (High)}   &             &               &             &               &             &               \\
                & \textless High                & -0.215      &  0.083        & -0.206      &  0.337        & -0.283      &  0.355        \\
                & \textgreater High             & 0.168       &  0.113        & -0.043      &  0.139        & -0.051      &  0.155        \\
                & Missing                       & 0.240       &  0.165        & -0.041      &  0.280        & -0.034      &  0.303        \\
\multicolumn{2}{l}{Mother's job (Stay-at-home)} &             &               &             &               &             &               \\
                & Professional                  & -0.239      &  0.111        & 0.182       &  0.142        & 0.186       &  0.158        \\
                & Other                         & -0.101      &  0.072        & 0.126       &  0.183        & 0.103       &  0.198        \\
                & Missing                       & -0.199      &  0.162        & 0.247       &  0.381        & 0.168       &  0.396        \\
\multicolumn{2}{l}{Age}                         & 0.075       &  0.033        & 0.110       &  0.030        & 0.103       &  0.029        \\\midrule
\multicolumn{8}{l}{\textbf{Network formation model}}                                                                                      \\
\multicolumn{2}{l}{Same sex}                    &             &               & 0.309       &  0.016        & 0.370       &  0.015        \\
\multicolumn{2}{l}{Both Hispanic}               &             &               & 0.417       &  0.027        & 0.433       &  0.025        \\
\multicolumn{2}{l}{Both White}                  &             &               & 0.312       &  0.025        & 0.304       &  0.023        \\
\multicolumn{2}{l}{Both Black}                  &             &               & 1.077       &  0.043        & 1.171       &  0.041        \\
\multicolumn{2}{l}{Both Asian}                  &             &               & 0.165       &  0.050        & 0.142       &  0.047        \\
\multicolumn{2}{l}{Mother's education $<$ High}  &             &               & 0.226       &  0.018        & 0.216       &  0.017        \\
\multicolumn{2}{l}{Mother's education $>$ High}  &             &               & 0.009       &  0.016        & 0.006       &  0.015        \\
\multicolumn{2}{l}{Mother's job Professional}    &             &               & -0.116      &  0.017        & -0.128      &  0.016        \\
\multicolumn{2}{l}{Age absolute difference}     &             &               & -0.701      &  0.010        & -0.715      &  0.009        \\
\multicolumn{2}{l}{}                            &             &               &             &               &             &               \\
\multicolumn{2}{l}{Average number of friends}   & 3.251       &               & 4.664       &               & 5.613       &     \\
\bottomrule

\multicolumn{8}{l}{%
  \begin{minipage}{16cm}%
  \vspace{0.1cm}
    \footnotesize{Note: Model 4 considers the observed network as given. Model 5 infers the missing links due to friendship nominations coded with error, and Model 6 infers the missing links due to friendship nominations coded with error and controls for top coding.
    For each model, Column "Mean" indicates the estimates, and Column "Std" indicates the posterior standard deviations in parentheses.\\
    \textit{N} = 3,126. Observed links = 17,993. 
    Proportion of inferred network data: error code = 60.0\%, error code and top coding = 65.0\%. The explained variable is computed by taking the average grade for English, Mathematics, History, and Science, letting $A=4$, $B=3$, $C=2$, and $D=1$. Thus, higher scores indicate better academic achievement.}
  \end{minipage}%
}\\
\end{tabular} 
\end{adjustbox}

\end{table}

\begin{table}[htbp]
\small
\centering
\caption{Empirical results (Standard IV estimator based on the sample of students without missing network data)}
\label{tab:appendapplicationIV} 
\begin{adjustbox}{max width=0.8\textwidth}
\begin{tabular}{lp{6cm}d{4}d{4}}
\\[-1.8ex]
\toprule
\multicolumn{2}{l}{Statistic}                     & \multicolumn{1}{c}{Mean}        & \multicolumn{1}{c}{Std}\\ \midrule
\multicolumn{2}{l}{Peer effects}                & 0.408          & (0.243)          \\
\multicolumn{2}{l}{\textbf{Own effect}}                  &                &                  \\
\multicolumn{2}{l}{Female}                      & 0.178          & (0.075)          \\
\multicolumn{2}{l}{Hispanic}                    & -0.068         & (0.099)          \\
\multicolumn{2}{l}{Race (white)}                &                &                  \\
                & Black                         & -0.170         & (0.084)          \\
                & Asian                         & -0.239         & (0.212)          \\
                & Other                         & 0.101          & (0.131)          \\
\multicolumn{2}{l}{Mother’s education (High)}   &                &                  \\
                & \textless high                & -0.041         & (0.104)          \\
                & \textgreater high             & 0.230          & (0.089)          \\
                & missing                       & 0.171          & (0.122)          \\
\multicolumn{2}{l}{Mother’s job (Stay-at-home)} &                &                  \\
                & Professional                  & 0.060          & (0.121)          \\
                & Other                         & -0.098         & (0.092)          \\
                & missing                       & -0.116         & (0.116)          \\
\multicolumn{2}{l}{Age}                         & -0.011         & (0.028)          \\
\multicolumn{2}{l}{\textbf{Contextual effects}}          &                &                  \\
\multicolumn{2}{l}{Female}                      & -0.063         & (0.153)          \\
\multicolumn{2}{l}{Hispanic}                    & 0.421          & (0.246)          \\
\multicolumn{2}{l}{Race (white)}                &                &                  \\
                & Black                         & 0.010          & (0.169)          \\
                & Asian                         & -0.242         & (0.909)          \\
                & Other                         & -0.204         & (0.304)          \\
\multicolumn{2}{l}{Mother’s education (High)}   &                &                  \\
                & \textless high                & -0.292         & (0.257)          \\
                & \textgreater high             & -0.076         & (0.196)          \\
                & missing                       & 0.077          & (0.295)          \\
\multicolumn{2}{l}{Mother’s job (Stay-at-home)} &                &                  \\
                & Professional                  & 0.168          & (0.247)          \\
                & Other                         & -0.141         & (0.191)          \\
                & missing                       & -0.126         & (0.283)          \\
\multicolumn{2}{l}{Age}                         & -0.065         & (0.05)          \\
\bottomrule

\multicolumn{4}{l}{%
  \begin{minipage}{10cm}%
  \vspace{0.1cm}
    \footnotesize{Note: These results are based on the standard IV estimation considering the sample of students without missing network data and who are not affected by the censoring issue ($N = 561$). Column "Mean" indicates the estimates, and Column "Std" indicates the posterior standard deviations in parentheses. The explained variable is computed by taking the average grade for English, Mathematics, History, and Science, letting $A=4$, $B=3$, $C=2$, and $D=1$. Thus, higher scores indicate better academic achievement.}
  \end{minipage}%
}\\
\end{tabular} 
\end{adjustbox}
\end{table} 

\clearpage

\subsection{Key Player}\label{sec:keyplayer}

Figure \ref{fig:centrality} shows a scatter plot of the vector of centralities in the observed and reconstructed networks. The figure illustrates the effects of missing network data. First, because the reconstructed network has more links, centrality is higher on average. This is essentially the social multiplier effect. Not accounting for missing links leads to an underestimation of spillover effects.
Second, some individuals, in particular those having very few links in the observed network, are in reality highly central. Therefore, targeting a policy at individuals having a high centrality in the observed network would be inefficient. In particular, Figure \ref{fig:centrality2} shows that even isolated individuals and individuals interacting in isolated pairs in the observed network (having centralities of $1$ and $1.35$ respectively) can be, in reality, highly central. Thus, a policy based on the evaluation of an observed network, coupled with the associated endogenous peer effect coefficient $\alpha$, would not only underestimate the social multiplier but would also target the wrong individuals.

\begin{figure}[htbp]
    \centering
    \includegraphics[scale=0.7]{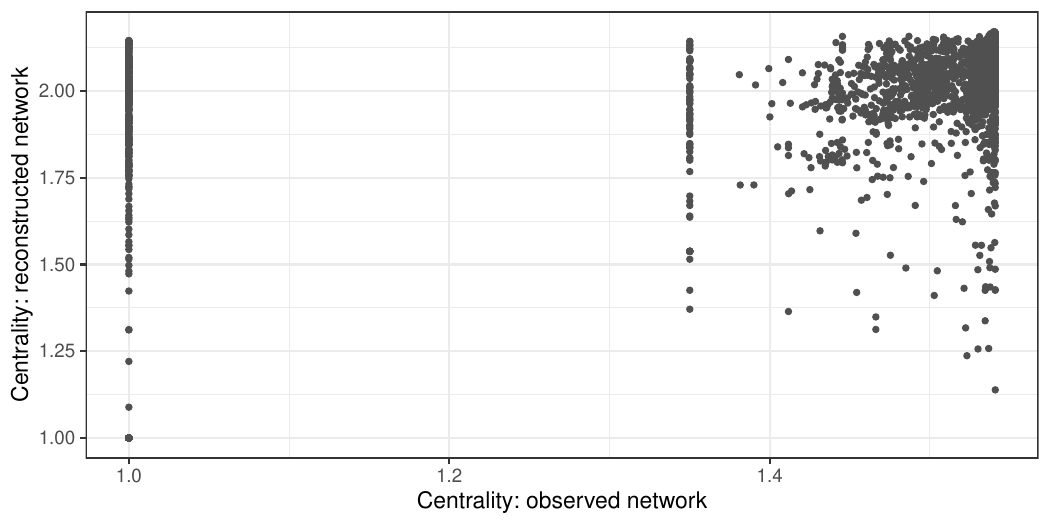}
    \caption{Centrality}
    \label{fig:centrality}
    
      \begin{minipage}{14cm}%
  \vspace{0.3cm}
    \footnotesize{Note: The centrality vector is given by $(\mathbf{I}-\hat{\alpha}\mathbf{G})^{-1}\mathbf{1}$. To compute centrality based on the observed network, we use the observed network $\mathbf{G}$ and the $\hat{\alpha}$ estimated using specification \emph{Obsv.Bayes}. To compute centrality based on the reconstructed network, we use $\hat{\alpha}$ and $\mathbf{G}$ estimated using the specification \emph{TopMiss.Bayes}. For both centrality vectors, we use the average vector centrality across 10,000 draws from their respective posterior distributions.}
  \end{minipage}%
\end{figure}

\begin{figure}[htbp]
    \centering
    \includegraphics[scale=0.7]{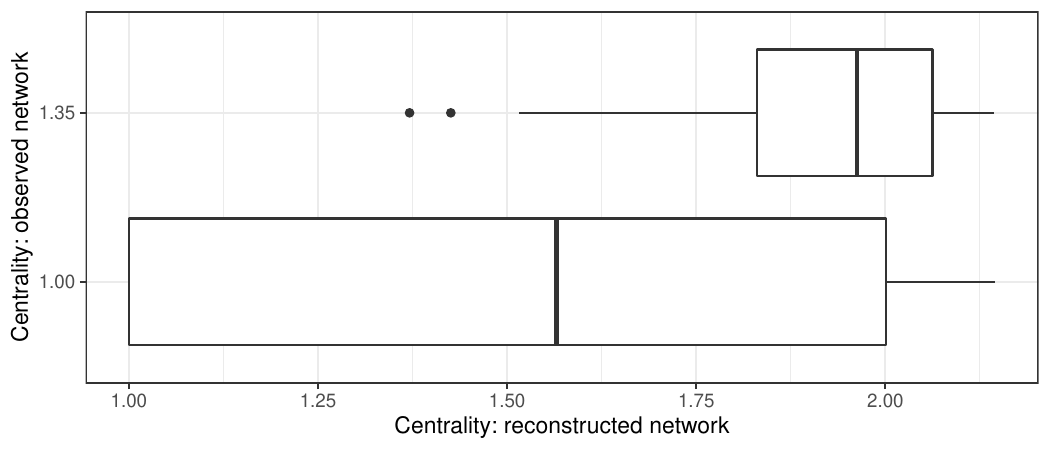}
    \caption{Centrality}
    \label{fig:centrality2}
    \begin{minipage}{14cm}%
  \vspace{0.3cm}
    \footnotesize{Note: See note of Figure \ref{fig:centrality}.}
  \end{minipage}%
\end{figure}

\begin{figure}[!htbp]
    \centering
    \caption{MCMC Simulations --  Peer Effect Model}
    \label{fig:mcmc:theta}
    \includegraphics[scale = 0.40]{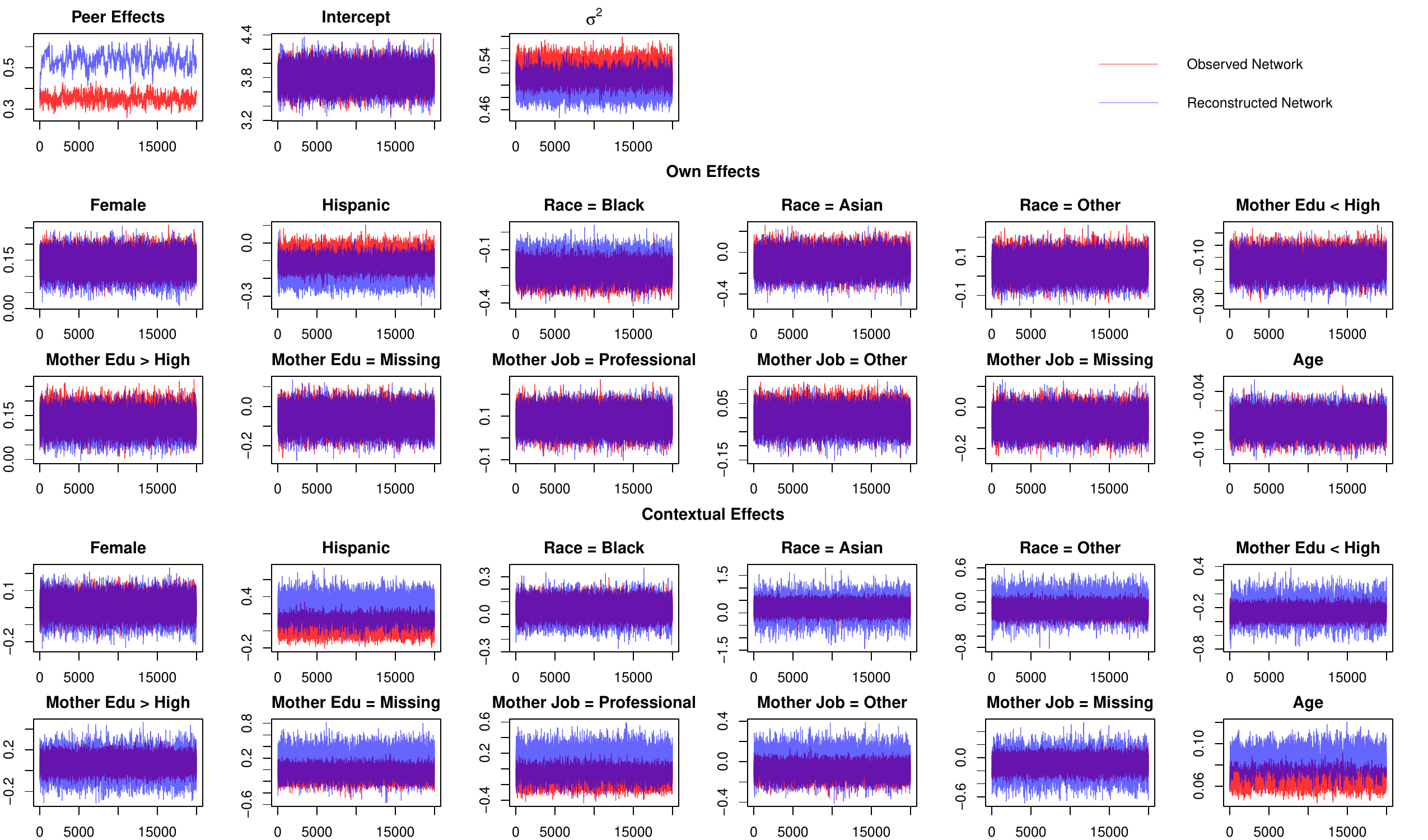}
\end{figure}

\begin{figure}[!htbp]
    \centering
    \caption{MCMC Simulations --  Network Formation Model}
    \label{fig:mcmc:rho}
    \includegraphics[scale = 0.55]{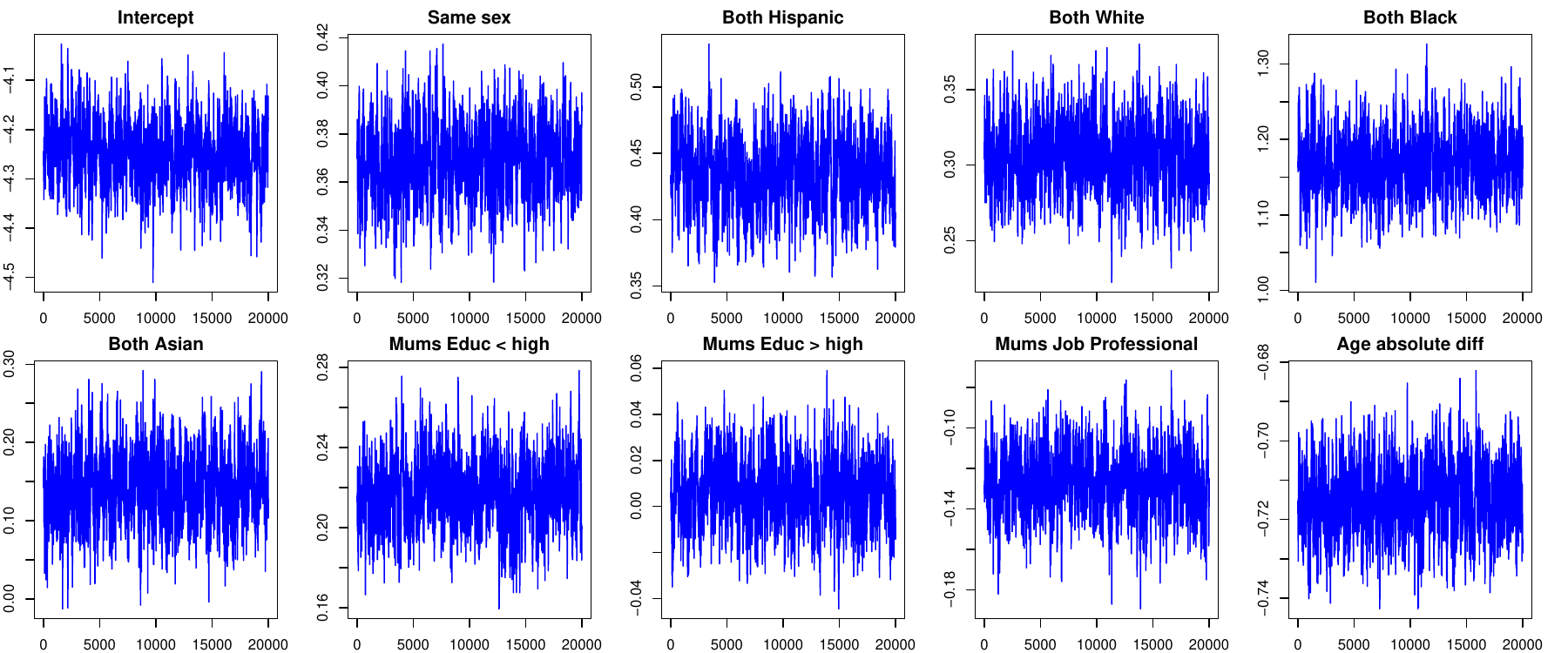}
\end{figure}

\cleardoublepage
\section{Technical Appendix: Aggregated Relational Data}\label{sec:ardsetting}

This section provides details about ARD simulation and model estimation using a MCMC method. We simulate the network for a population of 5000 individuals divided into $m = 20$ groups of $n=250$ individuals. Within each group, the probability of a link is
\begin{equation}\label{eq:app:ard}
P(a_{ij}=1)\propto\exp\{\nu_i+\nu_j+\zeta \mathbf{z}_i'\mathbf{z}_j\}.   
\end{equation}
As there is no connection between the groups, the networks are simulated and estimated independently. We first present how we simulate the data following the model (\ref{eq:ard}).

\subsection{ARD Simulation}
The parameters are defined as follows: $\zeta = 1.5 $, $~\nu_i \sim \mathcal{N}(-1.25, 0.37)$, and the $\mathbf{z}_i$ are distributed uniformly according to a von Mises--Fisher distribution. We use a hypersphere of dimension 3. We set the same values for the parameter for the 20 groups. We generate the probabilities of links in each network following \citeOA{breza2017using}.
\begin{equation}\label{eq:link:app:prob}
P(a_{ij} = 1|\nu_i,\nu_j,\zeta,\mathbf{z}_i,\mathbf{z}_j) =  \frac{\exp\{\nu_i+\nu_j+\zeta \mathbf{z}_i'\mathbf{z}_j\} \sum_{i=1}^N d_i}{\sum_{ij}\exp\{\nu_i+\nu_j+\zeta \mathbf{z}_i'\mathbf{z}_j\}},    
\end{equation}
where $d_i$ is the degree defined by $d_i \approx \frac{C_p(0)}{C_p(\zeta)}\exp{(\nu_i)}\sum_{i = 1}^N\exp(\nu_i)$, and the function $C_p(.)$ is the normalization constant in the von Mises--Fisher distribution density function. After computing the probability of a link for any pair in the population, we sample the entries of the adjacency matrix using a Bernoulli distribution with probability (\ref{eq:link:app:prob}).

To generate the ARD, we require the ``traits'' (e.g., cities) for each individual. We set $K = 12$ traits on the hypersphere. Their location $\mathbf{v}_k$ is distributed uniformly according to the von Mises-Fisher distribution. The individuals having the trait $k$ are assumed to be generated by a von Mises--Fisher distribution with the location parameter $\mathbf{v}_k$ and the intensity parameter $\eta_k \sim  |\mathcal{N}(4, 1)|$, $k = 1, \dots, 12$.

We attribute traits to individuals given their spherical coordinates. We first define $N_k$, the number of individuals having the trait $k$:
$$
N_k = \Bigg\lfloor r_{k} \dfrac{\sum_{i = 1}^{N} f_{\mathcal{M}}(\mathbf{z}_i| \mathbf{v}_k, \eta_k)}{\max_i f_{\mathcal{M}}(\mathbf{z}_i| \mathbf{v}_k, \eta_k)} \Bigg\rfloor,$$  
where $\lfloor x\rfloor$ represents the greatest integer less than or equal to $x$, $r_{k}$ is a random number uniformly distributed over $(0.8; 0.95)$, and $f_{\mathcal{M}}(\mathbf{z}_i| \mathbf{v}_k, \eta_k)$ is the von Mises--Fisher distribution density function evaluated at $\mathbf{z}_i$ with the location parameter $\mathbf{v}_k$ and the intensity parameter $\eta_k$.

The intuition behind this definition for $N_k$ is that when many $\mathbf{z}_i$ are close to $\mathbf{v}_k$, many individuals should have the trait $k$.

We can finally attribute trait $k$ to individual $i$ by sampling a Bernoulli distribution with the probability $f_{ik}$ given by 
$$f_{ik} = N_k \dfrac{f_{\mathcal{M}}(\mathbf{z}_i| \mathbf{v}_k, \eta_k)}{\sum_{i = 1}^{N} f_{\mathcal{M}}(\mathbf{z}_i| \mathbf{v}_k, \eta_k)}.
$$
The probability of having a trait depends on the proximity of the individuals to the trait's location on the hypersphere.

\subsection{Model Estimation}
In practice, we only have the ARD and the traits of each individual. \citeOA{mccormick2015latent} propose an MCMC approach to infer the parameters in the model (\ref{eq:app:ard}).

However, the spherical coordinates and the degrees in this model are not identified. The authors solve this issue by fixing some $\mathbf{v}_k$ and use the fixed positions to rotate the latent surface back to a common orientation at each iteration of the MCMC using a Procrustes transformation. In addition, the total size of a subset $b_k$ is constrained in the MCMC. 

As discussed by \citeOA{mccormick2015latent}, the number of $\mathbf{v}_k$ and $b_k$ to be set as fixed depends on the dimensions of the hypersphere. In our simulations, $\mathbf{v}_1$, $\mathbf{v}_2$, \dots, $\mathbf{v}_5$ are set as fixed to rotate back the latent space. When simulating the data, we let $\mathbf{v}_1 = (1, 0, 0)$, $\mathbf{v}_2 = (0, 1, 0)$, and $\mathbf{v}_3 = (0, 0, 1)$. This ensures that the fixed positions on the hypersphere are spaced, as suggested by the authors, to use as much of the space as possible, maximizing the distance between the estimated positions. We also constrain $b_3$ to its true value. The results do not change when we constrain a larger set of $b_k$

Following \citeOA{breza2017using}, we estimate the link probabilities using the parameters' posterior distributions. The gregariousness parameters are computed from the degrees $d_i$ and the parameter $\zeta$ using the following equation:
$$ \nu_i = \log(d_i) - \log\Big(\sum_{i = 1}^N d_i\Big) + \frac{1}{2}\log\Big(\frac{C_p(\zeta)}{C_p(0)}\Big).$$

%\section{Additional Simulations}\label{sec:appendix_simuls}
\subsection{Finite Sample Performance Using ARD}\label{sec:ard}
In this section, we study the small sample performance of the estimator presented in Section \ref{sec:iv} when the researcher only has access to ARD. First, we simulate network data using the model proposed by \citeOA{breza2017using} and simulate outcomes using the linear-in-means model (\ref{eq:linearinmean}) conditional on the simulated networks. Second, we estimate the network formation model using the Bayesian estimator proposed by \citeOA{breza2017using} (yielding $\hat{\boldsymbol{\rho}}_B$) and using the classical estimator proposed by \citeOA{alidaee2020recovering} (yielding $\hat{\boldsymbol{\rho}}_A$). Third, we estimate the linear-in-means model using the estimators presented in Proposition \ref{prop:bias_nocontext} and Theorem \ref{prop:non_observed} based on $\hat{\boldsymbol{\rho}}_A$ and $\hat{\boldsymbol{\rho}}_B$.

Recall that
\begin{equation}\label{eq:ard}
P(a_{ij}=1)\propto\exp\{\nu_i+\nu_j+\zeta \mathbf{z}_i'\mathbf{z}_j\},    
\end{equation}
where $\nu_i$, $\nu_j$, $\zeta$, $\mathbf{z}_i$, and $\mathbf{z}_j$ are not observed by the econometrician but follow parametric distributions. We refer the interested reader to \citeOA{mccormick2015latent}, \citeOA{breza2017using}, and \citeOA{breza2019consistently} for a formal discussion of the model, including its identification and consistent estimation. %Here, we discuss the intuition using a simple analogy.

%Suppose that individuals are located according to their geographical position on Earth. Suppose also that there are a fixed number of cities on Earth in which individuals can live. The econometrician does not know the individuals' locations on Earth nor do they know the location of the cities. In model (\ref{eq:ard}), $\mathbf{z}_i$ represent $i$'s position on Earth. However, the econometrician has access to ARD for a subset of the population, i.e. answers to questions such as, ``How many of your friends live in city $A$?''

%To understand the intuition behind the identification of the model, consider the following example: suppose that individual $i$ has many friends living in city $A$. Then, city $A$ is likely located close to $i$'s location. Similarly, if many individuals have many friends living in city $A$, then city $A$ is likely a large city. Finally, if $i$ has many friends from many cities, $i$ likely has a large $\nu_i$.

To study the finite sample performance of our instrumental strategy in this context, we simulate 20 groups, each having 250 individuals. Within each subpopulation, we simulate the ARD responses and a series of observable characteristics. The details of the Monte Carlo simulations can be found below in the Online Appendix \ref{sec:ardsetting}.

Importantly, the model in (\ref{eq:ard}) is based on a single population framework. Thus, the network formation model must be estimated separately for each of the 20 groups. With only 250 individuals in each group, we therefore expect significant small-sample bias.

We contrast the estimator proposed by \citeOA{breza2017using} with that of \citeOA{alidaee2020recovering}. Whereas \citeOA{breza2017using} present a parametric Bayesian estimator, \citeOA{alidaee2020recovering} propose a (nonparametric) penalized regression based on a low-rank assumption. One main advantage of the estimator proposed in \citeOA{alidaee2020recovering} is that it allows for a wider class of model and ensures that the estimation is fast and easily implementable.\footnote{The authors developed user-friendly packages in R and Python. See \citeOA{alidaee2020recovering} for links and details.} Note, however, that their method only yields a consistent estimator of $\hat{P}(\mathbf{A})$ if the true network is effectively low rank.

Very intuitively, the low-rank assumption implies that linking probabilities were generated from a small number of parameters. %; for example, in the example above, knowing individuals' gregariousness parameters $\nu_i$ and city in which they live would allow a prediction of the linking probabilities.
Importantly, the model (\ref{eq:ard}) is not necessarily low rank; for example, if the individuals' latent positions (i.e., the $\mathbf{z}_i$'s) are uniformly distributed, then the model may not be low rank and the method proposed by \citeOA{alidaee2020recovering} would perform poorly. If, however, individuals' latent positions are located around a few focal points, then the model might be low-rank because knowledge of these focal points may have high predictive power.

We compare the performance of both estimators as we vary the concentration parameter (that is, $\kappa$; see below in the Online Appendix \ref{sec:ardsetting} for details). This has the effect of changing the \emph{effective rank} of the linking probabilities: increasing $\kappa$ decreases the effective rank.\footnote{We refer the interested reader to \citeOA{alidaee2020recovering} for a formal discussion of the effective rank and its importance for their estimator.} We therefore expect the estimator proposed by \citeOA{alidaee2020recovering} to perform better for larger values of $\kappa$.

The results are presented in Tables \ref{tab:gx:obs:full} and \ref{tab:gx:unobs:full}. Table \ref{tab:gx:obs:full} presents the results for the special case where $\mathbf{GX}$ are observed in the data. The table displays the performance of our simulated GMM (see Corollary \ref{prop:GXobs}) when the network formation model is estimated by \citeOA{breza2017using} and \citeOA{alidaee2020recovering}. 

When $\kappa=0$, the network formation is not low rank. This disproportionately affects the estimator of \citeOA{alidaee2020recovering}. When $\kappa=15$, the estimators proposed by \citeOA{breza2017using} and \citeOA{alidaee2020recovering} perform similarly.

We now turn to the more general case where $\mathbf{GX}$ are not observed. Table \ref{tab:gx:unobs:full} presents the performance of our SGMM estimator (Theorem \ref{prop:non_observed}) when the network formation process is estimated using the estimators proposed by \citeOA{breza2017using} and \citeOA{alidaee2020recovering} and when we assume that the researcher knows the true distribution of the network.

We see that the performance of our estimator is strongly affected by the quality of the first-stage network formation estimator. When based on either the estimator proposed by \citeOA{breza2017using} or \citeOA{alidaee2020recovering}, for $\kappa=0$ or $\kappa=15$, our SGMM estimator performs poorly.

The poor performance of our SGMM estimator in a context where both $\mathbf{Gy}$ and $\mathbf{GX}$ are unobserved was anticipated. This occurs for two main reasons. \emph{First}, the consistency of the network formation estimator in \citeOA{breza2019consistently} holds as the size of each subpopulation goes to infinity, whereas the consistency of our estimator holds as the number of (bounded) subpopulations goes to infinity. This should affect the performance of our estimator, when based on \emph{estimated} network formation models but not when based on the true distribution of the network.

\emph{Second}, as discussed in \ref{sec:ard_netext}, ARD provides very little information about the realized network structure in the data (as opposed to censoring issues, for example; see Example \ref{ex:censored}). Then, if the true distribution is vague in the sense that most predicted probabilities are away from $0$ or $1$, we expect the estimation to be imprecise. This is what happens when $\kappa=15$, where our estimation based on the true distribution of the network is very imprecise in a context where the network affects the outcome through both $\mathbf{Gy}$ and $\mathbf{GX}$.

In the next section, we present a likelihood-based estimator, which uses more information on the data-generating process of the outcome to improve the precision of the estimation.

\renewcommand{\arraystretch}{.95}
\begin{table}[!htbp]
\centering 
\small
\caption{Simulation results with ARD and observed $\bG\bX$} 
\label{tab:gx:obs:full}
\begin{threeparttable}
\begin{tabular}{@{\extracolsep{5pt}}ld{4}ld{4}c}
\\[-1.8ex]\hline 
\hline \\[-1.8ex] 
Parameter              & \multicolumn{2}{c}{Breza et al.} & \multicolumn{2}{c}{Alidaee et al.} \\
                       & \multicolumn{1}{c}{Mean}           & \multicolumn{1}{c}{Std}        & \multicolumn{1}{c}{Mean}            & \multicolumn{1}{c}{Std}         \\
\hline\\[-1.8ex] 

                       & \multicolumn{4}{c}{SMM, $\kappa = 0$, $N = 250$, $M = 20$}                              \\
$\alpha = 0.4$         & 0.392              & (0.01)               & 0.492               & (0.057)               \\
$\beta_1 = 1$          & 1.001              & (0.004)              & 1.002               & (0.009)               \\
$\beta_2 = 1.5$        & 1.500              & (0.007)              & 1.496               & (0.016)               \\
$\gamma_1 = 5$         & 5.013              & (0.034)              & 3.884               & (0.295)               \\
$\gamma_2 = -3$        & -2.993             & (0.052)              & -4.048              & (0.354)               \\\hline\\[-1.8ex] 
                       & \multicolumn{4}{c}{SMM, $\kappa = 15$, $N = 250$, $M = 20$}                             \\
$\alpha = 0.4$         & 0.400              & (0.009)              & 0.428               & (0.009)               \\
$\beta_1 = 1$          & 1.000              & (0.004)              & 0.999               & (0.004)               \\
$\beta_2 = 1.5$        & 1.500              & (0.008)              & 1.499               & (0.008)               \\
$\gamma_1 = 5$         & 4.996              & (0.034)              & 4.677               & (0.034)               \\
$\gamma_2 = -3$        & -3.005             & (0.055)              & -3.387              & (0.055)     \\ \bottomrule     
\end{tabular}
\begin{tablenotes}[para,flushleft]
\footnotesize Note: In each subnetwork, the spherical coordinates of individuals are generated from a von Mises--Fisher distribution with a location parameter $(1, 0, 0)$ and intensity parameter $\kappa$. Predicted probabilities are computed using the mean of the posterior distribution. We chose the weight associated with the nuclear norm penalty to minimize the RMSE through cross-validation. This value of $\lambda=600$ is smaller than the recommended value in \citeOA{alidaee2020recovering}. Instruments are built using only second-degree peers, i.e., $\mathbf{G^2X}$. We set $R = S = T = 1$.
\end{tablenotes}
\end{threeparttable}
\end{table}

\begin{table}[!htbp]
\centering 
\small
\caption{Simulation results with ARD and unobserved $\bG\bX$} 
\label{tab:gx:unobs:full}
\begin{threeparttable}
\begin{tabular}{@{\extracolsep{5pt}}ld{4}ld{4}cd{4}c}
\\[-1.8ex]\hline 
\hline \\[-1.8ex] 
Parameter       & \multicolumn{2}{l}{Breza et al.} & \multicolumn{2}{l}{Alidaee et al.} & \multicolumn{2}{l}{True distribution} \\
                       & \multicolumn{1}{c}{Mean}           & \multicolumn{1}{c}{Std}        & \multicolumn{1}{c}{Mean}            & \multicolumn{1}{c}{Std}        & \multicolumn{1}{c}{Mean}            & \multicolumn{1}{c}{Std}       \\\hline\\[-1.8ex] 
                & \multicolumn{6}{c}{SMM, $\kappa = 0$, $N = 250$, $M = 20$}                                                    \\
$\alpha = 0.4$  & 0.717          & (0.463)         & 0.700           & (0.268)          & 0.400            & (0.056)            \\
$\beta_1 = 1$   & 0.988          & (0.022)         & 0.995           & (0.017)          & 1.000            & (0.015)            \\
$\beta_2 = 1.5$ & 1.505          & (0.03)          & 1.503           & (0.029)          & 1.501            & (0.021)            \\
$\gamma_1 = 5$  & 1.778          & (4.473)         & 1.512           & (2.37)           & 4.991            & (0.455)            \\
$\gamma_2 = -3$ & -2.205         & (1.24)          & -0.405          & (0.955)          & -3.005           & (0.287)            \\\hline\\[-1.8ex] 
                & \multicolumn{6}{c}{SMM, $\kappa = 15$, $N = 250$, $M = 20$}                                                   \\
$\alpha = 0.4$  & 0.603          & (0.069)         & 0.870           & (0.202)          & 0.434            & (0.394)            \\
$\beta_1 = 1$   & 0.989          & (0.014)         & 0.984           & (0.015)          & 0.998            & (0.021)            \\
$\beta_2 = 1.5$ & 1.504          & (0.029)         & 1.509           & (0.029)          & 1.501            & (0.023)            \\
$\gamma_1 = 5$  & 2.866          & (0.566)         & 0.246           & (1.973)          & 4.638            & (3.887)            \\
$\gamma_2 = -3$ & -2.458         & (0.379)         & -1.539          & (0.602)          & -2.913           & (1.037)    \\ \bottomrule     
\end{tabular}
\begin{tablenotes}[para,flushleft]
\footnotesize Note: see Table \ref{tab:gx:obs:full}, except that we fix $R = 500$ and $S = T = 1$.
\end{tablenotes}
\end{threeparttable}
\end{table}
\renewcommand{\arraystretch}{1}
\clearpage

\cleardoublepage
\bibliographystyleOA{ecta}
\bibliographyOA{biblio}
\end{document}